\documentclass[a4paper,onecolumn,unpublished]{quantumarticle}
\pdfoutput=1

\usepackage[normalem]{ulem}



\usepackage[utf8]{inputenc}
\usepackage[english]{babel}
\usepackage[T1]{fontenc}
\usepackage{afterpage}
\usepackage{pifont}
\newcommand{\cmark}{\ding{51}}%
\newcommand{\xmark}{\ding{55}}%
\usepackage[backend=biber,natbib=true,sorting=none,sortcites=true,giveninits=true]{biblatex}
\addbibresource{refs.bib}
\setlength\bibitemsep{2.6pt}

\usepackage[usenames,dvipsnames]{xcolor}
\usepackage{graphicx}
\usepackage[ colorlinks = true,
             linkcolor = MidnightBlue,
             urlcolor  = MidnightBlue,
             citecolor = orange,
             anchorcolor = ForestGreen,
]{hyperref}
\usepackage{float}
\graphicspath{
	{./pics/}
}

\makeatletter
\def\mathcolor#1#{\@mathcolor{#1}}
\def\@mathcolor#1#2#3{%
  \protect\leavevmode
  \begingroup
    \color#1{#2}#3%
  \endgroup
}
\makeatother

\usepackage[sans]{dsfont} 
\usepackage{bbm}
\usepackage[mathscr]{euscript}
\usepackage{cancel} 
\usepackage{amsfonts}

\usepackage{tcolorbox}
\tcbuselibrary{theorems}

\usepackage{amsmath}
\usepackage{amssymb}
\usepackage{units}
\usepackage{framed}
\usepackage{mdframed}
\usepackage{thmtools, thm-restate} 
\usepackage{subcaption}
\usepackage{fullpage}
\usepackage{csquotes}
\usepackage{epigraph}
\usepackage{stmaryrd} 

\usepackage{multicol}
\setlength{\columnsep}{0.5cm}

\usepackage{enumerate}


\usepackage{tikz}
\usetikzlibrary{arrows}
\usepackage{rotating, xcolor}
\usetikzlibrary{shapes}
\usetikzlibrary{hobby}
\usetikzlibrary{decorations.markings}
\usetikzlibrary{calc,intersections,through,backgrounds}
\usetikzlibrary{decorations.pathreplacing,angles,quotes}
\usetikzlibrary{arrows,decorations.markings}
\usetikzlibrary{arrows.meta}
\usetikzlibrary{decorations.markings}
\tikzset{
  strike through/.style={
    postaction=decorate,
    decoration={
      markings,
      mark=at position 0.5 with {
        \draw[-] (-3pt,-3pt) -- (3pt, 3pt);
      }
    }
  }
}
\tikzset{
  defcolon/.style={
    postaction=decorate,
    decoration={
      markings,
      mark=at position -0.05 with {$\colon$}
    }
  }
}
\usepackage{tikz-cd}
\tikzset{degil/.style={
            decoration={markings,
            mark= at position 0.5 with {
                  \node[transform shape] (tempnode) {$\setminus$};
                  }
              },
              postaction={decorate}
}
}

\captionsetup[subfigure]{skip=5pt}
\captionsetup[figure]{skip=10pt}
\captionsetup[subfigure]{justification=centering}


\tcbset{colback=orange!5,colframe=orange!75!yellow,
fonttitle= \bfseries, float=htb}













	


\usepackage[capitalise, noabbrev]{cleveref}
\newenvironment{proof}{\paragraph{Proof:}}{\hfill$\square$}

\usetikzlibrary{shapes}
\usetikzlibrary{decorations.markings}


\newtheorem{theorem}{Theorem}[section]
\newtheorem{corollary}[theorem]{Corollary}
\newtheorem{lemma}[theorem]{Lemma}
\newtheorem{proposition}[theorem]{Proposition}
\newtheorem{conjecture}[theorem]{Conjecture}
\newtheorem{remark}[theorem]{Remark}
\newtheorem{definition}[theorem]{Definition}

\newtheorem{example}{Example}



\newcommand\longrsquigarrow{
\begin{tikzpicture}
\draw [decorate, decoration={zigzag, segment length=+6pt, amplitude=+.95pt,post length=+2pt}, arrows={-Classical TikZ Rightarrow}]  (0,0.1) -- (0.6,0.1); \draw[draw=none] (0,0)--(0.6,0);
\end{tikzpicture}
}

\newcommand\longlsquigarrow{
\begin{tikzpicture}
\draw [decorate, decoration={zigzag, segment length=+6pt, amplitude=+.95pt,post length=+2pt}, arrows={-Classical TikZ Rightarrow},rotate around={180:(0.3,0.1)}]  (0,0.1) -- (0.6,0.1); \draw[draw=none] (0,0)--(0.6,0);
\end{tikzpicture}
}

\DeclareMathOperator{\parent}{par}

\DeclareMathOperator{\doo}{do}

\DeclareMathOperator{\affects}{\vDash}
\DeclareMathOperator{\naffects}{\not\vDash}
\DeclareMathOperator{\given}{\,|\,}

\let\AA\relax
\newcommand{\AA}{\ensuremath{\mathcal{A}}}
\newcommand{\BB}{\ensuremath{\mathcal{B}}}
\newcommand{\CC}{\ensuremath{\mathcal{C}}}

\newcommand{\TT}{\ensuremath{\mathcal{T}}}
\newcommand{\WW}{\ensuremath{\mathcal{W}}}
\newcommand{\XX}{\ensuremath{\mathcal{X}}}
\newcommand{\YY}{\ensuremath{\mathcal{Y}}}
\newcommand{\ZZ}{\ensuremath{\mathcal{Z}}}
\newcommand{\SC}{\ensuremath{\mathcal{S}}}
\newcommand{\RC}{\ensuremath{\mathcal{R}}}
\newcommand{\Fut}{\bar{\mathcal{F}}}

\DeclareMathOperator{\spann}{span}
\DeclareMathOperator{\late}{late}
\DeclareMathOperator{\dircause}{\longrsquigarrow}
\DeclareMathOperator{\diresuac}{\longlsquigarrow}

\tikzstyle{causalarrow} = [draw=black,thick,decorate,decoration={snake,amplitude=.7mm,pre length=1mm,post length=2mm}]
\tikzstyle{affectsarrow} = [draw=black,thick]
\tikzstyle{vvarrow} = [draw=black,thick,decorate,decoration={zigzag,amplitude=+1.3pt,segment length=+7pt,pre length=1.5pt,post length=+7pt}]
\usetikzlibrary{decorations.pathmorphing}

\usepackage{tikz-3dplot}
\tdplotsetmaincoords{80}{45}
\tdplotsetrotatedcoords{-90}{180}{-90}

\newcommand{\hin}{\enquote{$\Rightarrow$}:\ }
\newcommand{\rueck}{\enquote{$\Leftarrow$}:\ }
\newcommand{\abs}[1]{\left| #1 \right|}
\newcommand{\unord}{\ensuremath{\not\preceq \not\succeq}}
\newcommand{\highlight}[2]{\colorbox{#1}{\ensuremath{#2}}}

\newtheorem{notation}{Notation}

\newcommand{\cA}{\mathcal{A}}
\newcommand{\cB}{\mathcal{B}}
\newcommand{\cC}{\mathcal{C}}

\newcommand{\cE}{\mathcal{E}}

\newcommand{\cG}{\mathcal{G}}

\newcommand{\cT}{\mathcal{T}}

\newcommand{\cX}{\mathcal{X}}
\newcommand{\cY}{\mathcal{Y}}

\newcommand{\upmodels}{\mathrel{\perp\mkern-11mu\perp}}
\newcommand{\indep}{\upmodels}

\begin{document}

\title{Characterizing Signalling: Connections between Causal Inference and Space-time Geometry}
\author{Maarten Grothus}
\email{maarten.grothus@inria.fr}
\affiliation{Institute for Theoretical Physics, ETH Zürich, 8093 Zürich, Switzerland}
\affiliation{Université Grenoble Alpes, Inria, 38000 Grenoble, France}
\affiliation{Univ.\ Grenoble Alpes, CNRS, Grenoble INP, Institut N\'eel, 38000 Grenoble, France}
\thanks{corresponding author}
\orcid{0000-0002-1561-9709}

\author{V. Vilasini}
\email{vilasini@inria.fr}
\affiliation{Institute for Theoretical Physics, ETH Zürich, 8093 Zürich, Switzerland}
\affiliation{Université Grenoble Alpes, Inria, 38000 Grenoble, France}
\orcid{0000-0002-7035-4205}


\maketitle

\begin{abstract}

\noindent Causality is pivotal to our understanding of the world, presenting itself in different forms: information-theoretic and relativistic, the former linked to the flow of information, and the latter to the structure of space-time. Leveraging a framework introduced in PRA, 106, 032204 (2022), which formally connects these two notions in general physical theories, we study their interplay. The framework defines information-theoretic causality through a causal modelling approach, which enables the inference of causal connections through agents’ interventions and correlations. First, we improve the characterization of information-theoretic signalling as defined through so-called affects relations. Specifically, we provide conditions for identifying redundancies in different parts of such a relation, introducing techniques for causal inference in unfaithful causal models (where the observable data does not “faithfully” reflect the causal dependences). In particular, this demonstrates the possibility of causal inference using the absence of signalling between certain nodes. Second, we define an order-theoretic property called conicality, showing that it is satisfied for light cones in Minkowski space-times with $d > 1$ spatial dimensions
but violated for $d = 1$. Finally, we study the embedding of information-theoretic causal models in space-time without violating relativistic principles such as no superluminal signalling
(NSS). In general, we observe that constraints imposed by NSS in a space-time and those imposed by purely information-theoretic causal inference behave differently.
We then prove a correspondence between conical space-times and faithful causal models: in both cases, there emerges a parallel between these two types of constraints. This indicates a connection between informational and geometric notions of causality, and offers new insights and tools for studying the relations between the principles of NSS and no causal loops in different space-time geometries and theories of information processing.

\end{abstract}

\newpage
{
	\hypersetup{linkcolor=black}
	\tableofcontents
}
\newpage

\section{Introduction}
A cornerstone of scientific inquiry is the search for causal explanations of the world around us.
The field of causal modelling and causal inference provides rigorous mathematical frameworks for connecting observable data with cause-effect relationships.
Originating in classical statistics \cite{Spirtes1993, Pearl2009}, where all causal relata are classical random variables, its versatility has led to widespread applications across data-driven disciplines \cite{Pearl2009, Schoolmaster2020, Laubach2021, Kleinberg2011, Raita2021, Arti2020}.
More recently, these approaches have been extended within quantum information theory, enabling causal explanations for fundamental correlations and phenomena in the quantum domain \cite{PhysRevX.7.031021, arxiv.1906.10726}.
Causal models provide an operational definition of causality in terms of information flow, which is discernible through agents' interventions.
This notion does not a priori refer to spacetime, but allows to formally capture the process of empirical science, where an experimenter may perform an intervention (measurement or operation) on a physical system, observe correlations, and reason about the fundamental patterns of causal influences that could explain those observations.
However, in other crucial disciplines such as relativistic physics, causality is intrinsically linked to the geometry of space and time.
Moreover the two notions must exhibit a compatible interplay as space-time structure does constrain the flow of information in physical processes through principles of relativistic causality, such as the impossibility of signalling outside the future light cone. 
The connections and compatibility between these two notions of causality have been formally studied in recent works involving one of us \cite{VVC, VVC_Letter, VVR}, which motivates a number of questions. 

The notion of signalling, featuring in relativistic principles, can be defined purely in information-theoretic terms by considering interventions on observable variables in a causal model and resulting changes to probabilities.
While signalling enables operational inference of causation, it is generally possible for causation (or ``information flow'') to exist without signalling or correlations at the observable level.
This occurs in scenarios where causal mechanisms are fine-tuned to wash out observable signalling or correlation, i.e., causation does not imply signalling in \emph{fine-tuned causal models} (also called unfaithful causal models).
Fine-tuning is essential to describe practical tasks such as ensuring cryptographic security and is ensured by protocol design which carefully fixes certain causal parameters to hide information from eavesdroppers.
In fundamental physics, fine-tuning in a fundamental theory is often viewed as undesirable, particularly as it often requires certain physical mechanisms of the theory to be hidden from any possible observation.
This is used as motivation to leave behind necessarily unfaithful explanations for Bell correlations using classical causal models \cite{Wood2015}, to pursue faithful explanations by means of intrinsically quantum causal models \cite{PhysRevX.7.031021,arxiv.1906.10726,Henson2014}.
However, some proposed post-quantum theories \cite{PhysRevA.53.3781,VVCJ} and QFT scenarios, where events (or nodes of the causal model) are linked to extended spacetime regions \cite{Sorkin1993}, showcase unavoidable fine-tuning.
This highlights the importance, both for fundamental and practical purposes, of carefully studying and classifying fine-tuning, and understanding when it becomes undesirable and when it can be useful.
However, this is just scarcely explored as fine-tuning complicates causal inference even in classical scenarios.
As a consequence, the relationships between signalling and causation remains less understood when fine-tuned causal models are considered.

The above considerations have direct consequences for relativistic principles. When embedding causal models (including fine-tuned ones) into space-time, the absence of superluminal signalling (NSS) does not generally imply the absence of superluminal causation \cite{VVC, VVCPR}.
This raises a fundamental question: how does the relativistic NSS principle constrain information-theoretic causal structures? 
Specifically, if space-time's causal structure is acyclic (i.e., devoid of closed timelike curves), does NSS guarantee the acyclicity of information-theoretic causal structures as well?

Surprisingly, the latter question has been shown to have a negative answer: NSS does not preclude causal loops, even in Minkowski space-time \cite{VVC_Letter}. 
This was demonstrated by constructing a cyclic causal model embedded in 1+1-dimensional Minkowski space-time, where the existence of a causal loop could be operationally verified through interventions without violating NSS.
Whether such loops can exist in Minkowski space-times with higher spatial dimensions, and the principles needed to rule them out, remain key open questions.
Addressing these issues is fundamental to understanding the principles necessary for emergence of acyclic causal structures in a physical theory, which ensure a clear causal order where one event influences another without reciprocal influence.

However, answering such questions is conceptually and technically challenging.
From an information-theoretic perspective, it requires methods for inferring causation from patterns of signalling among systems and operationally identifying redundant variables in signalling relations in the presence of fine-tuning, which is little explored even classically.
Moreover, the causal loop identified in 1+1-dimensional Minkowski space-time \cite{VVC_Letter} cannot be embedded in higher-dimensional Minkowski space-time without superluminal signalling.
It was conjectured in \cite{VVC_Letter} that such loops may generally be precluded in higher dimensions due to geometric properties of light cone intersections.
The companion paper \cite{VVC} of \cite{VVC_Letter} classified several more complex classes of operationally detectable causal loops (non-exhaustively), whose space-time embeddings must be accounted for to prove this conjecture.
Proving the conjecture also requires a rigorous formalization of the geometric properties distinguishing 1+1-dimensional and higher-dimensional Minkowski space-times, focusing on order-theoretic properties of the space-time's causal structure rather than continuous manifold-like properties.

In this work, we make progress on a number of these fronts. We derive new results for causal inference from signalling relations, shed light on different forms of fine-tuning, their relationship with signalling and their role in causal inference, and identify an order-theoretic property of space-time – referred to as \enquote{conicality} – that distinguishes Minkowski space-times of one and higher spatial dimensions.
Finally, by connecting the information-theoretic and spatio-temporal notions of causation, our main result demonstrates an alignment between information-theoretic and spatio-temporal notions of causal ordering in two scenarios: (1) faithful (non-fine-tuned) causal models embedded in arbitrary space-times and (2) arbitrary causal models embedded in conical space-times. 

To illustrate the physical significance of this main result, consider the following example. Suppose $X$ and $Y$ are individual variables where $X$ signals to $Y$: then causal inference tells us that $X$ is a cause of $Y$ and NSS would require $Y$ to be embedded in the future light cone of $X$.
The two orderings on the variables, obtained from causal inference and NSS in space-time agree.
However, for signalling relations involving sets of variables,
e.g.\ $X_1\cup X_2$ signals to $Y$,
we find that the causal ordering we can obtain from causal inference, and the ordering relative to the light cone structure that we can obtain from NSS in a space-time do not generally agree. Our main result identifies sufficient conditions under which these orderings ``align'', namely in situations (1) and (2) described above. More precisely, the result holds even when (1) rules out only a specific class of fine-tuned models rather than all of them. The summary of contributions in the following subsection gives further intuition on what this ``alignment'' of the orders entails and clarifies the relevant class of fine-tuned models, the technical definitions of which can be found in the main text.

These findings highlight connections between space-time geometry and causal inference, showing that conditions on the embedding simplify for causal models embedded consistently with NSS in conical space-times such as $d$+1-Minkowski space-times with $d>1$. Moreover, for faithful causal models, NSS in any acyclic space-time does imply no causal loops \cite{VVC}. Therefore,  our main result, showing common properties of conical space-times and faithful causal models, lends support to the conjecture that the principle of no superluminal signalling is sufficient to rule out (operationally detectable) causal loops in conical space-times. This conjecture was motivated in \cite{VVC_Letter}, the concepts and techniques introduced in the current work allow to formalize it while helping to overcome a number of challenges towards proving it.

To formally state and prove these results, we build on a previous work \cite{VVC}, which develops a causal modelling framework applicable to a general class of physical theories, while permitting fine-tuned and cyclic causal influences, and relates this to space-time structure. This framework, referred to here as the \emph{affects framework}, captures general signalling possibilities through \emph{higher-order affects relations} and by embedding causal models in a space-time, the NSS principle is formalized as a graph-theoretic compatibility condition between the higher-order affects relations and the light cone structure of space-time. This formalism enables a theory-independent study of the interplay of the two causality notions, and in particular, the $1+1$-Minkowski space-time embeddable causal loop found in \cite{VVC_Letter} was constructed within this formalism.

We advance the affects framework by characterizing operational properties of higher-order affects relations and identifying order-theoretic properties of space-time geometry. Our findings reveal parallels between these domains, while providing tools for addressing fine-tuning and redundancies, yielding further avenues for understanding causality at the intersection of information theory and space-time geometry. We hope that further investigations in these directions, which remain relatively less explored, can lead to useful insights for understanding practical (classical, quantum and post-quantum) information processing scenarios in space-time as well as how space-time structure might emerge from informational structures, at a more fundamental level.

\section{Summary of contributions}

We now outline the main contributions of this work in a more precise and technical manner using the concept of affects relations, while aiming for this summary to be  self-contained even for those without prior knowledge of the affects framework \cite{VVC} (which is reviewed in \cref{sec:review}). 
A (higher-order) affects relation, denoted as $X \affects Y \given \doo(Z)$, carries three arguments, which are disjoint sets of random variables, and it captures that an agent who intervenes on $X$ can signal to an agent who can observe data on $Y$ and is given information about interventions performed on $Z$.\footnote{Technically, this corresponds to an unconditional (higher-order) affects relation. The concept of conditional higher-order affects relations is also introduced in \cite{VVC} by including a fourth argument $X \affects Y \given \doo(Z),W$ which captures an additional post-selection on $W$ (without interventions), i.e.\ the agent receiving the signal is also given information about this post-selection. We focus on the unconditional case in the main text and generalize many of the results to conditional relations in the appendix.} Depending on whether $Z$ is empty or not, we have a zeroth-order or a higher-order relation. Here, $X$ and $Z$ are interventional arguments while $Y$ is a purely observational argument. While affects relations refer to classical variables (such as measurement settings and outcomes), they can generally arise from operations performed on non-classical systems of any underlying theory. In the following, we will refer to a random variable (which goes into the argument of an affects relation) as a node, as such variables form the (observable) nodes or vertices of the directed graphs representing information-theoretic causal structures.

{\bf I. Characterizing affects relations and applications to causal inference}  
Generally, an affects relation $X\affects Y\given \doo(Z)$ can carry some redundancies and may be operationally equivalent to a ``reduced'' affects relation on sets of smaller cardinality. The simplest case where this can happen is when $X$, $Y$ or $Z$ contain nodes which are causally disconnected from the rest and therefore redundant for the affects relation. Identifying such redundancies operationally (even in scenarios where the causal structure is not known) is important both for information-theoretic causal inference,
and for understanding precisely how relativistic causality principles in space-time constrain information processing protocols.  In \cite{VVC}, only the reducibility of affects relations in the first interventional argument $X$ was defined. In \cref{sec:affects-red}, we extend the concept of reducibility to all the different arguments of the affects relation while discussing its operational significance in identifying redundancies in these arguments. 

In \cref{sec:affects-clus}, we introduce the concept of \emph{clustering} and relate it to reducibility (\cref{theorem: clus_irred}). Clustering in an argument captures the absence of affects relations of the same form when replacing that argument with strict subsets thereof. E.g., if $X\affects Y\given \doo(Z)$, but $s_X\naffects Y\given \doo(Z)$ for all $s_X\subsetneq X$, then the original relation is clustered in the first argument. We prove in \cref{lemma: clus_finetune} that clustering (in any argument) is a signature of fine-tuning and thereby distinguish between different types of operationally detectable fine-tuning, i.e., the fine-tuning of underlying causal mechanisms which can be detected from the correlations and the affects relations. Such possibilities for information transfer between sets of systems that is not detectable within subsets of those systems \cite{Cotler2019}, also have applications in cryptography \cite{PhysRevA.59.1829, PhysRevA.61.042311, PhysRevA.59.162}, quantum error correction \cite{Gottesmanphd, Cleve1999} and distributed information processing protocols in space-time \cite{Kent_2012_summon, Kent_2012_tasks, Hayden2016}. Relations between irreducibility, clustering and fine-tuning are illustrated in \cref{fig:relations-coarse}.

The causal inference implications of these concepts are then presented in \cref{sec:affects-to-cause}. Typical causal inference results and algorithms assume no fine-tuning due to complications for causal inference that arise in the presence of fine-tuning (see \cite{Pearl2009, Costa2016} for examples). Our results show that in certain fine-tuned models, the absence of signalling between some of the nodes can be employed for causal inference. More generally, our work contributes to open questions regarding causal inference in presence of fine-tuning, as raised in Section IX.e of \cite{VVC}.

{\bf II. Order-theoretic properties of space-time} Modelling space-time structure as a partially ordered set, an order-theoretic property of the causal structure of space-time is introduced in \cref{sec:poset}, namely, \emph{conicality}. We then show that $d$+1-dimensional Minkowski space-times with $d>1$ are conical space-times (\cref{thm:conical}), while this is not the case for 1+1-dimensional Minkowski space-time. Conicality captures the requirement that the joint future region $f(L)$ (intersection of future light cones) of a set $L$ of space-time points uniquely determines the location of all points in $L$ that contribute non-trivially to $f(L)$.\footnote{
    For example, if $L=\{p,q\}$ with $p\prec q$, then $f(L)=f(q)$ and $p$ does not contribute non-trivially to $f(L)$.}
In Minkowski space-time with $d=1$, we can have two distinct pairs of space-like separated points $L_1=\{a,b\}$ and $L_2=\{x,y\}$ that have the same joint future, $f(L_1)=f(L_2)$ and conicality is thus violated (\cref{fig:d}). For $d>1$ however, our result implies that this cannot happen for any configuration of space-time points. 

Intuitively, this notion may capture the idea of ``fine-tuning'' of the space-time embedding suggested in \cite{VVC}:
To violate conicality as aforementioned, both $a$ and $x$ and $b$ and $y$, respectively, need to be light-like to one another.
This means that slight perturbations to these points which move them outside the light-like surface of one another, would no longer satisfy $f(L_1)=f(L_2)$ required to witness the failure of conicality.

We discuss further order-theoretic properties that distinguish Minkowski space-time for $d=1$ and $d>1$ spatial dimensions in \cref{sec:poset-props}. Intuitively, these distinctions are related to the fact that in $d=1$ the joint future of any two points has the same geometry as the light cone of an individual point (the unique earliest point in this joint future), but the geometries of these regions will differ for $d>1$ as there is no longer a unique earliest point in the joint future of any two points.

{\bf III. Correspondence between causal inference and space-time geometry} In \cref{sec:compat-indec}, we study the properties of affects relations and causal models, which can be embedded in a space-time compatibly, i.e., without violating no superluminal signalling (NSS). We find that an affects relation $X\affects Y\given \doo(Z)$ irreducible in the first and third arguments implies that all nodes in $X$ and in $Z$ are causes of some node in $Y$ i.e., the set $Y$ of nodes is ordered later than each node in $X$ and $Z$ relative to the \emph{information-theoretic causal order}. However, we observe that when embedding these nodes in space-time, imposing that the affects relation does not lead to superluminal signalling does not generally imply an analogous ordering of nodes relative to the \emph{light cone structure of the space-time}: the set $Y$ of nodes can generally be jointly accessible outside the future light cone of some nodes in $X$ and $Z$.
That is, the relativistic principle of NSS and purely informational principles of causal inference generally impose different ordering constraints on the relevant operational events (here, the arguments of an affects relation). Denoting $\Fut_s(\XX)$ to be the intersection of the future light cones of all elements of a set of variables $X$ (with $|X| \ge 1$) which are embedded in a space-time (which is where they are jointly accessible)\footnote{As we will see later, $X$ denotes the random variable while $\XX$ denotes the ordered random variable, which the random variable $X$ along with a space-time location at which it is embedded. Further, we will later write $\Fut(\XX)$ without a subscript for the special case that $\abs{X} = 1$, where it corresponds to the future of the ordered random variable $\XX$.},
a simplified version of the correspondence we are looking for can be stated as follows: Given an affects relation $X\affects Y\given \doo(Z)$ irreducible in $X$ and $Z$, for any $e_{XZ}\in X\cup Z$,
\begin{align}
    \label{eq: correspondence_main_uncond}
    \begin{split}
        \textsf{\textbf{Causal inference: }} & e_{XZ} \text{ is a cause of } Y \\
        \textsf{\textbf{NSS in a space-time: }}    & \Fut_s (e_{\XX\ZZ}) \supseteq \Fut_s (\YY)
    \end{split}
\end{align}

The causal inference statement is proven to always hold earlier in \cref{sec:affects-to-cause}. We show that the above spatio-temporal ordering is generally not implied by NSS in arbitrary space-times and for arbitrary affects relations (\cref{ex:non-degenerate}). However, we prove that in the following two cases, the information-theoretic order obtained from causal inference and the spatio-temporal order obtained from the NSS principle align in the sense of \cref{eq: correspondence_main_uncond} (see \cref{thm:irr-compat}, \cref{thm:irr-compat-clus}, \cref{thm:fine-tuned-embedding-cond}).

\begin{enumerate}
    \setlength\itemsep{-1pt}
    \item[(1)] affects relations that are \emph{not clustered in the third argument} but embedded compatibly in an arbitrary space-time
    \item[(2)] arbitrary affects relations but embedded compatibly in a \emph{conical space-time}.
\end{enumerate}
In both these cases, a clear ordering emerges between interventional arguments $X$ and $Z$ and the observational argument $Y$ of an irreducible affects relation $X\affects Y\given \doo(Z)$, with the former ordered before the latter relative to both the information-theoretic and spatio-temporal causal orders as per \cref{eq: correspondence_main_uncond}.\footnote{Notice that the ordering in the second line of \cref{eq: correspondence_main_uncond} is consistent with the observation that when a space-time event $p$ is in the past light cone of another space-time event $q$, then the future light cone of $p$ contains that of $q$. } This correspondence is illustrated and summarized in \cref{tab:summary}.
For many purposes, these conditions allow to reduce statements regarding causal inference and NSS for higher-order affects relations to equivalent statements in terms of the simpler zeroth-order relations.

Moreover, these two alternative conditions provide a connection between properties of (a) causal models (the property of clustering in (1)), and (b) of the geometry of space-time (the property of conicality in (2)).
As clustering implies fine-tuning, this suggests links between faithful causal models and conical space-times. This result also sheds light on the relation between fine-tuning in the causal model and a notion of fine-tuning in the space-time embedding suggested in \cite{VVC}, which plays an important role in the possibility of embeddable causal loops in 1+1-Minkowski space-time \cite{VVC_Letter}.
Another type of correspondence between affects relations without clustering in the first argument and conical space-times is discussed in \cref{sec:no-irreducible}.

We conclude by discussing the implications of our results and future outlook in \cref{sec:conclusion}, stressing the physical and information-theoretical relevance of our results.

\section{Review of the affects framework}
\label{sec:review}

In this section, we give a brief introduction into the affects framework introduced in previous work \cite{VVC} involving one of us, which formalizes the notion of signalling under minimal and theory-independent assumptions.
This generality makes it applicable to scenarios with non-classical or cylic causal influences as well as scenarios where causal influences may be fine-tuned so that they wash out certain observable correlations or possibilities for signalling. Here we review the concepts through specific examples, and refer the reader to \cref{sec:d-separation} for technical definitions of the concepts mentioned here, but not defined in full generality.

\subsection{Theory-independent causal models}
\label{sec:causal-model}

The affects framework formulates information-theoretic causality independently of a notion of space and time, by building on the \emph{causal modelling} approach. Here, a causal structure is a directed graph $\cG$ where each node can either be \emph{observed} or \emph{unobserved}. Observed nodes correspond to classical random variables or RVs (such as settings and outcomes of measurements) while unobserved nodes can be associated with systems of any physical theory (classical, quantum or post-quantum). The directed edges $\dircause$ are understood as direct causal influences between the systems involved. 

Causal models are then formulated in terms of a directed graph $\cG$ as above over a set of nodes $N$, together with a probability distribution $P_{\cG}$ over a set of RVs, which are in 1-to-1 correspondence to the observed nodes $N_\text{obs} \subseteq N$, where $P_\cG$ satisfies a linking property (the \emph{d-separation property}) relative to $\cG$.
Going forward, we will therefore refer to these RVs as observed RVs and drop the distinction between them and their associated nodes. 

\begin{notation}
    When considering sets of nodes or RVs $N_1, N_2 \subset N$ as well as individual nodes $X \in N$, we will generally denote $X \cong \{ X \}$ and $N_1 N_2 = N_1 \cup N_2$.

\end{notation}

The idea behind this linking property is that graph separation relations between  sets of observed nodes in $\cG$ should imply conditional independences between corresponding variables in the observed distribution $P_{\cG}$.
The graph separation criterion used for this purpose is called \emph{d-separation} (directed separation), which is a standard measure used in the classical \cite{Pearl1990, Pearl2009, arxiv.1302.3595}, quantum \cite{Costa2016, PhysRevX.7.031021, arxiv.1906.10726, Barrett2021} and post-quantum \cite{Henson2014} causal modelling literature.
If two sets of nodes $N_1$ and $N_2$ in $\cG$ are d-separated given a third set $N_3$, potentially empty, we will write $(N_1 \perp^d N_2 | N_3)_{\cG}$. This entails the idea that certain kinds of paths between $N_1$ and $N_2$ are blocked by $N_3$. Conditional independence is defined as follows.

\begin{definition}[Conditional Independence]
\label{def:cond_indep}
    Let $\cG$ be a causal structure associated with a probability distribution $P_{\cG}$. For three disjoint sets of RVs $A, B, C$ with $A$ and $B$ non-empty, we say that $A$ and $B$ are \emph{conditionally independent given} $C$, denoted as $(A \indep B|C)_\cG$, if the marginal of $P_{\cG}$ on these variables 
    satisfies $P_{\cG}(AB|C)=P_{\cG} (A|C) P_{\cG}(B|C)$.
    Note that $\indep$ is hence not a property of the graph alone, but of the distribution associated to its nodes.
\end{definition}

Then the \emph{d-separation property} entails that whenever $X_1$, $X_2$ and $X_3$ are disjoint sets of RVs associated with observed nodes of $\cG$, then 

\begin{equation}
    \label{eq:compat}
    (X_1 \perp^d X_2 | X_3)_{\cG} \quad \implies \quad (X_1 \indep X_2 | X_3)_{\cG}
\end{equation}
In particular, when $X_3=\emptyset$, we write $(X_1 \perp^d X_2)_{\cG} \implies (X_1 \indep X_2)_{\cG}$. When $X_1$ and $X_2$ are individual RVs, $(X_1 \perp^d X_2)_{\cG}$ is equivalent to saying that neither of them is a cause of the other (i.e.\ there are no directed paths $X_1\dircause ... \dircause X_2$ or vice versa, connecting them in $\cG$) and they share no common ancestor (no common node $C$ which has directed paths to both) in the graph. This encapsulates the Reichenbach principle of common cause \cite{Reichenbach1956}, which fundamentally asserts that correlations between events must have some underlying causal explanation (in terms of one event being a cause of another, or common causes).

\begin{remark}
    \label{remark: RV_values}
    In a mild abuse of notation, which is a common convention within causal modelling, we use the same notation to denote a random variable (RV) $X$ and values of the RV. This means that when considering distinct RVs $X$ and $Y$, equations such as $X=Y$ refer to relationships between their values, in this case indicating that both variables are correlated such that they always assume the same \underline{value}. 
\end{remark}

Intuitively, the causal structure captures the flow of information through the network of nodes. How this flow of information is modelled specifically -- the \emph{causal mechanisms} -- is dependent on the respective theory.
For example, in classical probabilistic theories, for each node, the mechanisms are provided by the conditional distributions $P (N|\parent(N))$, where $\parent(N)$ stands for the set of all parents of the node $N$ in the causal structure.
These distributions can be thought of as classical channels or stochastic maps, and in deterministic theories, we would have functions $f_N:\parent(N)\mapsto N$ for each node instead, determining the value of $N$ given the values of all its parents (cf. \cref{remark: RV_values}). Analogously, in quantum causal models, we can refer to conditional density operators $\rho_{N|\parent(N)}$ to characterize the causal mechanisms, these are representations of quantum channels \cite{PhysRevX.7.031021, Barrett2021}.

Commonly, the literature adopts a \emph{bottom-up} approach to causal modelling  (e.g., \cite{Pearl2009, Costa2016, PhysRevX.7.031021, arxiv.1906.10726, Henson2014}) which starts with assumptions on the causal mechanisms, deriving conditions (like d-separation) on the observed distributions. These often focus on faithful (not fine-tuned) and acyclic causal models\footnote{See \cite{Bongers_2021, Barrett2021} for recent developments in bottom-up approaches for cyclic causal models.}. These approaches have proven useful for characterizing the nature of causation and correlations in specific theories, like quantum theory. On the other hand, the affects framework \cite{VVC} pursues a \emph{top-down} approach that applies to any causal mechanisms which satisfy the d-separation property on a level of observed correlations. While it assumes the existence of a causal structure $\cG$ (not necessarily acyclic), it does not assume a specific meaning to the causal arrows $\dircause$, and can be applied to rather general notions of causality in different theories. This makes it well-suited for deriving general results on causal inference and signalling or generic impossibility results for a wide class of theories.

Further, the concept of fine-tuning in a causal model will be important in this work as we do not assume its absence. A causal model is considered \emph{unfaithful} or \emph{fine-tuned} if the converse implication of \cref{eq:compat} is not satisfied for some disjoint sets of observed RVs $X_1$, $X_2$ and $X_3$:
\begin{equation}
    \label{eq:fine-tuning}
    (X_1 \perp^d X_2 | X_3 )_{\cG}
    \quad \,\,\,\not\!\!\!\Longleftarrow \quad
    (X_1 \indep X_2 | X_3)_{\cG}
\end{equation}
This captures the idea that independences in the distribution faithfully reflect the connectivity of the causal graph. This can fail in causal models where the underlying causal mechanisms are fine-tuned to hide certain causal connections from being detectable through probabilistic dependences. Usually, this property is only evaluated in the original causal model (associated with $\cG$).
Within this work, we extend this notion to also account for interventions (described in the next section), which alter the graph $\cG$ in a specific manner.

\subsection{Interventions and (higher-order) affects relations}
\label{sec:affects}

\begin{figure}[t]
	\centering
    \begin{subfigure}[b]{0.4\textwidth}
        \centering
        \begin{tikzpicture}
            \begin{scope}[every node/.style={circle,thick,draw,inner sep=0pt,minimum size=0.8cm}]
                \node (X) at (-3,0) {$S$};
                \node (Y) at (0,0) {$C$};
                \node (Z) at (-1.5,-2) {$G$};
            \end{scope}

            \begin{scope}[>={Stealth[black]},
                          every node/.style={fill=white,circle},
                          every edge/.style=vvarrow]
                \path [->] (X) edge (Y);
                \path [->] (Z) edge (X);
                \path [->] (Z) edge (Y);
            \end{scope}
        \end{tikzpicture}
        \caption{Pre-intervention structure $\mathcal{G}$.}
        \label{fig:pre-intervention}
    \end{subfigure}%
    \begin{subfigure}[b]{0.4\textwidth}
        \centering
        \begin{tikzpicture}
            \begin{scope}[every node/.style={circle,thick,draw,inner sep=0pt,minimum size=0.8cm}]
                \node (X) at (-3,0) {$S$};
                \node (Y) at (0,0) {$C$};
                \node (Z) at (-1.5,-2) {$G$};
            \end{scope}

            \begin{scope}[>={Stealth[black]},
                          every node/.style={fill=white,circle},
                          every edge/.style=vvarrow]
                \path [->] (X) edge (Y);
                \path [->] (Z) edge (Y);
            \end{scope}
        \end{tikzpicture}
        \caption{Post-intervention structure $\mathcal{G}_{\doo(S)}$.}
        \label{fig:post-intervention}
    \end{subfigure}%
	\caption[Pre-intervention, augmented and post-intervention causal structures.]{
        Starting from the original (pre-intervention) causal structure, 
        an intervention on the node $S$ is performed by removing all its incoming edges, yielding the post-intervention causal structure $\cG_{\doo(S)}$.
        For the associated distributions, an intervention then corresponds to fixing $S$ to fix a certain value $S = s$ in the post-intervention causal model.}
	\label{fig:interventions}
\end{figure}
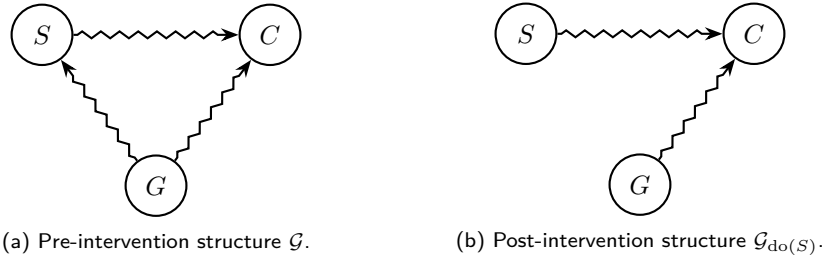

Generally, the same observed probability distribution can admit several different causal explanations.
This fundamentally arises from the fact that correlations are symmetric, while causation, as represented in the causal structure, is an asymmetric, directed relation.
For instance, the existence of correlation between the prevalence of smoking (modelled as a variable $S$) and the incidence of cancer (modelled as a variable $C$) in a population does not necessarily allow to infer that smoking is a cause of cancer, because it is possible to have a common factor $G$ (such as a genetic predisposition) that influences both an individual's likelihood to (become addicted to) smoking and the onset of cancer, and could thereby explain the same correlations.

Therefore, to deduce causal relations, we need to supplement the correlations, collected through \emph{passive observation}
of the variables involved, with \emph{free interventions}, which also actively control the variables. Indeed, such interventions form the basis of controlled trials, for instance where one wishes to deduce if a drug causes recovery from a disease.

In \cref{fig:interventions}, we illustrate how this idea is captured within causal models.
In our example, to infer whether $S$ is a cause of $C$, one performs an intervention on $S$, denoted as $\doo(S)$. Such an intervention would correspond to actively forcing people to start or quit smoking, which may be practically infeasible due to ethical reasons, but can be considered in theory. 
This amounts to removing all incoming arrows of $S$, and fixing the RV to a particular value. This captures the basic assumption of such experiments, that the intervention choice is independent of prior causes of $S$, such as the common genetic factor $G$.
The result is a post-intervention causal model, built from the post-intervention causal structure $\cG_{\doo(S)}$ and the associated post-intervention distribution $P_{\cG_{\doo(S)}}$. Then, by comparing if the post-intervention distribution $P_{\cG_{\doo(S)}}(C|S)$ is distinct from the original distribution $P_{\cG}(C)$, we can infer whether or not $S$ is a cause of $C$. We would like to highlight that $P_{\cG_{\doo(S)}}$ is related to but can generally not be fully inferred from the pre-intervention distribution $P_{\cG}$ alone \cite{Pearl2009, VVC}.

This notion of interventions is independent of the theory under consideration and its respective causal mechanisms.
Operationally, this allows to capture a notion of \textit{signalling}: If an operation performed at node $X$, encoded in the choice of intervention $\doo(X)$, yields an observably distinct distribution at another node $Y$, we can understand this as $X$ \textit{signalling to} $Y$. More generally, we can consider an additional set of RVs $Z$ and ask whether $X$ can signal to $Y$ given that some interventions have been performed on $Z$.
This idea is formalised through \emph{affects relations}, where $Z$ being trivial or non-trivial delineates 0$^\text{th}$-order and higher-order affects relations.

\begin{definition}[(Unconditional Higher-Order) Affects Relations]
    \label{def:affects}
    Consider a causal model over a set $S$ of observed nodes, associated with a causal structure $\mathcal{G}$.
    For pairwise disjoint subsets $X, Y, Z \subset S$, with $X, Y$ non-empty, we say
    \begin{equation}
        X \, \text{affects} \,\, Y \, \text{given} \, \doo(Z) \, ,
    \end{equation}
    which we alternatively denote as
    \begin{equation}
        X \vDash Y \,|\, \doo(Z) \, ,
    \end{equation}
    if there exist values $x$ of $X$ and $z$ of $Z$  such that
    \begin{equation}
        P_{\mathcal{G}_{\doo(XZ)}} (Y | X=x, Z=z) \neq
        P_{\mathcal{G}_{\doo(Z)}} (Y | Z=z)
    \end{equation}
    If $Z \neq \emptyset$, we have a \emph{higher-order (HO) affects relation}.
    More specifically, it is also called a \emph{$\abs{Z}^\text{th}$-order affects relation}.
\end{definition}
We understand affects relations as the \emph{formalization of signalling} to be used in the remainder of this work.
We will contrast this definition with notions from previous literature in \cref{sec:uncorrelated-signalling}, exploring possibilities for signalling beyond correlations, while stating a variety of examples for affects relations over the course of \cref{sec:affects-new}.

Note that for $X:=S$, $Y:=C$ and $Z=\emptyset$, this definition recovers the intuition for signalling explained for the particular example of smoking and cancer.

Even more generally, one can consider conditional affects relations $X \affects Y \given \doo(Z), W$, which are conditioned on a set of RVs $W$ that are not intervened upon.
A definition of these will be provided in \cref{sec:conditional}.
However, in \cite{VVC} it is shown that each such conditional affects relation implies an unconditional affects relation $X \affects YW$ (see also \cref{thm:decondition}), and we will therefore disregard conditional affects relations and only work with unconditional relations for the bulk of this work.

The next lemma highlights how affects relations can be used to infer information about the causal structure,
and uses the following definition of \emph{cause}.

\begin{definition}[Cause]
\label{def:cause}
    We say that a set $X$ of nodes is a cause of a set $Y$ of nodes in a causal model if $\exists$ $e_X\in X$ and $e_Y\in Y$ with a directed path $e_X\dircause \ldots \dircause e_Y$ between them in the graph $\mathcal{G}$ representing the causal structure of the model. 
\end{definition}

\begin{lemma}
    \label{thm:affects-to-cause-first}
    \label{thm:affects-to-cause-new}
    Let $S$ be a set of RVs in a causal model and $X, Y, Z \subset S$ disjoint. Then
    $X \affects Y \given \doo(Z) \implies X$ is a cause of $Y$.
\end{lemma}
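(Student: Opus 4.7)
The plan is to prove the contrapositive: assuming $X$ is not a cause of $Y$ in $\cG$ (no directed path from any node in $X$ to any node in $Y$), I will show that $X \naffects Y \given \doo(Z)$, i.e., $P_{\cG_{\doo(XZ)}}(Y | X=x, Z=z) = P_{\cG_{\doo(Z)}}(Y | Z=z)$ for all choices of $x$ and $z$. The two main ingredients will be a d-separation claim in the post-intervention graph $\cG_{\doo(XZ)}$, followed by a step linking distributions across the two distinct post-intervention models $\cG_{\doo(XZ)}$ and $\cG_{\doo(Z)}$.

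The central graphical step is to establish $(X \perp^d Y | Z)_{\cG_{\doo(XZ)}}$. In $\cG_{\doo(XZ)}$, every node of $X \cup Z$ has had its incoming edges removed. I take an arbitrary path $\pi$ between some $x \in X$ and some $y \in Y$ in this graph and show that it is blocked by $Z$. Since $x$ has no incoming edges, $\pi$ must begin with an outgoing arrow $x \to \cdot$; if $\pi$ were purely directed, it would witness $X$ being a cause of $Y$ in $\cG_{\doo(XZ)}$ and hence in $\cG$, contradicting the assumption. Therefore $\pi$ contains at least one collider $w$. Because nodes in $Z$ have no incoming edges in $\cG_{\doo(XZ)}$, (i) $w \notin Z$ (colliders require two incoming edges on the path), and (ii) $w$ is not an ancestor of any $z \in Z$ (no directed path can end at a node with no incoming edges); hence $w$ is closed by the conditioning on $Z$. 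Moreover, if $\pi$ passes through any $z \in Z$, then the absence of incoming edges at $z$ forces the fork pattern $\cdot \leftarrow z \to \cdot$, so conditioning on $z$ also blocks $\pi$. Either way $\pi$ is blocked, giving $(X \perp^d Y | Z)_{\cG_{\doo(XZ)}}$, and applying the d-separation property \cref{eq:compat} yields $P_{\cG_{\doo(XZ)}}(Y | X=x, Z=z) = P_{\cG_{\doo(XZ)}}(Y | Z=z)$.

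To finish, I need to identify the right-hand side with $P_{\cG_{\doo(Z)}}(Y | Z=z)$. Since removing the incoming edges of $Z$ cannot create new directed paths, $X$ remains not a cause of $Y$ in $\cG_{\doo(Z)}$, so the further intervention $\doo(X)$ should leave the marginal $P(Y | Z=z)$ unchanged. This can be argued either by a second d-separation argument inside the graph obtained from $\cG_{\doo(Z)}$ after further intervening on $X$, or by appealing to an intervention-consistency property of the affects framework ensuring that interventions on non-causes of $Y$ do not alter the marginal on $Y$. Combining the two equalities contradicts $X \affects Y \given \doo(Z)$, completing the contrapositive. The main obstacle I anticipate is this last step of relating distributions across two different post-intervention models, rather than the graphical d-separation step, which is largely routine once one exploits the absence of incoming edges at nodes of $X$ and $Z$ in $\cG_{\doo(XZ)}$.
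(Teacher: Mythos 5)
Your overall route coincides with the one underlying the paper's statement (which defers to Lemma~IV.3 of \cite{VVC}): argue the contrapositive by showing that ``$X$ is not a cause of $Y$'' forces $(X \perp^d Y | Z)_{\cG_{\doo(XZ)}}$, and then convert this graphical fact into $X \naffects Y \given \doo(Z)$. Your d-separation step is correct and carefully done: in $\cG_{\doo(XZ)}$ all nodes of $XZ$ are parentless, so any $X$--$Y$ path that is not purely directed must contain a collider, that collider can neither lie in $Z$ nor have descendants in $Z$, any non-collider occurrence of a node of $Z$ is a fork blocked by the conditioning, and a purely directed path would contradict the hypothesis since $\cG_{\doo(XZ)}$ is a subgraph of $\cG$.

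The gap is in the last step, and it is exactly the one you flag. Applying the d-separation property (\cref{eq:compat}) in $\cG_{\doo(XZ)}$ only yields $P_{\cG_{\doo(XZ)}}(Y|XZ)=P_{\cG_{\doo(XZ)}}(Y|Z)$, i.e.\ a conditional independence \emph{within} the single post-intervention model on $\cG_{\doo(XZ)}$, whereas $X\naffects Y\given\doo(Z)$ asserts equality with $P_{\cG_{\doo(Z)}}(Y|Z)$, a distribution of a \emph{different} post-intervention model. Your first proposed fix --- a second d-separation argument in the graph obtained from $\cG_{\doo(Z)}$ by further intervening on $X$ --- cannot work on its own: that graph is again $\cG_{\doo(XZ)}$, and d-separation there constrains only $P_{\cG_{\doo(XZ)}}$, never the relation between $P_{\cG_{\doo(XZ)}}$ and $P_{\cG_{\doo(Z)}}$. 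The minimal definition of post-intervention models (\cref{def:post_intervention}) links distributions across an added intervention directly only when the newly intervened nodes are exogenous, and $X$ need not be exogenous here. The correct closure is your second suggestion made precise: Pearl's third rule of do-calculus, which in this framework is itself derived from the d-separation property (Theorem~IV.1 of \cite{VVC}) and appears in the paper as \cref{lemma: dsep_aff_cond}; taking $W=\emptyset$ there gives $(X\perp^d Y | Z)_{\cG_{\doo(XZ)}} \implies X\naffects Y\given\doo(Z)$, which combined with your graphical step finishes the argument. So the idea is right, but the cross-model step is a substantive lemma of the framework rather than a routine consequence of d-separation in one graph, and it must be invoked explicitly.
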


We have seen how a given causal model on a graph $\cG$ can be associated with a family of post-intervention causal models associated with the derived graphs $\cG_{\mathrm{do}(S)}$. Accordingly, 
we can extend the definition of fine-tuning from \cref{eq:fine-tuning} and in this work, we will say that a causal model is fine-tuned if the non-implication of \cref{eq:fine-tuning} holds in any post-intervention causal model obtained from the original one. Further details are found in \cref{sec:d-separation}.

\subsection{Affects relations capture signalling beyond correlations}
\label{sec:uncorrelated-signalling}

In the quantum information community, the notion of signalling is most commonly discussed in the context of protocols where the measurement settings of parties are freely chosen and modelled as parentless variables. A parentless variable, say $X$ (a setting) is said to signal to some outcome variable, say $A$ if there exist distinct values $x$ and $x'$ of $X$ and a value $a$ of $A$ such that $P(A=a|X=x)\neq P(A=a|X=x')$ (which we denote as $P(A|X=x)\neq P(A|X=x')$).
This condition encompasses precisely what it means for $A$ to be correlated with $X$, i.e., for parentless variables (or sets of parentless variables) $X$, correlation with another variable (or a set of variables) $A$ is equivalent to signalling. 

Based on this, we consider the case where (some of) the variables in $X$ have non-trivial parents. 
In this case, by means of interventions on $X$, a natural definition of signalling one might consider is the following. 
\begin{equation}
    \label{eq:correlated-signalling}
    \exists x, x' : P_{\mathcal{G}_{\doo(X)}}(Y|X=x)\neq P_{\mathcal{G}_{\doo(X)}}(Y|X=x')
\end{equation}
This is equivalent to saying that $X$ and $Y$ are correlated in the post-intervention causal model on $\mathcal{G}_{\doo(X)}$. Concepts such as average causal effect used in operationally identifying causation in classical and non-classical causal models follow a similar definition \cite{Pearl2009, Gachechiladze_2020}. It is shown in \cite{VVC} (Lemma IV.2) that this implies $X\affects Y$, but the converse is not true (Example IV.5).
This illustrates that affects relations strictly capture more general ways of signalling than the alternative definition for signalling given in \cref{eq:correlated-signalling} informed by the previous literature.

This generality of affects relations has a clear operational significance: the alternative definition exclusively captures the possibility of signalling by means of two distinct interventions on $X$, while affects relations additionally capture the additional possibility of signalling through the very fact that some non-trivial intervention was performed as opposed to passive observation (even when the exact choice of intervention does not matter).

These two cases can only diverge when $X$ is not parentless. In this case, we can have
\begin{equation}
    \forall \, x \ \text{of} \ X : \quad\!
    P_{\mathcal{G}_{\doo(X)}} (Y | X=x) = 
    P_{\mathcal{G}_{\doo(X)}} (Y) \neq
    P_{\mathcal{G}} (Y),
\end{equation}
yielding $X\affects Y$ although $X$ and $Y$ are uncorrelated in $\mathcal{G}_{\doo(X)}$. The same is true for higher-order affects relations, as illustrated in \cref{eg: red3_2} where we will have $X \affects Y \given \doo(Z)$ even though $P_{\cG_{\doo(XZ)}} (Y|XZ) = P_{\cG_{\doo(XZ)}} (Y|Z)$.

\section{Characterization of signalling in the affects framework}
\label{sec:affects-new}
As discussed in the previous section, affects relations formalize general possibilities for signalling in a theory-independent manner, while accounting for signalling via interventions on variables that have parents in the causal structure, and capturing operational possibilities for signalling that are not encoded in correlations alone.

However, affects relations can still contain certain redundancies. In this section, we address this issue by providing an operationally motivated notion of irreducibility for each of the four arguments of an affects relation, generalizing on the notion for the first argument introduced in \cite{VVC}.
Irreducibility of an affects relation in a given argument will capture the absence of any reduced affects relation (where that argument is replaced by a strict subset thereof) that is operationally equivalent to the original relation. We also introduce a related notion called clustering for each argument of an affects relation, which formalizes the absence of such reduced affects relations altogether.
For instance, $X_1X_2\affects Y$ but $X_1\not\affects Y$, $X_2\not\affects Y$ is an example of clustering in the first argument.

Having defined these new and operationally motivated concepts, we link irreducibility and clustering, showing the application of these concepts for detecting causal fine-tuning using affects relations, and we conclude by exploring implications of these properties for causal inference.
In doing so, we will employ the relations between conditional dependencies and affects relations encapsulated in the following lemma.

\begin{restatable}{lemma}{affectsCorr}
    \label{lemma: aff_corr}
    $X\not\vDash Y|\mathrm{do}(Z)  \implies (X\upmodels Y|Z)_{\cG_{\mathrm{do}}(XZ)}$.
\end{restatable}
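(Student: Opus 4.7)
The plan is to unpack \cref{def:affects} to get an explicit equality of distributions from the hypothesis $X \not\vDash Y \given \doo(Z)$, extract the fact that the left-hand side of that equality is independent of the value of $X$, and then use marginalisation in the post-intervention model $\cG_{\doo(XZ)}$ to derive the stated conditional independence.

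More concretely, I would first rewrite the negation of \cref{def:affects}: $X \not\vDash Y \given \doo(Z)$ means that for every value $x$ of $X$ and every value $z$ of $Z$ (for which the conditional probabilities are defined),
\begin{equation}
    P_{\cG_{\doo(XZ)}}(Y \given X=x, Z=z) \;=\; P_{\cG_{\doo(Z)}}(Y \given Z=z).
\end{equation}
The key observation is that the right-hand side makes no reference to $x$, so the conditional $P_{\cG_{\doo(XZ)}}(Y \given X=x, Z=z)$ is a function of $z$ alone. Call this common value $q(Y \given z)$.

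Next, I would apply the law of total probability within the single post-intervention model $\cG_{\doo(XZ)}$: for every $z$,
\begin{equation}
    P_{\cG_{\doo(XZ)}}(Y \given Z=z)
    \;=\; \sum_{x} P_{\cG_{\doo(XZ)}}(X=x \given Z=z)\, P_{\cG_{\doo(XZ)}}(Y \given X=x, Z=z).
\end{equation}
Substituting the constant value $q(Y \given z)$ for the last factor pulls it out of the sum, and since the remaining sum of $P_{\cG_{\doo(XZ)}}(X=x \given Z=z)$ over $x$ equals $1$, this yields $P_{\cG_{\doo(XZ)}}(Y \given Z=z) = q(Y \given z) = P_{\cG_{\doo(XZ)}}(Y \given X=x, Z=z)$ for all $x,z$. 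This is precisely the factorisation condition of \cref{def:cond_indep} required for $(X \indep Y \given Z)_{\cG_{\doo(XZ)}}$.

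I do not expect any substantive obstacle: the content of the lemma is really a bridge between the \emph{interventional} comparison underlying affects relations (across the two graphs $\cG_{\doo(XZ)}$ and $\cG_{\doo(Z)}$) and the \emph{intra-model} notion of conditional independence in $\cG_{\doo(XZ)}$ alone. The only care needed is (i) making sure the argument is stated for all $x,z$ in the joint support so that the conditional probabilities are well-defined, and (ii) noting that the $\doo(Z)$ distribution appearing in \cref{def:affects} is only used to certify $x$-independence of the conditional in the $\doo(XZ)$ distribution, after which everything takes place inside the single model $\cG_{\doo(XZ)}$.
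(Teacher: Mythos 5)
Your proposal is correct and takes essentially the same approach as the paper: both hinge on the observation that $X \not\vDash Y \given \doo(Z)$ forces $P_{\cG_{\doo(XZ)}}(Y|X=x,Z=z)$ to equal the $x$-independent quantity $P_{\cG_{\doo(Z)}}(Y|Z=z)$, which is precisely the conditional independence $(X \indep Y | Z)_{\cG_{\doo(XZ)}}$. The paper phrases this as a proof by contradiction (assuming dependence yields $x, x'$ with differing conditionals) while you argue directly and add a harmless law-of-total-probability step, but the content is identical.
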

\begin{proof}
    See \cref{proof:affectsCorr}.
\end{proof}

\medskip
Going forward, we will focus on the case of unconditional affects relations, and accordingly, on the first three arguments ($X$, $Y$ and $Z$), as these admit the most useful intuition.
This is due the fact that (by \cref{thm:decondition}) for each conditional affects relation $X \affects Y \given \doo(Z), W$, we can infer an unconditional affects relation $X \affects YW \given \doo(Z)$, which is equivalent in terms of causal inference and will allow the application of the concepts of irreducibility and clustering in the second argument to the full set $YW$.
However, most results of this section fully generalize when applied to conditional affects relations, as explicitly carried out in \cref{sec:conditional}.
In the same appendix, a detailed treatment of the properties of clustering and irreducibility in the additional fourth argument of such conditional affects relations is also provided.

\subsection{Reducibility of affects relations in different arguments}
\label{sec:affects-red}

In this section, we will define multiple concepts relating to the absence of certain affects relations and the presence of certain others.
It is to be understood that these concepts are always defined relative to some given set $\mathscr{A}$ of affects relations.
If a causal model is specified, then $\mathscr{A}$ is the set of all affects relations in the model, otherwise the set $\mathscr{A}$ must be explicitly specified when applying these concepts. 

\begin{definition}[Reducibility in the first argument \cite{VVC}]
\label{def: red1}
\label{def:reducible}
 We say that an affects relation $X \vDash Y | \mathrm{do}(Z)$  is reducible in the first argument (or Red$_1$) if there exists a non-empty subset $s_X\subsetneq X$ such that $s_X \not\vDash Y | \mathrm{do}(\tilde{s}_XZ)$, where $\tilde{s}_X:=X\backslash s_X$. Otherwise, we say that $X \vDash Y | \mathrm{do}(Z)$ is irreducible in the first argument and denote it as Irred$_1$.
\end{definition}

Using this definition, the following lemma is proven in \cite{VVC}. 
\begin{lemma}
\label{lemma: red1}
    If $X \vDash Y | \mathrm{do}(Z)$ is a Red$_1$ affects relation, then there exists $\tilde{s}_X\subsetneq X$ such that  $\tilde{s}_X \vDash Y | \mathrm{do}(Z)$ holds. 
\end{lemma}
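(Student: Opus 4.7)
The plan is to prove this by directly unpacking the two pieces of information carried by the Red$_1$ hypothesis together with the defining inequality of the original affects relation. Given that $X \vDash Y | \doo(Z)$ is Red$_1$, let $s_X \subsetneq X$ be a non-empty subset witnessing reducibility, and let $\tilde{s}_X := X\setminus s_X$. Since $s_X$ is a strict subset of $X$, the set $\tilde{s}_X$ is non-empty, so the candidate affects relation $\tilde{s}_X \vDash Y | \doo(Z)$ is at least well-typed.

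First, I would apply \cref{def:affects} to the assumption $X \vDash Y | \doo(Z)$: there exist values $x = (s_x,\tilde{s}_x)$ of $X = s_X \cup \tilde{s}_X$ and $z$ of $Z$ such that
\begin{equation}
    P_{\cG_{\doo(XZ)}}\!\bigl(Y \,\big|\, s_X = s_x,\, \tilde{s}_X = \tilde{s}_x,\, Z=z\bigr) \,\neq\, P_{\cG_{\doo(Z)}}(Y \mid Z=z).
\end{equation}
Next, I would invoke the non-affects relation $s_X \not\vDash Y | \doo(\tilde{s}_X Z)$ provided by Red$_1$. Observing that $\cG_{\doo(s_X \tilde{s}_X Z)} = \cG_{\doo(XZ)}$, \cref{def:affects} (negated, so the condition holds for all choices of values) yields in particular for the specific $(s_x,\tilde{s}_x,z)$ fixed above that
\begin{equation}
    P_{\cG_{\doo(XZ)}}\!\bigl(Y \,\big|\, s_X = s_x,\, \tilde{s}_X = \tilde{s}_x,\, Z=z\bigr) \,=\, P_{\cG_{\doo(\tilde{s}_X Z)}}(Y \mid \tilde{s}_X = \tilde{s}_x,\, Z=z).
\end{equation}
Chaining these two displayed relations produces $P_{\cG_{\doo(\tilde{s}_X Z)}}(Y \mid \tilde{s}_X = \tilde{s}_x, Z=z) \neq P_{\cG_{\doo(Z)}}(Y \mid Z=z)$, which by \cref{def:affects} is exactly $\tilde{s}_X \vDash Y|\doo(Z)$, as required.

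The proof is essentially a short syntactic chase; there is no deep obstacle. The only subtle point, which I would emphasize to avoid confusion, is the bookkeeping of intervention sets: one must verify that the post-intervention graph $\cG_{\doo(s_X \tilde{s}_X Z)}$ appearing in the non-affects hypothesis coincides with $\cG_{\doo(XZ)}$ (immediate from $X = s_X \sqcup \tilde{s}_X$), so that the two conditional distributions being compared refer to the same post-intervention model and the chaining is legitimate. A minor secondary point is to confirm that the witnessing values $(s_x,\tilde{s}_x,z)$ can be chosen consistently across the two steps, which is automatic because the non-affects relation is a universal statement over all such values.
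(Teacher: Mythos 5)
Your proof is correct and follows essentially the same route as the paper (which defers the formal proof to the cited reference but spells out the identical argument in its ``operational motivation'' paragraph and in the analogous proof of the Red$_3$ lemma): unpack the defining inequality of $X \vDash Y \given \doo(Z)$, use the universal equality furnished by $s_X \not\vDash Y \given \doo(\tilde{s}_X Z)$ at the witnessing values, and chain the two to obtain $\tilde{s}_X \vDash Y \given \doo(Z)$. Your explicit bookkeeping of $\cG_{\doo(s_X \tilde{s}_X Z)} = \cG_{\doo(XZ)}$ and of the non-emptiness of $\tilde{s}_X$ is a welcome but inessential addition.
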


\textbf{Operational motivation for definition} Red$_1$ captures the idea that the original affects relation $X \vDash Y | \mathrm{do}(Z)$ and the reduced affects relation $\tilde{s}_X \vDash Y | \mathrm{do}(Z)$ (for $\tilde{s}_X\subsetneq X$) carry the same information. Writing out these affects relations, we have
\begin{align}
    \begin{split}
  P_{\cG_{\mathrm{do}(XZ)}}(Y|XZ)&\neq  P_{\cG_{\mathrm{do}(Z)}}(Y|Z)\\
        P_{\cG_{\mathrm{do}(\tilde{s}_XZ)}}(Y|\tilde{s}_XZ)&\neq  P_{\cG_{\mathrm{do}(Z)}}(Y|Z)   
    \end{split}
\end{align}
Notice that the right hand side of the two expressions are the same, and Red$_1$ requires the left hand sides to be identical (which is equivalent to saying that $s_X \not\vDash Y | \mathrm{do}(\tilde{s}_XZ)$). Once this is imposed, the two affects relations carry the same information, and are expressed by equivalent expressions.

\begin{example}{(Motivating example for Red$_1$)}
    Consider the simple causal model where $X_1$ is a cause of $Y$ with $Y=X_1$, while $X_2$ is an additional causally disconnected node. Here we would expect $X_1X_2\vDash Y$ to be reducible to $X_1\vDash Y$ as $X_2$ is clearly redundant. Indeed this is the case using the above definition, as we have $X_2\not\vDash Y |\mathrm{do}(X_1)$ in this example. 
\end{example}

We apply a similar logic to define reducibility in the remaining arguments. We start with the third argument as its operational motivation is closer to Red$_1$, since both the first and third arguments correspond to nodes on which active interventions have been performed.

\begin{definition}[Reducibility in the third argument]
\label{def: red3}
 We say that an affects relation $X \vDash Y | \mathrm{do}(Z)$  is reducible in the third argument (or Red$_3$) if there exists a non-empty subset $s_Z\subseteq Z$ such that both the following conditions hold, where $\tilde{s}_Z:=Z\backslash s_Z$
 \begin{itemize}
     \item  $s_Z \not\vDash Y | \mathrm{do}(X\tilde{s}_Z)$
     \item $s_Z \not\vDash Y | \mathrm{do}(\tilde{s}_Z)$
 \end{itemize}
Otherwise, we say that $X \vDash Y | \mathrm{do}(Z)$ is irreducible in the third argument and denote it as Irred$_3$.
\end{definition}

\begin{restatable}{lemma}{redThree}
\label{lemma: red3}
    If $X \vDash Y | \mathrm{do}(Z)$ is a Red$_3$ affects relation, then there exists $\tilde{s}_Z\subsetneq Z$ such that  $X \vDash Y | \mathrm{do}(\tilde{s}_Z)$ holds. 
\end{restatable}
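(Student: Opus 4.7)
The plan is to prove \cref{lemma: red3} by directly unpacking the definition of each affects / non-affects statement into explicit (in)equalities between post-intervention conditional distributions, and then chaining them together at a common choice of witness values. No structural or graph-theoretic argument should be needed; this is purely a bookkeeping exercise with probabilities.

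First, I would use the hypothesis $X \vDash Y \given \doo(Z)$ to extract (by \cref{def:affects}) witnessing values $x$ of $X$ and $z$ of $Z$ such that
\begin{equation*}
P_{\cG_{\doo(XZ)}}(Y \given X=x, Z=z) \;\neq\; P_{\cG_{\doo(Z)}}(Y \given Z=z).
\end{equation*}
Given the Red$_3$ partition $Z = s_Z \sqcup \tilde s_Z$, I would decompose $z = (s_z, \tilde s_z)$. Next, I would unpack the two non-affects clauses in \cref{def: red3}: using $\doo(s_Z X \tilde s_Z) = \doo(XZ)$, the clause $s_Z \not\vDash Y \given \doo(X\tilde s_Z)$ means that for all values,
\begin{equation*}
P_{\cG_{\doo(XZ)}}(Y \given X, s_Z, \tilde s_Z) \;=\; P_{\cG_{\doo(X\tilde s_Z)}}(Y \given X, \tilde s_Z),
\end{equation*}
while the clause $s_Z \not\vDash Y \given \doo(\tilde s_Z)$ means that for all values,
\begin{equation*}
P_{\cG_{\doo(Z)}}(Y \given s_Z, \tilde s_Z) \;=\; P_{\cG_{\doo(\tilde s_Z)}}(Y \given \tilde s_Z).
\end{equation*}
Evaluating both equalities at the specific witnesses $x, s_z, \tilde s_z$ and chaining with the strict inequality above immediately yields
\begin{equation*}
P_{\cG_{\doo(X\tilde s_Z)}}(Y \given X=x, \tilde s_Z = \tilde s_z) \;\neq\; P_{\cG_{\doo(\tilde s_Z)}}(Y \given \tilde s_Z = \tilde s_z),
\end{equation*}
which is precisely $X \vDash Y \given \doo(\tilde s_Z)$; since $s_Z$ is non-empty, $\tilde s_Z \subsetneq Z$, as required.

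The only genuinely delicate points are (i) that the Red$_3$ clauses, being \emph{negated} affects relations, assert equality for \emph{all} values of the conditioning variables, so one may legitimately evaluate them at the specific witnesses produced in the first step; and (ii) the degenerate case $s_Z = Z$, in which $\tilde s_Z = \emptyset$ and the conclusion becomes the zeroth-order statement $X \vDash Y$, with $P_{\cG_{\doo(\tilde s_Z)}}$ and $P_{\cG_{\doo(X\tilde s_Z)}}$ interpreted as $P_{\cG}$ and $P_{\cG_{\doo(X)}}$ respectively. I expect no real obstacle here; the same chain of equalities applies verbatim in the degenerate case, so no separate argument is needed. If one wanted to avoid writing $(s_z, \tilde s_z)$ in the edge case, the cleanest presentation is simply to treat $\tilde s_Z$ symbolically and remark at the end that everything reduces correctly when $\tilde s_Z = \emptyset$.
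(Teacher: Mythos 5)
Your proposal is correct and follows essentially the same route as the paper's proof: unpack the original affects relation as an inequality of post-intervention distributions, unpack the two Red$_3$ non-affects clauses as universally quantified equalities that identify the left- and right-hand sides with those of the reduced relation, and chain them. The paper's version simply suppresses the explicit witness values, and your remarks on the universal quantification of the negated clauses and the $\tilde{s}_Z=\emptyset$ edge case are accurate but not points the paper needs to treat separately.
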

\begin{proof}
	Suppose $X \vDash Y | \mathrm{do}(Z)$ holds and is a Red$_3$ affects relation. Then writing out this affects relation along with the two non-affects relations implied by the reducibility (while recalling that $s_Z\cup\tilde{s}_Z=Z$), we have
	\begin{align}
	\label{eq: red3_proof}
		\begin{split}
		  P_{\cG_{\mathrm{do}(XZ)}}(Y|XZ)&\neq  P_{\cG_{\mathrm{do}(Z)}}(Y|Z)\\
			P_{\cG_{\mathrm{do}(XZ)}}(Y|XZ)&=  P_{\cG_{\mathrm{do}(X\tilde{s}_Z)}}(Y|X\tilde{s}_Z)\\
			 P_{\cG_{\mathrm{do}(Z)}}(Y|Z)&=  P_{\cG_{\mathrm{do}(\tilde{s}_Z)}}(Y|\tilde{s}_Z)\\
		\end{split}
	\end{align}
	Taken together, these imply that $P_{\cG_{\mathrm{do}(X\tilde{s}_Z)}}(Y|X\tilde{s}_Z)\neq P_{\cG_{\mathrm{do}(\tilde{s}_Z)}}(Y|\tilde{s}_Z)$ which is equivalent to $X \vDash Y|\mathrm{do}(\tilde{s}_Z)$.
\end{proof}

\medskip
\textbf{Operational motivation for the definition} As with the case of Red$_1$, Red$_3$ captures that the original affects relation $X \vDash Y|\mathrm{do}(Z)$ and the reduced (in the third argument) affects relation $X \vDash Y|\mathrm{do}(\tilde{s}_Z)$, for $\tilde{s}_Z\subsetneq Z$ carry the same information, while ensuring that it recovers the expected notion of redundancy in simple test examples. 
Noting that the original and reduced relations are equivalent to the following two conditions respectively,
\begin{align}
\label{eq: red3_motivation}
    \begin{split}
        P_{\cG_{\mathrm{do}(XZ)}}(Y|XZ)&\neq  P_{\cG_{\mathrm{do}(Z)}}(Y|Z),\\
       P_{\cG_{\mathrm{do}(X\tilde{s}_Z)}}(Y|X\tilde{s}_Z)&\neq P_{\cG_{\mathrm{do}(\tilde{s}_Z)}}(Y|\tilde{s}_Z).
    \end{split}
\end{align}

We observe that the two conditions in \cref{def: red3} ensure that the left-hand sides and right-hand sides of the expressions for the original and reduced affects relations are identical. Specifically, this corresponds to $s_Z \not\vDash Y \given \mathrm{do}(\tilde{s}_Z)$ and $s_Z \not\vDash Y \given \mathrm{do}(\tilde{s}_Z X)$. Unlike the case of Red$_1$, where the right-hand sides of the expressions for the original and reduced affects relations are equal by default, here we require two conditions because this equality does not hold by default in Red$_3$. Consequently, an additional level of subtlety arises. In the above, we equate the two left-hand sides and the two right-hand sides of the relevant expressions. This corresponds to equating the respective pre- and post-intervention distributions of the two affects relations.
One might consider an alternative approach, equating the left-hand side of one expression with the right-hand side of another, to derive a different condition for reducibility in the third argument, which would also imply the equivalence of the expressions in \cref{eq: red3_motivation}.
However, in doing so, we would disregard the information specifying which distribution is pre-/post-intervention, even though it is operationally accessible via the set of nodes intervened upon.
Therefore, a more natural definition of operational equivalence of the original and reduced affects relation is the former one which account for this additional, operationally accessible information.
We can further confirm this argument with a simple test example for ruling out the alternative definition.

\begin{example}{(Motivating example for Red$_3$)}
    Let $X$, $Y$, and $Z$ be individual random variables where $X$ is a cause of $Y$ and $Z$ is a causally disconnected node. In this scenario, we have $X \vDash Y | \mathrm{do}(Z)$ and $X \vDash Y$ in general, but we would expect the former to be reducible to the latter since $Z$ is entirely redundant. This is indeed the case by \cref{def: red3} because $Z \not\vDash Y$ and $Z \not\vDash Y | \mathrm{do}(X)$. However, the alternative definition mentioned above, which mixes the pre- and post-intervention distributions, would incorrectly classify $X \vDash Y | \mathrm{do}(Z)$ as irreducible despite $Z$'s redundancy.\footnote{One could nonetheless consider a definition of reducibility based on the logical OR of the two conditions discussed, which are two ways by which the expressions in \cref{eq: red3_motivation} can be equivalent. This alternative would classify a smaller subset of affects relations as irreducible compared to \cref{def: red3}, thus limiting the generality of our main results. The current results, which use \cref{def: red3} to explore the consequences of Irred$_3$ affects relations, are more general and would imply the same for this alternative definition.}
\end{example}

The following examples both illustrate affects relations $X\vDash Y|\mathrm{do}(Z)$ which are Irred$_3$, but show that the two conditions of \cref{def: red3} can be independently violated.\footnote{For general causal models, that is. For faithful causal models, we demonstrate in \cref{sec:clus-prop} that the first condition implies the second one.}
\begin{example}{(Violating the first Red$_3$ condition)}
\label{eg: red3_1}
Consider a one-time pad over binary variables where $X\longrsquigarrow Y$ and $Z\longrsquigarrow Y$, with $X$ and $Z$ uniformly distributed and $Y=X\oplus Z$. Here, we have $X\vDash Y|\mathrm{do}(Z)$, $Z\not\vDash Y$ and $Z\vDash Y|\mathrm{do}(X)$. $Z$ is a singleton and hence the only possible non-empty subset $s_Z$ here is $Z$ itself.
Thus $X\vDash Y|\mathrm{do}(Z)$ is an Irred$_3$ affects relation, it violates the the first condition of \cref{def: red3} for $s_Z=Z$ (since $Z\vDash Y|\mathrm{do}(X)$). However it satisfies the second condition for the same $s_Z$, (since $Z\not\vDash Y$).
\end{example}

\begin{example}{(Violating the second Red$_3$ condition)}
\label{eg: red3_2}
 This is identical to Example IV.4 from \cite{VVC} and illustrated in \cref{fig: eg_red3_2}. Consider the causal structure $\mathcal{G}$ of \cref{fig: eg_red3_2a} with all nodes being binary variables, and the causal model where $W$ is uniformly distributed, $Y=X\oplus Z\oplus W$, $Z=X$ and $X=W$. This gives us $Y=W$ in $\mathcal{G}$ with $P_{\mathcal{G}}(Y)=P_{\mathcal{G}}(W)$ being uniform. Consider an intervention on $X$ associated with the post-intervention graph $\mathcal{G}_{\mathrm{do}(X)}$ of \cref{fig: eg_red3_2b}. Here, we no longer have $X=W$ (but the remaining functional dependences hold). We still have $Y=W$ (since $Z=X$) and this tells us that $P_{\mathcal{G}_{\mathrm{do}(X)}}(Y|X)$ is also uniform, independently of the value of $X$. Under interventions on $Z$, we obtain the graph $\mathcal{G}_{\mathrm{do}(Z)}$ of \cref{fig: eg_red3_2c} where $Z=X$ no longer holds. We obtain $Y=Z$ in this graph since $X=W$ which implies that $P_{\mathcal{G}_{\mathrm{do}(Z)}}(Y|Z)$ is deterministic. Since $Y=X\oplus Z\oplus W$ still holds, this tells us that $Z\vDash Y$. Finally, under joint intervention on $X$ and $Z$, we obtain the graph $\mathcal{G}_{\mathrm{do}(XZ)}$ of \cref{fig: eg_red3_2d} where we neither have $Z=X$ nor $X=W$. It is easy to see that $P_{\mathcal{G}_{\mathrm{do}(XZ)}}(Y|XZ)$ is then uniform, independent of the values of $X$ and $Z$, since $W$ is uniform. Since $P_{\mathcal{G}_{\mathrm{do}(XZ)}}(Y|XZ)$ differs from $P_{\mathcal{G}_{\mathrm{do}(Z)}}(Y|Z)$ but not from $P_{\mathcal{G}_{\mathrm{do}(X)}}(Y|X)$, we have $X\vDash Y|\mathrm{do}(Z)$ and $Z\not\vDash Y|\mathrm{do}(X)$. Again, as in \cref{eg: red3_1}, the only possible non-empty subset $s_Z$ of $Z$ is $Z$ itself, and the affects relation $X\vDash Y|\mathrm{do}(Z)$ of this example is also Irred$_3$, but in this case, it satisfies the first condition (since $Z\not\vDash Y|\mathrm{do}(X)$) but violates the second condition (since $Z\vDash Y$) of \cref{def: red3}.
\end{example}

Clearly $Z$ is non-redundant in the affects relation $X\vDash Y|\mathrm{do}(Z)$ in both of the above examples, and these examples independently motivate the relevance of both conditions of \cref{def: red3}.

We conclude by introducing the notion of affects relations that are reducible in their second argument.
In analogy to the fact that higher-order affects relations are required for defining Red$_1$ even for a 0$^\text{th}$-order relation $X\affects Y$ \cite{VVC}, we require conditional affects relations (\cref{def:affects-cond}) to define Red$_2$ for unconditional relations. Moreover, as the second argument of an affects relation emerges in the main argument of the associated probability distributions (as opposed to other arguments which appear as the conditionals), we additionally gain a condition on conditional independence (denoted by $\indep$, cf.\ \cref{def:cond_indep}). We define Red$_2$ below and subsequently motivate the definition further.

\begin{figure}
    \centering
    \begin{subfigure}[b]{0.35\textwidth}
        \centering
        \begin{tikzpicture}[scale=0.9]
            \node[shape=circle,draw=black] (X) at (0,0.5) {$X$};
            \node[shape=circle,draw=black] (Z) at (-2,2) {$Z$};
            \node[shape=circle,draw=black] (W) at (2,2) {$W$};
            \node[shape=circle,draw=black] (Y) at (0,3.5) {$Y$};
            \draw[vvarrow, arrows={-Stealth}] (X) -- (Y);  \draw[vvarrow, arrows={-Stealth}] (Z) -- (Y);
            \draw[vvarrow, arrows={-Stealth}] (W) -- (Y);
            \draw[vvarrow, arrows={-Stealth}] (W) -- (X);
            \draw[vvarrow, arrows={-Stealth}] (X) -- (Z);
        \end{tikzpicture}
        \caption{A causal structure $\mathcal{G}$}
        \label{fig: eg_red3_2a}
    \end{subfigure}%
    \hspace{0.05\textwidth}
    \begin{subfigure}[b]{0.35\textwidth}
        \centering
        \begin{tikzpicture}[scale=0.9]
            \node[shape=circle,draw=black] (X) at (0,0.5) {$X$};
            \node[shape=circle,draw=black] (Z) at (-2,2) {$Z$};
            \node[shape=circle,draw=black] (W) at (2,2) {$W$};
            \node[shape=circle,draw=black] (Y) at (0,3.5) {$Y$};
            \draw[vvarrow, arrows={-Stealth}] (X) -- (Y);
            \draw[vvarrow, arrows={-Stealth}] (Z) -- (Y);
            \draw[vvarrow, arrows={-Stealth}] (W) -- (Y);
            \draw[vvarrow, arrows={-Stealth}] (X) -- (Z);
        \end{tikzpicture}
        \caption{A causal structure $\mathcal{G}_{\doo(X)}$}
        \label{fig: eg_red3_2b}
    \end{subfigure}

    \vspace{.5cm}
    \begin{subfigure}[b]{0.35\textwidth}
        \centering
        \begin{tikzpicture}[scale=0.9]
            \node[shape=circle,draw=black] (X) at (0,0.5) {$X$};
            \node[shape=circle,draw=black] (Z) at (-2,2) {$Z$};
            \node[shape=circle,draw=black] (W) at (2,2) {$W$};
            \node[shape=circle,draw=black] (Y) at (0,3.5) {$Y$};
            \draw[vvarrow, arrows={-Stealth}] (X) -- (Y);
            \draw[vvarrow, arrows={-Stealth}] (Z) -- (Y);
            \draw[vvarrow, arrows={-Stealth}] (W) -- (Y);
            \draw[vvarrow, arrows={-Stealth}] (W) -- (X);
        \end{tikzpicture}
        \caption{A causal structure $\mathcal{G}_{\doo(Z)}$}
        \label{fig: eg_red3_2c}
    \end{subfigure}%
    \hspace{0.05\textwidth}
    \begin{subfigure}[b]{0.35\textwidth}
        \centering
        \begin{tikzpicture}[scale=0.9]
            \node[shape=circle,draw=black] (X) at (0,0.5) {$X$};
            \node[shape=circle,draw=black] (Z) at (-2,2) {$Z$};
            \node[shape=circle,draw=black] (W) at (2,2) {$W$};
            \node[shape=circle,draw=black] (Y) at (0,3.5) {$Y$};
            \draw[vvarrow, arrows={-Stealth}] (X) -- (Y);
            \draw[vvarrow, arrows={-Stealth}] (Z) -- (Y);
            \draw[vvarrow, arrows={-Stealth}] (W) -- (Y);
        \end{tikzpicture}
        \caption{A causal structure $\mathcal{G}_{\doo(XZ)}$}
        \label{fig: eg_red3_2d}
    \end{subfigure}
    \caption{Pre- and post-intervention causal structures for \cref{eg: red3_2}. Figure derived from \cite{VVC}.}
    \label{fig: eg_red3_2}
\end{figure}

\begin{restatable}[Reducibility in the second argument]{definition}{redTwoDef}
\label{def: red2}
 We say that an affects relation $X \vDash Y | \mathrm{do}(Z)$  is reducible in the second argument (or Red$_2$) if there exists a non-empty subset $s_Y\subsetneq Y$ such that both the following conditions hold, where $\tilde{s}_Y:=Y\backslash s_Y$.
 \begin{itemize}
     \item $X \not\vDash s_Y | \mathrm{do}(Z), \tilde{s}_Y$ (cf.\ \cref{def:affects-cond})
     \item $(s_Y\upmodels \tilde{s}_Y|XZ)_{\cG_{\mathrm{do}(XZ)}}$ or $(s_Y\upmodels \tilde{s}_Y|Z)_{\cG_{\mathrm{do}(Z)}}$
 \end{itemize}
Otherwise, we say that $X \vDash Y | \mathrm{do}(Z)$ is irreducible in the second argument and denote it as Irred$_2$.
\end{restatable}

\begin{restatable}{lemma}{redTwo}
\label{lemma: red2}
    If $X \vDash Y | \mathrm{do}(Z)$ is a Red$_2$ affects relation, then there exists $\tilde{s}_Y\subsetneq Y$ such that $X \vDash \tilde{s}_Y | \mathrm{do}(Z)$ holds. 
\end{restatable}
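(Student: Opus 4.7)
The plan is to prove \cref{lemma: red2} by contradiction, using only the first condition of \cref{def: red2} together with the chain rule of conditional probability. Since $X \vDash Y \given \mathrm{do}(Z)$ is Red$_2$, \cref{def: red2} supplies a non-empty $s_Y\subsetneq Y$ with $\tilde{s}_Y := Y\setminus s_Y$ satisfying in particular the conditional non-affects $X\not\vDash s_Y\given \mathrm{do}(Z),\tilde{s}_Y$; reading this through \cref{def:affects-cond}, it amounts to the equality
\begin{equation*}
P_{\cG_{\mathrm{do}(XZ)}}(s_Y\given X,Z,\tilde{s}_Y) = P_{\cG_{\mathrm{do}(Z)}}(s_Y\given Z,\tilde{s}_Y) \quad \text{for all values of the arguments.}
\end{equation*}

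Next, I would suppose for contradiction that the desired reduced relation fails, i.e.\ $X\not\vDash \tilde{s}_Y\given \mathrm{do}(Z)$, which by \cref{def:affects} gives $P_{\cG_{\mathrm{do}(XZ)}}(\tilde{s}_Y\given X,Z) = P_{\cG_{\mathrm{do}(Z)}}(\tilde{s}_Y\given Z)$ for all values. Combining both equalities via the product rule yields, for any $x,z,s_y,\tilde{s}_y$,
\begin{align*}
P_{\cG_{\mathrm{do}(XZ)}}(s_y,\tilde{s}_y\given x,z)
&= P_{\cG_{\mathrm{do}(XZ)}}(s_y\given x,z,\tilde{s}_y)\, P_{\cG_{\mathrm{do}(XZ)}}(\tilde{s}_y\given x,z) \\
&= P_{\cG_{\mathrm{do}(Z)}}(s_y\given z,\tilde{s}_y)\, P_{\cG_{\mathrm{do}(Z)}}(\tilde{s}_y\given z) \\
&= P_{\cG_{\mathrm{do}(Z)}}(s_y,\tilde{s}_y\given z),
\end{align*}
contradicting the original affects relation $X\vDash Y\given\mathrm{do}(Z)$ by \cref{def:affects}. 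Hence $X\vDash\tilde{s}_Y\given\mathrm{do}(Z)$ must hold, which is exactly the claim.

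The only technical subtlety is that the product rule above invokes conditional probabilities of $s_y$ given $\tilde{s}_y$, which requires the marginals of $\tilde{s}_y$ to be strictly positive under both distributions. I would handle the boundary cases by observing that whenever $P_{\cG_{\mathrm{do}(XZ)}}(\tilde{s}_y\given x,z)$ and $P_{\cG_{\mathrm{do}(Z)}}(\tilde{s}_y\given z)$ disagree on some value---including the case one of them vanishes while the other is positive---$X\vDash\tilde{s}_Y\given\mathrm{do}(Z)$ holds directly; alternatively, the entire argument can be rephrased in terms of joint distributions so that no conditioning on $\tilde{s}_y$ is needed. It is worth noting that the second bullet of \cref{def: red2}, which encodes the conditional independence required for full operational equivalence between the original and reduced relations, is not used in this argument: the first bullet alone suffices to guarantee the existence of a reduced affects relation.
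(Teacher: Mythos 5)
Your proof is correct, but it takes a genuinely different route from the paper's. The paper argues directly: it factorizes both sides of the defining inequality $P_{\cG_{\mathrm{do}(XZ)}}(Y|XZ)\neq P_{\cG_{\mathrm{do}(Z)}}(Y|Z)$ as $P(s_Y|\cdot)\,P(\tilde{s}_Y|\cdot)$ using the conditional independences from the second bullet of \cref{def: red2}, equates the two $s_Y$-factors via (an unconditionalized form of) the first bullet, and cancels them to read off $P_{\cG_{\mathrm{do}(XZ)}}(\tilde{s}_Y|XZ)\neq P_{\cG_{\mathrm{do}(Z)}}(\tilde{s}_Y|Z)$. You instead argue by contradiction using only the first bullet together with the always-valid chain rule $P(s_Y\tilde{s}_Y|\cdot)=P(s_Y|\tilde{s}_Y,\cdot)\,P(\tilde{s}_Y|\cdot)$, so the conditional-independence condition is never invoked; your closing observation is accurate and in fact sharpens the lemma, since the mere existence of a non-empty $s_Y\subsetneq Y$ with $X\not\vDash s_Y\given\mathrm{do}(Z),\tilde{s}_Y$ already forces $X\vDash\tilde{s}_Y\given\mathrm{do}(Z)$. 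This buys something concrete: the paper's displayed system in \cref{eq: red2} uses \emph{both} conditional independences as well as the unconditional equality $P_{\cG_{\mathrm{do}(XZ)}}(s_Y|XZ)=P_{\cG_{\mathrm{do}(Z)}}(s_Y|Z)$, whereas \cref{def: red2} only guarantees \emph{one} of the two independences and the $\tilde{s}_Y$-conditioned form of the non-affects statement, so your chain-rule route sidesteps that gap entirely. Your handling of the zero-probability boundary is also sound: under the contradiction hypothesis the two marginals of $\tilde{s}_Y$ coincide, so at every value either both vanish (and both joints vanish) or both are positive (and the chain-rule step is well defined). What the paper's factorization buys in exchange is a clearer view of the role of the second bullet, which is needed for the operational \emph{equivalence} of the original and reduced relations, though, as you note, not for the existence claim proved here.
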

\begin{proof}
    See \cref{sec:red2}.
\end{proof}

\medskip

\textbf{Operational motivation for the definition} As before, the Red$_2$ property of an affects relation $X \vDash Y | \mathrm{do}(Z)$ captures that it encodes the same information as an affects relation $X \vDash \tilde{s}_Y | \mathrm{do}(Z)$ for a strictly smaller second argument $\tilde{s}_Y \subsetneq Y$. For example, consider a causal model over $X$, $Y_1$ and $Y_2$ where $X\longrsquigarrow Y_1$ is the only edge and we have $Y_1=X$. Then for any distribution over $Y_2$ and $X$, we have $X\vDash Y_1$ and $X\vDash Y_1Y_2$. However, we know that $Y_2$ is entirely superficial in this example, and would expect that the latter affects relation should be reducible to the former one. Expressing the two relations explicitly, we have 
\begin{align}
\label{eq: red2_op1}
    \begin{split}
    P_{\cG_{\mathrm{do}(X)}}(Y_1|X)&\neq P_{\cG}(Y_1),\\
    P_{\cG_{\mathrm{do}(X)}}(Y_1Y_2|X)&\neq P_{\cG}(Y_1Y_2)
    \end{split}
\end{align}
Notice however that in this case, it does not make sense to simply equate the left and right hand sides of the expressions for the original and reduced affects relations, as we did for Red$_1$ and Red$_3$. Even in our simple example, $P_{\cG_{\mathrm{do}(X)}}(Y_1|X)\neq P_{\cG_{\mathrm{do}(X)}}(Y_1Y_2|X)$ and $P_{\cG}(Y_1Y_2)\neq P_{\cG}(Y_1)$.
Therefore, the motivation for the definition of Irred$_2$ is more complex than for the other cases. Moreover, Irred$_2$ will not be as central to our results as Irred$_1$ and Irred$_3$. Therefore, we conclude with a final example and defer further details regarding the rationale and examples for Irred$_2$ to \cref{sec:red2}.

\begin{example}{(Motivating example for Red$_2$)}
    \label{ex:jamming}
    The concept of jamming was first introduced in \cite{PhysRevA.53.3781} as a new form of post-quantum correlation in the context of a tripartite Bell scenario.
    Suppose that $Y$ is the (freely chosen) setting of a party Bob while $A$ and $C$ are outcomes of the remaining two parties Alice and Charlie.
    Jamming correlations allow for $Y$ to jointly signal to $A$ and $C$ (i.e., $P(AC|Y)\neq P(AC)$), without signalling individually i.e., $P(A|Y)=P(A)$, $P(C|Y)=P(C)$.\footnote{The motivation for considering these correlations is the following. Suppose we embed the tripartite scenario in space-time such that the three measurements are space-like separated and $Y$'s future light cone contains the intersection of the future light cones (i.e., the joint future) of $A$ and $C$. Then the joint signalling from $Y$ to $AC$ can only be verified within $Y$'s future light cone.
    Furthermore, since $Y$ does not individually signal to the space-like-separated variables $A$ or $C$, the authors of \cite{PhysRevA.53.3781} argue that joint signalling from $Y$ to $AC$ in this space-time configuration cannot facilitate superluminal signalling.
    Such space-time configurations, which allow for jamming correlations in a Bell-type scenario without enabling superluminal signalling, are indeed possible in Minkowski space-time — specifically, in both 1- and 2-dimensional spatial configurations, as explicitly demonstrated in Appendix C of \cite{VVCJ}.}

    In \cite{VVC_Letter, VVCJ, VVCPR} explicit causal models for such jamming scenarios have been proposed and analyzed in the affects framework. In the simplest case using the three variables mentioned here, the following causal model exhibits jamming, where the signalling is now expressed through affects relations.
    We have $Y\longrsquigarrow A$, $\Lambda\longrsquigarrow A$  and $\Lambda\longrsquigarrow C$ with $P(\Lambda)$ and $P(Y)$ being uniform, $A=\Lambda\oplus Y$ and $C=\Lambda$. Here, $Y\vDash AC$ but $Y\not\affects A$ and $Y\not\affects C$. However both $Y \vDash A \given C$ and $Y \vDash C \given A$ hold.
    Therefore, $Y \affects AC$ is Irred$_2$.
\end{example}

\subsection{Clustering of affects relations: detecting and classifying fine-tuning}
\label{sec:affects-clus}

So far we have focused on the reducibility and irreducibility of affects relations, and seen that an affects relation $X \vDash Y|\mathrm{do}(Z)$ which is reducible in any of the arguments $X$, $Y$ or $Z$ implies a reduced affects relation of the same form where the corresponding argument is replaced by a strict subset of it.
Conversely, how does the irreducibility of $X \vDash Y|\mathrm{do}(Z)$ in a given argument relate to the existence/non-existence of affects relations of the same form involving strict subsets of that argument?
To study this, we introduce a related property to irreducibility, which also applies to different arguments and which we call \emph{clustering}.
It captures the property of an affects relation to hold without the presence of any reduced affects relations (involving strict subsets of a given argument).

\begin{definition}[Clustering in the first argument]
\label{def: clus1}
  An affects relation $X\vDash Y|\mathrm{do}(Z)$ is called clustered in the first argument (denoted Clus$_1$) if $|X|\geq 2$ and there exists no $s_X\subsetneq X$ such that   $s_X\vDash Y|\mathrm{do}(Z)$. 
\end{definition}

\begin{definition}[Clustering in the second argument]
\label{def: clus2}
  An affects relation $X\vDash Y|\mathrm{do}(Z)$ is called clustered in the second argument (denoted Clus$_2$) if $|Y|\geq 2$ and there exists no $s_Y\subsetneq Y$ such that   $X\vDash s_Y|\mathrm{do}(Z)$. 
\end{definition}

\begin{definition}[Clustering in the third argument]
\label{def: clus3}
  An affects relation $X\vDash Y|\mathrm{do}(Z)$ is called clustered in the third argument (denoted Clus$_3$) if $|Z|\geq 1$ and there exists no $s_Z\subsetneq Z$ such that   $X\vDash Y|\mathrm{do}(s_Z)$. 
\end{definition}

For each argument, these definitions capture the intuition that an affects relation can not be ascribed to individual elements of that argument, but emerges only when jointly considering multiple elements.

We note that the third argument is treated slightly differently from the first two arguments, because for any affects relation the latter must necessarily be non-empty while the former need not. The following theorem then shows that clustered affects relations constitute a special case of irreducibility.
\begin{restatable}{theorem}{clusIrreducible}[Clustering implies irreducibility]
\label{theorem: clus_irred}
    For any affects relation  $X\vDash Y|\mathrm{do}(Z)$, Clus$_i$ $\Rightarrow$ Irred$_i$ for all $i\in\{1,2,3\}$.
\end{restatable}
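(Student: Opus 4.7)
The plan is to prove each of the three implications Clus$_i \Rightarrow$ Irred$_i$ separately, by contraposition, with each case reducing directly to one of the three structural lemmas already established in the excerpt: \cref{lemma: red1}, \cref{lemma: red2}, and \cref{lemma: red3}. Each of these lemmas has exactly the form ``if the affects relation is Red$_i$, then there exists a strict subset of the $i$-th argument that on its own yields an affects relation of the same shape.'' Since the defining feature of Clus$_i$ is precisely the non-existence of any such strict-subset affects relation, the contrapositive gives the desired implication immediately.

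Concretely, for $i=1$ I would assume that $X\vDash Y|\mathrm{do}(Z)$ is Red$_1$. By \cref{lemma: red1}, this yields $\tilde{s}_X\subsetneq X$ with $\tilde{s}_X \vDash Y|\mathrm{do}(Z)$, which directly contradicts \cref{def: clus1}. For $i=2$ I would assume Red$_2$ and invoke \cref{lemma: red2} to obtain $\tilde{s}_Y\subsetneq Y$ with $X\vDash \tilde{s}_Y|\mathrm{do}(Z)$, contradicting \cref{def: clus2}. For $i=3$ I would assume Red$_3$ and apply \cref{lemma: red3} to obtain $\tilde{s}_Z\subsetneq Z$ with $X\vDash Y|\mathrm{do}(\tilde{s}_Z)$, contradicting \cref{def: clus3} (where one should note that $\tilde{s}_Z=\emptyset$ is admissible as a strict subset of $Z$ when $|Z|=1$, and this case is indeed covered by the clustering definition since it only requires $|Z|\geq 1$ and quantifies over all $s_Z\subsetneq Z$).

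There is no real obstacle here: the content of the theorem is essentially that the clustering definitions have been chosen to be a strict strengthening of irreducibility, and the three reducibility lemmas have already done the nontrivial work of extracting a reduced affects relation from each Red$_i$ hypothesis. The only subtlety worth flagging in the write-up is the boundary case $\tilde{s}_Z=\emptyset$ in the third argument, to make it explicit that the contradiction with Clus$_3$ still goes through. I would therefore present the proof as three short contrapositive arguments in a single paragraph, with a brief remark on this boundary case.
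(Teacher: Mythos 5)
Your proof is correct and follows essentially the same route as the paper's: the paper likewise derives a contradiction between Clus$_i$ and Red$_i$, merely re-deriving the content of \cref{lemma: red1} and \cref{lemma: red3} inline for $i=1,3$ (landing the contradiction on the original affects relation holding rather than on the non-existence clause of clustering) and using \cref{thm:decondition}.3 in place of \cref{lemma: red2} for $i=2$. Your packaging via the three reduction lemmas is a valid and slightly cleaner shortcut, and your remark on the $\tilde{s}_Z=\emptyset$ boundary case is apt.
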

\begin{proof}
    See \cref{proof:clusIrreducible}.
\end{proof}

\medskip
While clustering implies irreducibility, the following examples show that the converse is not true. The conclusions of these examples hold for all choices of distributions over the parentless nodes, and we therefore do not specify them.

\begin{example}{(Irred$_1$ $\not\Rightarrow$ Clus$_1$)}
    \label{ex:1}
Consider a causal model over the nodes $X_1$, $X_2$ and $Y$ with edges $X_1\longrsquigarrow Y$ and $X_2\longrsquigarrow Y$ with $Y=X_1 \cdot X_2$, where $\cdot$ represents binary \texttt{AND}. Let $X_1$ and $X_2$ be non-deterministically distributed.
We then have $X_1\vDash Y$ and $X_1X_2\vDash Y$, which implies that $X_1X_2\vDash Y$ does not satisfy the Clus$_1$ property. However $X_1X_2\vDash Y$ is an Irred$_1$ affects relation since $X_1\vDash Y|\mathrm{do}(X_2)$ and $X_2\vDash Y|\mathrm{do}(X_1)$ both hold.
\end{example}
\begin{example}{(Irred$_2$ $\not\Rightarrow$ Clus$_2$)}
    \label{ex:2}
Consider a causal model over the nodes $X$, $Y_1$ and $Y_2$ which are associated binary variables, with edges $X\longrsquigarrow Y_1 \longrsquigarrow Y_2$ and $X\longrsquigarrow Y_2$ with $Y_1=X$ and $Y_2=Y_1\cdot X$, where $\cdot$ represents binary \texttt{AND}. Then we have $X \affects Y_1$ which implies that $X\vDash Y_1Y_2$ does not satisfy the Clus$_2$ property. The only non-empty and strict subsets of $\{Y_1,Y_2\}$ are $\{Y_1\}$ and $\{Y_2\}$, but we have $X\vDash Y_1$ and $X\vDash Y_2$. Additionally, we observe that both $(Y_1 \not\upmodels Y_2)_{\cG}$ and $(Y_1 \not\upmodels Y_2|X)_{\cG_{do(X)}}$, which means that the second condition of \cref{def: red2} is always violated. Therefore $X\vDash Y_1Y_2$ is an Irred$_2$ affects relation.
\end{example}
\begin{example}{(Irred$_3$ $\not\Rightarrow$ Clus$_3$)}
    \label{ex:3}
Consider the causal model of \cref{ex:1}. We have $X_1\vDash Y$ and $X_1\vDash Y|\mathrm{do}(X_2)$ which implies that $X_1\vDash Y|\mathrm{do}(X_2)$ does not satisfy the Clus$_3$ property. However, $X_1\vDash Y|\mathrm{do}(X_2)$ does satisfy Irred$_3$ since $X_2\vDash Y|\mathrm{do}(X_1)$ and $X_2\vDash Y$ both hold, which violate both conditions of \cref{def: red3} for $s_{X_2}=X_2$ (which is the only possible non-empty $s_{X_2}\subseteq X_2$ in this case).
\end{example}
In \cref{sec:clus-prop}, we will point out some further strong implications of affects relations being clustered. These will not be relevant for our main points on the fine-tuning of causal models, but will be used in some proofs later on. The next connection we show is between clustering and fine-tuning.

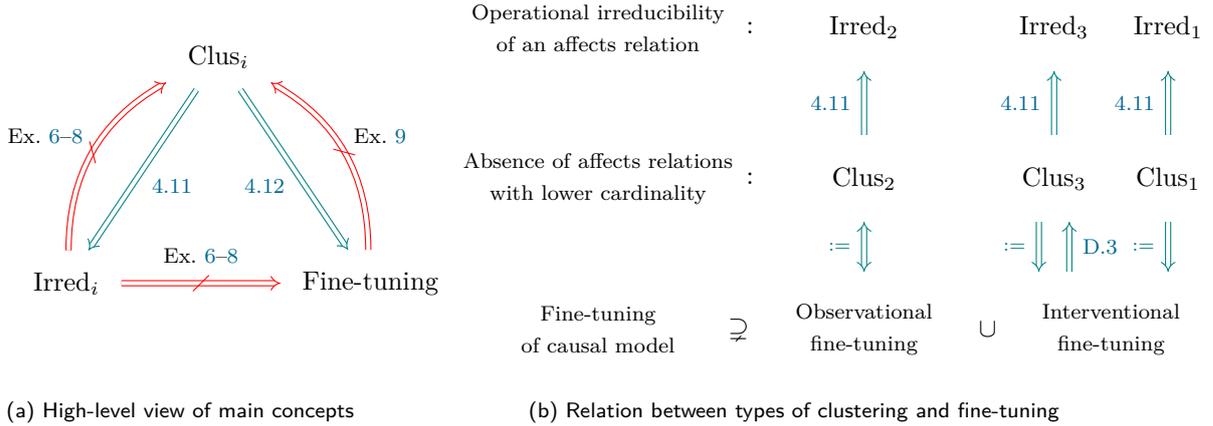
\begin{figure}
    \centering
    \begin{subfigure}[b]{0.32\textwidth}
    	\begin{tikzpicture}
    		\begin{scope}[every node/.style={inner sep=8pt}]
    			\node (Irred) at (0,0) {Irred$_i$};
    			\node (Clus) at (2,3) {Clus$_i$};
    			\node (Fine) at (4,0) {Fine-tuning};
    		\end{scope}
            \node (placeholder) at (0,-1.25) {};
    			\path[-implies,draw] (Clus)
                    edge[double equal sign distance,teal]
                    node[below right,black] {\footnotesize \ref{theorem: clus_irred}}
                    (Irred);
    			\path[-implies,draw] (Irred)
                    edge[strike through, double equal sign distance, bend left,red]
                    node[above left,black] {\footnotesize Ex.~\ref{ex:1}--\ref{ex:3}}
                    (Clus);
    			\path[-implies,draw] (Clus)
                    edge[double equal sign distance,teal]
                    node[below left,black] {\footnotesize \ref{lemma: clus_finetune}}
                    (Fine);
    			\path[-implies,draw] (Fine)
                    edge[strike through, double equal sign distance, bend right,red]
                    node[above right,black] {\footnotesize Ex.~\ref{ex:sorkin}}
                    (Clus);
    			\path[-implies,draw] (Irred)
                    edge[strike through, double equal sign distance,red]
                    node[above=.15cm,black] {\footnotesize Ex.~\ref{ex:1}--\ref{ex:3}}
                    (Fine);
    	\end{tikzpicture}
        \caption[b]{High-level view of main concepts}
    \end{subfigure}%
    \hspace{0.03\textwidth}
    \begin{subfigure}[b]{0.64\textwidth}
        \centering
        \begin{tikzpicture}
            \begin{scope}[every node/.style={align=center, inner sep=12pt}]
                \node (Irred) at (-0.5,4) {\footnotesize Operational irreducibility\\ \footnotesize of an affects relation};
                \node (colon) at (1.5,4) {:};
                \node (Irred2) at (3,4) {Irred$_2$};
                \node (Irred3) at (5.5,4) {Irred$_3$};
                \node (Irred1) at (7,4) {Irred$_1$};

                \node (Fine) at (-0.5,2) {\footnotesize Absence of affects relations\\ \footnotesize with lower cardinality};
                \node (colonc) at (1.5,2) {:};
                \node (Clus2) at (3,2) {Clus$_2$};
                \node (Clus3) at (5.5,2) {Clus$_3$};
                \node (Clus1) at (7,2) {Clus$_1$};

           \node (Clus3A) at (5.7,2) {\phantom{Clus$_3$}};
            \node (Clus3B) at (5.3,2) {\phantom{Clus$_3$}};
                \node (Fine) at (-0.5,0) {\footnotesize Fine-tuning\\ \footnotesize of causal model};
                \node (supset) at (1.35,0) {$\supsetneq$};
                \node (Obs) at (3,0) {\footnotesize Observational\\ \footnotesize fine-tuning};
                \node (cup) at (4.625,0) {$\cup$};
                \node (Intl) at (5.7,0) {\phantom{\footnotesize Observational}\\ \phantom{\footnotesize fine-tuning}};

                 \node (Intl2) at (5.3,0) {\phantom{\footnotesize Observational}\\ \phantom{\footnotesize fine-tuning}};
                \node (Intr) at (7.0,0) {\phantom{\footnotesize Observational}\\ \phantom{\footnotesize fine-tuning}};
                \node (Int) at (6.25,0) {\footnotesize Interventional\\ \footnotesize fine-tuning};
            \end{scope}
          
            \path[-implies,draw] (Clus1) edge[double equal sign distance,teal] node[left=.05cm,black] {\footnotesize  \ref{theorem: clus_irred}}(Irred1);
            \path[-implies,draw] (Clus2) edge[double equal sign distance,teal] node[left=.05cm,black] {\footnotesize  \ref{theorem: clus_irred}}(Irred2);
            \path[-implies,draw] (Clus3) edge[double equal sign distance,teal] node[left=.05cm,black] {\footnotesize \ref{theorem: clus_irred}}(Irred3);
          
            \path[implies-implies,draw] (Clus2) edge[double equal sign distance,teal]   node[left=.05cm,black] {\scriptsize \textcolor{teal}{ :=}} (Obs);
            \path[implies-,draw] (Clus3A) edge[double equal sign distance,teal] node[right=.05cm,black] {\footnotesize \ref{thm:clusThreeOne}}(Intl);
             \path[-implies,draw] (Clus3B) edge[double equal sign distance,teal] node[left=.05cm,black] {\scriptsize \textcolor{teal}{:=}}(Intl2);
            \path[-implies,draw] (Clus1) edge[double equal sign distance,teal]  node[left=.05cm,black] {\scriptsize \textcolor{teal}{:=}} (Intr);
          
        \end{tikzpicture}
        \caption[b]{Relation between types of clustering and fine-tuning}
    \end{subfigure}
    \caption{Overview of causal inference concepts and relationships given by our results. Generally, an affects relation $X \affects Y \given \doo(Z)$ may contain elements in $X,Y,Z$ which are operationally redundant. While irreducibility (Irred$_i$) signifies that this is not the case (per $i^\text{th}$ argument), clustering (Clus$_i$) additionally certifies that it is impossible to recover signalling by considering a subset of variables (per $i^\text{th}$ argument). The latter is a signature of fine-tuning, which we furthermore use to propose a distinction into different types of fine-tuning.
    In (b), for every one-way implication shown, inequivalence holds.
  Most of the implications between clustering and the specific types of fine-tuning are by definition, as indicated by the annotation $:=$.    
    }
    \label{fig:relations-coarse}
\end{figure}
\begin{restatable}{theorem}{clusFineTuning}[Clustering implies fine-tuning]
    \label{lemma: clus_finetune}
    Any affects relation  $X\vDash Y|\mathrm{do}(Z)$  that satisfies Clus$_i$ for any $i\in\{1,2,3\}$ necessarily arises from a fine-tuned causal model.  
\end{restatable}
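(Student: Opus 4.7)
My plan is to unify all three cases ($i=1,2,3$) under the following template: clustering in the $i$-th argument forces a reduced affects relation to fail, which by \cref{lemma: aff_corr} yields a conditional independence in a post-intervention graph, while the original $X\vDash Y\given\doo(Z)$ continues to supply (via \cref{thm:affects-to-cause-first}) a directed path that d-connects the corresponding sets in the same graph. Since this is an instance of independence without d-separation in a post-intervention model, it is exactly the extended notion of fine-tuning introduced in \cref{sec:affects}, so fine-tuning follows.

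The common ingredient is a slight strengthening of \cref{thm:affects-to-cause-first}: from $X\vDash Y\given\doo(Z)$ I would extract some $X_i\in X$, $Y_j\in Y$ and a directed path $X_i\dircause\cdots\dircause Y_j$ sitting inside $\cG_{\doo(XZ)}$, automatically with all intermediate nodes outside $X\cup Z$ (in $\cG_{\doo(XZ)}$ each node of $X\cup Z$ has no incoming edges and therefore cannot appear as an internal vertex of a directed path). Such a path then survives in every graph $\cG_{\doo(X'Z')}$ with $X'\subseteq X$ and $Z'\subseteq Z$, because performing fewer interventions only restores edges. Being a directed path whose intermediate nodes lie outside $Z$, it d-connects its endpoints both unconditionally and given $Z$.

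For $i=3$ I would take $s_Z=\emptyset$: Clus$_3$ gives $X\not\vDash Y$, hence $(X\upmodels Y)_{\cG_{\doo(X)}}$ by \cref{lemma: aff_corr}, while the path above yields $(X\not\perp^d Y)_{\cG_{\doo(X)}}$. For $i=1$ with $|X|\geq 2$, pick an arbitrary $X_i\in X$: Clus$_1$ gives $\{X_i\}\not\vDash Y\given\doo(Z)$, hence $(\{X_i\}\upmodels Y\given Z)_{\cG_{\doo(\{X_i\}Z)}}$; invoking Clus$_1\Rightarrow$ Irred$_1$ (\cref{theorem: clus_irred}) yields $\{X_i\}\vDash Y\given\doo((X\setminus\{X_i\})Z)$, and applying the common ingredient to this reduced relation produces a directed path from $X_i$ to some $Y_j$ inside $\cG_{\doo(XZ)}\subseteq\cG_{\doo(\{X_i\}Z)}$ that d-connects $\{X_i\}$ and $Y$ given $Z$. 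For $i=2$ with $|Y|\geq 2$, take $Y_j$ to be the endpoint of the path provided by the common ingredient and set $s_Y=\{Y_j\}$: Clus$_2$ gives $X\not\vDash\{Y_j\}\given\doo(Z)$, hence $(X\upmodels\{Y_j\}\given Z)_{\cG_{\doo(XZ)}}$, contradicted by the same path which d-connects $X_i\in X$ and $Y_j$ given $Z$.

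The main obstacle is executing the common ingredient cleanly: \cref{thm:affects-to-cause-first} as stated only asserts that $X$ is a cause of $Y$ in $\cG$, but the argument requires the (essentially immediate, but separately recorded) strengthening that the witnessing directed path can be placed inside $\cG_{\doo(XZ)}$ with all intermediate nodes outside $X\cup Z$, and that this suffices for d-connection given $Z$ in every post-intervention graph obtained by restoring some of the removed edges. Once this fact is in hand, each of the three cases becomes a direct three-line contradiction, and the theorem follows.
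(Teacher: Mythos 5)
Your proposal is correct and follows essentially the same route as the paper's proof: the non-affects relations forced by clustering are converted into conditional independences via \cref{lemma: aff_corr}, and the directed path supplied by the (reduced) affects relation --- which, as you rightly flag, must be taken from the proof of Lemma~IV.3 of \cite{VVC} (a path inside $\cG_{\doo(XZ)}$) rather than from the statement of \cref{thm:affects-to-cause-first}, and which survives under restoring edges --- gives the matching d-connection in the relevant post-intervention graph. Your Clus$_3$ case is a mild streamlining (taking $s_Z=\emptyset$ sidesteps the paper's argument that parentless conditioning nodes cannot act as colliders), and your Clus$_1$ case detours through \cref{theorem: clus_irred} where the paper uses the witness $e_X$ directly, but the substance is identical.
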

\begin{proof}
    See \cref{proof:clusFinetuning}.
\end{proof}

\medskip
Therefore, together with the aforementioned examples this theorem demonstrates that in contrast to clustering, irreducibility applies to both faithful and fine-tuned models. 
While we show that all other statements do, we will see in \cref{sec:affects-conditional} that this theorem does not fully generalize to the case of conditional affects relations (which include the additional argument $W$). There, we prove the generalization under an additional assumption (that $X$ is not a cause of $W$ in the underlying model) and conjecture that the full generalization holds. 

The connection between clustering and fine-tuning established above allows us to differentiate between multiple independent types of fine-tuning.
For instance, we will demonstrate a distinction between \textit{observational} fine-tuning, as indicated by Clus$_2$, and \textit{interventional} fine-tuning, as indicated by Clus$_1$ and Clus$_3$, which could potentially be relevant to characterize information processing tasks.

If it is necessary to intervene on multiple nodes to demonstrate the presence of signalling, we encounter \textit{interventional} fine-tuning, as encoded by Clus$_1$ or Clus$_3$.
In \cref{sec:clus-prop} we show that we can determine the presence of interventional fine-tuning entirely by the presence of affects relations with Clus$_3$.

Regarding \textit{observational} fine-tuning, we may consider the jamming causal model \cite{VVC_Letter, VVCJ, VVCPR} presented in \cref{ex:jamming} where $Y$ jams the correlations between $A$ and $C$, yielding $Y\affects AC$. This model exhibits clustering (only) in the second argument as it has $Y\naffects A$ and $Y\naffects C$.\footnote{Since this is the only affects relation, and it emanates from a single RV while having no third argument, it can be neither Clus$_1$ nor Clus$_3$.} Therefore, in this case we need to observe multiple nodes in the pre- and post-intervention distribution to register the presence of signalling.
In \cite{VVCJ}, it is shown that any causal model where the jamming variable $Y$ (in \cite{VVCJ}, this variable is called $B$) is parentless, and which exhibits jamming must necessarily be fine-tuned, independently of the theory describing hidden common causes.
Our \cref{lemma: clus_finetune} generalizes this previous result, since jamming is characterized by a Clus$_2$ relation (signalling to a set of RVs without signalling to a subset).
Moreover, it also applies to situations beyond Bell-type scenarios\footnote{Note that jamming was originally formulated in a Bell-type scenario, see \cref{ex:jamming}.} where $Y$ is not a freely chosen parentless variable, and to affects relations clustered in any argument. It provides a way to identify the fine-tuning of possible hidden causal parameters in general physical theories, solely from the observable affects relations.

Here, we note that the previous work considered only the original causal model in the definition of fine-tuning, while we also consider post-intervention causal models obtained from the original one. This means that potentially, a larger set of causal models are regarded as fine-tuned according to our definition as compared to the previous work, since one might in-principle have fine-tuning at the level of a post-intervention causal model even when the original pre-intervention model is faithful. Whether this is possible, however, remains an open question. See \cref{def:faithful} for further details on fine-tuning.

Conversely, examples with all three types of clustering also exist: The one-time pad causal model given in \cref{eg: red3_1} involves all three types of clustering. $XZ\affects Y$ is a Clus$_1$ relation (since $X\naffects Y$ and $Z\naffects Y$), we also have $X \affects YZ$ which is Clus$_2$ (since $X\naffects Y$ and $X\naffects Z$) and $X \affects Y|\mathrm{do}(Z)$ is Clus$_3$ (since $X\naffects Y$). Nonetheless, not every type of fine-tuning can be related to clustering in one way or another.

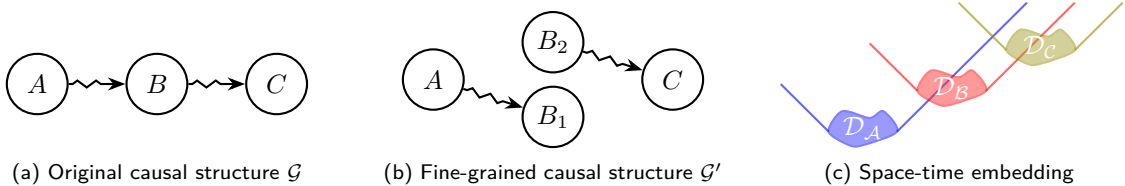
\begin{figure}[t]
	\centering
    \begin{subfigure}[b]{0.28\textwidth}
        \centering
        \begin{tikzpicture}[scale=0.9]
            \begin{scope}[every node/.style={circle,thick,draw,inner sep=0pt,minimum size=0.8cm}]
                \node (X) at (0,1) {$A$};
                \node (Y) at (1.75,1) {$B$};
                \node (Z) at (3.50,1) {$C$};
            \end{scope}
            \begin{scope}[every node/.style={inner sep=0pt,minimum size=0.8cm}]
                \node (P) at (0,0.5) {};
            \end{scope}

            \begin{scope}[>={Stealth[black]},
                          every node/.style={fill=white,circle},
                          every edge/.style=vvarrow]
                \path [->] (X) edge (Y);
                \path [->] (Y) edge (Z);
            \end{scope}
        \end{tikzpicture}
        \caption{Original causal structure $\mathcal{G}$}
        \label{fig:fine-grainining-original}
    \end{subfigure}%
    \hspace{0.04\textwidth}
    \begin{subfigure}[b]{0.28\textwidth}
        \centering
        \begin{tikzpicture}[scale=0.9]
            \begin{scope}[every node/.style={circle,thick,draw,inner sep=0pt,minimum size=0.8cm}]
                \node (X_2) at (0,1) {$A$};
                \node (Y_1) at (1.75,0.45) {$B_1$};
                \node (Y_2) at (1.75,1.55) {$B_2$};
                \node (Z_1) at (3.5,1) {$C$};
            \end{scope}

            \begin{scope}[>={Stealth[black]},
                          every edge/.style=vvarrow]
                \path [->] (X_2) edge (Y_1);
                \path [->] (Y_2) edge (Z_1);
            \end{scope}
        \end{tikzpicture}
        \caption{Fine-grained causal structure $\mathcal{G}'$}
        \label{fig:fine-grainining-after}
    \end{subfigure}%
    \hspace{0.04\textwidth}
    \begin{subfigure}[b]{0.28\textwidth}
        \centering
        \begin{tikzpicture}[scale=0.9]
        \draw [thick, blue, draw opacity=0.6, fill=blue!80!white, opacity=0.5] plot [smooth cycle] coordinates {(-0.5,0) (-0.2,-0.2) (0,-0.2)  (0.5,0) (0.3,0.3) (0.1,0.2) (-0.3,0.3)};
        \node[thick, white, draw opacity=1] at (0,0.05) {$\mathcal{D}_\cA$};
        \draw[thick, blue, draw opacity=0.6] (-0.5,0)--(-1.2,0.7);
        \draw[thick, blue, draw opacity=0.6] (0.5,0)--(2.4,1.9);  
            
        \begin{scope}[shift={(1.3,0.6)}]
            \draw [thick, red, draw opacity=0.6,fill=red!80!white, opacity=0.5] plot [smooth cycle] coordinates {(-0.5,0) (-0.2,-0.2) (0,-0.2)  (0.5,0) (0.3,0.3) (0.1,0.2) (-0.3,0.3)};
            \node[thick, white, draw opacity=0.6] at (0,0.05) {$\mathcal{D}_\cB$};
            \draw[thick, red, draw opacity=0.6] (-0.5,0)--(-1.2,0.7);
            \draw[thick, red, draw opacity=0.6] (0.5,0)--(1.8,1.3);
        \end{scope}
        
        \begin{scope}[shift={(2.6,1.2)}]
            \draw [thick, olive, draw opacity=0.6,fill=olive!80!white, opacity=0.5] plot [smooth cycle] coordinates {(-0.5,0) (-0.2,-0.2) (0,-0.2)  (0.5,0) (0.3,0.3) (0.1,0.2) (-0.3,0.3)};
            \node[thick, white, draw opacity=0.6] at (0,0.05) {$\mathcal{D}_\cC$};
            \draw[thick, olive, draw opacity=0.6] (-0.5,0)--(-1.2,0.7);
            \draw[thick, olive, draw opacity=0.6] (0.5,0)--(1.2,0.7);
        \end{scope}
    \end{tikzpicture}
    \caption{Space-time embedding}
    \label{fig:regions}
    \end{subfigure}
	\caption[Fine-graining of a cyclic causal structure to an acyclic causal structure and its space-time embedding.]{
        Example for a fine-graining of a causal structure, as given in \cref{ex:sorkin}, as well as its space-time embedding that corresponds to the configuration of Sorkin's problem \cite{Sorkin1993}
        In contrast to our approach within \cref{sec:poset}, just for this illustration, we consider an embedding that associates each RV, say $X$, with a space-time region $\mathcal{D}_{\XX}$ which has the relativistic future $\Fut_\text{reg} (\XX):=\{p\in \TT|\exists q\in \mathcal{D}_{\XX}, q\preceq p\}$. We can order regions $\mathcal{D}_{\XX}$ and $\mathcal{D}_{\YY}$ by considering whether the future of $\mathcal{D}_{\XX}$ has an overlap with $\mathcal{D}_{\YY}$ \cite{VVR}.
However, this will not be a partial order, as it lacks transitivity:
        Even though we have $\Fut_\text{reg} (\AA) \cap \mathcal{D}_\BB \neq \emptyset$ and $\Fut_\text{reg} (\BB) \cap \mathcal{D}_\CC \neq \emptyset$, we still have $\Fut_\text{reg} (\AA) \cap \mathcal{D}_\CC = \emptyset$. Note that in the spatio-temporal case as well, analogous to the information-theoretic fine-graining, the ``node'' corresponding to the region $\mathcal{D}_{\BB}$ can be fine-grained into two regions $\mathcal{D}_{\BB}=\mathcal{D}_{\BB_1}\cup \mathcal{D}_{\BB_2}$, where $\mathcal{D}_{\BB_1}$ includes all points in $\mathcal{D}_{\BB}$ that lie the future of $\mathcal{D}_{\AA}$ and $\mathcal{D}_{\BB_2}$ includes the remaining points in $\mathcal{D}_{\BB}$, such that $\Fut_\text{reg} (\BB_1) \cap \mathcal{D}_\CC = \emptyset$. 
        }
	\label{fig:fine-graining}
\end{figure}

\begin{example}{(Non-transitivity: fine-tuning does not imply clustering.)}
    \label{ex:sorkin} Consider a causal model over $S = \{ A, B, C \}$ associated with the causal structure of \cref{fig:fine-grainining-original}, where $A,C\in \{0,1\}$  and $B\in \{00,01,10,11\}$. The model is specified by an arbitrary exogenous distribution $P(A)$ and the conditional distributions $P(B|A)$, $P(C|B)$, where $P(B=i0|A=i)=P(B=i1|A=i)=\frac{1}{2}$ for $i\in \{0,1\}$ (and remaining entries 0), and $P(C=j|B=0j)=P(C=j|B=1j)=1$ for $j\in\{0,1\}$ (and remaining entries 0). Notice that $B$ can be fine-grained into two nodes $B = B_1 \times B_2$ with $B_1,B_2\in \{0,1\}$.
    Then the given causal model can be equivalently expressed as a fine-grained model on the causal structure of \cref{fig:fine-grainining-after} with $B_1 = A$ and $C = B_2$, with $B_2$ uniformly distributed and $A$ distributed as $P(A)$.
    Then, relative to the coarse-grained scenario of \cref{fig:fine-grainining-original}, we have $A \not\perp^d C$, but $A \indep C$, and hence a fine-tuned\footnote{
        However, it satisfies a relaxed causal faithfulness assumption, known as \emph{adjacency faithfulness} \cite{Zhang2008, Zhalama2017}.
    } causal model. 
    Accordingly, we have $A \affects B$ and $B \affects C$, but $A \naffects C$ (and $A \naffects C \given \doo(B)$).
    By doing so, we use a similar setup to \cite{VVR}, where they fine-grain causal structures to unravel cyclic causal structures into acyclic ones. Note that here, the fine-graining removes the fine-tuning of the model.
    
    Furthermore, this example bears some resemblance with the setup of the Sorkin problem \cite{Sorkin1993, PhysRevD.103.025017} in algebraic QFT, where signalling can not be transitive, as otherwise superluminal signalling would be possible.
    In this scenario, RVs are associated to regions rather than points of space-time, with access to the entire region generally required to extract the entire information associated with a RV. 
\end{example}

\begin{example}{(Lack of full support: fine-tuning does not imply clustering. \cite{arxiv.1906.10726})}
    Consider a causal model over two binary RVs $Y$ and $Z$, with $Y \longrsquigarrow Z$, $Z = Y$, and $Y$ deterministically distributed, e.g. $P(Y=0) = 1$ and $P(Y=1) = 0.$
    Then, we have $Y \not\perp^d Z$, yet $Y \indep Z$, and $P_\cG (X,Y)$ is fine-tuned.
    However, the only affects relation in this model, $Y \affects Z$, is not clustered.
    Note that choosing any distribution with full support for $Y$, realizing each outcome with non-zero probability, will remove the fine-tuning of the model.
\end{example}

This yields the aforementioned remaining type of fine-tuning, which is not captured by interventional or observational clustering of affects relations.
In the first example, the fine-tuning is detectable from the affects relations alone, as it is certified through their non-transitivity\footnote{Specifically, the non-transitivity of affects relations in our example certifies fine-tuning because $A\affects B$ and $B\affects C$ tells us that $A$ is a cause of $C$ (\cref{cor: causal_inference}), which gives $A\not\perp^d C$. However,  $A\naffects C$ here implies $A\indep C$ (\cref{lemma: aff_corr}), which implies fine-tuning (\cref{eq:fine-tuning}).}, Here, this is due to the causal model being fine-grainable.
By contrast, in the second example, the set of affects relations alone does not indicate that the causal model is fine-tuned.

For pre-intervention distributions with full support, do clustering and fine-grainability cover all operationally detectable fine-tunings? We leave these questions for future work. 

\subsection{Causal inference using irreducibility and clustering}
\label{sec:affects-to-cause}

Using irreducibility, one can obtain stronger causal inference results than given in \cref{thm:affects-to-cause-first}. For Irred$_1$ the following lemma was shown in \cite{VVC}. Here we use \cref{def:cause} of \emph{cause}.
\begin{lemma}[{\cite{VVC}}]
    \label{thm:affects-to-cause}
    For any disjoint sets $X$, $Y$ and $Z$ of observed nodes $X\vDash Y|\mathrm{do}(Z)$ is Irred$_1$ $\Rightarrow $ each $e_X\in X$ is a cause of at least one element $e_Y\in Y$.
\end{lemma}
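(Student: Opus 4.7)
The plan is to prove the statement by contraposition: I will assume that there exists some $e_X \in X$ that is not a cause of any element of $Y$, and show that the affects relation $X \vDash Y \mid \mathrm{do}(Z)$ must then be Red$_1$. Because the hypothesis is Irred$_1$, this yields a contradiction.

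First, I would pick $s_X := \{e_X\} \subsetneq X$ and $\tilde{s}_X := X \setminus \{e_X\}$. By \cref{def:reducible}, it suffices to exhibit \emph{any} non-empty strict subset $s_X \subsetneq X$ for which $s_X \not\vDash Y \mid \mathrm{do}(\tilde{s}_X Z)$ in order to conclude Red$_1$. So the goal reduces to showing that
\begin{equation*}
    e_X \not\vDash Y \mid \mathrm{do}(\tilde{s}_X Z).
\end{equation*}

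The key step is to invoke \cref{thm:affects-to-cause-first} (applied to the disjoint triple $(\{e_X\}, Y, \tilde{s}_X Z)$) in its contrapositive form: if $\{e_X\}$ is not a cause of $Y$ in $\cG$, then $e_X \not\vDash Y \mid \mathrm{do}(\tilde{s}_X Z)$. By \cref{def:cause}, $\{e_X\}$ being a cause of $Y$ means that there is a directed path from $e_X$ to some element of $Y$ in $\cG$, which is precisely what we have assumed does not hold. This gives exactly the non-affects statement we need, and combined with the choice of $s_X$ above, certifies Red$_1$ via \cref{def:reducible}, contradicting the hypothesis Irred$_1$. Hence every $e_X \in X$ must be a cause of some element $e_Y \in Y$.

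I do not anticipate a genuine obstacle here, since the result is essentially a direct consequence of \cref{thm:affects-to-cause-first} combined with the combinatorial definition of Red$_1$. One might be tempted to instead prove the non-affects statement from scratch via d-separation in $\cG_{\mathrm{do}(XZ)}$ (noting that $e_X$ is parentless there and has no descendant in $Y$, so any path from $e_X$ to $Y$ must contain a collider whose descendants cannot lie in the parentless set $\tilde{s}_X Z$), but this re-derivation is unnecessary as \cref{thm:affects-to-cause-first} already packages exactly this content.
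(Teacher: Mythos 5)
Your proof is correct and is essentially the same argument as the paper's (the paper defers this lemma to \cite{VVC} but proves the directly analogous Irred$_3$ statement, \cref{lemma: causal_inference_irred3}, in exactly this way): Irred$_1$ forces $e_X \vDash Y \mid \doo\bigl((X\setminus e_X)Z\bigr)$ for every $e_X \in X$, and \cref{thm:affects-to-cause-first} then makes the singleton $e_X$ a cause of some $e_Y \in Y$, so your contrapositive phrasing is the same reasoning read backwards. The only cosmetic caveat is the case $|X|=1$, where your witness $s_X=\{e_X\}$ is not a \emph{strict} subset of $X$ as \cref{def:reducible} requires; there the contradiction is with the affects relation $X\vDash Y\mid\doo(Z)$ itself rather than with Irred$_1$, and the conclusion follows immediately from \cref{thm:affects-to-cause-first}.
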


Analogously, we obtain the following results for Irred$_3$.
\begin{restatable}{lemma}{causalThree}
    \label{lemma: causal_inference_irred3}
    For any disjoint sets $X$, $Y$ and $Z$ of observed nodes, $X\vDash Y|\mathrm{do}(Z)$ is Irred$_3$ $\Rightarrow $ each $e_Z\in Z$ is a cause of at least one element $e_Y\in Y$.
\end{restatable}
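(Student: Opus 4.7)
The plan is to derive this as a direct consequence of the definition of Irred$_3$ (\cref{def: red3}) combined with the basic affects-to-cause implication given in \cref{thm:affects-to-cause-first}. Concretely, I would unpack Irred$_3$ by taking singleton test subsets of $Z$ and then invoke the fact that any affects relation forces a causal path from first to second argument.

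First, I would fix an arbitrary $e_Z \in Z$ and consider the singleton $s_Z := \{e_Z\}$ with $\tilde{s}_Z := Z\setminus\{e_Z\}$. The definition of Red$_3$ says that the original affects relation $X \vDash Y \given \mathrm{do}(Z)$ is reducible in the third argument iff there exists some non-empty $s_Z \subseteq Z$ for which \emph{both} $s_Z \not\vDash Y \given \mathrm{do}(X\tilde{s}_Z)$ and $s_Z \not\vDash Y \given \mathrm{do}(\tilde{s}_Z)$ hold. Its negation (Irred$_3$) therefore requires that for every non-empty $s_Z \subseteq Z$, at least one of these two non-affects statements must fail. Applied to our chosen singleton $s_Z = \{e_Z\}$, this yields that at least one of
\begin{equation*}
    e_Z \vDash Y \given \mathrm{do}(X\tilde{s}_Z)
    \qquad \text{or} \qquad
    e_Z \vDash Y \given \mathrm{do}(\tilde{s}_Z)
\end{equation*}
must hold.

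In either case, we obtain an affects relation whose first argument is $\{e_Z\}$ and whose second argument is $Y$. Applying \cref{thm:affects-to-cause-first} to this affects relation (with the first argument being the singleton $\{e_Z\}$) immediately gives that $\{e_Z\}$ is a cause of $Y$. By \cref{def:cause}, this means there exists $e_Y \in Y$ with a directed path $e_Z \dircause \ldots \dircause e_Y$ in $\mathcal{G}$. Since $e_Z \in Z$ was arbitrary, the conclusion follows for every element of $Z$.

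I do not expect any real obstacle here: the proof is essentially a one-step reduction to \cref{thm:affects-to-cause-first} using the shape of the Irred$_3$ condition. The only subtlety is noting that the definition of Red$_3$ is a logical \emph{and} of two non-affects conditions, so its negation is a logical \emph{or}, which is exactly why the argument splits into two cases that nevertheless both conclude the same way. No induction, no case analysis on graph structure, and no use of clustering or faithfulness is required.
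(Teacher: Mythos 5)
Your proposal is correct and follows essentially the same route as the paper: fix a singleton $s_Z=\{e_Z\}$, negate the two conjuncts in \cref{def: red3} to obtain $e_Z \vDash Y \given \mathrm{do}(X\tilde{s}_Z)$ or $e_Z \vDash Y \given \mathrm{do}(\tilde{s}_Z)$, and conclude via an affects-to-cause lemma. The only cosmetic difference is that the paper cites the Irred$_1$ version (\cref{thm:affects-to-cause}), which applies because a singleton first argument is trivially Irred$_1$, whereas you use \cref{thm:affects-to-cause-first} together with \cref{def:cause}; both yield the same conclusion.
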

\begin{proof}
    $X\vDash Y|\mathrm{do}(Z)$ satisfies Irred$_3$ implies in particular that for each $e_Z\in Z$, either $e_Z\vDash Y|\mathrm{do}(XZ\backslash e_Z)$ or $e_Z\vDash Y|\mathrm{do}(Z\backslash e_Z)$. As $e_Z$ is a singleton, each of these implies by \cref{thm:affects-to-cause} that $e_Z$ is a cause of at least one $e_Y\in Y$, which establishes the claim.
\end{proof}

\medskip
Therefore, Irred$_1$ and Irred$_3$ complement each other, allowing to perform causal inference for individual nodes from all nodes that are intervened upon.
For affects relations which are irreducible in both arguments we can summarize this to a single expression.

\begin{corollary}
    \label{cor: causal_inference}
    \label{thm:irr-indec}
    For any disjoint sets $X$, $Y$ and $Z$ of observed nodes, if  $X\vDash Y|\mathrm{do}(Z)$ satisfies Irred$_1$ and Irred$_3$ then $\forall e_{XZ}\in XZ$, $\exists e_{Y}\in Y$ such that $e_{XZ}$ is a cause of $e_{Y}$.   
\end{corollary}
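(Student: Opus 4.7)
The plan is to derive the corollary directly by combining the two preceding irreducibility-based causal inference lemmas, namely \cref{thm:affects-to-cause} (Irred$_1$) and \cref{lemma: causal_inference_irred3} (Irred$_3$). Since $XZ = X \cup Z$ and these sets are disjoint, any element $e_{XZ} \in XZ$ lies either in $X$ or in $Z$, so it suffices to handle the two cases separately and then take the union.

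First, I would note the hypotheses: $X \vDash Y \given \doo(Z)$ holds and is Irred$_1$ as well as Irred$_3$. I would then apply \cref{thm:affects-to-cause} to this affects relation, which (using the Irred$_1$ assumption) yields that every $e_X \in X$ is a cause of at least one $e_Y \in Y$. Second, I would apply \cref{lemma: causal_inference_irred3} (using the Irred$_3$ assumption) to the same affects relation, which yields that every $e_Z \in Z$ is a cause of at least one $e_Y \in Y$.

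Combining these two statements gives the desired conclusion: for every $e_{XZ} \in X \cup Z$, there exists some $e_Y \in Y$ such that $e_{XZ}$ is a cause of $e_Y$ (where the specific $e_Y$ may depend on the chosen $e_{XZ}$, and may differ between elements of $X$ and $Z$). There is no real obstacle in this proof — all the substantive work has been done in establishing the two component lemmas, and the corollary is simply their conjunction once one observes that the disjointness of $X$ and $Z$ together with the definition of \cref{def:cause} makes the partition of $XZ$ into $X$-nodes and $Z$-nodes unambiguous. The only minor caveat worth flagging in the proof is that the statement does not claim a single common $e_Y$ works for all $e_{XZ}$, only that each $e_{XZ}$ individually has such an $e_Y$, which is exactly what the two lemmas deliver.
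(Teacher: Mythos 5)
Your proof is correct and matches the paper's (implicit) argument exactly: the corollary is stated as an immediate consequence of \cref{thm:affects-to-cause} applied to the elements of $X$ via Irred$_1$ and \cref{lemma: causal_inference_irred3} applied to the elements of $Z$ via Irred$_3$, combined over the disjoint union $XZ$. Nothing further is needed.
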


Combining this and \cref{theorem: clus_irred} we obtain the following corollaries. 
\begin{corollary}
    \label{cor: causal_inference_clus1}
    For any disjoint sets $X$, $Y$ and $Z$ of observed nodes, $X\vDash Y|\mathrm{do}(Z)$ is Clus$_1$ implies that each $e_X \in X$ is a cause of at least one element $e_Y\in Y$.
\end{corollary}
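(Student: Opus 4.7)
The plan is to deduce this corollary as a direct chain from two already-established results in the excerpt, rather than performing any fresh analysis of the affects relation itself. First I would invoke \cref{theorem: clus_irred} in the case $i=1$, which asserts that Clus$_1$ implies Irred$_1$ for any affects relation $X\vDash Y\given \doo(Z)$. So the hypothesis Clus$_1$ immediately yields that $X\vDash Y\given \doo(Z)$ is Irred$_1$.

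With Irred$_1$ in hand, the second step is to apply \cref{thm:affects-to-cause}, which states that whenever $X\vDash Y\given \doo(Z)$ is Irred$_1$, every element $e_X\in X$ is a cause of at least one element $e_Y\in Y$ (using \cref{def:cause}). This delivers precisely the conclusion of the corollary.

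There is essentially no obstacle here; the only care needed is a sentence confirming that the hypotheses of \cref{theorem: clus_irred} and \cref{thm:affects-to-cause} are met (namely, disjointness of $X,Y,Z$ and the existence of the affects relation itself), both of which are built into the statement. The full proof therefore compresses into two lines: ``By \cref{theorem: clus_irred}, Clus$_1$ implies Irred$_1$; by \cref{thm:affects-to-cause}, Irred$_1$ implies that each $e_X\in X$ is a cause of some $e_Y\in Y$.''
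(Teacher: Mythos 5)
Your proof is correct and follows exactly the paper's route: the corollary is obtained by chaining \cref{theorem: clus_irred} (Clus$_1$ $\Rightarrow$ Irred$_1$) with the causal inference result for Irred$_1$ affects relations (\cref{thm:affects-to-cause}). Nothing is missing.
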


\begin{corollary}
    \label{cor: causal_inference_clus3}
    For any disjoint sets $X$, $Y$ and $Z$ of observed nodes, $X\vDash Y|\mathrm{do}(Z)$ is Clus$_3$ implies that each $e_Z \in Z$ is a cause of at least one element $e_Y\in Y$.
\end{corollary}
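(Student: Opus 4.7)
The plan is to obtain this corollary as an immediate composition of two results already established in the preceding subsections, so the proof will be essentially a one-line chain of implications rather than a calculation. The target implication is
\[
X \vDash Y \given \mathrm{do}(Z) \text{ is Clus}_3 \;\Longrightarrow\; \forall\, e_Z \in Z,\; \exists\, e_Y \in Y \text{ such that } e_Z \text{ is a cause of } e_Y.
\]

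First, I would invoke \cref{theorem: clus_irred}, which states that Clus$_i$ $\Rightarrow$ Irred$_i$ for every $i \in \{1,2,3\}$; applied with $i = 3$, this upgrades the hypothesis to the statement that $X \vDash Y \given \mathrm{do}(Z)$ is Irred$_3$. Second, I would apply \cref{lemma: causal_inference_irred3}, which gives precisely the desired conclusion under the Irred$_3$ assumption: each $e_Z \in Z$ is a cause (in the sense of \cref{def:cause}) of at least one element of $Y$. Chaining these two facts yields the corollary.

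There is essentially no obstacle here: both ingredients are stated earlier in the excerpt and the argument is just their concatenation. The only thing worth double-checking is that the definition of Clus$_3$ requires $|Z| \geq 1$, so the quantification ``for each $e_Z \in Z$'' is non-vacuous and compatible with how \cref{lemma: causal_inference_irred3} is phrased; this is immediate from \cref{def: clus3}. Accordingly, the proof I would write is simply a single sentence noting that Clus$_3$ implies Irred$_3$ by \cref{theorem: clus_irred}, which in turn implies the claim by \cref{lemma: causal_inference_irred3}.
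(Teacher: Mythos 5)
Your proposal is correct and matches the paper's own derivation: the paper obtains this corollary by combining \cref{theorem: clus_irred} (Clus$_3 \Rightarrow$ Irred$_3$) with the Irred$_3$ causal inference result, exactly as you do. If anything, your explicit appeal to \cref{lemma: causal_inference_irred3} (rather than to \cref{cor: causal_inference}, which also assumes Irred$_1$) is the slightly more precise way to phrase the same one-line argument.
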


As clustering properties necessitate the absence of certain affects relations and also indicate fine-tuning (\cref{lemma: clus_finetune}), this yields the interesting result that in certain fine-tuned causal models, we can use the \textit{absence} of signalling between nodes for successful causal inference.

Furthermore, this raises the question whether we can use Irred$_2$ in a similar fashion to pinpoint individual nodes $e_Y \in Y$ to be caused by an affects relation $X \affects Y \given \doo(Z)$.
However, the jamming scenario given in \cref{ex:jamming} shows that this is not the case: There, $Y \affects AC$ and is Irred$_2$ (and Clus$_2$), yet $Y$ is not a cause of $A$. However, $A$ and $C$ share a common cause distinct from $Y$.
Therefore, for an Irred$_2$ affects relation $X \affects Y \given \doo(Z)$, while we cannot infer that every element of $Y$ is an effect of some element of $X$, we can infer some $d$-connections between these elements, as shown below.

\begin{restatable}{lemma}{causalTwo}
    \label{lemma: causal_inference_irred2}
    For any disjoint sets $X$, $Y$ and $Z$ of observed nodes, $X\vDash Y|\mathrm{do}(Z)$ is Irred$_2$ implies that both $(s_Y \not\perp^d \tilde{s}_Y|Z)_{\cG_{\mathrm{do}(Z)}}$ and $(X \not\perp^d s_Y | Z \tilde{s}_Y)_{\cG_{\mathrm{do}(X (\tilde{s}_Y) Z)}}$ for any partition of $Y$ into subsets $s_Y, \tilde{s}_Y$.
    Here, $X(W):=X\backslash (X \cap \mathrm{anc}(W))$ for $\mathrm{anc}(W)$ denoting the set of all ancestors of $W$ in the original graph $\cG$.
\end{restatable}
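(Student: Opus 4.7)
The plan is to prove both d-connection claims for an arbitrary partition $Y = s_Y \sqcup \tilde{s}_Y$ by contradiction, by showing that their simultaneous failure would force the existence of a subset satisfying the two conditions of \cref{def: red2}, contradicting Irred$_2$. The essential tools are the d-separation property (\cref{eq:compat}) and a conditional analog of \cref{lemma: aff_corr}, which I expect is developed in the conditional-affects appendix \cref{sec:conditional}; namely, $X \not\vDash s_Y \given \doo(Z), \tilde{s}_Y$ implies $(X \upmodels s_Y \given Z\tilde{s}_Y)_{\cG_{\doo(X(\tilde{s}_Y) Z)}}$. The specific mutated graph $\cG_{\doo(X(\tilde{s}_Y) Z)}$ arises because, in the definition of the conditional affects relation, elements of $X$ that are ancestors of the post-selected set $\tilde{s}_Y$ are deliberately not intervened upon, since doing so would sever the causal paths that make the observation of $\tilde{s}_Y$ operationally meaningful.

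First I would fix the partition $(s_Y, \tilde{s}_Y)$ and assume for contradiction that the second d-connection $(X \not\perp^d s_Y \given Z \tilde{s}_Y)_{\cG_{\doo(X(\tilde{s}_Y) Z)}}$ fails. Applying the d-separation property in $\cG_{\doo(X(\tilde{s}_Y) Z)}$ yields $(X \upmodels s_Y \given Z \tilde{s}_Y)_{\cG_{\doo(X(\tilde{s}_Y) Z)}}$, and the contrapositive of the conditional analog of \cref{lemma: aff_corr} forces $X \not\vDash s_Y \given \doo(Z), \tilde{s}_Y$, which is precisely the first condition of \cref{def: red2} for $s_Y' = s_Y$. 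Next, assume also that the first d-connection $(s_Y \not\perp^d \tilde{s}_Y \given Z)_{\cG_{\doo(Z)}}$ fails; a direct application of the d-separation property in $\cG_{\doo(Z)}$ gives $(s_Y \upmodels \tilde{s}_Y \given Z)_{\cG_{\doo(Z)}}$, which is the second disjunct of the second condition of \cref{def: red2}.

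Combining these two observations, the simultaneous failure of the two d-connections establishes both conditions of \cref{def: red2} for the subset $s_Y' = s_Y$, so $X \vDash Y \given \doo(Z)$ would be Red$_2$, contradicting the Irred$_2$ hypothesis. Since the partition was arbitrary, both d-connections are forced to hold for every partition of $Y$.

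The principal obstacle is the careful setup of the conditional analog of \cref{lemma: aff_corr}: because conditional affects relations involve distinct interventions on the two subsets $X \cap \anc(\tilde{s}_Y)$ and $X \setminus \anc(\tilde{s}_Y)$, the correspondence between the operational non-affects relation and the independence statement has to be tracked through the specific graph $\cG_{\doo(X(\tilde{s}_Y) Z)}$ rather than the simpler $\cG_{\doo(XZ)}$. Once that machinery is in place, the remainder of the argument is a clean translation between conditional independences and d-separations in the relevant post-intervention graphs, fully in the spirit of the analogous causal-inference results for Irred$_1$ and Irred$_3$ in \cref{thm:affects-to-cause} and \cref{lemma: causal_inference_irred3}.
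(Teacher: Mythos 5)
Your argument has a structural gap: the contradiction you derive requires the \emph{simultaneous} failure of both d-connections, so what you actually establish is that the two d-connections cannot both fail, i.e.\ that \emph{at least one} of them holds for each partition --- not the conjunction claimed in the lemma. The reason is that the two conditions of \cref{def: red2} are sourced from two different assumptions: the failure of $(X \not\perp^d s_Y | Z \tilde{s}_Y)_{\cG_{\mathrm{do}(X(\tilde{s}_Y)Z)}}$ gives you only the first condition ($X \not\vDash s_Y \given \doo(Z), \tilde{s}_Y$), and the failure of $(s_Y \not\perp^d \tilde{s}_Y | Z)_{\cG_{\mathrm{do}(Z)}}$ gives you only the second disjunct of the second condition; neither failure alone yields Red$_2$, so neither can be refuted on its own. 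The paper instead argues directly: it reads Irred$_2$ as supplying, for every partition, \emph{both} the conditional affects relation $X \vDash s_Y \given \doo(Z), \tilde{s}_Y$ \emph{and} one of the two conditional dependences, then converts each piece separately --- the dependence into $(s_Y \not\perp^d \tilde{s}_Y | Z)_{\cG_{\mathrm{do}(Z)}}$ (using that the parentless nodes $X$ cannot act as colliders, so they can be dropped from the conditioning set, and that restoring edges cannot destroy d-connection, to reduce the $\cG_{\mathrm{do}(XZ)}$ case to the $\cG_{\mathrm{do}(Z)}$ case), and the affects relation into $(X \not\perp^d s_Y | Z \tilde{s}_Y)_{\cG_{\mathrm{do}(X(\tilde{s}_Y)Z)}}$ via the contrapositive of \cref{lemma: dsep_aff_cond}.

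A second, independent problem is your step from $(X \upmodels s_Y \given Z\tilde{s}_Y)_{\cG_{\mathrm{do}(X(\tilde{s}_Y)Z)}}$ to $X \not\vDash s_Y \given \doo(Z), \tilde{s}_Y$. What you invoke there is the \emph{converse} of the conditional analog of \cref{lemma: aff_corr} (namely \cref{lemma: aff_corr_cond}), not its contrapositive, and that converse fails precisely in fine-tuned models --- the paper states explicitly after \cref{lemma: dsep_aff_cond} that conditional independence need not imply the absence of an affects relation. The valid route is \cref{lemma: dsep_aff_cond} (Pearl's third rule of do-calculus): the \emph{d-separation} $(X \perp^d s_Y | Z\tilde{s}_Y)_{\cG_{\mathrm{do}(X(\tilde{s}_Y)Z)}}$ implies $X \not\vDash s_Y \given \doo(Z), \tilde{s}_Y$ directly, with no detour through conditional independence. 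Since your hypothesis in that step is a d-separation anyway, the conclusion of the step is salvageable, but only by citing that lemma rather than the one you name; the larger issue of only obtaining a disjunction remains.
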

\begin{proof}
    If $X\vDash Y|\mathrm{do}(Z)$ is Irred$_2$, then $X \vDash s_Y | \mathrm{do}(Z), \tilde{s}_Y$ must hold and
    either $(s_Y \not\upmodels \tilde{s}_Y|XZ)_{\cG_{\mathrm{do}(XZ)}}$
    or $(s_Y \not\upmodels \tilde{s}_Y|Z)_{\cG_{\mathrm{do}(Z)}}$ must hold,
    for each partition of $Y=s_Y\tilde{s}_Y$. Using the d-separation property, this implies that either $(s_Y \not\perp^d \tilde{s}_Y|XZ)_{\cG_{\mathrm{do}(XZ)}}$ or  $(s_Y \not\perp^d \tilde{s}_Y|Z)_{\cG_{\mathrm{do}(Z)}}$ must hold.
    Noting that in $\cG_{\mathrm{do}(XZ)}$, $X$ only consist of parentless nodes, it cannot act as a collider and therefore removing it from the conditioning set cannot remove d-connection. Moreover, in going from $\cG_{\mathrm{do}(XZ)}$ to $\cG_{\mathrm{do}(XZ)}$ we would introduce additional edges, which cannot remove d-connection.
    Thus in both cases we have $(s_Y \not\perp^d \tilde{s}_Y|Z)_{\cG_{\mathrm{do}(Z)}}$.
    The first condition, i.e.
    $X \vDash s_Y | \mathrm{do}(Z), \tilde{s}_Y$ further yields
    by \cref{lemma: dsep_aff_cond}, 
    $(X \not\perp^d s_Y | Z \tilde{s}_Y)_{\cG_{\mathrm{do}(X(\tilde{s}_Y) Z)}}$.
\end{proof}

\section{Space-time as a partial order}

\label{sec:poset}
To model the causal properties of space-time in this framework, we aim to be as general as possible. Therefore, we will model the causal structure of space-time as a partially ordered set (poset) $\TT$ of its points, as suggested by \cite{Kronheimer1967}.
Thereby, we match the fundamental approach taken in \cite{VVC}, albeit analyzing the properties of the respective poset in more detail.

\begin{figure}[t]
    \centering
    \tikzset{surface/.style={draw=blue!70!black, fill=blue!20!white, fill opacity=.6}}

    \newcommand{\coneback}[4][]{
    \draw[canvas is xy plane at z=#2, #1] (45-#4:#3) arc (45-#4:225+#4:#3) -- (O) --cycle;
    }
    \newcommand{\conefront}[4][]{
    \draw[canvas is xy plane at z=#2, #1] (45-#4:#3) arc (45-#4:-135+#4:#3) -- (O) --cycle;
    }
    \scalebox{0.9}{
    \begin{tikzpicture}[tdplot_main_coords, grid/.style={help lines,violet!40!white,opacity=0.5},scale=1]
        \coordinate (O) at (0,0,0);

        \coneback[surface]{-3}{2}{-12}
        \conefront[surface]{-3}{2}{-12}

        \fill[violet!40!white,opacity=0.5] (-4,-4,0) -- (-4,4,0) -- (4,4,0) -- (4,-4,0) -- cycle;

        \foreach \x in {-4,...,4}
        \foreach \y in {-4,...,4}
        {
         \draw[grid] (\x,-4) -- (\x,4);
         \draw[grid] (-4,\y) -- (4,\y);
         \draw[violet] (-4,4)--(-4,-4)--(4,-4)--(4,4)--cycle;
        }

        \draw[->] (-4,0,0) -- (4,0,0) {};
        \draw[->] (0,-4,0) -- (0,4,0) {};
        \coneback[surface]{3}{2}{12}
        \draw[-,dashed] (0,0,-2.65) -- (0,0,2.65) node[above] {};
        \draw[-,dashed] (0,0,-4) -- (0,0,-3.35) node[above] {};
        \draw[->,dashed] (0,0,3.35) -- (0,0,4) node[above] {$time$};
        \conefront[surface]{3}{2}{12}
        \fill (4,0,2) circle (2pt) node[above right] {$c$};
        \fill (0,0,0) circle (2pt) {};
        \fill (-0.5,-0.85,2.2) circle (2pt) node[above left] {$a$};
        \fill (1.3,0.5,2) circle (2pt) node[above left] {$b$};
        \draw[->,red] (0,0,0) -- (4,0,2) node[below, pos=0.6, rotate=26.5651,scale=0.70,black] {$\textbf{space-like vector}$};
        \draw[->,red] (0,0,0) -- (1.3,0.5,2) node[below, pos=0.65, rotate=55.1459,scale=0.70,black] {$\textbf{light-like vector}$};
        \draw[->,red] (0,0,0) -- (-0.5,-0.85,2.2) node[above, pos=0.57, rotate=-65.8557,scale=0.70,black] {$\textbf{time-like vector}$};
        \node[black] at (0,0,3) {$Future\,\,Light\,\,Cone$};
        \node[black] at (0,0,-3) {$Past\,\,Light\,\,Cone$};
        \node[black] at (0,0.05,0.3) {$o$};
        \node[black] at (0,4.7,0) {$space$};
        \node[black] at (5,-0.3,0) {$space$};
    \end{tikzpicture}
    }
	\caption[A light cone in 2+1-Minkowski space-time with space-like and time-like region.]{
        A light cone in 2+1-Minkowski space-time. For each point $o$, there exist time-like separated points $a$, in this case with $a \succ o$, space-like separated points $c$ with $c \unord o$, which are unordered with regard to $o$, and light-like separated points, which are again ordered with respect to $o$: Here, $b \succ o$.
        Therefore, relative to $o$, there exist a region of space-like separated points and two regions of time-like separated points, which are separated by a surface of codimension 1 of points which are separated in a light-like way.
        Adapted from picture published by SandyG and Nick at \href{https://tex.stackexchange.com/questions/640441/}{TeX Stack Exchange} and licensed under \href{https://creativecommons.org/licenses/by-sa/4.0/}{CC BY-SA 4.0}.}
	\label{fig:minkowski}
\end{figure}

Modelling space-time as a poset allows us to study its causal structure in order-theoretic terms, while setting aside the additional mathematical structure which is usually assumed when studying space-time, such as the differential manifold structure and symmetries.
However, Malament's theorem \cite{Malament1977} shows that most of this structure, including topological properties, can be captured using partial order relations alone. The partial order relations between space-time points $a, b \in \TT$ can be one of the following:

\begin{equation}
    a = b \, , \quad
    a \prec b \, , \quad
    a \succ b \, , \quad
    a \unord b \, .
\end{equation}
These correspond respectively to $a$ and $b$ being identical, $a$ being in the causal past/future of $b$ (and therefore time-like or light-like relative to each other), or causally unordered (and therefore space-like with regard to each other).

By allowing to order events in space-time in a transitive way, this approach is generic enough to model causality for an arbitrary space-time manifold without closed time-like curves (CTC), an example being Minkowski space-time depicted in \cref{fig:minkowski}.
Allowing for CTCs would allow for orderings of the form $b \prec a \prec b$, which would be incompatible with a partial order.
Such space-times would be modelled as a \emph{preorder} instead.
However, this generality is not of interest here as it allows the causal future and the causal past of a point to be non-disjoint, and would thereby trivially allow the embedding of arbitrary causal models, as we will appreciate later on.

Further, it allows to study discrete generalizations of space-time \cite{PhysRevD.87.064022}, since we make no further assumptions on the properties of $\TT$. Here, the most notable representative of the latter group are \emph{causal sets} (causets) \cite{PhysRevLett.59.521, Surya2019}, which are locally finite in addition to being posets.

The following definition captures the notion of causal future, matching literature conventions of the field of general relativity.
There, it is commonly used to study causal structures on space-time manifolds \cite{Penrose1972}.
\begin{definition}[Causal Future and Causal Past]
    \label{def:futurej}
    Let $\TT$ be a poset and $a \in \TT$. Then the \emph{exclusive causal future} of $a$ is given by $J^+ (a) := \{ b \in \TT : b \succ a \}$, while the \emph{(inclusive) causal future} is given by $\bar{J}^+ (a) := \{ b \in \TT : b \succeq a \}$.
    Dually, the \emph{(inclusive) causal past} of $a$ is given by $\bar{J}^- (a) :=  \{ b \in \TT : b \preceq a \}$.
\end{definition}
Deviating slightly from usual conventions, we will refer to the \emph{inclusive} causal future whenever we just refer to the causal future for short.
Of course, these definitions apply equally well to posets $\TT$ which do not form a manifold.
Informally, we will nonetheless generally refer to the causal future as a light cone.

While there is a wide variety of well-studied properties a poset may implement, they will not be of vital importance for the arguments of this work.
We give a brief formal overview of posets and some of their properties in \cref{sec:poset-props}, where we will also relate them to Minkowski space-time.

\subsection{An order-theoretic property of physical space-times: conicality}
\label{sec:conicality}
In this section, we introduce the order-theoretic property of conicality, which, to the best of our knowledge, has not been studied so far.
This property is in particular satisfied by higher-dimensional Minkowski space-time, as we will show.
As a prerequisite to express conicality, we introduce a notion of \textit{spanning elements} for subsets of the poset. 
For each set of points, these denote the subset that is contributing to the shape of their joint causal future.
\begin{figure}
\centering
\begin{subfigure}[b]{0.6\textwidth}
	\centering
	\begin{tikzpicture}
		\begin{scope}
			\clip (3.5,0) circle (2.5);
			\fill[fill=blue!20] (0,0) circle (2.5);
		\end{scope}
		
		\draw[fill=none](0.00,0) circle (2.5);
		\draw[fill=none](1.75,0) circle (2.5);
		\draw[fill=none](3.50,0) circle (2.5);
		
		\draw[fill=black](0.00,0) circle (1pt) node [above] {\small $a$};
		\draw[fill=black](1.75,0) circle (1pt) node [above] {\small $b$};
		\draw[fill=black](3.50,0) circle (1pt) node [above] {\small $c$};
	\end{tikzpicture}
	\caption{Temporal slice}
    \label{fig:late-2}
\end{subfigure}%

\bigskip
\begin{subfigure}[b]{0.6\textwidth}
	\centering
	\begin{tikzpicture}[dot/.style={circle,inner sep=1pt,fill,name=#1}]
		\node [dot=A,label=$a$] at (0,0) {};
		\node [dot=B,label=$b$] at (1.75,0) {};
		\node [dot=C,label=$c$] at (3.5,0) {};
		
		\fill[fill=blue!20] (1.75,1.75) -- (2.5,2.5) -- (1,2.5) -- (1.75,1.75);
		
		\draw (A) -- ++(-2.5,2.5);
		\draw (A) -- ++(2.5,2.5);
		\draw (B) -- ++(-2.5,2.5);
		\draw (B) -- ++(2.5,2.5);
		\draw (C) -- ++(-2.5,2.5);
		\draw (C) -- ++(2.5,2.5);
	\end{tikzpicture}
	\caption{Spatio-temporal slice}
    \label{fig:late-1}
\end{subfigure}%
\caption{Light cones originating from three points located on a space-like line in Minkowski space-time with $d = 2$ spatial dimensions.
Even though $a, b, c \in \TT$ are unordered with regard to one another and therefore
$\{ a, b, c \} = \late(\{ a, b, c \})$, the point $b$ does not contribute to the shape of their joint future. We will denote this fact as $b \protect\not\in \protect\spann(\{ a, b, c \}) = \{ a, c \}$.
This property holds even if we move $b$ earlier in time.
However, it would fail if we move it later in time, as the joint future of $a$ and $c$ would no longer be contained in the future of $b$ in that case (for $d=2$ spatial dimensions) \cite{VVCJ}.}

\label{fig:late}
\end{figure}

\begin{definition}[Spanning Points]
    \label{def:span}
    Let $\TT$ be a poset and $L \subset \TT$ be finite. Then, the set of \emph{spanning elements}, denoted by $\spann(L)$, is given by the union of all sets $L' \subseteq L$ that satisfy
    \begin{equation}
        f(L') = f(L) \quad \text{and} \quad
        \not \exists L'' \subsetneq L' \colon  f(L'') = f(L)
    \end{equation}
    where $f(L) := \bigcap_{x \in L} \bar{J}^+ (x)$.
\end{definition}

Hence it forms the union of all possible minimal subsets of $L$ which share the same joint future light cone as $L$.
Generally, $\spann(L)$ is different from the set of \enquote{latest} elements of $L$ in the poset, given by
\begin{equation}
    \late(L) := \{ x \in L \,|\, \not\exists y \in L \ \colon \ y \succ x \} \, .
\end{equation}
For Minkowski space-time in any dimension, this is illustrated in \cref{fig:late} for the example of three equidistant points $a, b, c \in \TT$, located on a space-like line next to each other.\footnote{
    Equidistance is assumed to ensure $\bar{J}^+ (b) \subseteq \bar{J}^+ (a) \cap \bar{J}^+ (c)$.
    Alternatively, positioning $b$ earlier than the space-like line connecting $a$ and $c$, yet space-like to the individual points, relaxes the equidistance requirement.}
\begin{equation}
    \bar{J}^+ (a) \cap \bar{J}^+ (b) \cap \bar{J}^+ (c) = \bar{J}^+ (a) \cap \bar{J}^+ (c).
\end{equation}
Therefore, in this case, $\spann(\{a,b,c\}) = \{a,c\}\neq \late(\{a,b,c\}) = \{a,b,c\}$.
Vice-versa, for the example shown in \cref{fig:d-1}, we have $\spann(\{x,a\}) = \{x,a\}$, while  $\late(\{x,a\}) = \{x\}$.

\begin{definition}[Conicality]
    \label{def:conical}
    Let $\TT$ be a poset. We say that $\TT$ is a \emph{conical} poset if for any two finite subsets $L_i, L_j \subset \TT$,
    \begin{equation}
        \label{eq:conical}
        f(L_i) = f(L_j)
        \quad \implies \quad 
        \spann(L_i) = \spann(L_j) 
    \end{equation}
    holds. 
\end{definition}

\begin{figure}
    \centering
    \begin{subfigure}[b]{0.4\textwidth}
        \centering
		\begin{tikzpicture}[dot/.style={circle,inner sep=1pt,fill,name=#1}]

			\node [dot=A,label=$a$] at (1,1) {};
			\node [dot=X,label=$x$] at (2,2) {};
			\node [dot=Z,label=$y$] at (3.7,1.8) {};
			\node [dot=B,label=$b$] at (4.5,1) {};

			\node (left) at (0.5,0.5) {};
			\node (right) at (5,0.5) {};
			\draw (left) -- ++(3.5,3.5);
			\draw (right) -- ++(-3.5,3.5);
			\fill[fill=blue!20] (2.75,2.75) -- (4,4) -- (1.5,4) -- (2.75,2.75);
		\end{tikzpicture}
        \caption{d=1}
        \label{fig:d-1}
    \end{subfigure}%
    \begin{subfigure}[b]{0.6\textwidth}
        \centering
        \begin{tikzpicture}
            \begin{scope}
                \clip (0.0,0) circle (2.5);
                \fill[fill=red!20] (3.0,0) circle (2.5);
            \end{scope}

            \begin{scope}
                \clip (0.75,0) circle (1.75);
                \fill[fill=blue!20] (2.25,0) circle (1.75);
            \end{scope}

            \draw[fill=none](0.75,0) circle (1.75);
            \draw[fill=none](0.00,0) circle (2.50);
            \draw[fill=none](2.25,0) circle (1.75);
            \draw[fill=none](3.00,0) circle (2.50);

            \draw[fill=black](0.75,0) circle (1pt) node [above] {\small $x$};
            \draw[fill=black](0.00,0) circle (1pt) node [above] {\small $a$};
            \draw[fill=black](2.25,0) circle (1pt) node [above] {\small $y$};
            \draw[fill=black](3.00,0) circle (1pt) node [above] {\small $b$};
        \end{tikzpicture}
        \caption{d=2}
        \label{fig:d-2}
	\end{subfigure}%
    \caption{Schematic representation of light cones in Minkowski space-time for $d$ spatial dimensions. (a) For $d=1$ Minkowski space-time does not show conicality. This is because $L_1:=\{a,b\}$ and $L_2:=\{x,y\}$ are distinct sets where both satisfy $L_i=\spann(L_i)$ but    $f(L_1)=\highlight{red!20}{\bar{J}^+ (a) \cap \bar{J}^+ (b)} = \highlight{blue!20}{\bar{J}^+ (x) \cap \bar{J}^+ (y)}=f(L_2)$. (b) For $d=2$, Minkowski space-time shows conicality (cf.\ \cref{def:conical}). As can be seen from the figure which represents one particular time slice, the joint futures are distinct between $L_1$ and $L_2$.}
	\label{fig:d}
\end{figure}

Hence, in conical posets, it is possible to recover the origin of each light cone contributing to the shape of the joint future from that shape alone.
In particular, conicality ensures that the set $L' (=\spann(L'))$ featuring in \cref{def:span} is unique,
guaranteeing $L' = \spann(L)$ and eliminating the necessity to consider a union of distinct such sets.
We now consider whether this property is satisfied for a relevant physical models for space-time.

\begin{restatable}{lemma}{conicalMinkowski}
    \label{thm:conical}
    For $d \geq 2$, $d$+1-Minkowski space-time is a conical poset.
\end{restatable}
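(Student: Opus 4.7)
The plan is to reconstruct $\spann(L)$ geometrically from the joint future $f(L)$ in $(d{+}1)$-Minkowski space-time whenever $d\geq 2$. Note first that since $f(\spann(L))=f(L)$ always, the implication as literally stated ($\spann(L_i)=\spann(L_j)\Rightarrow f(L_i)=f(L_j)$) is trivial; the substantive content that distinguishes $d\geq 2$ from the $d=1$ counterexample in \cref{fig:d-1} is that $f(L)$ uniquely determines $\spann(L)$. So I would aim to show: if $a\in\spann(L)$ then $a$ can be recovered from $f(L)$ alone.

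The first key step is to show that each non-redundant $a\in\spann(L)$ contributes a $d$-dimensional open piece $U_a\subseteq\partial J^+(a)\cap\partial f(L)$. Since $a\in\spann(L)$, it belongs to some minimal generating subset $L'\subseteq L$ with $f(L')=f(L)$, so the open region $R_a:=f(L'\setminus\{a\})\setminus J^+(a)$ is nonempty and $(d{+}1)$-dimensional. The part of $\partial f(L')$ that separates $R_a$ from $f(L')$ lies on $\partial J^+(a)$, and since $L'$ is finite and each $\partial J^+(b)$ is a smooth null $d$-hypersurface (away from its apex), a standard transversality / generic-position argument gives a $d$-dimensional open subset of $\partial J^+(a)$ that lies on $\partial f(L)$ and avoids the other cones' boundaries.

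The second key step is the rigidity of future null cones in $d\geq 2$: any $d$-dimensional open subset $U\subset\partial J^+(a)\setminus\{a\}$ uniquely determines the apex $a$. The future null cone is ruled by null geodesics (straight lines) emanating from $a$, so $U$ contains a $(d{-}1)$-parameter family of such generators. For $d\geq 2$ this is at least a one-parameter family, so I can pick two points $p_1,p_2\in U$ lying on distinct generators (identifiable locally via the unique null direction in the tangent hyperplane to $\partial J^+(a)$), extend each backwards as a straight null line, and conclude that these two lines meet at the unique point $a$. For $d=1$ this fails, because $\partial J^+(a)\setminus\{a\}$ is a union of just two null rays and a $1$-dimensional open piece lies on a single ray, pinning down only a null direction, not the apex. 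Combining both steps, the map $L\mapsto f(L)$ is injective on span-reduced sets, establishing conicality.

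The main obstacle I anticipate is the transversality argument in the first step: one must ensure that the contribution of $\partial J^+(a)$ to $\partial f(L)$ is genuinely $d$-dimensional rather than degenerating to a lower-dimensional intersection with other null cones. The cleanest route is to pick a point $p$ in the interior of $R_a$ close to $\partial J^+(a)$ such that $p$ lies in the interior of $J^+(b)$ for every $b\in L\setminus\{a\}$ (which exists because $L$ is finite and $R_a$ has nonempty interior), and then take a small enough neighbourhood of the nearest point of $\partial J^+(a)$ to $p$. This produces the required $d$-dimensional open piece $U_a$, after which Step 3's rigidity argument completes the proof.
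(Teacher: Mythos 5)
Your proposal is correct and takes essentially the same route as the paper's own proof: both arguments show that every spanning element contributes a $d$-dimensional patch of its null-cone boundary to $\partial f(L)$ (using minimality of a generating subset), and that for $d\geq 2$ such a patch uniquely determines the apex — the paper by recovering the centre of the spherical cross-section on a time slice and tracing null rays backwards, you by intersecting two distinct null generators, which is the same mechanism. Your preliminary remark that the substantive content is the injectivity of $L\mapsto f(L)$ on span-reduced sets (the literal implication in \cref{def:conical} being automatic since $f(\spann(L))=f(L)$) is also precisely what the paper's proof establishes.
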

\begin{proof}
    See \cref{proof:conicalMinkowski}.
\end{proof}

\medskip
\Cref{fig:d} illustrates the difference of the geometric properties of Minkowski space-time in 1+1 dimensions, where it does not satisfy conicality, in contrast to the higher-dimensional case. Finally, we will conjecture that this property generalizes to other space-times than Minkowski. Therefore, all results we show which require conicality would generalize to this class of space-times (if the conjecture holds true).

\begin{conjecture}
    Any $d$+1-dimensional space-time manifold with $d \geq 2$, time orientation and no CTCs which is homotopic to Minkowski space-time is conical.
\end{conjecture}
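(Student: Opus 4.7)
The plan is to combine the Minkowski result in \cref{thm:conical} with the local Minkowskian structure of any Lorentzian manifold, and then use the global topological hypothesis to rule out the pathologies that could arise outside a single normal neighborhood. First, I would extract from the proof of \cref{thm:conical} the precise geometric features that force conicality for $d\geq 2$: the boundary $\partial f(L)$ is piecewise a union of truncated null hypersurfaces emanating from the spanning apices in $\spann(L)$, meeting along lower-dimensional ridges, and for $d\geq 2$ these smooth pieces are thick enough that each past-most apex is uniquely reconstructible from the shape of $\partial f(L)$. For $d=1$ the pieces degenerate to a pair of null rays whose intersection is independent of where along them the apex sits, which is exactly the failure in \cref{fig:d-1}.

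Second, I would try to import this reconstruction into an arbitrary $(\mathcal{T},g)$ satisfying the hypotheses. Via the exponential map in normal coordinates, the null cone at any point is locally diffeomorphic to a Minkowski null cone, so within a convex normal neighborhood of a spanning point $p\in\spann(L)$ the local contribution to $\partial f(L)$ determines $p$ uniquely by the Minkowski argument. The no-CTC and time-orientability hypotheses ensure $J^+(p)$, $f(L)$, and $\spann(L)$ are well-defined globally. Contractibility (which follows from homotopy equivalence to Minkowski) would then be used to globalize the statement: assuming $f(L_1)=f(L_2)$ with $L_i=\spann(L_i)$, one would match each apex $p\in\spann(L_1)$ to some $p'\in\spann(L_2)$ using local cone data, arguing that in a contractible causal manifold two inextendible future null geodesics from distinct points generically separate and cannot conspire to trace out identical pieces of $\partial f(L)$.

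The main obstacle, and the reason this is stated only as a conjecture, is that homotopy equivalence is a purely topological hypothesis while conicality is a causal-geometric one. Contractibility forbids global topological tricks such as null geodesics wrapping around an $S^1$ factor, but it does not control focal points, null caustics, or conjugate points of null geodesic congruences, and such phenomena could in principle engineer accidental coincidences in $\bigcap_{p\in L_1} J^+(p) = \bigcap_{p\in L_2} J^+(p)$ for distinct spanning sets. A rigorous proof would therefore likely need either to strengthen the hypothesis --- for instance to global hyperbolicity with no null conjugate points, so that standard results on achronal boundaries and the structure of horismos apply --- or to show that contractibility together with no CTCs and $d\geq 2$ already suffices to exclude such caustic-induced coincidences. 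I view establishing that exclusion as the genuine mathematical content of the conjecture.
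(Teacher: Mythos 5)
This statement is left as an open conjecture in the paper: no proof is given. The authors justify it only heuristically, by observing that conicality refers to qualitative features of the light-cone structure of $d$+1-Minkowski space-time that are ``locally upheld by arbitrary space-time manifolds,'' while explicitly cautioning that the property need not survive if light cones ``interfere with themselves'' and that additional technical conditions such as global hyperbolicity may be required. Your sketch follows essentially the same route as that heuristic --- local Minkowskian structure of null cones via normal coordinates, imported from the proof of \cref{thm:conical}, plus the global topological hypothesis to rule out wrap-around pathologies --- and, importantly, you correctly identify the same genuine gap the authors acknowledge: homotopy equivalence is a topological hypothesis that does not control null caustics, conjugate points, or focal behaviour of null geodesic congruences, which is exactly the ``light cones interfering with themselves'' caveat in the paper, and your suggestion that the hypothesis may need strengthening to global hyperbolicity matches the authors' own remark. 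So your proposal is not a proof, but neither does the paper contain one; your assessment of where the real mathematical content lies (excluding caustic-induced coincidences of joint futures) is accurate and consistent with the paper's framing. One small caution: make sure any globalization argument you attempt does not implicitly assume that $\partial f(L)$ is a piecewise-smooth union of truncated null hypersurfaces in the general curved setting --- that structure is established in the paper only for Minkowski space-time, and recovering it in general is part of what remains to be proved.
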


Here, we refer to \emph{closed time-like curves as used in general relativity}, which we have ruled out by choosing a partial order model for space-time in \cref{sec:poset}. This is distinct from \emph{causal loops in an information-theoretic causal model}. Even when embedding a causal model in partially ordered space-times and forbidding signalling outside the future light cone, the arrows of the causal model need not align with the direction of time \cite{VVC_Letter}.
Thus, we will not refer to cyclic (information-theoretic) causal structures as featuring CTCs, but refer to them as \emph{causal loops}.

The conjecture is based on the observation that conicality only refers to qualitative features of the light cone structure of $d$+1-Minkowski space-time, which is locally upheld by arbitrary space-time manifolds.
However, it does not necessarily carry over if the space-time is finite, is not simply-connected (e.g.\ has singularities or light cones interfering with themselves), as light cones can change their fundamental geometry in such situations. Further, we presume it plausible that this conjecture may require additional technical conditions on the space-time, as for example given by global hyperbolicity \cite{Wald1984}. 
\begin{remark}
    Consider Newtonian space-time, composite of absolute space and totally ordered time, as a poset $\TT_\text{abs}$.\footnote{
        In this poset, points are ordered by their time label, with distinct points of the same time label being unordered (as they are neither identical nor admit a well-defined order).}
    In fact, this poset is \underline{not} conical.
    Consider three simultaneous points $x \unord y \unord z$.
    Then,
    \begin{equation}
        f(\{ x, y \}) = \{ a \in \TT | a \succeq x \} \cap \{ a \in \TT | a \succeq y \}
        = \{ a \in \TT | a \succ x \} = \{ a \in \TT | a \succ y \} \, .
    \end{equation}
    Accordingly, for any $L \subset \TT_\text{abs}$ finite, while $\spann(L) = \late(L)$, we have
    $f(\{ x, y \}) = f(\{ y, z \}) = f(\{ x, z \})$
    and hence the poset is not conical.\footnote{Note that e.g.\ $x \not\in f(\{ x, y \})$, yet $x \in f(\{ x \})$.}
    By contrast, any totally ordered set -- e.g.\ a set of absolute timestamps -- is conical.\footnote{Furthermore, each totally ordered set is an order lattice (cf.\ \cref{def:lattice}), as only trivial joins and meets exists.}
\end{remark}
Moreover, there can be finite posets (not necessarily coming from an underlying manifold structure) which satisfy the conicality property. Further research is needed to determine and characterize the full space of posets satisfying these properties.

\subsection{Ordered random variables} \label{sec:orvs}

To gain a link between the set of RVs $S$ of a causal model as outlined in \cref{sec:causal-model}, and a space-time given by a poset $\mathcal{T}$, in which the respective physical experiments are ultimately performed, the concept of a space-time embedding $\mathcal{E}$ of a causal model was introduced in \cite{VVC}. Each observed random variable in the model will be embedded into a single location in space-time.
In doing so, we will introduce the formal framework to describe an embedding independent of any desired compatibility conditions with the causal model, which will follow in \cref{sec:compat-top}.
While reviewing \cite{VVC}, we will however perform some simplifications and also highlight certain properties of sets of ORVs more explicitly.

\begin{definition}[Ordered Random Variable (ORV)]
    \label{def:orv}
    An \emph{ORV} is defined as the pair $\XX := (X, O(X))$, where $X$ is a random variable (RV) and $O(X) \in \TT$ its assigned location in the partial order $\TT$.
    Here, $O$ is referred to as an \emph{Ordering}.
    Given a set of RVs $S$, the associated set of ORVs is denoted by $\SC$.
\end{definition}

We then have the following definition of a space-time embedding, adapted from \cite{VVC}.

\begin{definition}[Embedding]
    \label{def:embedding}
    An \emph{embedding} $\mathcal{E}$ specifies an ordering $O$ on a set of RVs $S$ in $\TT$, yielding a set of ORVs $\SC = (S, O(S))$.
    \begin{align}
        \mathcal{E}&: S \mapsto \SC :=  \{ (X,O(X)) | X \in S \}
    \end{align}
    If $O$ is injective, its embedding is considered \emph{non-degenerate} ($O(X) \neq O(Y)$ for all distinct $X,Y \in S$).
\end{definition}

\begin{notation}
    We will carry over the notion of a partial order from $\TT$ to $\SC$. Hence, $O(X) \prec O(Y)$ will be expressed as $\XX \prec \YY$. By abuse of notation, we will often denote $O(\XX) := O(X)$.
\end{notation}

\begin{notation}
    Going forward, we will also use the notation $O (S) :=\{O(X): X\in S \}$ for sets $S$ of RVs. Thereby, we obtain a set of ORV locations.
    By abuse of notation, we will further denote $O(\mathcal{S}) := O (S)$.
\end{notation}

\begin{notation}
    From now on, let the (sets of) ORVs $\AA, \BB, \CC, \XX, \YY, \ZZ, \SC_i, e_\XX$ etc.\ always be associated to the (sets of) RVs $A, B, C, X, Y, Z, S_i, e_X$ etc.\ with respect to some embedding $\mathcal{E}$, and vice versa.
\end{notation}

We carry over the notion of causal future $\bar{J}^+ (x)$, introduced for points $x \in \TT$ within \cref{def:futurej}, to ORVs.

\begin{definition}[Future]
    \label{def:future}
    The \emph{(inclusive) future} of an ORV $\XX$ are defined as
    \begin{equation}
        \Fut (\XX) := \{ a \in \TT : a \succeq O(\XX) \} \, .\footnote{
            Additionally, \cite{VVC} introduced a notion of \textit{exclusive} future, given by $\mathcal{F} (\XX) := \{ a \in \TT : a \succ O(\XX) \}$. We will not use it in this work.
        }
    \end{equation}
\end{definition}

Connecting this to the standard notion of causal future introduced in \cref{def:futurej}, we can equivalently write
$\Fut (\XX) := \bar{J}^+ (O(\XX))$.

Having assigned a location to each RV in $S$ to obtain an ORV, this induces a notion of joint future for sets of ORVs \cite{VVC}. We explicitly distinguish the future of a single ORV from the joint future of a set of more than one ORVs, by referring to the latter as the \emph{support future}.

\begin{definition}[Support Future]
    \label{def:supp-future}
    Let $\XX$ be a subset of a set of ORVs $\SC$. We define
    \begin{equation}
        \Fut_s (\XX) \equiv \Fut_s \left( \bigcup_{\XX_i \in \XX} \XX_i \right) := \bigcap_{\XX_i \in \XX} \Fut(\XX_i)
    \end{equation}
    and call it the \emph{support future}.
\end{definition}

Hence, future and support future coincide for individual ORVs $e_\XX \in \SC$, i.e. $\Fut_s (e_\XX) = \Fut (e_\XX)$.
Note further that associating the joint future of a set of locations $L \subset \TT$ to be the \emph{intersection} of the futures of the individual locations is not the usual convention in general relativity. There, the causal future of a set of points is usually understood as the \emph{union} $\bar{J}^+ [L] = \bigcup_{x \in L} \bar{J}^+ (x)$ \cite{Penrose1972, Wald1984, Minguzzi2019}.
The convention adopted here (based on \cite{VVC, VVC_Letter}) is relevant for tightly characterizing the principle of no signalling outside the future light cone, where the relevant aspect to consider is where can a set of variables be jointly accessed (see \cref{sec:compat}).

\begin{remark}
    By this definition, $\Fut_s (\emptyset) = \TT$.
    This follows as for subsets of $\TT$, intersection with $\TT$ is the identity operation.
    Specifically, for any set $\XX$ of ORVs, $\Fut_s(\XX)=\Fut_s(\XX\cup \emptyset)=\Fut_s(\XX)\cap \Fut_s(\emptyset)$.
\end{remark}

We conclude by carrying over the notions of spanning elements and conicality, which were originally formulated for sets of points in a poset $\TT$, to sets of RVs embedded in a poset (ORVs).
In doing so, we gain additional complexity due to multiple ORVs $\XX, \YY \in \SC$ potentially sharing the same location $O(\XX) = O(\YY)$ in $\TT$.

\begin{definition}[Spanning Elements for ORVs]
    For a set of ORVs $\XX$, we define its set of \emph{spanning elements}, denoted by $\spann(\XX)$, as the union of all sets $s_\XX \subseteq \XX$ which satisfy
    \begin{equation}
        \label{eq:span-orv}
        \Fut_s (s_\XX) = \Fut_s (\XX) \quad \text{and} \quad
        \not \exists t_\XX \subsetneq s_\XX : \Fut_s (t_\XX) = \Fut_s(\XX) \, .
    \end{equation}
\end{definition}

Indeed, we obtain:

\begin{restatable}{lemma}{conicalORV}
    \label{def:conical-orv}
    Let $\TT$ be a conical poset and $L \subset \TT$ be finite. Let $\XX$ be a set of ORVs on this poset. 
    Then, the knowledge of $\Fut_s (\XX)$ implies the locations $O (\XX_i)$ for all its spanning elements $\XX_i \in \spann (\XX)$.
\end{restatable}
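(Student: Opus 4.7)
The strategy is to reduce the statement to the analogous fact for points of $\TT$ and then invoke the conicality of the poset. The proof will proceed in three steps, the middle of which carries all the subtlety coming from the fact that distinct ORVs may share the same location.

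First, I would translate everything from the ORV setting to the point setting. Given $\XX$, set $L := O(\XX) = \{O(\XX_i) : \XX_i \in \XX\}$, viewed as a \emph{set} of points of $\TT$ (so any location occupied by several ORVs is kept only once). By \cref{def:future} and \cref{def:supp-future}, one immediately has
\begin{equation}
\Fut_s(\XX) \,=\, \bigcap_{\XX_i \in \XX} \Fut(\XX_i) \,=\, \bigcap_{\XX_i \in \XX} J^+(O(\XX_i)) \,=\, \bigcap_{p \in L} J^+(p) \,=\, f(L),
\end{equation}
so knowledge of $\Fut_s(\XX)$ is the same as knowledge of $f(L)$.

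Next, I would prove the matching identity $O(\spann(\XX)) = \spann(L)$. For the inclusion $\subseteq$, pick $\XX_i \in \spann(\XX)$, lying in some minimal $s_\XX \subseteq \XX$ with $\Fut_s(s_\XX) = \Fut_s(\XX)$. The crucial observation is that minimality forbids two distinct elements of $s_\XX$ from sharing a location, since one could otherwise be discarded without changing the support future. Hence $L' := O(s_\XX)$ has $|L'| = |s_\XX|$ and $f(L') = f(L)$; any proper $L'' \subsetneq L'$ with $f(L'') = f(L)$ would lift, via the bijection $O|_{s_\XX}$, to a proper subset $s'_\XX \subsetneq s_\XX$ with $\Fut_s(s'_\XX) = \Fut_s(\XX)$, contradicting minimality of $s_\XX$. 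Thus $L'$ is minimal in $L$ and $O(\XX_i) \in L' \subseteq \spann(L)$. For the reverse inclusion, take $p \in \spann(L)$ contained in a minimal $L' \subseteq L$ with $f(L') = f(L)$; for each $q \in L'$ choose some $\XX_q \in \XX$ with $O(\XX_q) = q$, and form $s_\XX := \{\XX_q : q \in L'\}$. Then $\Fut_s(s_\XX) = f(L') = \Fut_s(\XX)$, and minimality of $L'$ dually forces minimality of $s_\XX$ (any smaller $t_\XX \subsetneq s_\XX$ would give $O(t_\XX) \subsetneq L'$ with the same $f$), so $\XX_p \in \spann(\XX)$ and $p = O(\XX_p) \in O(\spann(\XX))$.

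Finally, conicality of $\TT$ (\cref{def:conical}) says precisely that $\spann(L)$ is pinned down by $f(L)$ for any finite $L\subset\TT$. Combining this with the two preceding steps, $O(\spann(\XX))$ is determined by $\Fut_s(\XX)$, which is exactly the claim that the location $O(\XX_i)$ of each spanning ORV $\XX_i \in \spann(\XX)$ is fixed by the joint support future alone.

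The main obstacle is the middle step: the identification $O(\spann(\XX)) = \spann(L)$ is not a pure rewriting because an ORV carries more information than its location and several ORVs may sit on the same point of $\TT$. The key bookkeeping fact that unlocks both inclusions is that any minimal subset realizing the full support future must consist of ORVs at pairwise distinct locations, which turns minimality in $\XX$ and minimality in $L$ into equivalent conditions.
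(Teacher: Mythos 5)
Your proof is correct and follows essentially the same route as the paper's: reduce $\Fut_s(\XX)$ to $f(O(\XX))$, establish $O(\spann(\XX)) = \spann(O(\XX))$, and invoke conicality. The only cosmetic difference is that you verify the middle identity by two explicit inclusions (keyed on the observation that a minimal spanning subset cannot contain two co-located ORVs), whereas the paper reaches the same identity by grouping degenerately embedded ORVs and reducing to the non-degenerate case; the underlying idea is identical.
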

\begin{proof}
    A proof of a generalized version of this lemma is given at \cref{def:conical-embedding-orv}.
\end{proof}

\medskip
Having established these notions for ORVs, we can derive a more technical equivalent formulation of conicality that we will refer to as \emph{location symmetry}, as is detailed in \cref{sec:location-symmetry}.
This formulation will be a vital ingredient in the proofs of our main results in \cref{sec:compat-indec}.
In the particularly simple case of $\XX \subset \SC, \YY_1, \YY_2 \in \SC$, it can be written as
\begin{equation}
    \label{eq:location-symmetry-simple}
	\Fut_s (\XX\YY_1) = \Fut_s (\XX\YY_2)
	\quad \implies \quad
	\Fut_s (\XX) \subseteq \Fut_s (\YY_1 \YY_2) \quad \text{or} \quad O(\YY_1) = O(\YY_2) \, .
\end{equation}
Generally, location symmetry captures the intuition that if two non-disjoint sets of ORVs (with $\XX$ being their overlap) share the same joint future, this is either due to the joint future of the overlapping ORVs $\XX$ being contained in joint futures of ORVs $\YY_1$ and $\YY_2$ outside the overlap, or due to the latter ORVs being embedded at the same location (i.e., the embedding is degenerate).

Beyond that, it is sensible to lift the notion of \textit{conicality} from the poset to the level of the embedding (cf.\ \cref{def:conical-embedding}), as primarily the properties of ORVs relative to one another are of relevance.
However, we will defer these generalizations to the respective proofs in \cref{sec:proofs}.

\begin{remark}
	We note that in \cite{VVC}, a distinct notion of \emph{accessible regions} $\RC_X\subseteq \mathcal{T}$ for each RV $X \in S$ was also introduced, which is the space-time region where the information associated with the respective RV is accessible. Accordingly, \emph{a priori} \cite{VVC} considers this region to be independent of the location or the future of the ORV.
    This allows to distinguish the information-theoretic aspects of accessibility from the partial order of $\TT$. Nonetheless, when imposing compatibility which captures relativistic principles in the space-time, $\RC_X = \Fut(\XX)$ is imposed, identifying the accessible regions with the causal future \emph{a posteriori}. Thus $\RC_X$ is then determined by the location $O(X)$ of the RV. Operationally, this identification allows for the broadcasting of classical information everywhere in the future light cone. For ease of presentation, we have dropped the distinction between the accessible region and the spatio-temporal future in this work (thereby not needing to introduce the former separately).
\end{remark}
\section{Compatibility of affects relations with space-time}
\label{sec:compat-top}

In this section, we will review compatibility conditions for embeddings of RVs into a space-time in the presence of affects relations as introduced in \cite{VVC}.
Thereby, we relate the information-theoretic concept of signalling, as encoded into the affects relations, with relativistic causality, as given by the space-time poset $\TT$. Subsequently we will present our main result: a connection between conical space-times and faithful causal models, both independently establishing a correspondence between causal inference and no superluminal signalling.

\subsection{Reviewing compatibility}
\label{sec:compat}

While demanding all causation to go into the future seems to be a natural assumption, it is actually not necessary to prevent superluminal \textit{signalling}, since causation and signalling are inequivalent concepts (owing to the possibility of fine-tuning). The latter is sufficient to disallow agents from being able to transmit information into the spatio-temporal past.

Operationally, transmitting information, or rather, signals, is encoded (in the current framework) into affects relations between the observed RVs.
More precisely, $X \affects Y \given \doo(Z)$ captures that an agent (Alice) who intervenes on $X$ can signal to another agent (Bob) who can observe $Y$ and is given information on interventions performed on $Z$.
If these RVs are embedded in space-time, Bob needs to jointly access $Y$ and $Z$ (which can be done in $\Fut_s (\YY\ZZ) \subseteq \Fut_s (\YY)$) to receive the signal from Alice, who has access to $X$ (accessible in $ \Fut_s (\XX)$).
To avoid superluminal signalling, we therefore need Bob's variables to be jointly accessible in the future of Alice's interventions only.\footnote{
    While this paragraph suggests a picture of \enquote{human} agents, we recall that interventions in causal modelling are not limited to this setting, but more broadly capture which systems and properties can, in principle, be independently accessed and modified within a physical theory, even if no human could realize such an intervention in practice \cite{Vilasini2025}.
    This perspective aligns with arguments towards causal reformulations of physics \cite{Spekkens2012} and the broader program towards information-theoretic and operational approaches to fundamental physics \cite{Hardy2016}.
}

Moreover, as motivated in \cite{VVC} we only need to consider Irred$_1$ affects relations to ensure that the conditions are not unnecessarily strong. For instance, the reducible relation $X_1 X_2 \affects Y$ may arise in a model where $X_1\dircause Y$ and $X_2$ is a dummy variable with no incoming or outgoing arrows (no causes or effects). In that case $\YY \succ \XX_2$ is not implied by any relativistic causality principle, and only the constraint implied by the equivalent irreducible relation $X_1\affects Y$ is relevant for no superluminal signalling, imposing $\YY\succ \XX_1$.
As motivated in \cref{sec:affects-red}, irreducibility serves to eliminate such redundancies.

This can be formalized to the following condition proposed in \cite{VVC}.
For the purposes of this work, we provide a simplified version, with regard to the embedding as well as in the restriction to unconditional affects relations.

\begin{definition}[\textbf{compat}]
    \label{def:compat}
    Let $\SC$ be a set of ORVs from a set of RVs $S$ and a poset $\TT$ with an embedding $\mathcal{E}$. Then a set of unconditional affects relations $\mathscr{A}$ is said to be \emph{compatible} with $\mathcal{E}$ (or satisfies \textbf{compat}) if the following condition holds:
    \begin{itemize}
        \item Let $X, Y \subset S$ be disjoint non-empty sets of RVs, $Z \subset S$ another disjoint set of RVs, potentially empty.
        If ($X \affects Y \given \doo(Z)$) $\in \mathscr{A}$ and is Irred$_1$, then
        $\Fut_s (\YY\ZZ) = \Fut_s (\YY) \cap \Fut_s (\ZZ) \subseteq \Fut_s (\XX) \, .$
    \end{itemize}
\end{definition}

This compatibility condition provides a necessary and sufficient condition to ensure that we have no superluminal signalling once the affects relations are embedded in the space-time \cite{VVC}. Notice that the relevant condition for strictly capturing no superluminal signalling (NSS) is on the intersections and not union of the futures (the latter relates to prohibiting superluminal causation, which is inequivalent to NSS \cite{VVCPR}). In particular, in the jamming scenario of \cref{ex:jamming}, we have $Y\affects AC$ (which is Irred$_1$) and NSS only requires the joint future of the ORVs $\AA$ and $\CC$ to be contained in the future of the ORV $\YY$ and not that $\AA$ or $\CC$ is embedded in the future light cone of $\YY$, since $Y \not\affects A$ and $Y\not\affects C$.
By contrast, a stronger compatibility condition requiring no superluminal \emph{causation} would exclude this scenario \cite{VVCPR}. Indeed, the causal influence $Y\dircause A$ in \cref{ex:jamming} would be superluminal in an embedding where $\AA$ is not in the future light cone of $\YY$, yet such an embedding of this example can respect the weaker principle, NSS as explained above.

\begin{remark}
    \label{thm:emptyset}
    For $X,Y,Z \in S$, consider an affects relation $X \affects Y \given \doo(Z)$.
    If $\Fut_s (\YY\ZZ) = \emptyset$, there is no location in $\TT$ where both $Y$ and $Z$ are accessible.
    Hence, it is impossible to experimentally verify the respective affects relation, rendering it operationally meaningless.
    Similarly, if $\Fut_s (\XX) = \emptyset$, compatibility implies that $\Fut_s (\YY\ZZ) = \emptyset$:
    It is impossible to signal from \emph{nowhere} to \emph{somewhere}.
    Therefore, we will usually disregard the case where either of these sets is empty.
\end{remark}

A physical example for the former case would be for $\YY$ and $\ZZ$ located in two distinct classical black holes or outside their respective cosmological event horizons, as these would have no joint future.

\subsection{A connection between conical space-times and causal models without clustering} \label{sec:compat-indec}
Consider a higher-order affects relation $X\vDash Y|\mathrm{do}(Z)$ between disjoint sets $X$, $Y$ and $Z$ of RVs. The Irred$_1$ property of such an affects relation enables us to infer that each element $e_X \in X$ is a cause of some element $e_Y \in Y$ \cite{VVC}, and as we have shown in \cref{sec:affects-to-cause}, the Irred$_3$ property enables an analogous inference relative to the other interventional argument, $Z$, that all $e_Z\in Z$ are a cause of some $e_Y\in Y$.
In other words: 
\begin{mdframed}
    \centering
    Imposing Irred$_1$ and Irred$_3$ for the higher-order affects relation $X\vDash Y|\mathrm{do}(Z)$ (or $Z\vDash Y|\mathrm{do}(X)$) has the same implications for causal inference as imposing Irred$_1$ for the 0$^\text{th}$-order relation $XZ\vDash Y$.
\end{mdframed}

This indicates an interchangeability between $X$ and $Z$ for causal inference statements derived from higher-order affects relations irreducible in the first and third arguments. Does this interchangeability also carry forth to compatibility constraints imposed by higher-order affects relations in a space-time?
The following example illustrates that the answer to this question is generally negative.

\begin{figure}[t]
    \centering
    \begin{tikzpicture}[dot/.style={circle,inner sep=1pt,fill,name=#1}]
        \node [dot=Z1,label=$\ZZ_1$] at (1.2,1.2) {};
        \node [dot=Y,label=$\YY$] at (2,2) {};
        \node [dot=Z2,label=$\ZZ_2$] at (3.7,1.8) {};

        \node (left) at (0.5,0.5) {};
        \node (right) at (5.0,0.5) {};
        \draw (left) -- ++(3,3);
        \draw (right) -- ++(-3,3);

        \node [dot=X,label=$\XX$] at (2.75,2.75) {};
    \end{tikzpicture}
	\caption{
        Sketch of a compatible non-degenerate embedding of \cref{ex:non-degenerate} into 1+1-Minkowski space-time.
        In this embedding, all ORVs which are located on a light-like surface, and therefore on the boundary of the light cone of the respective earlier RVs.
        As this space-time is not conical, even for $X \affects Y \given \doo(Z)$ satisfying Irred$_1$ and Irred$_3$, in accordance with \cref{thm:fine-tuned-embedding-cond}. we may have $\Fut (\YY) \not\subseteq \Fut (\XX) \cap \Fut_s (\ZZ)$.
    }
	\label{fig:non-degenerate}
\end{figure}
\begin{example}{(Correspondence does not generally hold)}
    \label{ex:non-degenerate}
    Let $X$ and $Y$ be two RVs while $Z:=\{Z_1,Z_2\}$. Then $X\vDash Y|\mathrm{do}(Z)$ is Irred$_1$ by construction. Suppose that it is also Irred$_3$. Then by \cref{def: red3}, it follows that we must have (1) $Z_1\vDash Y|\mathrm{do}(Z_2)$ or $Z_1\vDash Y|\mathrm{do}(XZ_2)$, and (2) $Z_2\vDash Y|\mathrm{do}(Z_1)$ or $Z_2\vDash Y|\mathrm{do}(XZ_1)$. We resolve this condition by picking the first affects relation of (1) and the second of (2) i.e., $Z_1\vDash Y|\mathrm{do}(Z_2)$ and $Z_2\vDash Y|\mathrm{do}(XZ_1)$.
    Consider the space-time embedding of these RVs in 1+1-Minkowski space-time where $\ZZ_1\prec \YY\prec \XX\succ \ZZ_2$, as depicted in \cref{fig:non-degenerate}.
    All three of the above affects relations are compatible with this embedding, as we have $\Fut (\YY) \cap \Fut_s (\ZZ) \subseteq \Fut (\XX)$, $\Fut (\YY) \cap \Fut (\ZZ_2) \subseteq \Fut (\ZZ_1)$ and $\Fut (\YY) \cap \Fut(\XX) \cap \Fut (\ZZ_1) \subseteq \Fut (\ZZ_2)$. However, the compatibility condition implied by an additional Irred$_1$, 0$\,^\text{th}$-order relation $XZ\vDash Y$ is violated as  $\Fut (\YY) \not\subseteq \Fut (\XX) \cap \Fut_s (\ZZ)$.\footnote{
    Further, from \cite{VVC}, we know that $X\vDash Y|\mathrm{do}(Z)$ implies that $Z\vDash Y$ or $XZ\vDash Y$ must hold, which can be resolved by choosing $Z\vDash Y$. This may be reducible or irreducible, we take it to be reducible to $Z_1\vDash Y$, which is by construction irreducible since $Z_1$ is a single RV. This imposes the additional compatibility condition $\ZZ_1\prec \YY$ which is already satisfied in the space-time embedding of our example. Similarly, we can resolve the implied affects relations of  $Z_1\vDash Y|\mathrm{do}(Z_2)$ and  $Z_2\vDash Y|\mathrm{do}(XZ_1)$ with the same affects relation $Z_1\vDash Y$. Thus the compatibility of the affects relations of this example with the given embedding continues to hold when considering implied affects relations.}
\end{example}
This example is described in terms of affects relations (that can arise from some underlying, unknown causal model). It however remains an open question to find a causal model that generates exactly a given set of affects relations (such as those of this example) and no more.
Further, the example is set in 1+1-Minkowski space-time, which is not a conical space-time.

Interestingly, as we show in the following theorem, in conical space-times, the answer to the aforementioned question is positive for any set of affects relations: Here, the interchangeability of $X$ and $Z$ does carry forth to compatibility. This results in a simplification of compatibility considerations, which can have useful applications for characterizing relativistic causality for information-processing protocols in conical space-times (such as our physical 3+1-Minkowski space-time), as further discussed in \cref{sec:conclusion}.

\begin{restatable}{theorem}{irrCompatConical}
	\label{thm:irr-compat}
    Let $\mathscr{A}$ be a set of unconditional affects relations and $\mathscr{A}'\subseteq \mathscr{A}$ consist of all affects relations in $\mathscr{A}$ that are both Irred$_1$ and Irred$_3$.
	Then for any non-degenerate embedding $\mathcal{E}$ into a conical space-time $\TT$ satisfying \hyperref[def:compat]{\textbf{compat}}, we have
	\begin{equation}
		\label{eq:irr-compat-conical-main}
		(X \affects Y \given \doo(Z)) \in \mathscr{A}'
		\quad \implies \quad
	    \Fut_s (\YY) \subseteq \Fut_s (\XX) \cap \Fut_s (\ZZ) \, .
	\end{equation}
\end{restatable}

We briefly sketch the proof of this statement, while deferring a more detailed argument to \cref{sec:proofs6}.\footnote{
    Note that this sketch organizes the main arguments in a different manner than in \cref{sec:proofs6}. There, intermediate results are carved out more explicitly, e.g.\ the implications of Irred$_3$ for compatibility in general space-times.
}
By Irred$_1$ and Irred$_3$, $X \affects Y \given \doo(Z)$ implies (among others) further relations \linebreak $e_X \affects Y \given \doo(ZX\setminus e_X)$ for all $e_X \in X$ (by \cref{def: red1}) as well as either $e_Z \affects Y \given \doo(Z \setminus e_Z)$ or $e_Z \affects Y \given \doo(XZ \setminus e_Z)$ for all $e_Z \in Z$  (by \cref{def: red3}).
By compatibility, these statements jointly imply
$\Fut_s(\ZZ\XX \setminus e_{\XX\ZZ}) \cap \Fut_s(\YY) \subseteq \Fut(e_{\XX\ZZ})$ for all $e_{\XX\ZZ} \in \XX\ZZ$.
Together, these statements imply
$\Fut_s(\YY\ZZ\XX \setminus e_{\XX\ZZ}) = \Fut_s(\YY\ZZ\XX)$ for all $e_{\XX\ZZ} \in \XX\ZZ$ (cf.~\cref{thm:compat-irr-to-strong-second}).
The statement then follows by applying the full version (cf.~\cref{def:location-symmetry-orvs}) of location symmetry (as motivated in \cref{eq:location-symmetry-simple}), which is equivalent to conicality.

\medskip

Beyond \cref{thm:irr-compat}, \cref{thm:irr-compat-embedding} demonstrates that this theorem can be generalized to some embeddings into non-conical space-times, which we introduce as \emph{conical embeddings} in \cref{def:conical-embedding}.

However, imposing restrictions on the embedding and the underlying space-time is not the only way to recover this resemblance between causal inference and space-time structure.
Alternatively, we can restrict the allowed sets of affects relations, and by extension, the space of causal models giving rise to them.
In particular, we can demand the absence of clustering in the third argument. 

\begin{restatable}{theorem}{irrCompatClusThree}
    \label{thm:irr-compat-clus}
    Let $\mathscr{A}$ be a set of unconditional affects relations arising from a causal model not yielding any affects relations with Clus$_3$
    and let $\mathscr{A}'\subseteq \mathscr{A}$ consist of all affects relations in $\mathscr{A}$ that are both Irred$_1$ and Irred$_3$. Then compatibility of $\mathscr{A}$ with any embedding $\cE$ in \underline{any} space-time $\cT$ implies \cref{eq:irr-compat-conical-main}.
\end{restatable}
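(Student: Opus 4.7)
The plan is to establish \cref{eq:irr-compat-conical} by proving the element-wise containments $\Fut_s(\YY) \subseteq \Fut(e_\XX)$ for every $e_X \in X$ and $\Fut_s(\YY) \subseteq \Fut(e_\ZZ)$ for every $e_Z \in Z$, from which the full inclusion $\Fut_s(\YY) \subseteq \Fut_s(\XX) \cap \Fut_s(\ZZ)$ follows by intersecting over all elements. The engine is a \emph{stripping lemma}: for any $\{e\} \affects Y \given \doo(W) \in \mathscr{A}$ with $|W| \geq 1$, the no-Clus$_3$ hypothesis furnishes some $s_W \subsetneq W$ with $\{e\} \affects Y \given \doo(s_W) \in \mathscr{A}$. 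Iterating this process---$|W|$ strictly decreases at each step and is bounded below by zero---eventually yields $\{e\} \affects Y \in \mathscr{A}$, a zeroth-order relation whose first argument is a singleton and hence trivially Irred$_1$, so \hyperref[def:compat]{\textbf{compat}} directly delivers $\Fut_s(\YY) \subseteq \Fut(e)$.

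For the first-argument containment, I would fix any $e_X \in X$ and invoke the Irred$_1$ property of the original relation with $s_X = \{e_X\}$, which forces $\{e_X\} \affects Y \given \doo((X \setminus \{e_X\})Z) \in \mathscr{A}$; the stripping lemma followed by \hyperref[def:compat]{\textbf{compat}} then gives $\Fut_s(\YY) \subseteq \Fut(e_\XX)$. For the third-argument containment, I would fix $e_Z \in Z$ and apply Irred$_3$ with $s_Z = \{e_Z\}$: the negation of the Red$_3$ condition (\cref{def: red3}) guarantees that at least one of $\{e_Z\} \affects Y \given \doo(X(Z \setminus \{e_Z\})) \in \mathscr{A}$ or $\{e_Z\} \affects Y \given \doo(Z \setminus \{e_Z\}) \in \mathscr{A}$ holds, and in either scenario the stripping lemma plus \hyperref[def:compat]{\textbf{compat}} yields $\Fut_s(\YY) \subseteq \Fut(e_\ZZ)$. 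Intersecting the two families of containments over all $e_X$ and $e_Z$ completes the proof.

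The principal subtlety is ensuring that every affects relation produced by the iterative stripping genuinely lies in $\mathscr{A}$, so that the no-Clus$_3$ hypothesis can be re-applied at each step. Under the natural reading that $\mathscr{A}$ collects the affects relations arising from an underlying model that yields no Clus$_3$ relation whatsoever, this closure is automatic: each intermediate relation is itself a model-level affects relation by construction, and the no-Clus$_3$ property is therefore inherited. Beyond this bookkeeping, the argument uses only the definitions of Irred$_1$, Irred$_3$, and \hyperref[def:compat]{\textbf{compat}}, and in particular requires no conicality or non-degeneracy of the embedding; the role that conicality played in \cref{thm:irr-compat} is here shifted entirely onto the causal-model side via the no-Clus$_3$ hypothesis, mirroring the correspondence announced in the introduction.
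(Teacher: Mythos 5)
Your proposal is correct and follows essentially the same route as the paper: use the absence of Clus$_3$ to strip the interventional conditioning set down to zeroth-order relations with a singleton first argument, then apply \hyperref[def:compat]{\textbf{compat}} element-wise and intersect over all $e_X\in X$ and $e_Z\in Z$. The only minor difference is how individual elements of $X$ are isolated: you invoke Irred$_1$ directly with $s_X=\{e_X\}$ to place each $e_X$ alone in the first argument before stripping, whereas the paper peels off elements recursively via \cref{thm:not-clus3}, whose proof additionally routes through the implication that no Clus$_3$ entails no Clus$_1$ (\cref{thm:clusThreeOne} together with \cref{thm:not-clus1}); your variant sidesteps that detour but is otherwise the same argument.
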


This secondary result follows as for a causal model, absence of clustering in interventional arguments ultimately allows to infer that $e_{XZ} \affects Y$ for all $e_{XZ}$.
We defer a detailed proof to \cref{proof:compatClusThree}.

In particular, the above theorem holds for any faithful (or not fine-tuned) causal model since such models cannot have any clustered affects relations (\cref{lemma: clus_finetune}). Accordingly, we can check that the set of affects relations considered in \cref{ex:non-degenerate} does indeed admit affects relations clustered in the third argument, as $X \affects Y \given \doo(Z)$, yet $X \naffects Y$. This implies (by \cref{lemma: clus_finetune}) that any causal model that can give rise to these affects (and non-affects) relations must be fine-tuned.

\begin{table}
    \centering
    \scalebox{1.25}{$X \affects Y_1 Y_2 \given \doo(Z_1 Z_2)$ with Irred$_1$}\\
    \vspace{.2cm}
    \begin{tabular}{r|c|c|}
        \cline{2-3}
        & $X, Z_1, Z_2 \, \longrsquigarrow \, Y_1 Y_2$ &
        $\highlight{red!20}{\Fut_s (\XX \ZZ_1 \ZZ_2)} \supseteq \highlight{blue!20}{\Fut_s (\YY_1 \YY_2)}$ \\
        \rule{0pt}{2ex}    
        & for Causal Inference?
        & for Space-time Embedding? \\
        & & $^\text{with no superluminal signalling}$ \\
        \hline
        \multicolumn{1}{|r|}{in general}
        & \textcolor{red}{\xmark} & \textcolor{red}{\xmark} \\
        \multicolumn{1}{|r|}{w/ Irred$_3$}  & \textcolor{teal}{\cmark} & \textcolor{red}{\xmark} \\
        \multicolumn{1}{|r|}{w/ Irred$_3$ \& no Clus$_3$ in model} & \textcolor{teal}{\cmark} & \textcolor{teal}{\cmark} \\
        \multicolumn{1}{|r|}{w/ Irred$_3$\,\& \hyperref[def:conical]{conical space-time}} & \textcolor{teal}{\cmark} & \textcolor{teal}{\cmark} \\
        \multicolumn{1}{|r|}{w/ Irred$_3$\,\& \hyperref[def:conical-embedding]{conical embedding}} & \textcolor{teal}{\cmark} & \textcolor{teal}{\cmark} \\
        \hline
        & \begin{tikzpicture}[scale=0.9]
            \node (placeholder) at (0,2.5) {};
            \begin{scope}[every node/.style={circle,thick,draw,inner sep=0pt,minimum size=0.8cm}]
                \node (X) at (0,0) {$X$};
                \node (Z1) at (1.5,0) {$Z_1$};
                \node (Z2) at (3,0) {$Z_2$};
                \node (Y1) at (0.5,2) {$Y_1$};
                \node (Y2) at (2.5,2) {$Y_2$};
            \end{scope}

            \begin{scope}[>={Stealth[black]},
                          every edge/.style=vvarrow]
                \path [->] (X) edge (Y1);
                \path [->] (X) edge (Y2);
                \path [->] (Z1) edge (Y2);
                \path [->] (Z1) edge (Y1);
                \path [->] (Z2) edge (Y2);
            \end{scope}
        \end{tikzpicture}        
        & \begin{tikzpicture}[dot/.style={circle,inner sep=1pt,fill,name=#1}]
    		\node [dot=A,label=$\mathcal{X}$] at (0,0) {};
    		\node [dot=B,label=$\mathcal{Z}_1$] at (1.5,0) {};
    		\node [dot=C,label=$\mathcal{Z}_2$] at (3,0) {};
            \node [dot=Y1,label=$\mathcal{Y}_1$] at (0.75,1.5) {};
            \node [dot=Y2,label=$\mathcal{Y}_2$] at (2.25,1.25) {};
    		
    		\fill[fill=red!20] (1.5,1.5) -- (2.5,2.5) -- (.5,2.5) -- (1.5,1.5);
            \fill[fill=blue!20] (1.375,2.125) -- (1.75,2.5) -- (1,2.5) -- (1.375,2.125);
    		
    		\draw (A) -- ++(-1,1);
    		\draw (A) -- ++(2.5,2.5);
    		\draw (B) -- ++(-2,2);
    		\draw (B) -- ++(2.125,2.125);
    		\draw (C) -- ++(-2.5,2.5);
    		\draw (C) -- ++(1,1);

            \draw (Y1) -- ++(1,1);
            \draw (Y1) -- ++(-1,1);
            \draw (Y2) -- ++(1.25,1.25);
            \draw (Y2) -- ++(-1.25,1.25);
    	\end{tikzpicture} \\
        & \scriptsize  Possible compliant causal structure\! 
        & \scriptsize Possible compliant space-time embedding \\
        \cline{2-3}
    \end{tabular}
    \caption{The consequences of an affects relation $X \affects Y_1 Y_2 \given \doo(Z_1 Z_2)$ regarding causal inference and space-time compatibility, given that various additional conditions apply.
    Here, $X, Y_1, Y_2, Z_1, Z_2$ represent individual RVs in the causal model.
    Therefore, this affects relation is Irred$_1$ by definition.
    The juxtaposition within the table demonstrates the correspondence between causal and spatiotemporal order, which is established under certain additional conditions only.    
    The example is based on \cref{ex:non-degenerate}, albeit considering $Y_1$ and $Y_2$ rather than a single RV $Y$ in its second argument.}
    \label{tab:summary}
\end{table}

This theorem yields a connection between properties of causal models and the geometry of space-time, namely between (1) causal models with no clustering in the interventional arguments of its affects relations and (2) space-times satisfying conicality.\footnote{
    This parallel is rooted in the fact that both of these properties affirm the impact of \textit{individual} elements in the interventional arguments of affects relation, albeit in different ways:
    On the one hand information-theoretically, the absence of clustering conveys the intuition that generally, there exist RVs in the respective affects relation arguments that individually contribute to signalling (cf.\ \cref{sec:clus-prop}).
    On the other hand relativistically, by conicality, for a non-degenerate compatible embedding, there exist ORVs (the spanning elements) which individually contribute to the joint future.
}
Specifically, whenever condition (1) or (2) holds, we have an interchangeability between $X$ and $Z$ for higher-order affects relations irreducible in the first and the third arguments, for the purpose of constraints imposed by no superluminal signalling (i.e., compatibility of the causal model with the space-time). This interchangeability always holds for causal inference constraints, by \cref{thm:irr-indec}.
Effectively, this then tells us that in scenarios satisfying (1) or (2), compatibility implies that the interventional data given by $X$ and $Z$ in a higher-order affects relation must be entirely in the ``past'' of the region where the observational data captured by $Y$ is completely accessible, irrespective of whether ``past'' is defined relative to the relations $\dircause$ of the information-theoretic causal model or the relations $\prec$ capturing the light cone structure of the space-time.
This main result is illustrated and summarized in \cref{tab:summary}.

In \cref{sec:no-irreducible}, we introduce another compatibility condition named \textbf{compat-atomic}, which is a restriction of \textbf{compat} (\cref{def:compat}) to affects relations $X\affects Y\given \doo(Z)$  where $X$ is a single RV.
While this does not fully capture no superluminal signalling and is not equivalent to \textbf{compat} in general, we will show that \textbf{compat-atomic} and \textbf{compat} are equivalent under analogous restrictions as the above theorem: namely when restricting the space-time to be conical or the causal model to have no clustered relations of a certain type. This suggests another correspondence between causal inference and space-time geometry which also leads to notable simplifications.

\subsection{Generalizing the connection to include conditional affects relations}

Any conditional affects relation $X \affects Y \given \doo(Z), W$ (\cref{def:affects-cond}) implies the presence of an unconditional affects relation  $X \affects YW \given \doo(Z)$, as shown in Lemma~IV.8 of \cite{VVC} (see also \cref{thm:decondition}).
Therefore, it is possible to transform each set of (possibly conditional) affects relations to a set of unconditional affects relations with the same implications for causal inference (according to \cref{sec:affects-to-cause}) and compatibility.
This follows from \cref{thm:decondition} and \cref{thm:decondition3}.
This means that all statements of the previous section also apply for conditional affects relations when replacing $Y$ with $YW$ accordingly.
Together with \cref{thm:irr-indec}, \cref{thm:irr-compat} and \cref{thm:irr-compat-clus}, this yields that if either the space-time is conical or the causal model is not fine-tuned, an affects relation $X \affects Y \given \doo(Z), W$ with Irred$_1$ and Irred$_3$ is equivalent to an affects relation $XZ \affects YW$ for the purpose of causal inference (according to \cref{sec:affects-to-cause}) and compatibility with space-time structure\footnote{For conditional affects relations, compatibility is spelled out in \cref{def:compat-cond}.}.

This allows for direct generalization of statements concerning the embeddability of sets of unconditional 0$^\text{th}$-order affects relations into space-time to sets of conditional higher-order affects relations.
Conjoining the contrapositions of our two main theorems and generalizing to conditional affects relations, we obtain:
\begin{corollary}
    \label{thm:fine-tuned-embedding-cond}
    Let $\mathscr{A}$ be a set of (conditional) affects relations (c.f. \cref{def:affects-cond}) and $\mathscr{A}'\subseteq \mathscr{A}$ consist of all affects relations in $\mathscr{A}$ that are both Irred$_1$ and Irred$_3$. Then the existence of a non-degenerate embedding $\cE$ in a space-time $\cT$ with 
    \begin{equation}
		(X \affects Y \given \doo(Z), W) \in \mathscr{A}'
		\quad \text{and} \quad
		\Fut_s (\YY) \cap \Fut_s (\WW) \not\subseteq \Fut_s (\XX) \cap \Fut_s (\ZZ) \, .
	\end{equation}
    implies that (1) the affects relations $\mathscr{A}$ exhibit interventional fine-tuning (Clus$_3$) and that (2) the space-time $\TT$ does not satisfy conicality.
\end{corollary}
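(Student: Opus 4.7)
The plan is to derive this corollary as the contrapositive of \cref{thm:irr-compat,thm:irr-compat-clus} after translating the conditional affects relation into an unconditional one. First I would apply the conversion from $X \affects Y \given \doo(Z), W$ to the unconditional relation $X \affects YW \given \doo(Z)$, established via \cref{thm:decondition} and its higher-order generalisation \cref{thm:decondition3} (which is precisely the translation invoked at the opening of \cref{sec:compat-top}). Because the original relation is Irred$_1$ and Irred$_3$, so is its image under this translation. Moreover, by \cref{def:supp-future} the support future of a union is the intersection of support futures, hence $\Fut_s(\YY\WW) = \Fut_s(\YY) \cap \Fut_s(\WW)$, so the hypothesis of the corollary rewrites as
\begin{equation*}
    \Fut_s(\YY\WW) \not\subseteq \Fut_s(\XX) \cap \Fut_s(\ZZ)
\end{equation*}
for the converted unconditional relation, i.e.\ exactly the negation of the conclusion shared by \cref{thm:irr-compat,thm:irr-compat-clus}. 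Denote by $\mathscr{A}^{\text{unc}}$ the set obtained by applying this translation elementwise to $\mathscr{A}$.

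I would then establish (1) and (2) independently, each by contraposition. For (1), assume for contradiction that $\TT$ satisfies conicality. Since $\cE$ is non-degenerate, \cref{thm:irr-compat} applied to the translated relation $X \affects YW \given \doo(Z) \in \mathscr{A}^{\text{unc}}$ (which lies in the corresponding $\mathscr{A}'^{\text{unc}}$) forces $\Fut_s(\YY\WW) \subseteq \Fut_s(\XX) \cap \Fut_s(\ZZ)$, directly contradicting the hypothesis; hence $\TT$ cannot be conical. For (2), assume for contradiction that $\mathscr{A}$ contains no affects relation with Clus$_3$; then neither does $\mathscr{A}^{\text{unc}}$, so \cref{thm:irr-compat-clus} applies and yields the same inclusion, contradicting the hypothesis. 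Hence some element of $\mathscr{A}$ must be Clus$_3$, i.e.\ $\mathscr{A}$ exhibits interventional fine-tuning.

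The main obstacle I anticipate lies at the translation step, and is essentially bookkeeping. For part~(1) one only needs the translation applied to the single relation of interest, and preservation of Irred$_1$/Irred$_3$ follows directly by rewriting \cref{def:reducible} and \cref{def: red3} for $X \affects YW \given \doo(Z)$. For part~(2), however, one needs the \emph{collective} statement that absence of Clus$_3$ throughout the conditional set $\mathscr{A}$ implies absence of Clus$_3$ throughout $\mathscr{A}^{\text{unc}}$. This requires invoking the appendix's generalisation of \cref{def: clus3} to conditional relations and checking that any $s_Z \subsetneq Z$ witnessing Clus$_3$ of a translated relation in $\mathscr{A}^{\text{unc}}$ would already witness Clus$_3$ of the corresponding conditional relation in $\mathscr{A}$. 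Once this correspondence between witnesses is verified, both conclusions follow immediately from the two main theorems, and no further analysis of the space-time geometry or of the causal mechanisms is needed beyond what is already encoded in those theorems.
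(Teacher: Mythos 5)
Your proposal is correct and follows essentially the same route as the paper: the paper obtains this corollary precisely by translating $X \affects Y \given \doo(Z), W$ into the unconditional relation $X \affects YW \given \doo(Z)$ via \cref{thm:decondition} and \cref{thm:decondition3} (preserving Irred$_1$ and Irred$_3$), identifying $\Fut_s(\YY\WW) = \Fut_s(\YY) \cap \Fut_s(\WW)$, and then conjoining the contrapositives of \cref{thm:irr-compat} and \cref{thm:irr-compat-clus}. The bookkeeping concern you raise about Clus$_3$ witnesses is reasonable but not an obstacle, since the hypothesis of \cref{thm:irr-compat-clus} is a property of the underlying causal model (which is unchanged by the translation) rather than of the particular set of affects relations.
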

This shows that in a conditional higher-order affects relation, the interventional (1$^\text{st}$ and 3$^\text{rd}$) and observational (2$^\text{nd}$ and 4$^\text{th}$) arguments respectively exhibit interchangeability amongst each other when either restricting to conical space-times or affects relations without clustering.

To summarize, we have established that in cases (1) and (2) of the above corollary, an affects relation $X\affects Y\given \doo(Z), W$ being Irred$_1$ and Irred$_3$ gives us the following causal inference and compatibility statements, for any $e_{XZ}\in XZ$:
\begin{align}
    \label{eq: correspondence_main}
    \begin{split}
        \textsf{\textbf{Causal inference: }} & e_{XZ} \text{ is a cause of } YW \\
        \textsf{\textbf{Compatibility: }}    & \Fut (e_{\XX\ZZ}) \supseteq \Fut_s (\YY\WW)
    \end{split}
\end{align}
These exhibit an analogous order structure.
The former statement follows from the results of \cref{sec:affects-to-cause} holding independently of restrictions (1) and (2), while the latter statement does not generally hold without these restrictions. 
\section{Conclusions}
\label{sec:conclusion}

The affects framework \cite{VVC} provides a formal platform for investigating the interplay between information-theoretic and spatio-temporal causation in rather general scenarios. For addressing many of the open problems brought to light by this approach, it is beneficial to develop a tighter characterization of the associated concepts and techniques, relating to causal modelling, space-time structures and the compatibility of the two. Within our work we have improved on these characterizations, generating new insights and tools for many of the open problems identified in \cite{VVC}.

For causal models, our contributions are three-fold. We generalized the concept of reducibility to identify redundancies in different arguments of an affects relation, which captures information-theoretic signalling through interventions in a causal model. Introducing the concept of clustering of affects relations, we characterized different types of operationally detectable fine-tunings depending on whether it involves \emph{observational} or \emph{interventional} arguments. Subsequently, we derived applications of these concepts for causal inference, showing that interestingly, the absence of certain affects relations can also be employed for causal inference. 

For space-time structure, we have introduced the order-theoretic property of \emph{conicality} showing that it is satisfied in Minkowski space-times with $d>1$ spatial dimensions and violated for $d=1$. Using this, we have shown a connection (\cref{thm:fine-tuned-embedding-cond}) between conical space-times and causal models without a form of clustering.
Both cases lead to a correspondence between the causal order constraints on the observed variables coming from compatibility with a space-time embedding and those coming from purely information-theoretic causal inference (although these two types of constraints behave differently in general).
Moreover, this correspondence reveals that in conical space-times, the principle of no superluminal signalling ensures a clear temporal ordering originating from an affects relation $X \affects Y \given \doo(Z),W$ irreducible in $X$ and $Z$: between when the interventional arguments $\XX$ and $\ZZ$ and the observational arguments $\YY$ and $\WW$ are jointly accessible, with the former ordered before the latter.

While we have focused on foundational questions here, the relevance of fine-tuning in the security of cryptographic protocols (as highlighted in \cite{VVC}) motivates potential applications of our results in more practical scenarios. Moreover, given the widespread use of causal modelling and inference in data driven disciplines, combined with the interest in order-theoretic properties of space-time in general relativity and quantum gravity approaches, this work may be of interest in broader communities beyond quantum information and physics.

\subsection{Open questions}

There is still much scope for future work. We discuss some interesting future directions that can be investigated by building on this work and the affects framework.

\subsubsection*{Causal models and inference}

Here, we have focused on causal models defined using the d-separation property. This property holds for all acyclic models \cite{Pearl2009, Henson2014} and for a class of cyclic models \cite{VVC, VVR} in quantum and post-quantum theories, but can fail in certain cyclic causal models already in classical theories \cite{Bongers_2021}. A natural open question is whether all the techniques used here generalize to the case when the d-separation condition is replaced by the $\sigma$-separation condition \cite{arxiv.1710.08775}, which applies to an even wider variety of cyclic (and possibly continuous variable) classical causal models, or to $p$-separation, which holds in all consistent finite-dimensional cyclic quantum causal models \cite{Ferradini2025classical, Ferradini2025quantum}.
Further, we have focused on unconditional affects relations. While we generalized the concepts and some of the results to conditional relations, we leave the full generalization for future work, as discussed in \cref{sec:conditional}.

Another important open problem relates to the classification of fine-tuning, as initiated in \cref{sec:affects-clus} through the concept of clustering. As highlighted there, clustering in different arguments presents us with observational and interventional types of fine-tuning, but there can be additional forms of fine-tuning relating to the fine-grainability of the causal model (in the sense introduced in \cite{VVR}), or to the lack of full support in the probability distribution (i.e.\ with some probabilistic events that never happen).
It remains to be explored whether there can exist further types of fine-tuning of a causal model, which are not explained by these types. 

Such classifications of fine-tuning have applications in developing robust algorithms for causal discovery in the presence of fine-tuning, as existing methods typically assume faithfulness.
This assumption poses inherent challenges for causal inference in the presence of fine-tuning \cite{Spirtes1993, Pearl2009}.
Causal discovery under relaxed faithfulness assumptions, such as adjacency faithfulness, has been explored in \cite{Zhang2008, Zhalama2017}.
Adjacency faithfulness requires that only adjacent nodes in a causal graph be correlated, allowing for fine-tuned situations where correlations are non-transitive.
We have seen a concrete example of such a scenario in \cref{ex:sorkin} and \cref{fig:fine-graining}, where fine-graining of $B$ into smaller nodes highlights the non-transitivity of correlations.
The concept of clustering introduced in this paper applies to forms of fine-tuning that do not satisfy adjacency faithfulness, such as the one-time pad causal model in \cref{eg: red3_1}, where $X$ and $Y$ as well as $Z$ and $Y$ are adjacent but uncorrelated.
Despite this, we have shown that causal inference is still possible in such fine-tuned models via interventions (cf.\ \cref{sec:affects-to-cause}).
This suggests that the techniques presented here may extend the applicability of causal discovery algorithms—both for classical and non-classical scenarios—to a broader class of fine-tuned or unfaithful models, which we leave for future work.

A related question is whether it is possible to detect fine-tuning through interventions when it is impossible to detect it through observations alone. Formally, this relates to a question raised in \cref{sec:d-separation} regarding equivalence of two definitions of fine-tuning: one which compares d-separation and conditional independence in the pre-intervention causal model alone and the other where all post-intervention models are also taken into account. Presently, it is unknown if there is a gap between these definitions.  

The affects framework significantly abstracts the usual causal modelling approaches, showing that many known results (such as Pearl's rules of do-calculus \cite{Pearl2009}) can be obtained from more minimal assumptions. 
However, proving the completeness of the causal inference results (cf.\ \cref{sec:affects-to-cause}) derived under these minimal assumptions remains an important open challenge.
A related open problem lies in understanding the interaction between the presence and absence of certain affects relations for a given causal model, especially to certify the completeness of a set of affects relations (i.e.\ a set of affects relations which does not imply the presence of further affects relations, e.g.\ via the transformation rules established in \cite{VVC}).
This is exacerbated further due to the wide variety of causal models, differing in causal structure and both domain (binary / higher cardinality / continuous) and distribution of their individual RVs.
It would be interesting to prove such completeness results, even under certain restrictions on the causal models (possibly coming from the form of causal mechanisms assumed for the underlying theory). For further discussions in this direction see Section 3.1 of~\cite{Master}.

\subsubsection*{Space-time structure}

Setting aside the manifold structure of space-time, there are interesting questions regarding the characterization of space-times using purely order-theoretic statements.
In particular, can we characterize the set of space-times which satisfy \textit{conicality}, both for pseudo-Riemannian manifolds and for discrete models of space-time?
Are there further, yet to be uncovered, properties of light cones that are reflected within the associated partial order, which exhibit useful correspondences with causal inference concepts? Conversely, can we formulate notions such as space-time distance and curvature in terms of graph-theoretic properties studied in causal models, such as d-separation? Progress in this direction could provide valuable tools exchange of techniques with programs that aim on the discretization of space-time, as is the case for causal set quantum gravity \cite{PhysRevLett.59.521, Surya2019} or causal dynamic triangulations \cite{Loll2019} as at least in the limit, certain properties of light cones need be recovered by physical space-times.

\subsubsection*{Compatibility of causal models and space-time}

The possibility of operationally detectable causal loops in 1+1-Minkowski space-time without superluminal signalling was shown in \cite{VVC_Letter}. In the technical language of the framework, these are cyclic causal loops in a causal model whose presence can be certified through the resulting affects relations (and hence called affects causal loops or ACLs), which can be compatibly and non-degenerately embedded in a partially ordered set.
A crucial question in this regard has been whether such loops are possible in higher-dimensional Minkowski space-time.

The formal concept of conicality introduced here distinguishes between Minkowski space-time with $d=1$ and $d>1$ spatial dimensions, as the former does not satisfy it (and can embed such ACLs) while the latter does (and no such embeddings are known). Moreover, it is only possible to compatibly and non-degenerately embed such ACLs in a poset when the causal model is fine-tuned \cite{VVC}.
The link between compatible embeddings in conical space-times and those of faithful (i.e., not fine-tuned) causal models found in \cref{sec:compat-indec} lends support to the conjecture that compatible and non-degenerate embeddings of ACLs in conical (and hence higher-dimensional Minkowski, \cref{thm:conical}) space-times are impossible.
In fact, the briefly discussed notion of a conical embedding (cf.\ \cref{def:conical-embedding}) allows to extend this connection to many embeddings into non-conical space-times.
This may allow to extend the scope of this conjecture further:
For instance in 1+1-dim.\ Minkowski space-time, most embeddings are in fact conical, and slight perturbations of non-conical embeddings are sufficient to restore their conicality.
This indicates a notion of fine-tuning of the space-time embedding, which may be necessary for any compatible embeddings of ACLs. 
This notion is analogous to fine-tuning in operational causal models, where a non-signalling relation can turn into signalling through slight perturbations of the causal mechanisms.

In this regard, it is worth noting that the structure of ACLs can be rather complicated, even more so when higher-order affects relations are involved.
Already for the 0$^\text{th}$-order case, several distinct classes have been identified in \cite{VVC}.
Our results show that when studying compatible embeddings in conical space-times, for affects relations irreducible in the first and third arguments, higher-order affects relations can be replaced equivalently with 0$^\text{th}$-order ones.
This provides a useful simplification for studying compatible embeddings of general ACLs involving higher-order affects relations, in conical space-times.
Building on the present work, we aim to prove the aforementioned conjecture about such ACLs in a follow-up work \cite{Paper13}. Although this is not the case for non-conical space-times \cite{VVC_Letter}, the conjecture being true would show that in conical space-times -- and more generally, for conical embeddings -- the fundamental principle of no superluminal signalling is sufficient for ruling out all (operationally detectable) causal loops. 

\subsubsection*{Applications to cryptography and quantum tasks in space-time}

The concept of clustering captures signalling between sets of nodes without signalling between individual nodes. This is a property that is necessary for the security of cryptographic protocols such as the one-time pad and secret sharing schemes \cite{PhysRevA.59.1829, PhysRevA.61.042311, PhysRevA.59.162}, where information is distributed over multiple systems and cannot be recovered from a subset thereof. This is also similar to the desired properties of quantum error correcting codes \cite{Gottesmanphd, Cleve1999} that are intimately linked to quantum information processing tasks in space-time such as summoning \cite{Kent_2012_summon, Kent_2012_tasks, Hayden2016}, where achieving the task efficiently can require quantum information to be cleverly distributed over multiple space-time locations.
Within the language of tensor networks, some examples of this have previously been studied in \cite{Cotler2019}.
Developing a causal modelling approach to such protocols, and investigating the applications of these concepts there, hence presents an intriguing inter-disciplinary avenue for future research.

In this regard, it is important to note that the affects framework and our work focus on signalling between classical variables obtained by interacting with (e.g., by measuring) non-classical systems, and it makes no assumptions about the non-classical causal mechanisms. This facilitates intrinsically device- and theory-independent statements. In quantum theory, we understand the causal mechanisms and can define signalling between quantum in/output systems of quantum channels \cite{Schumacher2004, Portmann_2017, arxiv.1906.10726, Ormrod2023} and study their compatibility with space-time \cite{VVR, Salzger2025}. Developing the language of conditional higher-order affects relations further for the case of quantum theory, can enable us to study more refined notions of signalling in quantum circuits, where conditioning on additional interventions and measurements outcomes may allow or forbid signalling between agents. Moreover, we focused on space-time embeddings where variables are well-localized at space-time points. More generally, systems may be embedded into space-time regions and we can have protocols where parties exchange quantum systems at superpositions of space-time locations \cite{Portmann_2017}, leading to situations where quantum information is distributed in space-time in a non-localized manner. The interplay of quantum information and relativistic notions of causality in such space-time based quantum protocols has been formalized in frameworks such as \cite{VVR}.
Investigating the aforementioned applications of our results on affects relations and causal inference for quantum protocols in space-time will therefore benefit from combining and linking techniques from the frameworks of \cite{VVC} and \cite{VVR}.

\section*{Acknowledgements}

We thank Roger Colbeck, Augustin Vanrietvelde, Alastair Abbott and the anonymous reviewers of TQC 2023 for their helpful comments. We also thank Stav Zalel, Federico Grasselli, Ettore Minguzzi and Andrés Agustí Casado for interesting exchanges and discussions on the topic.
MG acknowledges financial support by l’Agence Nationale de la Recherche (ANR), project ANR-22-CE47-0012.
VV acknowledges support from an ETH Postdoctoral Fellowship, the ETH Zurich Quantum Center, the Swiss National Science Foundation via project No.\ 200021\_188541, the QuantERA programme via project No.\ 20QT21\_187724 and the PEPR integrated project EPiQ ANR-22-PETQ-0007 as part of Plan France 2030.
For the purpose of open access, the authors have applied a CC-BY public copyright licence to any Author Accepted Manuscript (AAM) version arising from this submission.

\printbibliography

\newpage
\appendix

\section{Further details on the causal modelling framework}
\label{sec:d-separation}

In this section, we provide a formal definition for the technical concept of d-separation, which was described on an intuitive level in \cref{sec:review}. This concept was originally introduced in the classical causal modelling literature (e.g., \cite{Pearl2009}) and has been applied to define a general class of causal models applicable to situations with non-classical, cyclic and fine-tuned causal influences, in the affects framework of \cite{VVC}. We provide a more technical overview of these concepts here.

\begin{definition}[Blocked Paths]
    \label{def:blocked}
    Let $\mathcal{G}$ be a directed graph, where $X$ and $Y$ are distinct nodes and $Z$ is a set of nodes not containing $X$ and $Y$.
    An (undirected) path from $X$ to $Y$ is \emph{blocked} by $Z$ if the path contains $A, B$ such that either
    $A \dircause W \dircause B$,
    $A \diresuac W \dircause B$ with $W \in Z$, or
    $A \dircause V \diresuac B$ with neither $V$ nor any descendant of $V$ in $Z$.
\end{definition}

\begin{definition}[d-separation]
    \label{def:d-sep}
    Let $\mathcal{G}$ be a directed graph, with $X, Y, Z$ being disjoint subsets of nodes.
    $X$ and $Y$ are \emph{d-separated} by $Z$ in $\mathcal{G}$, denoted as $(X \perp^d Y \,|\, Z)_\mathcal{G}$, if every path from an element of $X$ to an element of $Y$ is blocked by $Z$.
    Otherwise, $X$ is \emph{d-connected} to $Y$ given $Z$. If obvious from context, the index $\mathcal{G}$ may be suppressed.
\end{definition}

Using d-separation, the following minimal definition of a causal model was proposed in \cite{VVC}.

\begin{definition}[Causal model]
	\label{def: causalmodel}
	A causal model over a set of observed random variables $\{X_1,...,X_n\}$ consists of a directed graph $\mathcal{G}$ over them (possibly additionally involving classical/quantum/GPT unobserved systems) and a joint distribution $P_{\cG}(X_1,...,X_n)$ that respects the d-separation property relative to $\cG$. 
	\end{definition}

\begin{definition}[d-separation property]
	\label{definition: compatdist}
	Let $\{X_1,...,X_n\}$ be a set of random variables denoting the observed nodes of a directed graph $\mathcal{G}$, and $P(X_1,...,X_n)$ be a joint probability distribution over them. Then $P$ is said to satisfy the \emph{d-separation property} with respect to $\mathcal{G}$ if for all disjoint subsets $X$, $Y$ and $Z$ of $\{X_1,...,X_n\}$,
	\begin{equation}
		X\perp^d Y|Z \quad\implies\quad X\indep Y|Z \quad \text{ i.e., $P(XY|Z)=P(X|Z)P(Y|Z)$.}
	\end{equation}
\end{definition}

We continue by stating formally how an intervention on a set of nodes $I$ influences the causal model, i.e.\ how the post-intervention causal model relates to the pre-intervention model, following the formulation of \cite{VVCJ}.
This procedure, as developed within \cite{VVC}, is independent on causal mechanisms or other information on the underlying theory, but is based directly on \cref{definition: compatdist} instead.

\begin{definition}[Post-intervention causal model]
\label{def:post_intervention}
Consider a causal model on a graph $\cG$ specified by the graph together with a distribution $P_{\cG}$  satisfying Definition~\ref{definition: compatdist}. A post-intervention causal model associated with interventions on a subset $X$ of the observed nodes of $\cG$ is specified by a graph $\cG_{\mathrm{do}(X)}$ obtained from $\cG$ by removing all incoming directed edges to the set $X$, together with a distribution  $P_{\cG_{\mathrm{do}(X)}}$ satisfying Definition~\ref{definition: compatdist} relative to $\cG_{\mathrm{do}(X)}$. 
Further, consider a (potentially empty) set of observed nodes $Y$ in $\cG$ and a disjoint set of nodes $x$ exogenous in $\cG$.
Then, we require
\begin{equation}
    P_{\cG_{\mathrm{do}(Y)}} (S|XY) =
    P_{\cG_{\mathrm{do}(XY)}} (S|XY)
\end{equation}
where $S$ is the set of the remaining observed nodes.
\end{definition}

For all the results in this paper we only use the above-mentioned general definition of a post-intervention causal model. However, for specific classical examples, we will apply the usual causal modelling approach of Pearl \cite{Pearl2009} to describe the post-intervention scenario (which adheres with the minimal requirements of the above definition). Explicitly, a classical causal model on a graph $\cG$ is typically specified by providing a probability distribution $P(X)$ for every parentless node $X$ together with a function $f_X:\mathrm{par}(X)\mapsto X$ from the set of all parents of $X$ to $X$. The post-intervention model associated with interventions on a set $S$ of nodes is then obtained by considering the post-intervention graph $\cG_{\mathrm{do}(S)}$ as defined before (where the incoming arrows to $S$ are removed), together with a causal model that is identical to the original model except that for every $X\in S$, we replace the dependence $X=f_X(\mathrm{par}(X))$ with $X=x$ (denoting that $X$ takes on a fixed value $x$).

Having defined post-interventional models, we now formalize what we mean by fine-tuning and faithfulness (the absence of fine-tuning). Faithfulness is related to the idea that conditional independences in the distribution imply a corresponding d-separation in the graph (intuitively, the probabilities are faithful to the graph structure). Usually this is defined only by considering the original pre-intervention model, but here we extend the definition by accounting for all post-intervention causal models.

\begin{definition}[Faithfulness and fine-tuning]
    \label{def:faithful}
    A causal model over a directed graph $\cG$ is said to be faithful if for every post-intervention causal structure $\cG_{\mathrm{do}(I)}$ obtained from $\cG$ by intervening of a set $I$ of observed nodes, and for any mutually disjoint subsets $X, Y, Z$ of the observed nodes (possible overlapping with $I$),
    \begin{equation}
        (X \upmodels Y|Z)_{\cG_{\mathrm{do}(I)}}\quad \implies\quad  (X \perp^d Y|Z)_{\cG_{\mathrm{do}(I)}}.
    \end{equation}
    Otherwise, the causal model is said to be fine-tuned.
\end{definition}

To the best of our knowledge, it is generally unknown whether the above definition is equivalent to its restriction to the case $I = \emptyset$ i.e., whether fine-tuning in a post-intervention model (for some choice $I$ of nodes being intervened on) always implies fine-tuning in the original (or pre-intervention) causal model.
While this direction is interesting to investigate further, our results here are independent of whether or not this equivalence holds.‌
One way to prove the equivalence (which we leave for future work) would be to show that 
in any general (d-separation) causal model which is faithful by the restricted definition, an intervention such that $(X \not\upmodels Y|Z)_{\cG}$ and $(X \upmodels Y|Z)_{\cG_{\mathrm{do}(I)}}$ always implies that $(X \perp^d Y|Z)_{\cG_{\mathrm{do}(I)}}$.\footnote{Note that for $(X\upmodels Y|Z)_{\cG}$, we have the d-separation condition $(X\perp^d Y|Z)_{\cG}$, which follows from the restricted definition of fine-tuning, noting that $\cG:=\cG_{\mathrm{do}(I)}$ for $I=\emptyset$. And $(X\perp^d Y|Z)_{\cG}$ implies $(X\perp^d Y|Z)_{\cG_{\mathrm{do}(I)}}$ for any choice of $I$ since removing arrows cannot add d-connections.}

We conclude by pointing out that while it does capture a variety of cyclic causal models, some functional equation models are known which do not satisfy d-separation \cite{Neal2000}.
To capture these scenarios, the generalized notion of $\sigma$-separation can be used \cite{arxiv.1710.08775}. However, $\sigma$-separation is not the most general property either, and there are causal models known to violate $\sigma$-separation as well. Another work involving one of us proposes another graph separation property called $p$-separation that is valid for all finite-dimensional, possibly cyclic quantum causal models \cite{Ferradini2025quantum} (and therefore for all finite and discrete variable classical causal models \cite{Ferradini2025classical}). The extension of our results to these alternative graph separation properties is an interesting avenue for future work.

\section{Properties of partially ordered sets}
\label{sec:poset-props}
In this section, we will provide a formal introduction of posets.
We will study properties posets may show as well as their relation to the properties of relevant physical examples for $\TT$.

\begin{definition}[Partial Order]
    A \emph{(strict) partial order} is a binary relation $\prec$ on a set $T$ which satisfies
    \begin{itemize}
        \item Irreflexivity: $a \not\prec a$.
        \item Asymmetry: $a \prec b \implies b \not\prec a$.
        \item Transitivity: $a \prec b$ and $b \prec c \implies a \prec c$.
    \end{itemize}
    for all $a, b, c \in T$.
\end{definition}

\begin{definition}[Poset]
    A poset $\TT$ is given by a set $T$ endowed with a partial order: $\TT := (T, \prec)$.
\end{definition}

We start by introducing a notion of immediate neighbors:
\begin{definition}
	\label{def:cover}
	Let $\TT$ be a poset and $x, y \in \TT$. We say $y$ \emph{covers} $x$ if $x \prec y$ and there is no $z \in \TT$ such that $x \prec z \prec y$.
\end{definition}
In contrast to the remaining definitions in this section, this notion is only meaningful for discrete posets.
We can use this definition to give discrete posets a graphical representation in terms of Hasse diagrams \cite{Birkhoff1940}, as illustrated in \cref{fig:hasse}.

In the reminder of this section, we review the definitions of join and meet and how they yield the notion of an \textit{(order) lattice}.
We will then identify how these generic notions relate to physical space-time.

\begin{definition}[Join]
	\label{def:join}
	Let $\TT$ be a poset. For two elements $x, y \in \TT$, their \emph{join} $x \lor y$ is an element of $\TT$ such that:
	\begin{itemize}
		\item $x \preceq x \lor y$ and $y \preceq x \lor y$
		\item if $x \preceq a$ and $y \preceq a$, then $x \lor y \preceq a \ \forall a \in \TT$
	\end{itemize}
	or equivalently: $\forall a \in \TT, \ x \lor y \preceq a \iff x \preceq a \text{ and } y \preceq a$.
\end{definition}

\begin{definition}[Meet]
	Let $\TT$ be a poset. For two elements $x, y \in \TT$, their \emph{meet} $x \land y$ is an element of $\TT$ such that:
	\begin{itemize}
		\item $x \succeq x \land y$ and $y \succeq x \land y$
		\item if $x \succeq a$ and $y \succeq a$, then $x \land y \succeq a \ \forall a \in \TT$
	\end{itemize}
	or equivalently: $\forall a \in \TT, \ x \land y \succeq a \iff x \succeq a \text{ and } y \succeq a$.
\end{definition}

For each pair of elements in a poset, the existence of their join and meet is not guaranteed. However if they exist, they are respectively unique. Both fulfill the following properties, where $\odot$ stands either for $\land$ or $\lor$:
\begin{itemize}
	\item Associativity: $(a \odot b) \odot c = a \odot (b \odot c)$
	\item Commutativity: $a \odot b = b \odot a$
	\item Idempotency: $a \odot a = a$
\end{itemize}

Generalizing the concept of the join, we will define the notion of the minimal elements of a poset. Analogously, we could also define the maximal elements for the meet.
This, however, will not be required going forward, as we will focus on the causal future to model the availability of information.

\begin{definition}[Minimal Elements of a Poset]
	\label{def:min}
	Let $M \subseteq \TT$. Then
	$\min M := \{ x \in M \,|\, \not\exists \, y \in M : y \prec x \}$.
\end{definition}

\begin{lemma}
	\label{thm:join-min}
	Let $\TT$ be a poset.
	If and only if two elements $x, y \in \TT$ have a join $x \lor y$, then $\min \{ a \in \TT | x \preceq a \succeq y \} = \{ x \lor y \}$.
\end{lemma}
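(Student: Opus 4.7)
The plan is to prove the biconditional by establishing each direction separately. Throughout, I would let $U := \{a \in \TT : x \preceq a \text{ and } y \preceq a\}$ denote the set of common upper bounds of $x$ and $y$, so the claim becomes $x \lor y$ exists iff $\min U = \{x \lor y\}$.

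For the forward direction, I would assume $j := x \lor y$ exists and verify the two parts of the equality $\min U = \{j\}$. Membership $j \in U$ is immediate from the first clause of \cref{def:join}. To see $j$ is minimal in $U$, suppose for contradiction that some $b \in U$ satisfies $b \prec j$; the second clause of \cref{def:join} applied to $b$ (since $x \preceq b$ and $y \preceq b$) gives $j \preceq b$, which together with $b \prec j$ contradicts asymmetry. To see $j$ is the only element of $\min U$, let $b \in \min U$; then $b \in U$, so the join property gives $j \preceq b$, and if $b \neq j$ this upgrades to $j \prec b$ with $j \in U$, contradicting $b \in \min U$. Hence $\min U = \{j\}$.

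For the reverse direction, I would assume $\min U = \{j\}$ for a unique element $j$ and check that $j$ fulfills both defining properties of the join. Upper-boundedness ($x \preceq j$ and $y \preceq j$) follows directly from $j \in U$. The remaining task is the least-upper-bound property: for any $a \in U$, show $j \preceq a$. If $a = j$ this is trivial; otherwise I would descend from $a$ within $U$, at each step replacing the current element by a strictly smaller element of $U$ if one exists, and argue that the endpoint of this descent is a minimal element of $U$ and hence equals $j$, producing $j \preceq a$ by transitivity.

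The main obstacle lies in this last descent argument: in a general poset, a descending sequence in $U$ need not terminate, so minimality does not automatically promote to being a minimum. I would address this either by invoking the relevant finiteness/well-foundedness in the settings of interest to this paper (locally finite posets such as causets, and, more generally, cases where the set of common upper bounds admits a minimal element below every given upper bound), or by strengthening the argument using the uniqueness of the minimal element: any $a \in U$ incomparable to $j$ together with $j \in \min U$ would contradict the singleton assumption, because such an $a$ would itself have to be dominated by some minimal element distinct from $j$. Formalising this incomparability-to-comparability step is the place where the proof deserves the most care, and I would flag it explicitly rather than gloss over it, since the result is otherwise a routine translation between two standard characterisations of least upper bounds.
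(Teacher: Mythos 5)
Your forward direction is correct and complete, and it already goes beyond the paper: the paper's entire proof of this lemma is the single sentence ``This follows directly from the definition of the join.'' The substantive issue is the reverse direction, and the obstacle you flag there is not a matter of presentation but a genuine failure of the statement for general posets. Consider $\TT = \{x,y,j\}\cup\{a_n : n\in\mathbb{N}\}$ with $x\prec j$, $y\prec j$, $x\prec a_n$ and $y\prec a_n$ for all $n$, $a_{n+1}\prec a_n$ for all $n$, and $j$ incomparable to every $a_n$. The set $U$ of common upper bounds of $x$ and $y$ is $\{j\}\cup\{a_n\}_{n}$; each $a_n$ fails to be minimal because $a_{n+1}\prec a_n$ lies in $U$, so $\min U=\{j\}$, yet $j\not\preceq a_1$ and no join exists. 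This example also shows that your proposed rescue via uniqueness of the minimal element is circular: the step ``such an $a$ would itself have to be dominated by some minimal element distinct from $j$'' is precisely the well-foundedness of $U$ that is in question, and it fails here because the chain $a_1\succ a_2\succ\cdots$ never reaches a minimal element.

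So the honest conclusion is the one you tentatively reach: only the implication from existence of the join to $\min U=\{x\lor y\}$ holds unconditionally, and the converse needs an additional hypothesis (local finiteness as for causets, or more generally well-foundedness of the set of common upper bounds), or else must be read as presupposing the join's existence, in which case it carries no content. Since the lemma is never invoked elsewhere in the paper, the gap is harmless to the main results, but your instinct to flag the descent argument rather than gloss over it is exactly right, and the fix cannot come from a cleverer use of the singleton assumption alone.
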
%
\begin{proof}
This follows directly from the definition of the join.
\end{proof}
\begin{definition}[Semilattice]
	\label{def:semilattice}
	A poset $\TT$ is a \emph{join-semilattice} if $\forall x, y \in \TT$ the join $x \lor y$ exists. Dually, $\TT$ is a \emph{meet-semilattice} if $\forall x, y \in \TT$ the meet $x \land y$ exists.
\end{definition}

Intuitively, it becomes apparent that precisely for a join-semilattice, we can associate \textit{sets} of points in $\TT$, or equivalently sets of ORVs, with a single location in space-time to encode their joint future.

\begin{definition}[Join- and Meet-free Poset]
	\label{def:join-free}
	Let $\TT$ be a poset.
	We call $\TT$ a \emph{join-free poset} if for all $x, y \in \TT$, the existence of a join $x \lor y$ implies $x \lor y = x$ or $x \lor y = y$.
	Dually, $\TT$ is a \emph{meet-free poset} if no meet $x \land y$ exists for analogous choices of $x$ and $y$.
\end{definition}

Hence, a join-free poset can be considered the \enquote{opposite} of a join-semilattice, since in the former, for a pair of points $x, y$, joins exist only for the trivial case of $x \prec y$ or $x \succ y$.
All other posets either have no non-trivial joins at all -- totally ordered sets in particular -- or are in-between these two extremes: Some non-trivial combinations of two points have a join there, while others do not.

\begin{figure}[t]
	\centering
	\begin{subfigure}[b]{0.35\textwidth}
		\centering
		\begin{tikzpicture}
			\begin{scope}[every node/.style={rectangle,thick,draw}]
				\node (A) at (2,1) {$A$};
				\node (B) at (0,2) {$B$};
				\node (C) at (4,2) {$C$};
				\node (D) at (0,4) {$D$};
				\node (E) at (4,4) {$E$};
				\node (F) at (2,5) {$F$};
			\end{scope}
			
			\begin{scope}[every edge/.style={draw=black,thick}]
				\path [-] (A) edge (B);
				\path [-] (B) edge (D);
				\path [-] (A) edge (C);
				\path [-] (C) edge (E);
				\path [-] (E) edge (F);
				
				\path [-] (A) edge (C);
				\path [-] (B) edge (E);
				\path [-] (C) edge (D);
				\path [-] (D) edge (F);
			\end{scope}
		\end{tikzpicture}
		\caption{Poset which is not a lattice}
		\label{fig:hasse-poset}
	\end{subfigure}%
    \hspace{0.05\textwidth}
	\begin{subfigure}[b]{0.5\textwidth}
		\centering
		\begin{tikzpicture}
			\begin{scope}[every node/.style={rectangle,thick,draw}]
				\node (A) at (2,1) {$A$};
				\node (B) at (0,2) {$B$};
				\node (C) at (4,2) {$C$};
				\node (D) at (0,4) {$D$};
				\node (E) at (4,4) {$E$};
				\node (F) at (2,5) {$F$};
				\node (G) at (-2,3) {$G$};
				\node (H) at (2,3) {$H$};
				\node (I) at (6,3) {$I$};
			\end{scope}
			
			\begin{scope}[every edge/.style={draw=black,thick}]
				\path [-] (A) edge (B);
				\path [-] (B) edge (G);
				\path [-] (G) edge (D);
				\path [-] (A) edge (C);
				\path [-] (C) edge (I);
				\path [-] (I) edge (E);
				\path [-] (H) edge (D);
				\path [-] (H) edge (E);
				\path [-] (E) edge (F);
				
				\path [-] (A) edge (C);
				\path [-] (B) edge (H);
				\path [-] (C) edge (H);
				\path [-] (G) edge (D);
				\path [-] (D) edge (F);
			\end{scope}
		\end{tikzpicture}
		\caption{Lattice}
		\label{fig:hasse-lattice}
	\end{subfigure}%
	\caption[Hasse diagrams, representing finite posets: A non-lattice and a lattice.]{
		Finite posets can be depicted using \textit{Hasse diagrams}. Here, two nodes $A$ and $B$ are connected and $B$ is shown above $A$ if $B$ covers $A$ (cf.\ \cref{def:cover}).
		In (a), we see a poset which is not a lattice, since $B \prec D \succ C$ and $B \prec E \succ C$. Therefore, since $D \unord E$, $B$ and $C$ have no least upper bound.
		In (b), we see a poset which forms a lattice. Here, $H = B \lor C$.
	}
	\label{fig:hasse}
\end{figure}
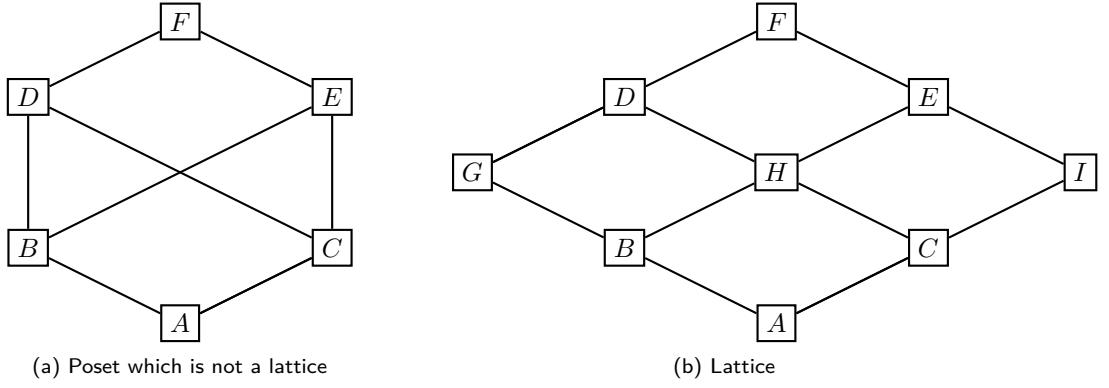

\begin{definition}[Lattice]
	\label{def:lattice}
	A \emph{lattice} is a poset $\TT$ which is both a join- and a meet-semilattice.
\end{definition}

An example of a lattice, in contrast to a poset which does not form a lattice, is depicted in \cref{fig:hasse}.
The theory of (order) lattices constitutes a well-studied subbranch of mathematics, which can be approached both from an order-theoretic (as done here) and from an algebraic perspective.
More precisely, instead of deriving join and meet from a partial order, one may axiomatically consider any two binary operations $\land$ and $\lor$ that are commutative and associative and satisfy the \emph{absorption law}
\begin{equation}
	a \lor (a \land b) = a \land (a \lor b) = a \quad \forall a,b \in \TT \, .
\end{equation}
to give rise to a lattice.
It can be shown that satisfying these algebraic properties guarantees that both join and meet define the same partial order.
Accordingly, a huge amount of further properties a lattice may fulfill are known \cite{Davey2002}. Most of these have not proven directly relevant to this work and will therefore not be reviewed here.

We continue by making a connection of these more generic poset properties with the notion of conicality introduced in \cref{sec:conicality}, and apply them to the particular case of Minkowski space-time.

\begin{lemma}
	\label{thm:join-free}
	Any conical poset (cf.\ \cref{def:conical}) is join-free.
\end{lemma}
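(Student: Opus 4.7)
The plan is to prove the contrapositive: if $\TT$ is not join-free, then $\TT$ violates conicality. Suppose there exist $x, y \in \TT$ such that $x \lor y$ exists with $x \lor y \notin \{x, y\}$. I would first observe that $x$ and $y$ must be incomparable: if (say) $x \preceq y$, then $y$ itself is an upper bound of $\{x, y\}$, forcing $x \lor y \preceq y$ and hence $x \lor y = y$, a contradiction.

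Next I would choose the two witness subsets $L_1 := \{x, y\}$ and $L_2 := \{x \lor y\}$ and compare their joint futures. By the very definition of the join (Definition~\ref{def:join}), an element $a \in \TT$ satisfies $a \succeq x$ and $a \succeq y$ if and only if $a \succeq x \lor y$. This says exactly
\begin{equation}
f(L_1) = J^+(x) \cap J^+(y) = J^+(x \lor y) = f(L_2).
\end{equation}

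The core step is to compute $\spann(L_1)$ and $\spann(L_2)$ and show they differ. For the singleton $L_2$, the only candidate proper subset is $\emptyset$, and $f(\emptyset) = \TT \neq f(L_2)$ (since $x \lor y$ itself is not strictly above $x \lor y$, so the future is a proper subset of $\TT$), giving $\spann(L_2) = \{x \lor y\}$. For $L_1$, using incomparability, $x \in J^+(x)$ but $x \notin f(L_1)$ (as $x \not\succeq y$), so $f(\{x\}) \supsetneq f(L_1)$; symmetrically $f(\{y\}) \supsetneq f(L_1)$, and $f(\emptyset) = \TT \supsetneq f(L_1)$. Therefore $L_1$ itself is the unique minimal subset of $L_1$ reproducing $f(L_1)$, so $\spann(L_1) = \{x, y\}$. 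Since $x \lor y \notin \{x, y\}$, we conclude $\spann(L_1) \neq \spann(L_2)$.

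Putting the pieces together, I have exhibited finite subsets $L_1, L_2 \subset \TT$ with $f(L_1) = f(L_2)$ but distinct spanning sets, contradicting conicality (which demands that the joint future uniquely determines the spanning elements, per \cref{def:conical} and the reading used in \cref{def:conical-orv}). The argument is essentially immediate once the correct witnesses are identified; there is no real obstacle beyond verifying incomparability and being careful about edge cases of the $\spann$ definition on singletons and incomparable pairs.
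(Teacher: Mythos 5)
Your proof is correct and follows essentially the same route as the paper's: both exhibit the witnesses $\{x,y\}$ (incomparable) and $\{x\lor y\}$, use the defining property of the join to get $J^+(x)\cap J^+(y)=J^+(x\lor y)$, and conclude that two distinct spanning sets share the same joint future, contradicting conicality. You are merely more explicit about why $x$ and $y$ must be incomparable and about the edge cases in computing $\spann$ on singletons, and you rely on the same converse reading of \cref{def:conical} (the joint future determines the spanning elements) that the paper's own proof uses.
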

\begin{proof}
	We prove this via contraposition.
	Let $\TT$ be a conical poset with $a, b \in \TT$. Assume that $a \unord b$ and that there exists a join $a \lor b \in \TT$, which by definition is unique.
	Then, $f(\{ a \lor b \}) = \bar{J}^+ (a \lor b) = \bar{J}^+ (a) \cap \bar{J}^+ (b) = f (\{a, b \})$.
	As $\spann (\{a, b\}) = \{a, b \}$ due to $a \unord b$, and $\spann(\{a\lor b\})=a\lor b$, we have two distinct spanning sets which admit the same joint future, posing a contradiction to conicality.
\end{proof}

\begin{proposition}[1+1-Minkowski space-time]
	\label{thm:1-plus-1}
	The \emph{light cone structure of 1+1-Minkowski space-time}, having one spatial and one temporal dimension, forms a lattice of its points. The meet of two points is given by the latest point where their past light cones intersect, while their join is given by the earliest point where their future light cones intersect \cite{Mattern2002}.
\end{proposition}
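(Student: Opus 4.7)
The plan is to exploit the well-known null (light-cone) coordinate system on 1+1-Minkowski space-time, which turns the causal partial order into componentwise order on $\mathbb{R}^2$, where the lattice structure is manifest.

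First, I would introduce coordinates $u := t - x$ and $v := t + x$ on 1+1-Minkowski space-time. A direct computation shows that for two points $p,q$ with null coordinates $(u_p,v_p)$ and $(u_q,v_q)$, one has $p \preceq q$ (i.e., $q$ is in the causal future of $p$) if and only if $u_p \leq u_q$ \emph{and} $v_p \leq v_q$. Indeed, $q - p$ is future-directed causal iff $t_q - t_p \geq |x_q - x_p|$ and $t_q \geq t_p$, which is equivalent to both $(t_q-t_p) - (x_q - x_p) \geq 0$ and $(t_q-t_p)+(x_q - x_p) \geq 0$, i.e., to $u_q \geq u_p$ and $v_q \geq v_p$. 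Hence the map $(t,x) \mapsto (u,v)$ is an order isomorphism between 1+1-Minkowski space-time and $\mathbb{R}^2$ equipped with componentwise order.

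Next, I would observe that $\mathbb{R}^2$ with componentwise order is a lattice: for any two points $(u_1,v_1), (u_2,v_2)$ their join is $(\max(u_1,u_2),\max(v_1,v_2))$ and their meet is $(\min(u_1,u_2),\min(v_1,v_2))$, both of which trivially satisfy \cref{def:join} and its dual. Transporting this back via the coordinate isomorphism establishes that 1+1-Minkowski space-time itself is a lattice.

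Finally, I would translate the algebraic join and meet into the geometric description of the statement. In null coordinates, the future light cone of a point $p = (u_p,v_p)$ is the upper-right quadrant $\{(u,v) : u \geq u_p,\ v \geq v_p\}$, and the intersection $J^+(a) \cap J^+(b)$ is the upper-right quadrant with corner $(\max(u_a,u_b),\max(v_a,v_b))$. This corner is precisely the join $a \lor b$ and, back in $(t,x)$ coordinates, it is the unique earliest point in $J^+(a) \cap J^+(b)$, i.e., the point where the two future light cones first intersect; dually for the meet. The main (and essentially only) obstacle is making the translation between the algebraic description in null coordinates and the geometric description in the original $(t,x)$ picture explicit and unambiguous, but this reduces to a short case distinction on whether $x_a < x_b$ or $x_a > x_b$ to identify which bounding null rays of the two light cones meet.
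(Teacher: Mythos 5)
Your proof is correct. Note, however, that the paper itself does not actually prove this proposition: it is asserted with a citation to \cite{Mattern2002} and no proof environment follows it, so there is no in-paper argument to compare against. Your null-coordinate argument is the standard self-contained way to establish the claim, and it is sound: the change of variables $u=t-x$, $v=t+x$ is an order isomorphism onto $\mathbb{R}^2$ with the componentwise order (your verification that $p\preceq q$ iff $u_p\leq u_q$ and $v_p\leq v_q$ is exactly right), the product order on $\mathbb{R}^2$ is a lattice with componentwise $\max$/$\min$ as join/meet, and the identification of the join with the unique earliest point of $J^+(a)\cap J^+(b)$ falls out immediately because that intersection is itself an upper-right quadrant whose corner is the componentwise maximum. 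The meet statement follows by time reversal, as you say. The only cosmetic remark is that your final ``case distinction on $x_a<x_b$ versus $x_a>x_b$'' is unnecessary: once you are in null coordinates the description of $J^+(a)\cap J^+(b)$ as a quadrant with a unique minimal element handles all configurations (including $a\preceq b$, where the join is simply the later point) uniformly, so the translation back to $(t,x)$ requires no further case analysis. Your argument in fact supplies a proof that the paper omits, which is a strength rather than a deviation.
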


\begin{corollary}[Higher-dim.\ Minkowski space-time]
	\label{thm:no-jsl}
	In the case of $d$ spatial dimensions, we speak of $d$+1-Minkowski space-time.
	For $d \geq 2$, Minkowski space-time (as depicted in \cref{fig:minkowski}) is a join- and meet-free poset.
\end{corollary}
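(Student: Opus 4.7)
The plan is to reduce both claims to the conicality of higher-dimensional Minkowski space-time (\cref{thm:conical}), invoking the time-reversal symmetry of Minkowski space-time to handle the meet-free part.

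For the \emph{join-free} part, I would simply chain two earlier results: by \cref{thm:conical}, $d{+}1$-Minkowski space-time with $d \geq 2$ is conical, and by \cref{thm:join-free}, every conical poset is join-free. Unwinding: if two space-like separated points $a \unord b$ admitted a join $a \lor b$, then the singleton $\{a \lor b\}$ would share the joint future $J^+(a) \cap J^+(b)$ with the pair $\{a,b\}$, yet $\spann(\{a,b\}) = \{a,b\}$ while $\spann(\{a \lor b\}) = \{a \lor b\}$. The two distinct spanning sets with identical joint futures contradict conicality, so no non-trivial joins can exist; that is, joins occur only in the trivial cases $a \preceq b$ or $b \preceq a$, as required by \cref{def:join-free}.

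For the \emph{meet-free} part, I would exploit the time-reversal isometry $\phi\colon (t,\vec{x})\mapsto (-t,\vec{x})$ of Minkowski space-time. This map reverses the causal order (so $a \prec b$ iff $\phi(b) \prec \phi(a)$) and exchanges $J^+$ with $J^-$, thereby turning meets into joins of the time-reversed poset. Since $\phi$ is a bijective isometry, the time-reversed poset is isomorphic to the original, hence still conical and thus join-free; transporting this back yields meet-freeness. Alternatively, and perhaps more transparently, one can observe that the proofs of \cref{thm:conical} and \cref{thm:join-free} go through \emph{verbatim} with $J^+$ replaced by $J^-$ and the partial order reversed, since those arguments use only the qualitative light-cone geometry that is symmetric under $t \mapsto -t$.

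The main obstacle is essentially nil: this is a direct corollary of machinery already established. The only mild care needed is in the meet case, where one must either invoke the time-reversal symmetry explicitly or state and justify a dual conicality property for joint pasts before rerunning the short argument of \cref{thm:join-free}. Once this is acknowledged, both claims follow in a few lines.
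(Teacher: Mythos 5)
Your proposal matches the paper's proof exactly: the join-free part is obtained by combining \cref{thm:conical} with \cref{thm:join-free}, and the meet-free part follows from the invariance of Minkowski space-time under time inversion. The extra care you take in unwinding the conicality argument and justifying the time-reversal step is sound but not needed beyond what the paper states.
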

\begin{proof}
	That Minkowski space-time is join-free follows from \cref{thm:join-free}, as it is conical.
	As Minkowski space-time is also invariant under time inversion, it is also meet-free.		
\end{proof}

\medskip

This can also be seen from the fact that there are no unique minimal and maximal points, respectively. Specifically, the intersection of the future of two light cones is larger than any light cone that can be found within the intersection \cite{Mattern2002}.

Nonetheless, causally closed subsets of Minkowski space-time actually assemble a so-called \emph{complete orthocomplemented lattice} for any number of dimensions. The respective sets are sometimes referred to as causal diamonds. Interestingly, this mathematical structure is also exhibited by the set of projectors in Hilbert space, as used in quantum mechanics \cite{Casini2002}\cite[p.~109--115]{Minkowski2010}. Furthermore, a connection between causal order and algebraic quantum field theory has been explored in \cite[Chap.~III.4]{Haag1996}.
Recently, this connection has been studied further in \cite{Castrigiano2024, Kostecki2024}.

\section{A formal motivation for reducibility in the second argument}
\label{sec:red2}

In this section, we provide further details, proofs and examples on the definition of reducibility in the second argument, which in contrast to the first and third argument, is observational rather than interventional.
We begin by repeating the definition below for convenience. 

\redTwoDef*

\redTwo*
\begin{proof}
	Suppose $X \vDash Y | \mathrm{do}(Z)$ holds and is a Red$_2$ affects relation. Then writing out this affects relation along with the first condition implied by the reducibility (while recalling that $s_Y\cup\tilde{s}_Y=Y$), we have 
	\begin{align}
        \label{eq: red2-1}
		P_{\cG_{\mathrm{do}(XZ)}}(Y|XZ) &\neq P_{\cG_{\mathrm{do}(Z)}}(Y|Z) \quad \text{and}
        \\
        \label{eq: red2-2}
		P_{\cG_{\mathrm{do}(XZ)}}(s_Y|\tilde{s}_YXZ) &= P_{\cG_{\mathrm{do}(Z)}}(s_Y|\tilde{s}_Y Z)
    \end{align}
    as well as 
    \begin{equation}
        \label{eq:red2-indep}
        \begin{split}
            P_{\cG_{\mathrm{do}(XZ)}}(Y|XZ)
            &= P_{\cG_{\mathrm{do}(XZ)}}(s_Y|XZ) P_{\cG_{\mathrm{do}(XZ)}}(\tilde{s}_Y|XZ) \quad\ \text{or} \quad\\
		      P_{\cG_{\mathrm{do}(Z)}}(Y|Z)
            &= P_{\cG_{\mathrm{do}(Z)}}(s_Y|Z) P_{\cG_{\mathrm{do}(Z)}}(\tilde{s}_Y|Z)\,
        \end{split}
    \end{equation}
    from the second condition of Red$_2$.
    The latter condition in particular is equivalent to
    \begin{equation}
        \label{eq:red2-indep-rewritten}
        P_{\cG_{\mathrm{do}(XZ)}}(s_Y|XZ)
        = P_{\cG_{\mathrm{do}(XZ)}}(s_Y|\tilde{s}_Y XZ) \qquad\ \text{or} \qquad
        P_{\cG_{\mathrm{do}(Z)}}(s_Y|Z)
        = P_{\cG_{\mathrm{do}(Z)}}(s_Y|\tilde{s}_Y Z)\,.
    \end{equation}
    If the first of these alternative equalities holds, we can rewrite the left-hand side of \cref{eq: red2-2} to obtain
    \begin{equation}
        \label{eq:combined1}
        P_{\cG_{\mathrm{do}(XZ)}}(s_Y|XZ) = P_{\cG_{\mathrm{do}(Z)}}(s_Y|\tilde{s}_Y Z) \,.
    \end{equation}
    Fombining \cref{eq: red2-1} and the first condition of \cref{eq:red2-indep} (equivalent to the first condition of \cref{eq:red2-indep-rewritten}), we additionally obtain
    $P_{\cG_{\mathrm{do}(XZ)}}(s_Y|XZ) P_{\cG_{\mathrm{do}(XZ)}}(\tilde{s}_Y|XZ) \neq P_{\cG_{\mathrm{do}(Z)}}(Y|Z)$.
    Using that generally, $P_{\cG_{\mathrm{do}(Z)}}(Y|Z) = P_{\cG_{\mathrm{do}(Z)}}(s_Y|\tilde{s}_Y Z) P_{\cG_{\mathrm{do}(Z)}}(\tilde{s}_Y | Z)$, we have
    \begin{equation}
        P_{\cG_{\mathrm{do}(XZ)}}(s_Y|XZ) P_{\cG_{\mathrm{do}(XZ)}}(\tilde{s}_Y|XZ) \neq
        P_{\cG_{\mathrm{do}(Z)}}(s_Y|\tilde{s}_Y Z) P_{\cG_{\mathrm{do}(Z)}}(\tilde{s}_Y | Z) \,.
    \end{equation}
    Using \cref{eq:combined1}, we can cancel out a term in this product. This yields $P_{\cG_{\mathrm{do}(XZ)}}(\tilde{s}_Y | XZ) \neq P_{\cG_{\mathrm{do}(Z)}}(\tilde{s}_Y|Z)$, which is equivalent to $X \affects \tilde{s}_Y \given \doo(Z)$, as desired.

    \medskip
    If the second of the alternative conditions in \cref{eq:red2-indep-rewritten} holds, we can rewrite the right-hand side of \cref{eq: red2-2} to  obtain
    \begin{equation}
        \label{eq:combined2}
        P_{\cG_{\mathrm{do}(XZ)}}(s_Y|\tilde{s}_YXZ) = P_{\cG_{\mathrm{do}(Z)}}(s_Y|Z) \,.
    \end{equation}
    Combining \cref{eq: red2-1} and the second condition of \cref{eq:red2-indep} (equivalent to the second condition of \cref{eq:red2-indep-rewritten}), we additionally obtain
    $P_{\cG_{\mathrm{do}(XZ)}}(Y|XZ) \neq P_{\cG_{\mathrm{do}(XZ)}}(s_Y|Z) P_{\cG_{\mathrm{do}(Z)}}(\tilde{s}_Y|Z)$.
    Using that generally, $P_{\cG_{\mathrm{do}(XZ)}}(Y|XZ) = P_{\cG_{\mathrm{do}(XZ)}}(s_Y|\tilde{s}_Y XZ) P_{\cG_{\mathrm{do}(XZ)}}(\tilde{s}_Y | XZ)$, we have
    \begin{equation}
        P_{\cG_{\mathrm{do}(XZ)}}(s_Y|\tilde{s}_Y XZ) P_{\cG_{\mathrm{do}(XZ)}}(\tilde{s}_Y | XZ) \neq
        P_{\cG_{\mathrm{do}(XZ)}}(s_Y|Z) P_{\cG_{\mathrm{do}(Z)}}(\tilde{s}_Y|Z) \,.
    \end{equation}
    Using \cref{eq:combined2}, we can cancel out a term in this product, again yielding the desired conclusion, $X \affects \tilde{s}_Y \given \doo(Z)$.
\end{proof}

\medskip

{\bf Operational motivation for the definition} Consider again the example mentioned in the main text (\cref{sec:affects-red}), in relation to Red$_2$, where we have a causal structure with $X\longrsquigarrow Y_1$ while $Y_2$ is a node with no in or outgoing arrows. Clearly $X \affects Y_1 Y_2$ must be reducible to $X \affects Y_1$ here, according to any reasonable notion of Red$_2$, as $Y_2$ is entirely redundant in this scenario. Operationally, 
we want the definition to capture that the original affects relation $X \affects Y_1 Y_2$ and the reduced relation $X \affects Y_1$ carry the same information.
As mentioned in the main text, in this case, this operational intuition cannot be captured mathematically by simply equating the left and right hand sides of the expressions for the two affects relations.
Instead, the redundancy of $Y_2$ in $X \affects Y_1 Y_2$ is captured through conditional independences.
\begin{itemize}
    \item $(Y_1\upmodels Y_2)_{\cG}$, i.e.\ $P_{\cG}(Y_1Y_2) = P_{\cG}(Y_1) P_{\cG}(Y_2)$
    \item $(Y_1\upmodels Y_2|X)_{\cG_{\mathrm{do}(X)}}$, i.e.\ $P_{\cG_{\mathrm{do}(X)}}(Y_1Y_2|X) = P_{\cG_{\mathrm{do}(X)}}(Y_1|X) P_{\cG_{\mathrm{do}(X)}}(Y_2|X)$
\end{itemize}

While both the above independences are satisfied in our simple example, we show in the proof of \cref{lemma: red2} that even if one of these independences is satisfied, along with $X\not\vDash Y_2 \given Y_1$ (this is the first condition of \cref{def: red2} applied to this example, which is also satisfied here), then the two affects relations $X \affects Y_1 Y_2$ and $X \affects Y_1$ are equivalent, in the sense that the expressions corresponding to those two affects relations are the same and thus capture the same information. Since satisfying one of these conditional independences is the second condition of \cref{def: red2}, for the case of this example, this establishes that under the defining conditions of Red$_2$, the original and reduced affects relations are indeed equivalent as we require. More generally, this holds for any general higher-order affects relation $X\affects Y\given \doo(Z)$, if the conditions of \cref{def: red2} are satisfied for some $s_Y\subsetneq Y$, then through similar arguments as we present below, it can be shown that the original affects relation is equivalent to the reduced relation $X\affects \tilde{s}_Y\given \doo(Z)$.

Some examples are as follows, these examples also illustrate that all of the conditions of \cref{def: red2} can be violated independently, resulting in irreducible affects relations.
\begin{example}{(Violating the first Red$_2$ condition)}
    \label{eg: red_2_1}
To illustrate that the first condition of \cref{def: red2} can be violated independently, consider the causal structure of \cref{fig: eg_red2_1} with the causal model:
$E_1$ is non-uniform, $E_2$ is uniform, $X=E_1\oplus E_2$, $Y_2=X\oplus E_2$ and $Y_1=X$. We notice that $P_{\cG}(Y_1)$ is uniform (since $Y_1=X$ but $X=E_1\oplus E_2$ and $E_2$ is uniform) and $P_{\cG}(Y_2)=P_{\cG}(E_1)$ is non-uniform (since $Y_2=X\oplus E_2$ and $X=E_1\oplus E_2$ gives $Y_2=E_1$).
In $\cG_{\mathrm{do}(X)}$, we only have the dependences: $Y_2=X\oplus E_2$ and $Y_1=X$ and we can see that $P_{\cG_{\mathrm{do}(X)}}(Y_1Y_2|X)$ will be different from $P_{\cG}(Y_1Y_2)$ (in particular since $Y_1$ is fully determined by $X$) hence $X\vDash Y_1Y_2$ holds.
Moreover, $(Y_1\upmodels Y_2)_{\cG}$. We also see that $(Y_1\upmodels Y_2|X)_{\cG_{\mathrm{do}(X)}}$ holds, as this follows from the d-separation $(Y_1\perp^d Y_2|X)_{\cG_{\mathrm{do}(X)}}$ in the causal structure.
Noting that the only non-empty strict subsets $s_Y$ in this case are $\{Y_1\}$ and $\{Y_2\}$, the second condition is satisfied.
Finally, notice that $P_{\cG_{\mathrm{do}(X)}}(Y_1|X)$ is deterministic and $P_{\cG_{\mathrm{do}(X)}}(Y_2|X)$ is uniform (since $Y_2=X\oplus E_2$ with $X$ independent of $E_2$ and $E_2$ uniform in $\cG_{\mathrm{do}(X)}$). Therefore we have $X\vDash Y_1$ and $X\vDash Y_2$ which in particular imply that $X \vDash Y_1 | Y_2$ and $X \vDash Y_2 | Y_1$, as additional conditioning preserves an affects relation if $Y_1$ and $Y_2$ are conditionally independent both in the pre- and post-intervention causal structure.
Therefore, the first condition is violated. In summary, $X\vDash Y_1Y_2$ is Irred$_2$ in this example, it violates the first and satisfies the second two conditions of \cref{def: red2}.
\end{example}

\begin{example}{(Violating the second Red$_2$ condition)}
    \label{eg: red_2_3}
To illustrate that the second condition of \cref{def: red2} can be violated independently, consider the causal structure $Y_1\longlsquigarrow X \longlsquigarrow E\longrsquigarrow Y_2$ and the causal model where $Y_1=X$, $X=E$, $Y_2=E$ and any distribution over $E$. Clearly $X\vDash Y_1$, $X\vDash Y_1Y_2$. We also have $X\not\vDash Y_2$ and $X\not\vDash Y_2 \given Y_1$ due to d-separation and therefore the first condition of \cref{def: red2} is satisfied for the affects relation $X\vDash Y_1Y_2$. Moreover $(Y_1\upmodels Y_2|X)_{\cG_{\mathrm{do}(X)}}$ holds due to the d-separation $(Y_1\perp^d Y_2|X)_{\cG_{\mathrm{do}(X)}}$ which means that the first alternative second condition would also be satisfied. However, the second alternative is violated as we clearly have $(Y_1\not\upmodels Y_2)_{\cG}$.
Therefore, $X \affects Y_1 Y_2$ is Red$_2$.
\end{example}
\begin{figure}
    \centering
    \begin{subfigure}[b]{0.4\textwidth}
        \centering
        \begin{tikzpicture}[scale=0.9]
            \node[shape=circle,draw=black] (X) at (0,0) {$X$};
            \node[shape=circle,draw=black] (Y1) at (-2,2) {$Y_1$};
            \node[shape=circle,draw=black] (Y2) at (2,2) {$Y_2$};
            \node[shape=circle,draw=black] (E2) at (2,-2) {$E_2$};
            \node[shape=circle,draw=black] (E1) at (-2,-2) {$E_1$};
            
            \draw[vvarrow, arrows={-Stealth}] (E1) -- (X);
            \draw[vvarrow, arrows={-Stealth}] (E2) -- (X);
            \draw[vvarrow, arrows={-Stealth}] (E2) -- (Y2);
            \draw[vvarrow, arrows={-Stealth}] (X) -- (Y1);
            \draw[vvarrow, arrows={-Stealth}] (X) -- (Y2);
        \end{tikzpicture}
        \caption{Causal structure for \cref{eg: red_2_1}.}
        \label{fig: eg_red2_1}
    \end{subfigure}%
    \hspace{0.1\textwidth}    
    \begin{subfigure}[b]{0.4\textwidth}
        \centering
        \begin{tikzpicture}[scale=0.9]
            \begin{scope}[every node/.style={circle,thick,draw,inner sep=0pt,minimum size=0.8cm}]
                \node (X) at (0,-2) {$X$};
                \node (Y) at (2,0) {$Y$};
                \node (Z) at (0,2) {$Z$};
            \end{scope}

            \begin{scope}[>={Stealth[black]},
                          every node/.style={fill=white,circle},
                          every edge/.style=vvarrow]
                \path [->] (X) edge (Y);
                \path [->] (X) edge (Z);
                \path [->] (Y) edge (Z);
            \end{scope}
        \end{tikzpicture}
        \caption{Causal structure for \cref{ex:clus3}.}
        \label{fig:clus3}
    \end{subfigure}%
    \caption{Causal structures for various examples.}
\end{figure}
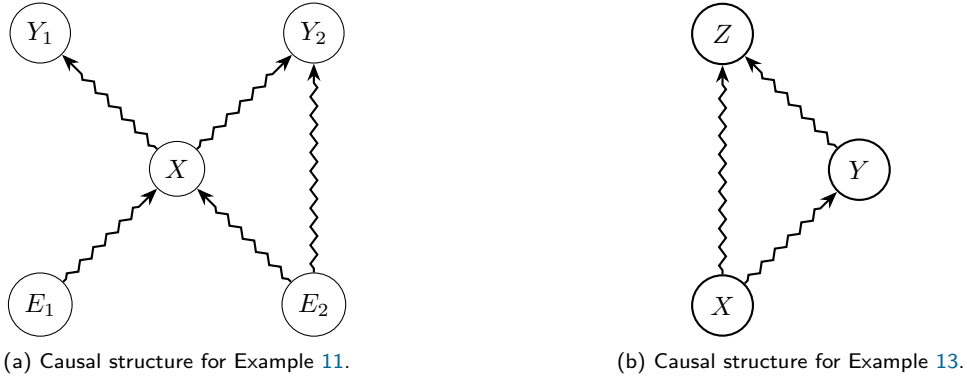

\section{Further properties of clustering}
\label{sec:clus-prop}

In this section we present interesting properties of clustering in the first three arguments of an affects relation, which may be of independent interest, but are not pivotal for our main results.
First, we show that the presence of an affects relation that is clustered in its first or third argument is incompatible with the presence of a wide variety of affects relations.

\begin{restatable}{lemma}{clusOne}
    \label{thm:clus1}
    For any affects relation $X\vDash Y|\mathrm{do}(Z)$ which is Clus$_1$,
    $\tilde{s}^1_X \not\vDash Y|\mathrm{do}(\tilde{s}^2_X Z)$ for all $\tilde{s}^1_X, \tilde{s}^2_X \subsetneq X$ disjoint with $\tilde{s}^1_X \tilde{s}^2_X \neq X$.
\end{restatable}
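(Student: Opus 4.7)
\textbf{The plan} is to argue by contraposition: I assume the conclusion fails, so that there exist disjoint $\tilde s^1_X, \tilde s^2_X \subsetneq X$ with $\tilde s^1_X \tilde s^2_X \neq X$ for which $\tilde s^1_X \vDash Y \given \mathrm{do}(\tilde s^2_X Z)$ holds, and I aim to derive from this an affects relation of the form $s_X \vDash Y \given \mathrm{do}(Z)$ with $s_X \subsetneq X$, directly contradicting Clus$_1$.

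First, I would unpack the assumed affects relation via \cref{def:affects}: there exist specific values $x^1, x^2, z$ of $\tilde s^1_X, \tilde s^2_X, Z$ respectively such that
\[
    P_{\cG_{\mathrm{do}(\tilde s^1_X \tilde s^2_X Z)}}(Y \mid x^1, x^2, z) \;\neq\; P_{\cG_{\mathrm{do}(\tilde s^2_X Z)}}(Y \mid x^2, z).
\]
Next, I would introduce the third quantity $P_{\cG_{\mathrm{do}(Z)}}(Y \mid z)$ and use the elementary logical fact that two unequal quantities cannot both equal a common third. This immediately implies that at least one of
\[
    P_{\cG_{\mathrm{do}(\tilde s^1_X \tilde s^2_X Z)}}(Y \mid x^1, x^2, z) \neq P_{\cG_{\mathrm{do}(Z)}}(Y \mid z) \quad\text{or}\quad P_{\cG_{\mathrm{do}(\tilde s^2_X Z)}}(Y \mid x^2, z) \neq P_{\cG_{\mathrm{do}(Z)}}(Y \mid z)
\]
must hold. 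Reading these back through \cref{def:affects}, the first is a witness for $\tilde s^1_X \tilde s^2_X \vDash Y \given \mathrm{do}(Z)$, while the second is a witness for $\tilde s^2_X \vDash Y \given \mathrm{do}(Z)$.

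Finally, I would invoke the two structural assumptions on the subsets: $\tilde s^2_X \subsetneq X$ by hypothesis, and $\tilde s^1_X \tilde s^2_X \neq X$ combined with $\tilde s^1_X \tilde s^2_X \subseteq X$ gives $\tilde s^1_X \tilde s^2_X \subsetneq X$. Thus in either of the two alternatives above, the first argument of the derived affects relation is a strict subset of $X$, while the second and third arguments are exactly $Y$ and $Z$ — precisely the kind of reduced relation that Clus$_1$ forbids, yielding the desired contradiction. I do not anticipate a substantive obstacle: the only delicate point is keeping track of why the condition $\tilde s^1_X \tilde s^2_X \neq X$ is indispensable, and it is precisely this condition that rules out the degenerate case where the alternative $\tilde s^1_X \tilde s^2_X \vDash Y \given \mathrm{do}(Z)$ would coincide with the original relation $X \vDash Y \given \mathrm{do}(Z)$ itself.
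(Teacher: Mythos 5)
Your proof is correct and is essentially the contrapositive of the paper's own argument: both hinge on using $P_{\cG_{\mathrm{do}(Z)}}(Y|Z)$ as a common pivot, since Clus$_1$ forces the post-intervention distributions for all strict subsets $\tilde{s}^2_X$ and $\tilde{s}^1_X\tilde{s}^2_X$ of $X$ to coincide with it (and hence with each other), which is exactly the negation of $\tilde{s}^1_X \vDash Y\given\mathrm{do}(\tilde{s}^2_X Z)$. The paper phrases this as a direct chain of equalities while you phrase it as a case split on which equality fails, but the underlying idea is identical.
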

\begin{proof}
    Due to $X\vDash Y|\mathrm{do}(Z)$ being Clus$_1$, $\tilde{s}_X \naffects Y|\mathrm{do}(Z)$ for all $\tilde{s}_X \subsetneq X$. Hence,
    \begin{align}
      P_{\cG_{\mathrm{do}(Z)}}(Y|Z)&=  P_{\cG_{\mathrm{do}(\tilde{s}_XZ)}}(Y|\tilde{s}_XZ), \quad \forall \tilde{s}_X\subsetneq X \\
      \implies
      P_{\cG_{\mathrm{do}(\tilde{s}^1_XZ)}}(Y|\tilde{s}^1_XZ) &= P_{\cG_{\mathrm{do}(\tilde{s}^2_XZ)}}(Y|\tilde{s}^2_XZ), \quad \forall \tilde{s}^1_X, \tilde{s}_X^2 \subsetneq X \\
      \implies
      P_{\cG_{\mathrm{do}(\tilde{s}_XZ)}}(Y|\tilde{s}^1_XZ) &= P_{\cG_{\mathrm{do}(\tilde{s}^1_X \tilde{s}^2_X Z)}}(Y|\tilde{s}^2_X \tilde{s}^1_X Z), \quad \forall \tilde{s}^2_X \subseteq \tilde{s}^1_X\tilde{s}^2_X \subsetneq X \,.
    \end{align}
    However, this line is equivalent to stating $\tilde{s}^1_X \not\vDash Y|\mathrm{do}(\tilde{s}^2_X Z)$ for all $\tilde{s}^1_X, \tilde{s}^2_X$ disjoint with $\tilde{s}^1_X \tilde{s}^2_X \neq X$.
\end{proof}

\begin{restatable}{lemma}{clusThree}
    For any affects relation $X\vDash Y|\mathrm{do}(Z)$ which is Clus$_3$,
    $\tilde{s}^1_Z \not\vDash Y|\mathrm{do}(\tilde{s}^2_Z)$ for all $\tilde{s}^1_Z, \tilde{s}^2_Z \subsetneq Z$ disjoint with $\tilde{s}^1_Z \tilde{s}^2_Z \neq Z$.
\end{restatable}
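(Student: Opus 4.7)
The plan is to mirror the preceding Clus$_1$ lemma's proof, adapting the chaining argument to the third argument. First I would unpack Clus$_3$: by definition, for every strict subset $s_Z \subsetneq Z$ (including $s_Z = \emptyset$), the hypothesis yields $X \naffects Y \given \doo(s_Z)$, which unfolds to
\begin{equation}
P_{\cG_{\doo(X s_Z)}}(Y \,|\, X s_Z) = P_{\cG_{\doo(s_Z)}}(Y \,|\, s_Z).
\end{equation}

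I would then instantiate this identity at the two specific choices $s_Z = \tilde{s}^2_Z$ and $s_Z = \tilde{s}^1_Z \tilde{s}^2_Z$. Both are strict subsets of $Z$ by the hypotheses on $\tilde{s}^1_Z, \tilde{s}^2_Z$ (disjointness together with $\tilde{s}^1_Z \tilde{s}^2_Z \neq Z$). This produces
\begin{align}
P_{\cG_{\doo(\tilde{s}^2_Z)}}(Y \,|\, \tilde{s}^2_Z) &= P_{\cG_{\doo(X \tilde{s}^2_Z)}}(Y \,|\, X \tilde{s}^2_Z), \\
P_{\cG_{\doo(\tilde{s}^1_Z \tilde{s}^2_Z)}}(Y \,|\, \tilde{s}^1_Z \tilde{s}^2_Z) &= P_{\cG_{\doo(X \tilde{s}^1_Z \tilde{s}^2_Z)}}(Y \,|\, X \tilde{s}^1_Z \tilde{s}^2_Z).
\end{align}
The desired conclusion $\tilde{s}^1_Z \naffects Y \given \doo(\tilde{s}^2_Z)$ is precisely the equality of the two left-hand sides, so chaining through the displayed identities reduces the task to showing that the two right-hand sides also coincide, equivalently $\tilde{s}^1_Z \naffects Y \given \doo(X \tilde{s}^2_Z)$.

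This final reduction is where I expect the main obstacle to lie. Unlike the Clus$_1$ argument, where a single common reference $P_{\cG_{\doo(Z)}}(Y|Z)$ is shared by all applications (since $Z$ is held fixed in the third argument), here the Clus$_3$ references $P_{\cG_{\doo(s_Z)}}(Y|s_Z)$ genuinely depend on $s_Z$, so no common anchor is available. To close the gap I would invoke \cref{lemma: aff_corr} on each Clus$_3$ instance to translate them into conditional independences in the corresponding post-intervention graphs, and then argue via d-separation that the graph surgery passing from $\cG_{\doo(X \tilde{s}^1_Z \tilde{s}^2_Z)}$ to $\cG_{\doo(X \tilde{s}^2_Z)}$ only restores incoming edges of $\tilde{s}^1_Z$, preserving the independence of $Y$ from the further intervention; combined with the independence of $X$ and $Y$ forced by Clus$_3$, this should block every path from $\tilde{s}^1_Z$ to $Y$ in $\cG_{\doo(X \tilde{s}^2_Z)}$. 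If that structural argument goes through in full generality, the proof closes; otherwise the statement admits the natural strengthening $\tilde{s}^1_Z \naffects Y \given \doo(X \tilde{s}^2_Z)$, which follows directly from the two displayed identities above without further work.
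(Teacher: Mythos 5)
Your diagnosis of the structure is sharper than the paper's own argument, but your proof does not close and cannot be closed. You are right that instantiating Clus$_3$ at $s_Z=\tilde{s}^2_Z$ and $s_Z=\tilde{s}^1_Z\tilde{s}^2_Z$ reduces the claim to equating the two $X$-intervened distributions, i.e.\ to $\tilde{s}^1_Z \naffects Y \given \doo(X\tilde{s}^2_Z)$, and that Clus$_3$ provides no common anchor playing the role of $P_{\cG_{\doo(Z)}}(Y|Z)$ in the Clus$_1$ proof. (The paper's own proof is essentially a transcription of the Clus$_1$ chaining that silently assumes such an anchor; the only thing Clus$_3$ genuinely yields here is the \emph{equivalence} $\tilde{s}^1_Z \naffects Y \given \doo(\tilde{s}^2_Z) \iff \tilde{s}^1_Z \naffects Y \given \doo(X\tilde{s}^2_Z)$.) But both of your exits fail. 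The fallback is circular: $\tilde{s}^1_Z \naffects Y \given \doo(X\tilde{s}^2_Z)$ asserts equality of the two right-hand sides of your displayed identities, which given those identities is exactly equivalent to equality of the two left-hand sides — it is the same open statement, not a freely obtained strengthening. The d-separation route is blocked in principle: \cref{lemma: aff_corr} turns non-affects into conditional independences, but to get back from independences to the absence of an affects relation you need d-separation (\cref{lemma: dsep_aff_cond}), hence faithfulness, and \cref{lemma: clus_finetune} shows Clus$_3$ forces fine-tuning, so you are working precisely where that inference is invalid.

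In fact the statement is false as written. Take binary variables with $E$ exogenous and uniform, $X$ exogenous and non-deterministic, $Z_1=X\oplus E$, $Z_2=E$, $Y=X\oplus Z_1\oplus Z_2$. Pre-intervention and under $\doo(X)$ one gets $Y=X\oplus(X\oplus E)\oplus E=0$ deterministically, so $X\naffects Y$; under each of $\doo(XZ_1)$, $\doo(Z_1)$, $\doo(XZ_2)$, $\doo(Z_2)$ the distribution of $Y$ is uniform, so $X\naffects Y\given\doo(Z_1)$ and $X\naffects Y\given\doo(Z_2)$; while $P_{\cG_{\doo(XZ_1Z_2)}}(Y|XZ_1Z_2)$ is deterministic and $P_{\cG_{\doo(Z_1Z_2)}}(Y|Z_1Z_2)$ is not, so $X\affects Y\given\doo(Z_1Z_2)$ holds and is Clus$_3$. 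Yet $P_{\cG_{\doo(Z_1)}}(Y|Z_1)$ is uniform whereas $P_{\cG}(Y)$ is deterministic, so $Z_1\affects Y$ (and, by the equivalence above, also $Z_1\affects Y\given\doo(X)$), contradicting the conclusion for $\tilde{s}^1_Z=\{Z_1\}$, $\tilde{s}^2_Z=\emptyset$. So the lemma requires either a weaker conclusion (the equivalence noted above is what actually survives) or stronger hypotheses; your instinct about where the argument breaks was correct, and the break is fatal rather than repairable.
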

\begin{proof}
    Due to $X\vDash Y|\mathrm{do}(Z)$ being Clus$_3$, $X \naffects Y|\mathrm{do}(Xs_Z)$ for all $\tilde{s}_Z \subsetneq Z$. Hence,
    \begin{align}
      P_{\cG_{\mathrm{do}(Z)}}(Y|XZ)&=  P_{\cG_{\mathrm{do}(\tilde{s}_ZX)}}(Y|X \tilde{s}_Z W), \quad \forall \tilde{s}_Z\subsetneq Z \\
      \implies
      P_{\cG_{\mathrm{do}(\tilde{s}^1_Z X)}}(Y|X\tilde{s}^1_Z) &= P_{\cG_{\mathrm{do}(\tilde{s}^2_Z X)}}(Y|X\tilde{s}^2_Z), \quad \forall \tilde{s}^2_z, \tilde{s}^1_Z \subsetneq Z \\
      \implies
      P_{\cG_{\mathrm{do}(\tilde{s}^1_ZX)}}(Y|X\tilde{s}^1_Z) &= P_{\cG_{\mathrm{do}(\tilde{s}^1_Z \tilde{s}^2_Z ZX)}}(Y|X \tilde{s}^2_Z \tilde{s}^1_Z X), \quad \forall \tilde{s}^2_X \subseteq \tilde{s}^1_Z\tilde{s}^2_Z \subsetneq Z\,.
    \end{align}
    However, this line is equivalent to stating $\tilde{s}^1_Z \not\vDash Y|\mathrm{do}(\tilde{s}^2_Z X)$ for all $\tilde{s}^1_Z, \tilde{s}^2_Z$ disjoint with $\tilde{s}^1_Z \tilde{s}^2_Z \neq Z$.
\end{proof}

\medskip
Second, we demonstrate that for a causal model to admit affects relations that are Clus$_1$, it is necessary but not sufficient for the model to also admit affects relations that are Clus$_3$.

\begin{restatable}{lemma}{clusThreeOne}
\label{thm:clusThreeOne}
    For any affects relation $X\vDash Y|\mathrm{do}(Z)$ which is Clus$_1$,
    $s_X\vDash Y|\mathrm{do}(\tilde{s}_X Z)$ is Clus$_3$ for each partition $s_X \tilde{s}_X = X$.
\end{restatable}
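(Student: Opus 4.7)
The plan is to prove the statement by contradiction, decomposing any hypothetical proper subset of the third argument into its parts in $\tilde s_X$ and in $Z$, and then invoking Clus$_1$ of the original relation (and its refinement in \cref{thm:clus1}) to rule out each case.

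First I would confirm that $s_X\vDash Y\given\doo(\tilde s_X Z)$ is indeed a well-defined affects relation. Since $\tilde s_X\subsetneq X$ is non-empty, Clus$_1$ immediately gives $P_{\cG_{\doo(\tilde s_X Z)}}(Y\given\tilde s_X Z)=P_{\cG_{\doo(Z)}}(Y\given Z)$, whereas the hypothesized $X\vDash Y\given\doo(Z)$ gives $P_{\cG_{\doo(XZ)}}(Y\given XZ)\neq P_{\cG_{\doo(Z)}}(Y\given Z)$; equating the shared right-hand sides yields the required inequality between $P_{\cG_{\doo(s_X\tilde s_X Z)}}(Y\given s_X\tilde s_X Z)$ and $P_{\cG_{\doo(\tilde s_X Z)}}(Y\given\tilde s_X Z)$. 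The cardinality condition $|\tilde s_X Z|\geq 1$ is automatic because $\tilde s_X$ is non-empty.

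For the Clus$_3$ claim proper, suppose toward a contradiction that some $s'\subsetneq\tilde s_X Z$ satisfies $s_X\vDash Y\given\doo(s')$. I would decompose $s'=a\cup b$ with $a:=s'\cap\tilde s_X\subseteq\tilde s_X$ and $b:=s'\cap Z\subseteq Z$; the hypothesis $s'\subsetneq\tilde s_X Z$ forces either $\tilde s_X\setminus a$ or $Z\setminus b$ (or both) to be non-empty. The clean case is when $\tilde s_X\setminus a$ is non-empty so $s_X\cup a\subsetneq X$; here I would invoke \cref{thm:clus1} with $\tilde s^1_X=s_X$ and $\tilde s^2_X=a$ (disjoint strict subsets of $X$ whose union is still strict), which forces $s_X\not\vDash Y\given\doo(aZ)$, and similarly Clus$_1$ applied to $a$ (if non-empty) and to $s_X\cup a$ gives both $P_{\cG_{\doo(s_X a Z)}}(Y\given s_X aZ)$ and $P_{\cG_{\doo(aZ)}}(Y\given aZ)$ equal to $P_{\cG_{\doo(Z)}}(Y\given Z)$; in particular when $b=Z$ this already contradicts $s_X\vDash Y\given\doo(ab)$.

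The hard part will be the residual case in which $a=\tilde s_X$ while $b\subsetneq Z$, so that the intervention set $ab=\tilde s_X b$ does not include the full $Z$. Clus$_1$ controls only the post-intervention distributions where the full $Z$ is intervened on, so the step from a partial-$Z$ intervention to a full-$Z$ intervention requires additional care. My plan is to show that a purported $s_X\vDash Y\given\doo(\tilde s_X b)$ can be lifted to an affects relation of the form $s\vDash Y\given\doo(Z)$ for some non-empty $s\subsetneq X$, directly contradicting Clus$_1$: intuitively, either enlarging the intervention from $b$ to $Z$ on both sides preserves the inequality (giving $s_X\vDash Y\given\doo(\tilde s_X Z)$ together with a Clus$_1$-induced equality forcing a strict subset of $X$ to signal to $Y$ given $\doo(Z)$), or the inequality is washed out and we can instead trace the signalling discrepancy back through \cref{lemma: aff_corr} to pinpoint a forbidden subset signalling under $\doo(Z)$. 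I expect this step to require iterating element-by-element over $Z\setminus b$, at each stage using Clus$_1$ or \cref{thm:clus1} to preserve or reroute the signalling chain; this bookkeeping, rather than any single conceptual leap, is the principal technical obstacle.
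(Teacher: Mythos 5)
Your opening two steps coincide with the paper's own proof. The existence of $s_X \vDash Y \given \doo(\tilde{s}_X Z)$ is obtained there via Clus$_1 \Rightarrow$ Irred$_1$ (\cref{theorem: clus_irred}), which is the same computation you perform directly on the post-intervention distributions; and the exclusion of third arguments of the form $\tilde{s}'_X Z$ with $\tilde{s}'_X \subsetneq \tilde{s}_X$ is obtained, exactly as you do, from \cref{thm:clus1} applied with $\tilde{s}^1_X = s_X$ and $\tilde{s}^2_X = \tilde{s}'_X$. The paper's proof stops at that point and declares the result ``equivalent to Clus$_3$''; so the portion of your argument that is actually complete reproduces the paper's argument in full.

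The residual case you isolate --- a strict subset $ab \subsetneq \tilde{s}_X Z$ with $b \subsetneq Z$ --- is where your proposal has a genuine gap: the final paragraph is a plan (``my plan is to show\dots'', ``I expect this step to require\dots'') rather than an argument, and no mechanism is exhibited for converting a purported $s_X \vDash Y \given \doo(\tilde{s}_X b)$ into a violation of Clus$_1$. Be aware that no such mechanism exists in general, since Clus$_1$ of $X \vDash Y \given \doo(Z)$ constrains only post-intervention distributions in which all of $Z$ has been intervened on. Concretely, take $X = \{X_1, X_2\}$ with $X_1, X_2$ uniform and $Y = X_1 \oplus X_2$, and let $Z = \{Z_1\}$ be a causally disconnected node: then $X_1 X_2 \vDash Y \given \doo(Z_1)$ is Clus$_1$, yet $X_1 \vDash Y \given \doo(X_2)$ holds, so $X_1 \vDash Y \given \doo(X_2 Z_1)$ fails \cref{def: clus3} read literally. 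The lemma is thus only provable under the reading in which Clus$_3$ of the derived relation quantifies over subsets of its third argument that retain all of $Z$ --- the reading the paper's proof tacitly adopts, and which your first two steps already establish. In short: what you prove matches the paper; what you flag as the hard case is not closable from Clus$_1$, and the paper sidesteps it only by not addressing it.
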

\begin{proof}
    By \cref{theorem: clus_irred}, $X\vDash Y|\mathrm{do}(Z)$ is Irred$_1$.
    Hence, for all $s_X \tilde{s}_X = X$, we have $s_X\vDash Y|\mathrm{do}(\tilde{s}_X Z)$. Due to \cref{thm:clus1}, we also have $s_X \not\vDash Y|\mathrm{do}(\tilde{s}'_X Z)$, where $\tilde{s}'_X \subsetneq \tilde{s}_X$.
    This is equivalent for Clus$_3$ for $s_X\vDash Y|\mathrm{do}(\tilde{s}_X Z)$.
\end{proof}

\medskip
By contrast, we can construct an example which has clustering only in the third argument. 
\begin{example}{(Clus$_3$ only)}
    \label{ex:clus3}
    Consider a causal model over $S = \{ X, Y, Z \}$, where $X$ is exogenous and non-uniform, $Y = X$ and $Z = Y \oplus X$.
    The respective causal structure is shown in \cref{fig:clus3}.
    Then pre-intervention, $Z$ is deterministically $0$.
    Clearly, this causal model is fine-tuned with regard to the value of $Z$:
    No matter which intervention we choose on $X$, this does not change, hence $X \naffects Z$, yet clearly $X \affects Y$ and also $Y \affects Z$, as $Z$ is no longer uniform after an intervention.
    Accordingly, since we have only three observable nodes in this model, any clustered affects relation must have $X$ in its first and $Z$ in its second argument, while not having $Y$ in its first and second argument.
    Hence, $X \affects Z \given \doo(Y)$, clustered in its third argument, is the only clustered affects relation in this example. Finally, we would like to point out that in this causal model, affects relations are not transitive: We have $X \affects Y$ and $Y \affects Z$, yet $X \naffects Z$.
\end{example}

Finally, we demonstrate that in absence of clustering in a certain argument, a causal model must allow each affects relation to be reduced to minimal cardinality for the respective argument.

\begin{lemma}[$\neg$ Clus$_1$]
    \label{thm:not-clus1}
    Consider an affects relation $X \affects Y \given \doo(Z)$ in a causal model which does not yield any affects relation that are Clus$_1$. Then there exists $e_X \in X$ such that $e_X \affects Y \given \doo(Z)$.
\end{lemma}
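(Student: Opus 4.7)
The plan is to proceed by strong induction on $|X|$. For the base case $|X|=1$, write $X=\{e_X\}$ and the conclusion holds trivially by taking this single element. For the inductive step, assume the statement holds for every affects relation whose first argument has cardinality strictly less than $n$, and consider an affects relation $X\affects Y\given \doo(Z)$ with $|X|=n\geq 2$ in a causal model with no Clus$_1$ relations.

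Since by hypothesis no affects relation in the model is Clus$_1$, in particular $X\affects Y\given \doo(Z)$ is not Clus$_1$. As $|X|\geq 2$, the first clause in \cref{def: clus1} is satisfied, so the negation of Clus$_1$ reduces to the existence of a (necessarily non-empty, since every affects relation requires a non-empty first argument by \cref{def:affects}) subset $s_X\subsetneq X$ with $s_X\affects Y\given \doo(Z)$. Apply the inductive hypothesis to this reduced affects relation, which is well-defined since $|s_X|<n$ and, by hypothesis, is itself not Clus$_1$. This yields an element $e_X\in s_X\subseteq X$ with $e_X\affects Y\given \doo(Z)$, completing the induction.

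I do not anticipate a serious obstacle here: the only subtlety is verifying that the subsets $s_X$ produced at each step of the induction are non-empty, which is automatic from the definition of an affects relation. The lemma is essentially a repeated application of the definition of (absence of) Clus$_1$, using that the property of having no clustered-in-first-argument affects relations is a global property of the causal model and therefore inherited by every intermediate reduced affects relation encountered during the induction.
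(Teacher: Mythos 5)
Your proof is correct and is essentially the paper's own argument: the paper also observes that failure of Clus$_1$ yields a strict subset $s_X \subsetneq X$ with $s_X \affects Y \given \doo(Z)$ and then "repeats inductively" down to a singleton. You have merely made the induction explicit and checked the non-emptiness of the intermediate subsets, which is a harmless formalization of the same idea.
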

\begin{proof}
    As $X \affects Y \given \doo(Z)$ is not Clus$_1$, there exists a subset $s_X \subsetneq X$ such that $s_X \affects Y \given \doo(Z)$. However, as this is again not Clus$_1$, we can repeat this inductively until we reach $e_X \in Y$ with $e_X \affects Y \given \doo(Z)$.
\end{proof}

\begin{lemma}[$\neg$ Clus$_2$]
    \label{thm:not-clus2}
    Consider an affects relation $X \affects Y \given \doo(Z)$ in a causal model which does not yield any affects relation that are Clus$_2$. Then there exists $e_Y \in Y$ such that $X \affects e_Y \given \doo(Z)$.
\end{lemma}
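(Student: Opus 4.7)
The plan is to mirror the proof of \cref{thm:not-clus1} exactly, replacing the role of the first argument by the second. Specifically, I would proceed by induction on $|Y|$. The base case $|Y|=1$ is immediate: writing $Y=\{e_Y\}$, the statement $X\affects e_Y\given\doo(Z)$ is just the hypothesis itself.

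For the inductive step, suppose $|Y|\geq 2$. By \cref{def: clus2}, the property Clus$_2$ requires $|Y|\geq 2$ together with the non-existence of a non-empty strict subset $s_Y\subsetneq Y$ with $X\affects s_Y\given\doo(Z)$. Since by assumption the causal model yields no affects relation that is Clus$_2$, and $X\affects Y\given\doo(Z)$ has $|Y|\geq 2$, it must fail Clus$_2$ by violating the second condition: there exists a non-empty $s_Y\subsetneq Y$ with $X\affects s_Y\given\doo(Z)$. This reduced affects relation lives in the same causal model, and so is also not Clus$_2$. Since $|s_Y|<|Y|$, the induction hypothesis gives $e_Y\in s_Y\subseteq Y$ with $X\affects e_Y\given\doo(Z)$, completing the argument.

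I do not expect any obstacle here; the argument is entirely parallel to \cref{thm:not-clus1}, with the only subtlety being that Clus$_2$ is defined only when $|Y|\geq 2$, which is precisely what makes the induction terminate cleanly at singletons. One could equivalently phrase it non-inductively as in the proof of \cref{thm:not-clus1}, by iteratively replacing $Y$ with a strictly smaller subset for which the affects relation persists, noting that this process must terminate at a singleton since $Y$ is finite.
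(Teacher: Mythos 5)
Your proposal is correct and matches the paper's argument, which simply states that the proof is directly analogous to the $\neg$\,Clus$_1$ case: iteratively pass to a strict subset $s_Y \subsetneq Y$ with $X \affects s_Y \given \doo(Z)$ (guaranteed by the failure of Clus$_2$) until a singleton is reached. Your explicit handling of the $|Y|\geq 2$ requirement in \cref{def: clus2} is a slightly more careful write-up of the same inductive descent, not a different route.
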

\begin{proof}
    Directly analogous to $\neg$ Clus$_1$, but for the second argument.
\end{proof}

\begin{lemma}[$\neg$ Clus$_3$]
    \label{thm:not-clus3}
    Consider an affects relation $X \affects Y \given \doo(Z)$ in a causal model which does not yield any affects relation that are Clus$_3$. Then there exists $e_X \in X$ such that $e_X \affects Y$.
\end{lemma}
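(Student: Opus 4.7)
The plan is to exploit the hypothesis twice: first to reduce the third argument $Z$ to empty, then to reduce the first argument $X$ to a singleton. The key preparatory observation is that the absence of Clus$_3$ relations in the model in fact forces the absence of Clus$_1$ relations as well. Indeed, if some $X' \affects Y' \given \doo(Z')$ in the model were Clus$_1$, then $|X'| \geq 2$ and by \cref{theorem: clus_irred} the relation would also be Irred$_1$; picking any partition $X' = s_{X'} \cup \tilde{s}_{X'}$ with both parts non-empty, \cref{thm:clusThreeOne} would then guarantee that the derived relation $s_{X'} \affects Y' \given \doo(\tilde{s}_{X'} Z')$ is itself a Clus$_3$ relation in the model, contradicting the hypothesis.

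Once this observation is in hand, the two reductions proceed symmetrically. For the third argument: since $X \affects Y \given \doo(Z)$ is a relation of the model, it is not Clus$_3$, so if $|Z| \geq 1$ there exists $s_Z \subsetneq Z$ with $X \affects Y \given \doo(s_Z)$. This reduced relation is again in the model and again not Clus$_3$; iterating, $|Z|$ strictly decreases at each step and the process terminates at $X \affects Y$. For the first argument: $X \affects Y$ is in the model and, by the preparatory observation, not Clus$_1$; so if $|X| \geq 2$ there exists $s_X \subsetneq X$ with $s_X \affects Y$. Iterating, $|X|$ strictly decreases at each step and the process terminates at $e_X \affects Y$ for some single $e_X \in X$, as required.

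The main obstacle is the preparatory observation, namely the bridge from no Clus$_3$ to no Clus$_1$ across the whole model, since \emph{a priori} only the former is assumed. This is precisely what \cref{thm:clusThreeOne} supplies when read contrapositively. Once this bridge is in place, both reductions are routine inductions on the cardinalities of $Z$ and of $X$, and the crucial point that every intermediate affects relation obtained along the way is still a relation of the given model ensures that the hypothesis remains applicable at every step.
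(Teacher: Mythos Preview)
Your proof is correct and follows essentially the same approach as the paper: reduce $Z$ to empty by iterating the non-Clus$_3$ hypothesis, then reduce $X$ to a singleton by iterating the non-Clus$_1$ property obtained from \cref{thm:clusThreeOne} contrapositively (the paper packages this second reduction as a call to \cref{thm:not-clus1}). Your statement of the preparatory bridge is in fact cleaner than the paper's, which phrases the implication from \cref{thm:clusThreeOne} backward.
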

\begin{proof}
    As $X \affects Y \given \doo(Z)$ is not Clus$_3$, there exists a subset $s_Z \subsetneq Z$ such that $X \affects Y \given \doo(s_Z)$. However, as this is again not Clus$_3$, we can repeat this inductively until we reach the 0$^{th}$-order affects relation $X \affects Y$.
    Additionally, by \cref{thm:clusThreeOne}, the absence of any affects relations with Clus$_1$ implies the same for Clus$_3$. Therefore, we can use \cref{thm:not-clus1} to deduce the claim.
\end{proof}

\medskip
Considering that by \cref{lemma: clus_finetune}, the absence of clustering is necessary for faithfulness, we can summarize these results accordingly.

\begin{corollary}
    \label{thm:cause-no-clus}
    Consider an affects relation $X \affects Y \given \doo(Z)$ in a causal model without clustering.
    Then there exist $e_X \in X, e_Y \in Y$ such that $e_X \affects e_Y$.
    In particular, this holds for any faithful causal model.
\end{corollary}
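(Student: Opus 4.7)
The plan is to prove the corollary by successive reduction of the affects relation $X \affects Y \given \doo(Z)$: first reducing the third argument all the way down to $\emptyset$ to obtain a $0^{th}$-order relation, then pinning down a single element in the first argument, and finally pinning down a single element in the second argument. Each of these reductions is handled by the corresponding $\neg$ Clus$_i$ lemma already established in \cref{sec:clus-prop}.

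First I would apply \cref{thm:not-clus3} to $X \affects Y \given \doo(Z)$: since by hypothesis the causal model yields no Clus$_3$ affects relations, this lemma directly produces some $e_X \in X$ with $e_X \affects Y$. This step already bundles together the reduction of the third argument (iteratively shrinking $Z$ until it is empty) with the reduction of the first argument, because, as recorded inside the proof of \cref{thm:not-clus3}, \cref{thm:clusThreeOne} ensures that the absence of Clus$_1$ relations in the model is implied by the absence of Clus$_3$ relations, so \cref{thm:not-clus1} can be invoked at the end of the inductive descent.

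Next I would apply \cref{thm:not-clus2} to the newly obtained $0^{th}$-order relation $e_X \affects Y$: since the model admits no Clus$_2$ relations either, the lemma yields $e_Y \in Y$ such that $e_X \affects e_Y$, which is precisely the conclusion of the corollary. For the ``in particular'' clause about faithful causal models, I would appeal to \cref{lemma: clus_finetune}: since any Clus$_i$ relation (for $i \in \{1,2,3\}$) forces the underlying causal model to be fine-tuned, a faithful causal model contains no clustered affects relations at all, so the hypothesis of the corollary holds automatically in the faithful case.

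I do not anticipate a genuine obstacle: the entire argument is a chaining of three previously established $\neg$ Clus$_i$ lemmas plus the fine-tuning characterization of clustering. The only mildly delicate point worth spelling out explicitly in the write-up is the justification for applying \cref{thm:not-clus1} to the $0^{th}$-order relation obtained after the Clus$_3$ reduction, since \cref{thm:not-clus1} requires the absence of Clus$_1$ in the model rather than the absence of Clus$_3$; this is handled cleanly by the contrapositive of \cref{thm:clusThreeOne}, as already exploited inside the proof of \cref{thm:not-clus3}.
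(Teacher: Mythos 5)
Your proposal is correct and follows essentially the same route the paper intends: the corollary is stated immediately after Lemmas \ref{thm:not-clus1}--\ref{thm:not-clus3} precisely as their chained consequence, with \cref{thm:not-clus3} supplying $e_X \affects Y$, \cref{thm:not-clus2} then supplying $e_X \affects e_Y$, and \cref{lemma: clus_finetune} handling the faithful case. Your extra care about invoking \cref{thm:not-clus1} via the contrapositive of \cref{thm:clusThreeOne} is already internal to the established proof of \cref{thm:not-clus3} (and is in any case subsumed by the corollary's blanket ``without clustering'' hypothesis), so nothing further is needed.
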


Recollecting \cref{ex:sorkin}, this indicates that in absence of clustering, any fine-tuning detectable from affects relations alone is possible to deduce via non-transitive affects relations.

\section{Conditional higher-order affects relations}
\label{sec:conditional}

Within this section, we generalize results from the main text to the case of conditional affects relations, and provide some additional useful transformation rules which are not relevant to the main results of this work. To start off, we provide a formal definition of conditional affects relations.
\begin{definition}[Conditional (Higher-Order) Affects Relations \cite{VVC}]
    \label{def:affects-cond}
    Consider a causal model over a set of $S$ observed nodes, associated with a causal structure $\mathcal{G}$.
    For pairwise disjoint subsets $X, Y, Z, W \subset S$, with $X, Y$ non-empty, we say
    \begin{equation}
        X \, \text{affects} \; Y \, \text{given} \, \doo(Z), W \, ,
    \end{equation}
    which we alternatively denote as
    \begin{equation}
        X \vDash Y \,|\, \doo(Z), W \, ,
    \end{equation}
    if there exist values $x$ of $X$, $z$ of $Z$ and $w$ of $W$  such that
    \begin{equation}
        P_{\mathcal{G}_{\doo(XZ)}} (Y | X=x, Z=z, W=w) \neq
        P_{\mathcal{G}_{\doo(Z)}} (Y | Z=z, W=w)
    \end{equation}
    For $W \neq \emptyset$, we speak of a \emph{conditional affects relation}, denoted by $X \affects Y \given W$.
    If $Z \neq \emptyset$, we have a \emph{higher-order (HO) affects relation}, denoted by $X \affects Y \given \doo(Z)$.
    More specifically, it is also called a \emph{$\abs{Z}^\text{th}$-order affects relation}.
    The trivial case of $W = Z = \emptyset$ is called an \emph{unconditional 0$^\text{th}$-order affects relation}, denoted by $X \affects Y$.
\end{definition}

\subsection{Properties}
We begin by restating Lemma IV.8 of \cite{VVC}, which makes clear how conditional and unconditional affects relations relate with one another.
\begin{lemma}
    \label{thm:decondition}
    For a causal model over a set $S$ of RVs, where $X, Y, Z, W \subset S$ disjoint,
    \begin{enumerate}
        \item $X \affects Y \given \doo(Z), W \implies X \affects YW \given \doo(Z)$.
        \item $X \affects Y \given \doo(Z), W$ is Irred$_1 \implies X \affects YW \given \doo(Z)$ is Irred$_1$.
        \item $X \affects YW \given \doo(Z) \iff X \affects Y \given \doo(Z), W \ \text{or} \ X \affects W \given \doo(Z)$.
    \end{enumerate}
\end{lemma}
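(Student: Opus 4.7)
My approach is to prove the three claims in the order Part~3, Part~1, Part~2, since both of the latter parts flow naturally from the first.

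For Part~3, I would rely on the chain-rule factorisation $P(YW\mid\cdot)=P(Y\mid W,\cdot)\,P(W\mid\cdot)$ and argue both directions via contraposition. For ``$\Leftarrow$'', if $X \affects Y \given \doo(Z), W$ is witnessed by values $(x,z,w)$ but $X \not\affects YW \given \doo(Z)$ held, then $P_{\cG_{\doo(XZ)}}(YW\mid X,Z) = P_{\cG_{\doo(Z)}}(YW\mid Z)$ at all values; summing over $Y$ gives equality of the $W$-marginals (so in particular $X \not\affects W \given \doo(Z)$), and dividing the joint equality by these marginals at $(x,z,w)$ forces equality of $P_{\cG_{\doo(XZ)}}(Y\mid X=x, Z=z, W=w)$ with $P_{\cG_{\doo(Z)}}(Y\mid Z=z, W=w)$, contradicting the witness. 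For the other disjunct $X \affects W \given \doo(Z)$ witnessed by $(x,z)$, marginalising the joint would-be equality over $Y$ directly produces a contradiction. The ``$\Rightarrow$'' direction is the contrapositive: if both factors in the chain-rule decomposition agree between the pre- and post-intervention distributions at every admissible value, so does their product.

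For Part~1 (read with the intended $\doo(Z)$ preserved on the conclusion, matching the form of Lemma~IV.8 in \cite{VVC} and dovetailing with Part~3), the claim is simply the ``$\Leftarrow$'' direction of Part~3 restricted to its first disjunct. If one instead insists on the literal reading in which $\doo(Z)$ is dropped, I would argue by contraposition: $X \not\affects YW$ yields $(X \indep YW)_{\cG_{\doo(X)}}$ via \cref{lemma: aff_corr}, and since passing from $\cG_{\doo(X)}$ to $\cG_{\doo(XZ)}$ by deleting the incoming edges of $Z$ can only remove d-connections, the d-separation property (\cref{definition: compatdist}) propagates this into $(X \indep YW \mid Z)_{\cG_{\doo(XZ)}}$; a chain-rule division by the $W$-marginal would then give $X \not\affects Y \given \doo(Z), W$. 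I expect the hard part of this literal reading to be the careful use of \cref{def:post_intervention} to tie together conditional distributions across the four graphs $\cG,\cG_{\doo(X)},\cG_{\doo(Z)},\cG_{\doo(XZ)}$ without na\"{i}vely identifying observation with intervention—a non-trivial step that the d-separation-based route circumvents precisely because independences, unlike individual conditionals, do transfer across these post-intervention models.

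For Part~2, I would apply Part~3 to each sub-relation supplied by Irred$_1$ of the hypothesis: for every nonempty $s_X \subsetneq X$, we have $s_X \affects Y \given \doo(\tilde{s}_X Z), W$, and the ``$\Leftarrow$'' direction of Part~3 (with $X := s_X$ and $Z := \tilde{s}_X Z$) upgrades this to $s_X \affects YW \given \doo(\tilde{s}_X Z)$, which is exactly the Irred$_1$ defining condition for $X \affects YW \given \doo(Z)$ (i.e., $\cref{def: red1}$ applied to the latter relation at the subset $s_X$). Hence Irred$_1$ transfers directly from the conditional relation to its deconditioned counterpart, and this step becomes routine once Part~3 has been proved.
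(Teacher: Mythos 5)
The paper does not actually prove this lemma: it is imported verbatim as Lemma~IV.8 of \cite{VVC}, so there is no in-paper proof to compare against. Judged on its own merits, your argument is correct and is the natural one. The chain-rule proof of Part~3 (marginalise the joint equality over $Y$ to kill the $X \affects W \given \doo(Z)$ disjunct, then divide by the $W$-marginal to kill the conditional disjunct, and run the factorisation forwards for the converse) is sound, up to the usual caveat about conditioning on zero-probability values of $W$, which the disjunction absorbs anyway. Deriving Part~1 as the first disjunct of the \enquote{$\Leftarrow$} direction of Part~3, and Part~2 by applying that implication to each sub-relation $s_X \affects Y \given \doo(\tilde{s}_X Z), W$ supplied by Irred$_1$, is exactly right, and matches how the paper itself uses the lemma (e.g.\ \cref{thm:decondition3} and the Clus$_2$ $\Rightarrow$ Irred$_2$ step in the proof of \cref{theorem: clus_irred}).

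Your decision to read the conclusions of Parts~1 and~2 as $X \affects YW \given \doo(Z)$ is not merely a stylistic choice: the literal reading with $\doo(Z)$ dropped is false. Taking $W = \emptyset$ (which the lemma's hypotheses permit) and the one-time pad of \cref{eg: red3_1}, one has $X \affects Y \given \doo(Z)$ but $X \naffects Y$, so no proof of the literal statement can exist. Correspondingly, your fallback sketch for that reading contains an invalid step: from $X \naffects YW$ and \cref{lemma: aff_corr} you obtain only the conditional \emph{independence} $(X \indep YW)_{\cG_{\doo(X)}}$, and without faithfulness this does not yield the d-separation $(X \perp^d YW)_{\cG_{\doo(X)}}$ that you would need in order to transport the statement to $\cG_{\doo(XZ)}$; the d-separation property of \cref{definition: compatdist} only runs in the other direction. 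Since you flag this reading as problematic and rest your proof on the intended one, the proof as a whole stands.
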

By the same argument as for \cref{thm:decondition}.2., one can also derive:
\begin{corollary}
    \label{thm:decondition3}
     For a causal model over a set $S$ of RVs, where $X, Y, Z, W \subset S$ disjoint,\\
     $X \affects Y \given \doo(Z), W$ is Irred$_3 \implies X \affects YW \given \doo(Z)$ is Irred$_3$.
\end{corollary}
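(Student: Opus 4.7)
The plan is to mirror the proof of \cref{thm:decondition}.2 by arguing via contraposition using the decomposition of affects relations given in \cref{thm:decondition}.3. First, I note that since $X,Y,Z,W$ are pairwise disjoint with $X,Y$ non-empty, $YW$ is non-empty and disjoint from $X$ and $Z$, so $X\affects YW\given \doo(Z)$ is a well-formed affects relation. The appropriate conditional generalization of Red$_3$ (as introduced for conditional affects relations in this appendix) states that $X\vDash Y|\doo(Z),W$ is Red$_3$ iff there is a non-empty $s_Z\subseteq Z$ (writing $\tilde s_Z:=Z\setminus s_Z$) with $s_Z\not\vDash Y|\doo(X\tilde s_Z),W$ and $s_Z\not\vDash Y|\doo(\tilde s_Z),W$.

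Assume for contraposition that $X\affects YW\given\doo(Z)$ is Red$_3$. Then there exists a non-empty $s_Z\subseteq Z$ with
\begin{equation}
s_Z\not\vDash YW\,|\,\doo(X\tilde s_Z)\qquad\text{and}\qquad s_Z\not\vDash YW\,|\,\doo(\tilde s_Z).
\end{equation}
By \cref{thm:decondition}.3 applied to each of these (with interventional context $\doo(X\tilde s_Z)$ and $\doo(\tilde s_Z)$ respectively), $s_Z\vDash YW|\doo(\cdot)$ is equivalent to the disjunction $s_Z\vDash Y|\doo(\cdot),W \,\lor\, s_Z\vDash W|\doo(\cdot)$, so negating yields the conjunctions. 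In particular we extract
\begin{equation}
s_Z\not\vDash Y\,|\,\doo(X\tilde s_Z),W\qquad\text{and}\qquad s_Z\not\vDash Y\,|\,\doo(\tilde s_Z),W,
\end{equation}
which is exactly the witness that $X\affects Y\given\doo(Z),W$ is Red$_3$ for the same $s_Z$. Contraposing gives the desired implication.

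The argument is essentially routine once the appropriate conditional version of \cref{def: red3} is in hand, so the main (minor) obstacle is simply making sure the conditional definition of Red$_3$ used here matches the one laid out in the appendix for conditional higher-order affects relations, and checking that the decomposition of \cref{thm:decondition}.3 applies verbatim when the ``first'' argument is replaced by $s_Z$ and the interventional context is enlarged by $X\tilde s_Z$ or $\tilde s_Z$; both follow immediately from the construction of the post-intervention causal model in \cref{def:post_intervention} and do not require any new ingredient.
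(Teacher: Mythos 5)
Your proof is correct and follows essentially the route the paper intends: the paper gives no explicit argument for \cref{thm:decondition3} beyond noting it follows ``by the same argument as for \cref{thm:decondition}.2'', and your contrapositive argument—negating the disjunction in \cref{thm:decondition}.3 (instantiated with first argument $s_Z$ and interventional contexts $\doo(X\tilde s_Z)$ and $\doo(\tilde s_Z)$) to convert the two witnesses of Red$_3$ for $X\affects YW\given\doo(Z)$ into the two witnesses of Red$_3$ for $X\affects Y\given\doo(Z),W$—is exactly that analogous argument, correctly matched to the conditional Red$_3$ definition in \cref{sec:affects-conditional}.
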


Additionally, it is useful to restate Theorem IV.1 of \cite{VVC}, that shows that Pearl's three rules of do-calculus \cite{Pearl2009} also hold in the general affects framework and can be derived from the d-separation property alone.

\begin{restatable}{theorem}{DoRules}
\label{theorem:dorules}
Given a causal model over a set $S$ of observed nodes, an associated causal graph $\cG$ and a distribution $P_S$ satisfying the d-separation property relative to $\cG$ (\cref{eq:compat}), the following 3 rules of do-calculus~\cite{Pearl2009} hold for interventions on this causal model.
\begin{itemize}
    \item \textbf{Rule 1: Ignoring observations} 
    \begin{equation}
    \label{eq: rule1}
        P_{\cG_{\mathrm{do}(X)}}(y|x,z,w)=P_{\cG_{\mathrm{do}(X)}}(y|x,w) \qquad \text{if } (Y\perp^d Z|XW)_{\cG_{\overline{X}}}
    \end{equation}
     \item \textbf{Rule 2: Action/observation exchange} 
    \begin{equation}
     \label{eq: rule2}
        P_{\cG_{\mathrm{do}(X Z)}}(y|x,z,w)=P_{\cG_{\mathrm{do}(X)}}(y|x,z,w) \qquad \text{if } (Y\perp^d Z|XW)_{\cG_{\overline{X}\underline{Z}}}
    \end{equation}
    \item \textbf{Rule 3: Ignoring actions/interventions} 
     \begin{equation}
      \label{eq: rule3}
        P_{\cG_{\mathrm{do}(X Z)}}(y|x,z,w)=P_{\cG_{\mathrm{do}(X)}}(y|x,w) \qquad \text{if } (Y\perp^d Z|XW)_{\cG_{\overline{XZ(W)}}},
    \end{equation}
\end{itemize}
where $X$, $Y$, $Z$ and $W$ are disjoint subsets of the observed nodes, $Z(W)$ denotes the set of nodes in $Z$ that are not ancestors of $W$, and the above hold for all values $x$, $y$, $z$ and $w$ of $X$, $Y$, $Z$ and $W$. 
\end{restatable}

We continue with a general result about conditional affects relations that will be useful, this is essentially a generalization of \cref{lemma: aff_corr} to the conditional case. The proof method is exactly analogous to that of \cref{lemma: aff_corr}, but we repeat it for completeness.

\begin{restatable}{lemma}{affectsCorrCond}
\label{lemma: aff_corr_cond}
    For a causal model over a set $S$ of RVs, where $X, Y, Z, W \subset S$ disjoint,\\
    $X\not\vDash Y| \mathrm{do}(Z),W \implies (X\upmodels Y|ZW)_{\cG_{\mathrm{do}}(XZ)}$.
\end{restatable}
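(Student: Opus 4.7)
The plan is to unpack the definition of the conditional affects relation and observe that the non-affects hypothesis forces the post-intervention distribution of $Y$ to be independent of the value of $X$ (conditional on $Z$ and $W$), which is exactly the conditional independence we want.

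First, I would negate \cref{def:affects-cond} to restate the hypothesis $X \not\affects Y \given \doo(Z), W$ as: for all values $x, z, w$,
\begin{equation}
P_{\cG_{\doo(XZ)}}(Y \,|\, X=x, Z=z, W=w) = P_{\cG_{\doo(Z)}}(Y \,|\, Z=z, W=w).
\end{equation}
The crucial observation is that the right-hand side does not depend on $x$ at all (it lives in the graph $\cG_{\doo(Z)}$ where $X$ does not even appear as a conditioning variable). Consequently, the left-hand side also cannot depend on the value of $x$, i.e.\ $P_{\cG_{\doo(XZ)}}(Y \,|\, X=x, Z=z, W=w)$ is constant in $x$ for any fixed $z,w$.

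Next, I would marginalise out $X$ in the post-intervention graph $\cG_{\doo(XZ)}$. Using the law of total probability and the fact just established,
\begin{equation}
P_{\cG_{\doo(XZ)}}(Y \,|\, Z=z, W=w) = \sum_x P_{\cG_{\doo(XZ)}}(Y \,|\, X=x, Z=z, W=w)\, P_{\cG_{\doo(XZ)}}(X=x \,|\, Z=z, W=w)
\end{equation}
collapses to the common value of $P_{\cG_{\doo(XZ)}}(Y \,|\, X=x, Z=z, W=w)$. Therefore, for all admissible $x,z,w$,
\begin{equation}
P_{\cG_{\doo(XZ)}}(Y \,|\, X=x, Z=z, W=w) = P_{\cG_{\doo(XZ)}}(Y \,|\, Z=z, W=w),
\end{equation}
which by \cref{def:cond_indep} is precisely the statement $(X \upmodels Y \,|\, ZW)_{\cG_{\doo(XZ)}}$.

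There is not really a hard step here; the argument is essentially the same as for the unconditional \cref{lemma: aff_corr}, just with the extra spectator set $W$ carried along in every conditional. The only thing to be careful about is keeping track of which post-intervention graph each distribution refers to, since the pre- and post-intervention distributions of $Y$ given $ZW$ in $\cG_{\doo(Z)}$ and $\cG_{\doo(XZ)}$ are \emph{a priori} distinct objects; the chain of equalities above shows that, under the non-affects hypothesis, they coincide when restricted to the relevant conditionals, which is what yields the d-separation-style independence in $\cG_{\doo(XZ)}$.
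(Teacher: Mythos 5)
Your proof is correct and follows essentially the same route as the paper's: both rest on the single observation that the right-hand side of the non-affects equality is $x$-independent, forcing $P_{\cG_{\doo(XZ)}}(Y\,|\,X{=}x,Z{=}z,W{=}w)$ to be constant in $x$, which is the claimed conditional independence. The paper phrases this as a proof by contradiction while you argue directly (with an explicit marginalisation over $X$), but the underlying idea is identical.
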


\begin{proof}
    Suppose by contradiction that $X\not\vDash Y|\{\mathrm{do}(Z), W\}$ (which amounts to $P_{\cG_{\mathrm{do}(XZ)}}(Y|XZW)=  P_{\cG_{\mathrm{do}(Z)}}(Y|ZW)$) and $(X\not\upmodels Y|ZW)_{\cG_{\mathrm{do}}(XZ)}$. The latter is equivalent to saying that there exist values $z$ of $Z$, $w$ of $W$ and distinct values  $x$ and $x'$ of $X$ such that $P_{\cG_{\mathrm{do}(XZ)}}(Y|X=x,Z=z, W=w)\neq P_{\cG_{\mathrm{do}(XZ)}}(Y|X=x',Z=z, W=w)$. However, this implies that $P_{\cG_{\mathrm{do}(XZ)}}(Y|XZ)$ does have a non-trivial dependence on $X$.
    As this contradicts $P_{\cG_{\mathrm{do}(XZ)}}(Y|XZW)=  P_{\cG_{\mathrm{do}(Z)}}(Y|ZW)$, which would imply that it is equal to an $X$-independent quantity, this proves the result. 
\end{proof}

\begin{restatable}{lemma}{dsepAffectsCond}
\label{lemma: dsep_aff_cond}
    For a causal model over a set $S$ of RVs, where $X, Y, Z, W \subset S$ disjoint,
    $$(X\perp^d Y|ZW)_{\cG_{\mathrm{do}(X(W)Z)}} \implies X\not\vDash Y|\{\mathrm{do}(Z),W \},$$
    with $X(W):=X\backslash (X \cap \mathrm{anc}(W))$. where $\mathrm{anc}(W)$ is the set of all ancestors of $W$ in the original graph $\cG$.
\end{restatable}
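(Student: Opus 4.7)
We argue by contrapositive: assume $X\vDash Y\mid\mathrm{do}(Z),W$, i.e.\ there exist values $x,z,w$ with $P_{\cG_{\mathrm{do}(XZ)}}(Y\mid x,z,w)\neq P_{\cG_{\mathrm{do}(Z)}}(Y\mid z,w)$, and aim to exhibit a d-connecting path from $X$ to $Y$ given $ZW$ in $\cG_{\mathrm{do}(X(W)Z)}$. Write $X_1:=X\cap\mathrm{anc}(W)$ and $X_2:=X(W)$ so that $\cG_{\mathrm{do}(X(W)Z)}=\cG_{\mathrm{do}(X_2 Z)}$. The target statement is recognisable as a version of Pearl's Rule 3 of do-calculus (which removes an intervention on $X$ from $P(Y\mid\mathrm{do}(X),\mathrm{do}(Z),W)$ given a d-separation in a suitable mutilated graph), transposed to the affects framework where only \cref{def:post_intervention} and the d-separation property are available in place of a truncated factorisation. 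Since our $X(W)=X\setminus\mathrm{anc}_\cG(W)$ is a subset of Pearl's $X\setminus\mathrm{anc}_{\cG_{\overline{Z}}}(W)$, our hypothesis is even stronger, so the conclusion should follow from the same schema.

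Assuming the conclusion fails, the d-sep property applied to $P_{\cG_{\mathrm{do}(X_2 Z)}}$ yields $P_{\cG_{\mathrm{do}(X_2 Z)}}(Y\mid x,z,w)=P_{\cG_{\mathrm{do}(X_2 Z)}}(Y\mid z,w)$. It then suffices to establish the two transfer equalities (i) $P_{\cG_{\mathrm{do}(X_2 Z)}}(Y\mid x,z,w)=P_{\cG_{\mathrm{do}(XZ)}}(Y\mid x,z,w)$ and (ii) $P_{\cG_{\mathrm{do}(X_2 Z)}}(Y\mid z,w)=P_{\cG_{\mathrm{do}(Z)}}(Y\mid z,w)$: chaining them with the previous display yields $P_{\cG_{\mathrm{do}(XZ)}}(Y\mid x,z,w)=P_{\cG_{\mathrm{do}(Z)}}(Y\mid z,w)$, contradicting the affects relation. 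For (i), conditioning on the value $x_1$ already pins $X_1$ down, so further intervening on $X_1$ ought to leave the conditional distribution of $Y$ unchanged. For (ii), the non-ancestor hypothesis $X_2\cap\mathrm{anc}(W)=\emptyset$ is the critical input: no directed path from $X_2$ reaches $W$, so intervening on $X_2$ should not alter the conditional distribution of $W$, and combined with the d-separation of the previous step, the conditional of $Y$ given $z,w$ is also preserved.

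The main obstacle will be rigorously establishing (i) and (ii), since the variables in $X_1,X_2$ need not be exogenous in $\cG$ and \cref{def:post_intervention} only directly exchanges conditioning for intervention for genuinely exogenous variables. A clean workaround is to augment $\cG$ by a fresh exogenous "intervention-switch" node $I_V$ attached to each $V\in X$ whose value selects between the observational and interventional regime for $V$; \cref{def:post_intervention} then applies to the (exogenous-by-construction) $I_V$, allowing one to transfer between conditioning and intervening on $V$ and then marginalise the switches out. An alternative is to induct on $\abs{X_1}$ and $\abs{X_2}$ separately, peeling off one element at a time and propagating the non-ancestor property through each intermediate post-intervention graph via d-separations inherited from the assumed separation in $\cG_{\mathrm{do}(X_2 Z)}$.
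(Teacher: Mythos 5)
You have correctly identified the statement as Pearl's third rule of do-calculus transposed to the affects framework, and your observation that the paper's $X(W)$ (computed in $\cG$ rather than in $\cG_{\doo(Z)}$) makes the hypothesis at least as strong as Pearl's is right. This is in fact essentially all the paper does: its proof is a one-line appeal to Theorem~IV.1 of \cite{VVC}, which establishes that all three rules of do-calculus follow from the d-separation property (\cref{definition: compatdist}) and the linking condition of \cref{def:post_intervention} alone, and then notes the lemma is immediate from Rule~3.

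As a self-contained argument, however, your proposal has a genuine gap, and you have located it yourself: the transfer equalities (i) and (ii) are precisely the content of Rule~2 (applied to $X_1$) and Rule~3 (applied to $X_2$), so nothing has been reduced --- you would be assuming the do-calculus in order to prove an instance of it. The heuristic you offer for (i), that conditioning on $x_1$ already pins $X_1$ down so that additionally intervening changes nothing, is false in general: conditioning retains selection through back-door paths into $X_1$ which intervening severs, and the two agree only under the Rule-2 separation $(Y\perp^d X_1\mid X_2ZW)$ in the graph with the outgoing edges of $X_1$ removed. (That separation does in fact follow from your hypothesis, because $X_2$ is parentless in $\cG_{\doo(X_2Z)}$ and hence cannot act as a collider when added to the conditioning set --- but you neither note this nor derive Rule~2 within the framework.) Your proposed repairs are also not innocuous: the ``intervention-switch'' augmentation presupposes that the entire family of post-intervention models of \cref{def:post_intervention} arises as the marginal of a single larger causal model with exogenous switch nodes, which is more structure than the minimal, theory-independent framework grants --- it only relates the distributions $P_{\cG_{\doo(\cdot)}}$ through per-graph d-separation and the exogenous-node linking rule. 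The clean fix is the paper's: invoke Theorem~IV.1 of \cite{VVC} directly, or reproduce its derivation of Rule~3 from those two axioms rather than from a Rule-2/Rule-3 decomposition.
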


The above lemma follows immediately from Pearl's third rule of do-calculus, i.e. Rule 3 of \cref{theorem:dorules}. 

Notice that between the statements of \cref{lemma: aff_corr_cond} and \cref{lemma: dsep_aff_cond}, there is a subtle difference regarding the relation between $X$ and $W$ in the causal structure. Specifically, the requirement in \cref{lemma: dsep_aff_cond} is satisfied whenever $W=\emptyset$  and whenever $X$ is not a cause of $W$.
This impacts the generalization of some of the statements from the main text for unconditional affects relations ($W=\emptyset$) to conditional relations.
For the case that both $X$ is not a cause of $W$ (and thus, $X = X(W)$) and the causal model is faithful (hence, we have $(X\upmodels Y|ZW)_{\cG_{\mathrm{do}}(XZ)}\iff  (X \perp^d Y|ZW)_{\cG_{\mathrm{do}}(XZ)}$), then \cref{lemma: aff_corr_cond} and \cref{lemma: dsep_aff_cond} taken together show that,
\begin{equation}
	X\not\vDash Y|\{\mathrm{do}(Z),W \} \iff
	(X\upmodels Y|ZW)_{\cG_{\mathrm{do}}(XZ)} \iff
    (X \perp^d Y|ZW)_{\cG_{\mathrm{do}}(XZ)},
\end{equation}
while for unfaithful models, conditional independence (middle) need not imply absence of affects relations (left) or d-separation (right).

\subsection{Generalizing reducibility and clustering}
\label{sec:affects-conditional}

In this section, we present the generalization of the results of \cref{sec:affects-new}, originally presented for unconditional affects relations, to the case of conditional affects relations. While some of the results fully generalize, the others require an additional condition when introducing a non-trivial fourth argument $W\neq \emptyset$ of the conditional affects relation, owing to the condition appearing in \cref{lemma: dsep_aff_cond}. 

We begin by defining the reducibility concepts introduced in \cref{sec:affects-red} for the first three arguments of unconditional affects relations $X \affects Y \given \doo(Z)$ to all four arguments of conditional affects relations $X \vDash Y | \mathrm{do}(Z),W$. For the first three arguments, this is the same as \cref{def: red1}, \cref{def: red2} and \cref{def: red3} but with the inclusion of $W$, but we repeat them here for completeness.

\begin{definition}[Reducibility in the first argument \cite{VVC}]
 We say that an affects relation $X \vDash Y | \mathrm{do}(Z),W$  is reducible in the first argument (or Red$_1$) if there exists a non-empty subset $s_X\subsetneq X$ such that $s_X \not\vDash Y | \mathrm{do}(\tilde{s}_XZ),W$, where $\tilde{s}_X:=X\backslash s_X$. Otherwise, we say that $X \vDash Y | \mathrm{do}(Z),W$ is irreducible in the first argument and denote it as Irred$_1$.
\end{definition}

\begin{definition}[Reducibility in the second argument]
 We say that an affects relation $X \vDash Y | \mathrm{do}(Z),W$  is reducible in the second argument (or Red$_2$) if there exists a non-empty subset $s_Y\subsetneq Y$ such that both the following conditions hold, where $\tilde{s}_Y:=Y\backslash s_Y$.
 \begin{itemize}
     \item $X \not\vDash s_Y | \mathrm{do}(Z), \tilde{s}_Y W$
     \item $(s_Y\upmodels \tilde{s}_Y|XZW)_{\cG_{\mathrm{do}(XZ)}}$ or $(s_Y\upmodels \tilde{s}_Y|ZW)_{\cG_{\mathrm{do}(Z)}}$
 \end{itemize}
Otherwise, we say that $X \vDash Y | \mathrm{do}(Z),W$ is irreducible in the second argument and denote it as Irred$_2$.
\end{definition}

\begin{definition}[Reducibility in the third argument]
 We say that an affects relation $X \vDash Y | \mathrm{do}(Z),W$  is reducible in the third argument (or Red$_3$) if there exists a non-empty subset $s_Z\subseteq Z$ such that both the following conditions hold, where $\tilde{s}_Z:=Z\backslash s_Z$
 \begin{itemize}
     \item  $s_Z \not\vDash Y | \mathrm{do}(X\tilde{s}_Z),W$
     \item $s_Z \not\vDash Y | \mathrm{do}(\tilde{s}_Z),W$
 \end{itemize}
Otherwise, we say that $X \vDash Y | \mathrm{do}(Z),W$ is irreducible in the third argument and denote it as Irred$_3$.
\end{definition}

\begin{definition}[Reducibility in the fourth argument]
	\label{def: red4}
	 We say that an affects relation $X \vDash Y | \mathrm{do}(Z),W$  is reducible in the fourth argument (or Red$_4$) if there exists a non-empty subset $s_W\subseteq W$ such that all the following conditions hold, where $\tilde{s}_W:=W\backslash s_W$.
	 \begin{itemize}
		 \item  $(Y\upmodels s_W|XZ\tilde{s}_W)_{\cG_{\mathrm{do}(XZ)}}$
		 \item $(Y\upmodels s_W|Z\tilde{s}_W)_{\cG_{\mathrm{do}(XZ)}}$
	 \end{itemize}
	Otherwise, we say that $X \vDash Y | \mathrm{do}(Z),W$ is irreducible in the fourth argument and denote it as Irred$_4$.
\end{definition}

It is then easy to check that the following generalizations of \cref{lemma: red1}, \cref{lemma: red2} and \cref{lemma: red3} for the first three arguments follow from the same proof method (the first of these is proven in \cite{VVC}).

\begin{lemma}
    If $X \vDash Y | \mathrm{do}(Z),W$ is a Red$_1$ affects relation, then there exists $\tilde{s}_X\subsetneq X$ such that  $\tilde{s}_X \vDash Y | \mathrm{do}(Z),W$ holds.    
\end{lemma}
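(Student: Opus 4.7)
The plan is to mirror the proof of the unconditional version (\cref{lemma: red1}) with the additional argument $W$ carried through passively. The essential observation is that Red$_1$ supplies a \emph{pointwise} equality of post-intervention distributions, while the original affects relation supplies an \emph{existential} inequality; combining the two yields the desired reduced affects relation.

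First I would unpack the three relevant statements in terms of their defining conditional distributions. Writing $\tilde{s}_X := X \setminus s_X$, the hypothesis $s_X \not\vDash Y \given \doo(\tilde{s}_X Z), W$ means that for \emph{all} values $s_x$ of $s_X$, $\tilde{s}_x$ of $\tilde{s}_X$, $z$ of $Z$ and $w$ of $W$,
\begin{equation}
    P_{\cG_{\doo(XZ)}}(Y \given s_X = s_x,\, \tilde{s}_X = \tilde{s}_x,\, Z = z,\, W = w) = P_{\cG_{\doo(\tilde{s}_X Z)}}(Y \given \tilde{s}_X = \tilde{s}_x,\, Z = z,\, W = w).
\end{equation}
On the other hand, the assumption $X \affects Y \given \doo(Z), W$ provides the existence of specific values $x = (s_x, \tilde{s}_x)$, $z$ and $w$ such that
\begin{equation}
    P_{\cG_{\doo(XZ)}}(Y \given X = x,\, Z = z,\, W = w) \neq P_{\cG_{\doo(Z)}}(Y \given Z = z,\, W = w).
\end{equation}

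Next I would substitute the pointwise equality (which, being universally quantified, applies in particular at the witnessing values $s_x, \tilde{s}_x, z, w$) into the left-hand side of the inequality. This gives
\begin{equation}
    P_{\cG_{\doo(\tilde{s}_X Z)}}(Y \given \tilde{s}_X = \tilde{s}_x,\, Z = z,\, W = w) \neq P_{\cG_{\doo(Z)}}(Y \given Z = z,\, W = w),
\end{equation}
which is exactly the defining condition for $\tilde{s}_X \affects Y \given \doo(Z), W$, with witnessing values $\tilde{s}_x, z, w$. Taking this $\tilde{s}_X$, which is a strict subset of $X$ by the Red$_1$ hypothesis, concludes the proof.

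No serious obstacle is anticipated: the argument is a direct transfer of substitution logic, and the passive role of $W$ ensures that no additional condition (such as $X$ not being a cause of $W$, needed elsewhere via \cref{lemma: dsep_aff_cond}) is required. The only minor care point is to verify that the values witnessing the original affects relation indeed split as $(s_x, \tilde{s}_x)$, but this is immediate since $\{s_X, \tilde{s}_X\}$ partitions $X$.
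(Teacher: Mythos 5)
Your proposal is correct and follows essentially the same route as the paper, which states that this lemma "follows from the same proof method" as the unconditional case: write out the Red$_1$ condition $s_X \not\vDash Y \given \doo(\tilde{s}_X Z), W$ as a universally quantified equality $P_{\cG_{\doo(XZ)}}(Y|XZW) = P_{\cG_{\doo(\tilde{s}_X Z)}}(Y|\tilde{s}_X Z W)$, and chain it with the existential inequality defining $X \affects Y \given \doo(Z), W$ to obtain $\tilde{s}_X \affects Y \given \doo(Z), W$. Your attention to the quantifier structure (universal equality evaluated at the existential witness) and to the fact that $W$ plays a purely passive role matches the paper's treatment exactly.
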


\begin{lemma}
    If $X \vDash Y | \mathrm{do}(Z),W$ is a Red$_2$ affects relation, then there exists $\tilde{s}_Y\subsetneq Y$ such that  $X \vDash \tilde{s}_Y | \mathrm{do}(Z),W$ holds. 
\end{lemma}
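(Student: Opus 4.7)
The plan is to mirror the proof of the unconditional \cref{lemma: red2}, adapting it to carry the additional observational argument $W$ through every conditional distribution. Since $W$ is only an observational conditioning set (never intervened upon), treating $ZW$ as the conditioning set in $P_{\cG_{\mathrm{do}(Z)}}$ and $XZW$ as the conditioning set in $P_{\cG_{\mathrm{do}(XZ)}}$ reduces the argument to exactly the same chain-rule factorization that works in the unconditional case.

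First, I would unpack the Red$_2$ hypothesis: there exists a nonempty $s_Y \subsetneq Y$ with $\tilde{s}_Y := Y \setminus s_Y$ such that $X \not\vDash s_Y \mid \mathrm{do}(Z), \tilde{s}_Y W$, which by \cref{def:affects-cond} translates to
\[
P_{\cG_{\mathrm{do}(XZ)}}(s_Y \mid XZ\tilde{s}_Y W) = P_{\cG_{\mathrm{do}(Z)}}(s_Y \mid Z\tilde{s}_Y W).
\]
I would then argue by contradiction. Suppose $X \not\vDash \tilde{s}_Y \mid \mathrm{do}(Z), W$, i.e.,
\[
P_{\cG_{\mathrm{do}(XZ)}}(\tilde{s}_Y \mid XZW) = P_{\cG_{\mathrm{do}(Z)}}(\tilde{s}_Y \mid ZW).
\]
Applying the chain rule $P(s_Y \tilde{s}_Y \mid \cdot) = P(s_Y \mid \tilde{s}_Y,\cdot)\, P(\tilde{s}_Y \mid \cdot)$ to both the post-intervention distribution (conditioning on $XZW$) and the pre-intervention distribution (conditioning on $ZW$), and multiplying the two equalities above, yields $P_{\cG_{\mathrm{do}(XZ)}}(Y \mid XZW) = P_{\cG_{\mathrm{do}(Z)}}(Y \mid ZW)$ for all values, contradicting the hypothesis that $X \vDash Y \mid \mathrm{do}(Z), W$ holds.

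I do not expect any serious obstacle in this proof; in particular the second clause of Red$_2$ (the conditional independence condition) is not needed in this direction, just as it was not needed in the proof of the unconditional \cref{lemma: red2}. That clause plays its role in establishing the converse operational equivalence of the original and reduced relations, as discussed in \cref{sec:red2}. The only care required is verifying that $W$ passes through as a passive conditioning argument on both sides of every equation, which follows directly from the definition of conditional affects relations and from the fact that the reduction modifies neither the interventional sets $X,Z$ nor the observational set $W$.
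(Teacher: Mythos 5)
Your proof is correct, and the conclusion holds, but it is worth flagging that you take a genuinely different (and in fact leaner) route than the paper. The paper does not write out a proof of this conditional lemma at all --- it states that it ``follows from the same proof method'' as the unconditional \cref{lemma: red2} --- and that proof method factorizes $P(Y|\cdot)$ as $P(s_Y|\cdot)\,P(\tilde{s}_Y|\cdot)$, which is only valid by invoking the conditional-independence clause of \cref{def: red2}; indeed the paper's displayed system \cref{eq: red2} uses \emph{both} independences $(s_Y\upmodels \tilde{s}_Y|XZ)_{\cG_{\mathrm{do}(XZ)}}$ and $(s_Y\upmodels \tilde{s}_Y|Z)_{\cG_{\mathrm{do}(Z)}}$ as its third and fourth lines, and then cancels the common factor supplied by the non-affects condition. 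Your version instead uses the bare chain rule $P(Y|\cdot)=P(s_Y|\tilde{s}_Y,\cdot)\,P(\tilde{s}_Y|\cdot)$, which holds for any distribution, so the two factors on each side are matched directly by the first Red$_2$ clause (with $\tilde{s}_Y$ kept in the conditioning set, exactly as the clause provides it) and by the contradiction hypothesis. This buys you two things: you genuinely do not need the second clause of Red$_2$ for this implication, and you sidestep the mild awkwardness that the definition only guarantees one of the two independences while the paper's factorization appears to use both. Your parenthetical claim that the second clause ``was not needed in the proof of the unconditional \cref{lemma: red2}'' is, however, inaccurate as a description of the paper's proof --- it is needed there precisely because of the independence factorization --- even though it is true of your own argument. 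The handling of $W$ as a passive conditioning variable carried through every distribution is exactly right and is all the generalization requires.
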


\begin{lemma}
       If $X \vDash Y | \mathrm{do}(Z),W$ is a Red$_3$ affects relation, then there exists $\tilde{s}_Z\subsetneq Z$ such that  $X \vDash Y | \mathrm{do}(\tilde{s}_Z),W$ holds. 
\end{lemma}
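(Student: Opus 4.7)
The plan is to mirror the proof of the unconditional version (\cref{lemma: red3}), with the additional argument $W$ carried along on both sides of every relevant probability expression. The argument is essentially a direct substitution: I would unfold the meaning of each non-affects condition in the definition of Red$_3$ and plug the resulting equalities into the defining inequality of the original affects relation $X \vDash Y \mid \doo(Z), W$.

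First I would write down explicitly what $X \vDash Y \mid \doo(Z), W$ means via \cref{def:affects-cond}: there exist $x, z, w$ such that
\begin{equation}
P_{\cG_{\doo(XZ)}}(Y \mid X = x, Z = z, W = w) \neq P_{\cG_{\doo(Z)}}(Y \mid Z = z, W = w).
\end{equation}
Next, by the Red$_3$ hypothesis, fix a non-empty $s_Z \subseteq Z$ (with $\tilde{s}_Z = Z \setminus s_Z$) that witnesses the definition, i.e.\ $s_Z \not\vDash Y \mid \doo(X\tilde{s}_Z), W$ and $s_Z \not\vDash Y \mid \doo(\tilde{s}_Z), W$. The first condition rewrites (expanding $\not\vDash$) to the equality
\begin{equation}
P_{\cG_{\doo(XZ)}}(Y \mid X = x, Z = z, W = w) = P_{\cG_{\doo(X\tilde{s}_Z)}}(Y \mid X = x, \tilde{s}_Z = z|_{\tilde{s}_Z}, W = w)
\end{equation}
for all values $x, z, w$ (since $s_Z \not\vDash \cdots$ asserts equality of these two distributions for every such tuple). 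Similarly the second condition rewrites to
\begin{equation}
P_{\cG_{\doo(Z)}}(Y \mid Z = z, W = w) = P_{\cG_{\doo(\tilde{s}_Z)}}(Y \mid \tilde{s}_Z = z|_{\tilde{s}_Z}, W = w).
\end{equation}

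Substituting both equalities into the defining inequality of $X \vDash Y \mid \doo(Z), W$ yields, for the particular values $x, z, w$ above,
\begin{equation}
P_{\cG_{\doo(X\tilde{s}_Z)}}(Y \mid X = x, \tilde{s}_Z = z|_{\tilde{s}_Z}, W = w) \neq P_{\cG_{\doo(\tilde{s}_Z)}}(Y \mid \tilde{s}_Z = z|_{\tilde{s}_Z}, W = w),
\end{equation}
which is exactly the defining condition for $X \vDash Y \mid \doo(\tilde{s}_Z), W$. Since $s_Z$ is non-empty, $\tilde{s}_Z \subsetneq Z$, and this is the desired reduced affects relation.

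I do not foresee a genuine obstacle: the conditional case adds no new subtlety beyond carrying the fixed value $W = w$ along everywhere, because all four probabilities are conditioned on the same value of $W$ and no intervention is performed on $W$. The only care required is notational — making sure the same witnessing triple $(x, z, w)$ is used throughout, and that the restriction $z|_{\tilde{s}_Z}$ of the chosen value $z$ to the coordinates in $\tilde{s}_Z$ serves as the witnessing value of $\tilde{s}_Z$ in the reduced relation.
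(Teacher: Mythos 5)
Your proposal is correct and takes essentially the same route as the paper, which itself proves this conditional lemma by observing that the proof of the unconditional version (\cref{lemma: red3}) goes through verbatim with the fixed conditioning value of $W$ carried along in every distribution. The only cosmetic difference is that you track the witnessing tuple $(x,z,w)$ explicitly, whereas the paper works at the level of equalities and inequalities of conditional distributions; the substance is identical.
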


For the fourth argument, we have the following analogous lemma.
	\begin{restatable}{lemma}{redFour}
		If $X \vDash Y | \mathrm{do}(Z),W$ is a Red$_4$ affects relation, then there exists $\tilde{s}_W\subsetneq W$ such that  $X \vDash Y | \mathrm{do}(Z),\tilde{s}_W$ holds. 
	\end{restatable}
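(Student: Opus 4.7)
The plan is to proceed by a direct substitution argument, entirely parallel to the proofs of \cref{lemma: red1}, \cref{lemma: red2} and \cref{lemma: red3}. By hypothesis, $X \vDash Y \mid \mathrm{do}(Z), W$ holds, so by \cref{def:affects-cond} there exist values $x$ of $X$, $z$ of $Z$, and $w$ of $W$ such that
\begin{equation}
    P_{\mathcal{G}_{\doo(XZ)}}(Y \mid X{=}x, Z{=}z, W{=}w) \;\neq\; P_{\mathcal{G}_{\doo(Z)}}(Y \mid Z{=}z, W{=}w).
\end{equation}
Since $W = s_W \cup \tilde{s}_W$ with $s_W$ the witness of Red$_4$, I would write $w = (s_w, \tilde{s}_w)$ accordingly. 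The goal is to transport this inequality to one involving only the partial value $\tilde{s}_w$, which would exactly witness $X \vDash Y \mid \mathrm{do}(Z), \tilde{s}_W$.

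First I would apply the first Red$_4$ condition $(Y \upmodels s_W \mid XZ\tilde{s}_W)_{\cG_{\mathrm{do}(XZ)}}$ to rewrite the left-hand side as
\begin{equation}
    P_{\mathcal{G}_{\doo(XZ)}}(Y \mid X{=}x, Z{=}z, W{=}w) \;=\; P_{\mathcal{G}_{\doo(XZ)}}(Y \mid X{=}x, Z{=}z, \tilde{s}_W{=}\tilde{s}_w),
\end{equation}
noting that conditional independence removes $s_W$ as a conditioning variable. Then I would apply the second Red$_4$ condition $(Y \upmodels s_W \mid Z\tilde{s}_W)_{\cG_{\mathrm{do}(Z)}}$ (here I flag what appears to be a typo in \cref{def: red4}, where the relevant post-intervention graph for the right-hand side should match the $\cG_{\mathrm{do}(Z)}$ appearing in the right-hand side of \cref{def:affects-cond}, in full analogy with the placement in \cref{def: red2}) to analogously rewrite
\begin{equation}
    P_{\mathcal{G}_{\doo(Z)}}(Y \mid Z{=}z, W{=}w) \;=\; P_{\mathcal{G}_{\doo(Z)}}(Y \mid Z{=}z, \tilde{s}_W{=}\tilde{s}_w).
\end{equation}
Substituting both identities back into the original inequality yields precisely the inequality defining $X \vDash Y \mid \mathrm{do}(Z), \tilde{s}_W$ for the values $x, z, \tilde{s}_w$, as required.

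The argument is a one-line substitution, so there is no genuine obstacle beyond bookkeeping. The only point that requires care is to confirm that the two conditional independence conditions are stated in the correct post-intervention models so that the rewritings apply to the LHS and RHS of the same inequality; this is exactly the same subtlety encountered in \cref{def: red2}, and is the reason Red$_4$ requires two independence conditions rather than one. If the second condition is instead read literally as written in $\cG_{\mathrm{do}(XZ)}$, the proof would fail in general since conditional independence is not preserved across post-intervention models, and I would therefore either note the apparent typo explicitly or strengthen the definition to impose both possible post-intervention graphs (analogously to the ``or'' clause in \cref{def: red2}).
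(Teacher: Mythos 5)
Your proof is correct and follows essentially the same route as the paper's: both use the two conditional-independence conditions of Red$_4$ to factor out $s_W$ from the conditioning set on each side of the inequality defining the original affects relation, which immediately yields the inequality defining $X \vDash Y \mid \mathrm{do}(Z),\tilde{s}_W$. Your flag of the apparent typo in the second condition of \cref{def: red4} is well taken: the paper's own proof writes that equality with the subscript $\cG_{\mathrm{do}(XZ)}$ but then silently substitutes it into the $P_{\cG_{\mathrm{do}(Z)}}(Y|ZW)$ side of the inequality, so the argument only goes through if the second independence is read in $\cG_{\mathrm{do}(Z)}$, exactly as you propose.
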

\begin{proof}

 Suppose $X \vDash Y | \mathrm{do}(Z),W$ holds and is a Red$_4$ affects relation. Then writing out this affects relation along with the two conditions implied by the reducibility (while recalling that $s_W\cup\tilde{s}_W=W$), we have 
	 \begin{align}
	 \label{eq: red4}
		 \begin{split}
		  P_{\cG_{\mathrm{do}(XZ)}}(Y|XZW)&\neq  P_{\cG_{\mathrm{do}(Z)}}(Y|ZW)\\
		  P_{\cG_{\mathrm{do}(XZ)}}(Ys_W|XZ\tilde{s}_W)&= P_{\cG_{\mathrm{do}(XZ)}}(Y|XZ\tilde{s}_W)P_{\cG_{\mathrm{do}(XZ)}}(s_W|XZ\tilde{s}_W) \\
		P_{\cG_{\mathrm{do}(XZ)}}(Ys_W|Z\tilde{s}_W)&= P_{\cG_{\mathrm{do}(XZ)}}(Y|Z\tilde{s}_W)P_{\cG_{\mathrm{do}(XZ)}}(s_W|Z\tilde{s}_W).
		 \end{split}
	 \end{align}
	 The last two equalities are equivalent to the following.
   
	 \begin{align}
	 \label{eq: red4_2}
		 \begin{split}
			P_{\cG_{\mathrm{do}(XZ)}}(Y|XZW)  &= P_{\cG_{\mathrm{do}(XZ)}}(Y|XZ\tilde{s}_W)  \\
			P_{\cG_{\mathrm{do}(XZ)}}(Y|ZW)  &= P_{\cG_{\mathrm{do}(XZ)}}(Y|Z\tilde{s}_W) 
		 \end{split}
	 \end{align}
	 Plugging these into the first expression of \cref{eq: red4}, we find $P_{\cG_{\mathrm{do}(XZ)}}(Y|XZ\tilde{s}_W)\neq  P_{\cG_{\mathrm{do}(Z)}}(Y|Z\tilde{s}_W)$, which is equivalent to $X \vDash Y | \mathrm{do}(Z),\tilde{s}_W$.
   \end{proof}

 We note that the operational motivations for Red$_1$, Red$_2$ and Red$_3$ given in \cref{sec:affects-red} also extend to the conditional case (as can be easily checked). The operational motivation for Red$_4$ introduced in this appendix is discussed below. 
 
	\textbf{Operational motivation for the definition of Red$_4$} The rationale in this case mirrors that of Red$_3$, the two conditions of \cref{def: red4} demand equality between the left hand side and right hand side respectively of the expressions for the original affects relation $X \vDash Y | \mathrm{do}(Z),W$ and the reduced affects relation $X \vDash Y | \mathrm{do}(Z),\tilde{s}_W$. This ensures that the two affects relations are equivalent and carry the same information.

	The following examples illustrate two instances of Irred$_4$ affects relations, showing that the two conditions of \cref{def: red4} can be independently violated.
	
	\begin{example}{(Violating the first Red$_4$ condition)}
		\label{eq: red_4_1}
	To illustrate that the first condition of \cref{def: red4} can be independently violated, consider the causal structure $X\longrsquigarrow Y$, $X\longrsquigarrow W$ and $E\longrsquigarrow W$ with any distribution over $X$, $P(E)$ being non-uniform and non-deterministic, $W=X\oplus W$, $Y=X$. Since $X$ is exogenous and $P_{\mathcal{G}}(Y|XW)\neq P_{\mathcal{G}}(Y|W)$ whenever $W$ is non-deterministic, we have $X\vDash Y|W$. However $Y$ is correlated with $W$ since $E$ is non-uniform and we have $(Y\not\upmodels W)_{\cG_{\mathrm{do}(X)}}$ which violates the second condition of \cref{def: red4} for the affects relation $X\vDash Y|W$ with $s_W=W$ (which is the only possibility for $s_W$ in this case). However, $(Y\upmodels W|X)_{\cG_{\mathrm{do}(X)}}$ holds due to the d-separation property and we therefore satisfy the second condition for $X\vDash Y|W$ with $s_W=W$. 
	\end{example}
	
	\begin{example}{(Violating the second Red$_4$ condition)}
		\label{eq: red_4_2}
	To illustrate that the second condition of \cref{def: red4} can be independently violated, consider again a jamming causal model with $X\longrsquigarrow Y$, $\Lambda\longrsquigarrow Y$ and $\Lambda\longrsquigarrow W$, with $\Lambda$ uniformly distributed, $Y=\Lambda\oplus X$ and $W=\Lambda$. We clearly have $X\vDash Y|W$ which will be our original affects relation, where $s_W=W$ is the only possible non-empty subset of the fourth argument. Since $X$ is exogenous, we have $(Y\upmodels W)_{\cG_{\mathrm{do}(X)}}$ and $(Y\not\upmodels W|X)_{\cG_{\mathrm{do}(X)}}$ i.e., for the affects relation $X\vDash Y|W$, the first condition of \cref{def: red4} is satisfied but the second condition is violated. 
	\end{example}

Analogously, we can generalize the definition of clustering in different arguments to conditional affects relations. The following three definitions are a direct generalization of \cref{def: clus1}, \cref{def: clus2} and \cref{def: clus3}, with an inclusion of $W$.
\begin{definition}[Clustering in the first argument]
  An affects relation $X\vDash Y|\mathrm{do}(Z),W$ is called clustered in the first argument (denoted Clus$_1$) if $|X|\geq 2$ and there exists no $s_X\subsetneq X$ such that   $s_X\vDash Y|\mathrm{do}(Z),W$. 
\end{definition}

\begin{definition}[Clustering in the second argument]
  An affects relation $X\vDash Y|\mathrm{do}(Z),W$ is called clustered in the second argument (denoted Clus$_2$) if $|Y|\geq 2$ and there exists no $s_Y\subsetneq Y$ such that   $X\vDash s_Y|\mathrm{do}(Z),W$. 
\end{definition}

\begin{definition}[Clustering in the third argument]
  An affects relation $X\vDash Y|\mathrm{do}(Z),W$ is called clustered in the third argument (denoted Clus$_3$) if $|Z|\geq 1$ and there exists no $s_Z\subsetneq Z$ such that   $X\vDash Y|\mathrm{do}(s_Z),W$. 
\end{definition}

\begin{definition}[Clustering in the fourth argument]
	\label{def: clus4}
	An affects relation $X\vDash Y|\mathrm{do}(Z),W$ is called clustered in the fourth argument (denoted Clus$_4$) if there exists no non-empty subset $s_W\subsetneq W$ such that   $X\vDash Y|\mathrm{do}(Z),s_W$. 
\end{definition}

Notice that due to the requirement that $s_W$ must be a non-empty and strict subset of $W$, the minimum cardinality of $W$ needed for a conditional affects relation to be Clus$_4$ is 2. The reason for this is because the fourth argument captures post-selection without intervention, thus allowing $W$ to act as a collider to $X$ and $Y$. This means that in any faithful causal model where $W$ is a collider between $X$ and $Y$ and $X$ is not a cause of $Y$, we would generically have  $X\vDash Y|W$ even though $X \not\vDash Y$ (in contrast, we have seen in \cref{lemma: clus_finetune} that the analogous situation $X\vDash Y|\mathrm{do}(Z)$ and $X\not\vDash Y$ for the third argument is only possible in fine-tuned models). With the above definitions, we can now generalize \cref{theorem: clus_irred} (relating clustering and irreducibility) to conditional affects relations and all four arguments, as shown below.

\begin{restatable}{theorem}{clusIrredFour}[Clustering implies irreducibility]
\label{theorem: clus_irred4}
    For any affects relation  $X\vDash Y|\mathrm{do}(Z),W$, Clus$_i$ $\Rightarrow$ Irred$_i$ for all $i\in\{1,2,3,4\}$.
\end{restatable}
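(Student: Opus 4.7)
The plan is to prove the contrapositive: for each $i \in \{1,2,3,4\}$, show that if $X \vDash Y | \mathrm{do}(Z), W$ is Red$_i$ then it cannot be Clus$_i$. The argument mirrors the proof of the unconditional version \cref{theorem: clus_irred}, now using the four generalized reducibility lemmas stated just above this theorem in the present appendix, with $W$ merely carried along as a spectator in the first three cases and handled by the new Red$_4$ lemma in the fourth.

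The key observation is that each reducibility lemma asserts exactly the existence of a strict subset of the $i$-th argument on which a reduced affects relation holds: Red$_1$ yields $\tilde{s}_X \subsetneq X$ with $\tilde{s}_X \vDash Y | \mathrm{do}(Z), W$; Red$_2$ yields $\tilde{s}_Y \subsetneq Y$ with $X \vDash \tilde{s}_Y | \mathrm{do}(Z), W$; Red$_3$ yields $\tilde{s}_Z \subsetneq Z$ with $X \vDash Y | \mathrm{do}(\tilde{s}_Z), W$; and Red$_4$ yields $\tilde{s}_W \subsetneq W$ with $X \vDash Y | \mathrm{do}(Z), \tilde{s}_W$. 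In every case, the existence of such a reduced affects relation directly contradicts the definition of Clus$_i$, which forbids precisely this situation. Hence Red$_i \Rightarrow \neg\,\text{Clus}_i$, or equivalently Clus$_i \Rightarrow$ Irred$_i$.

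The cardinality side conditions in the clustering definitions ($|X|\geq 2$ for Clus$_1$, $|Y|\geq 2$ for Clus$_2$, $|Z|\geq 1$ for Clus$_3$, and implicitly $|W|\geq 2$ for Clus$_4$) are automatically compatible with the reducibility lemmas, since Red$_i$ already presupposes the existence of a non-empty strict subset of the $i$-th argument. I do not expect any substantive obstacle: the theorem essentially repackages the four reducibility lemmas into a single contrapositive statement, and the only real work — the reducibility lemmas themselves, particularly \cref{lemma: red2}, \cref{lemma: red3}, and the newly introduced Red$_4$ lemma (proven in \cref{sec:proof_app_conditional}) — is already in place.
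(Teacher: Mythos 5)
Your overall strategy --- derive from Red$_i$ a reduced affects relation on a strict subset of the $i$-th argument and observe that Clus$_i$ forbids exactly that --- is the same as the paper's; for $i\in\{1,2,3\}$ the paper simply defers to the unconditional proof of \cref{theorem: clus_irred}, whose content is precisely the probability-level computation packaged inside the reducibility lemmas you invoke, so those cases are fine.

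The one place where your argument does not literally close is $i=4$. In \cref{def: red4} the subset $s_W$ is only required to satisfy $\emptyset \neq s_W \subseteq W$, so $s_W = W$ is allowed, and the reduced relation produced by the Red$_4$ lemma may then be $X\vDash Y|\mathrm{do}(Z)$ with \emph{empty} fourth argument. Unlike Clus$_1$--Clus$_3$, the definition of Clus$_4$ (\cref{def: clus4}) deliberately quantifies only over \emph{non-empty} $s_W\subsetneq W$ (this is exactly why the paper notes that $|W|\geq 2$ is the minimal cardinality for Clus$_4$), so the relation $X\vDash Y|\mathrm{do}(Z)$ is \emph{not} excluded by Clus$_4$ and your ``direct contradiction with the definition'' step fails in that case. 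The gap is fixable: if the Red$_4$ conditions hold for $s_W=W$, i.e.\ $(Y\upmodels W|XZ)_{\cG_{\mathrm{do}(XZ)}}$ and $(Y\upmodels W|Z)_{\cG_{\mathrm{do}(Z)}}$, then by decomposition and weak union of conditional independence they hold for every partition $W=s_W'\tilde{s}_W'$ with $s_W'$ non-empty and strict, so Red$_4$ also holds with a non-empty $\tilde{s}_W'$, and the Red$_4$ lemma then delivers a reduced relation that Clus$_4$ does forbid. To be fair, the paper's own proof writes its Clus$_4$ consequence as holding ``for all $\tilde{s}_W\subsetneq W$,'' silently including $\tilde{s}_W=\emptyset$, so it elides the same point; but since your proof rests entirely on the ``reduced relation contradicts Clus$_i$'' step, you should make this patch explicit.
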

\begin{proof}
    For $i\in\{1,2,3\}$, the proofs are entirely analogous to that of \cref{theorem: clus_irred} and will not be repeated. We carry out the proof for $i=4$ below.

	Again, start by assuming that $X\vDash Y|\mathrm{do}(Z),W$ holds and satisfies Clus$_4$ and Red$_4$. From \cref{def: clus4} and \cref{def: red4} this implies the following conditions (see also \cref{eq: red4_2}).
	\begin{align}
		\begin{split}
	P_{\cG_{\mathrm{do}(XZ)}}(Y|XZ\tilde{s}_W)&=  P_{\cG_{\mathrm{do}(Z)}}(Y|Z\tilde{s}_W), \quad \forall \tilde{s}_W\subsetneq W,\\
	P_{\cG_{\mathrm{do}(XZ)}}(Y|XZW)  &= P_{\cG_{\mathrm{do}(XZ)}}(Y|XZ\tilde{s}_W)\quad \exists \tilde{s}_W\subseteq W, \\
		P_{\cG_{\mathrm{do}(XZ)}}(Y|ZW)  &= P_{\cG_{\mathrm{do}(XZ)}}(Y|Z\tilde{s}_W)\quad \exists \tilde{s}_W\subseteq W.
		\end{split}
	\end{align}
	These imply that $ P_{\cG_{\mathrm{do}(XZ)}}(Y|XZW)= P_{\cG_{\mathrm{do}(Z)}}(Y|ZW)$, which is equivalent to $X\not\vDash Y|\mathrm{do}(Z),W$. This contradicts our initial assumption that $X\vDash Y|\mathrm{do}(Z),W$ and establishes the result.
\end{proof}

\begin{example}{(Irred$_4$ $\not\Rightarrow$ Clus$_4$)}
    Consider the jamming-type causal model over binary variables, with $X\longrsquigarrow Y$, $\Lambda \longrsquigarrow Y$ and $\Lambda \longrsquigarrow W$, with $P(\Lambda)$ being neither uniformly distributed nor a deterministic distribution, $W=\Lambda$ and $Y=X\oplus \Lambda$. Here we have $X\vDash Y$ since $\Lambda$ is non-uniform as well as $X\vDash Y|W$ since $\Lambda$ is non-deterministic. Therefore $X\vDash Y|W$ does not satisfy the Clus$_4$ property. However since $Y\not\upmodels W|X$ (and $X$ is exogenous to the same holds under interventions on $X$), $X\vDash Y|W$ is not reducible in the fourth argument and is an Irred$_4$ affects relation. 
\end{example}

\Cref{lemma: clus_finetune} in the main text proves that clustering in the first, second and third argument of an unconditional affects relation $X\vDash Y|\mathrm{do}(Z)$ implies fine-tuning of the underlying causal model. Here we extend the result to conditional affects relations and clustering in the fourth argument, introduced in this section. Due to the restriction to the subset $X(W)$ of $X$ in \cref{lemma: dsep_aff_cond}, which is only relevant when considering non-trivial conditional affects relations with $W\neq \emptyset$, our result for the conditional case is not fully general and applies to cases where $X(W)=X$, or equivalently where $X$ is not a cause of $W$.\footnote{$X$ is a cause of $W$ is equivalent to saying that there exists an element $e_X$ that has a directed path to some element $e_W$ of $W$, therefore $X$ is not a cause of $W$ is equivalent to $X\cap \mathrm{anc}(W)=\emptyset$, which is equivalent to $X(W)=X$ for $X(W)$ as defined in \cref{lemma: dsep_aff_cond}.} We conjecture that the theorem also holds more generally, without this restriction and leave a proof of this to future work, as it is not directly relevant for the main results of this paper.

\begin{restatable}{theorem}{clusFineTuningCond}[{Clustering implies fine-tuning (conditional version)}]
    \label{lemma: clus_finetune_cond}
    Any affects relation $X\vDash Y|\mathrm{do}(Z),W$ that satisfies Clus$_i$ for $i\in \{1,2,3,4\}$ and generated in a causal graph $\cG$ where $X$ is not a cause of $W$, necessarily arises from a fine-tuned causal model on $\cG$.  
\end{restatable}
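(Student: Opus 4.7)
The plan is to argue by contraposition: assume the causal model on $\cG$ is faithful in the sense of \cref{def:faithful} and, for each $i\in\{1,2,3,4\}$, derive a contradiction with the Clus$_i$ hypothesis. The common graphical ingredient is that, since $X$ is not a cause of $W$, one has $X(W)=X$ in the notation of \cref{lemma: dsep_aff_cond}; the contrapositive of that lemma applied to $X\vDash Y|\mathrm{do}(Z),W$ then yields the d-connection $(X\not\perp^d Y|ZW)_{\cG_{\mathrm{do}(XZ)}}$, which each sub-argument ultimately contradicts (possibly after passing to another post-intervention graph).

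For $i=1,2$ the proof of \cref{lemma: clus_finetune} carries over with $W$ simply treated as extra conditioning. I would apply \cref{lemma: aff_corr_cond} to the defining non-affects clauses of Clus$_i$ at singletons; faithfulness then promotes each resulting conditional independence to a d-separation, namely $(e_X\perp^d Y|ZW)_{\cG_{\mathrm{do}(e_X Z)}}$ for every $e_X\in X$ (for $i=1$) or $(X\perp^d e_Y|ZW)_{\cG_{\mathrm{do}(XZ)}}$ for every $e_Y\in Y$ (for $i=2$). In the first case the d-separations transfer to $\cG_{\mathrm{do}(XZ)}$ because further arrow removal only destroys d-connections; taking unions over the singletons in both cases delivers $(X\perp^d Y|ZW)_{\cG_{\mathrm{do}(XZ)}}$, contradicting the common ingredient. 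For $i=3$, the definition of Clus$_3$ permits $s_Z=\emptyset$, producing $X\not\vDash Y|W$ and hence, via \cref{lemma: aff_corr_cond} and faithfulness, $(X\perp^d Y|W)_{\cG_{\mathrm{do}(X)}}$; combining \cref{thm:decondition}.1 with \cref{thm:affects-to-cause-first} and the no-cause hypothesis forces a directed path $e_X\to\cdots\to e_Y$ in $\cG$ which, after trimming at the last $X$-element on it, avoids $X\setminus\{e_X\}$ and automatically misses $W$, hence survives in $\cG_{\mathrm{do}(X)}$ and is open given $W$, yielding the contradicting d-connection.

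The main obstacle is $i=4$, because the empty-subset move of $i=3$ is excluded from the definition of Clus$_4$. The plan here is to work semi-graphoidically inside $\cG_{\mathrm{do}(XZ)}$: for each $e_W\in W$, Clus$_4$ at $s_W=\{e_W\}$ gives, via \cref{lemma: aff_corr_cond}, the independence $(X\upmodels Y|Ze_W)_{\cG_{\mathrm{do}(XZ)}}$, while parentlessness of $X$ in $\cG_{\mathrm{do}(XZ)}$ together with the no-cause hypothesis (which makes $e_W$ a non-descendant of $X$ there) gives $(X\upmodels e_W|Z)_{\cG_{\mathrm{do}(XZ)}}$. Semi-graphoid contraction followed by decomposition then delivers $(X\upmodels Y|Z)_{\cG_{\mathrm{do}(XZ)}}$, which faithfulness upgrades to $(X\perp^d Y|Z)_{\cG_{\mathrm{do}(XZ)}}$. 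Extracting, as in the $i=3$ step, a directed $X\to Y$ path in $\cG$, this d-separation is contradicted whenever some such path avoids $Z$. The delicate remaining subcase, in which every directed path from $X$ to $Y$ traverses $Z$, is the main technical step; the plan there is to use parentlessness of $X$ in $\cG_{\mathrm{do}(XZ)}$ together with \cref{def:post_intervention} to establish $P_{\cG_{\mathrm{do}(XZ)}}(Y|X,Z,W)=P_{\cG_{\mathrm{do}(Z)}}(Y|Z,W)$, which directly contradicts the original affects relation $X\vDash Y|\mathrm{do}(Z),W$ and closes the argument.
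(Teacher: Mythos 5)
Your overall strategy --- turning each Clus$_i$ hypothesis into conditional independences via \cref{lemma: aff_corr_cond}, upgrading them to d-separations by faithfulness, and contradicting the d-connection $(X\not\perp^d Y|ZW)_{\cG_{\mathrm{do}(XZ)}}$ supplied by \cref{lemma: dsep_aff_cond} together with the no-cause hypothesis --- is precisely the contrapositive of the paper's argument, and it goes through for $i=1,2$ exactly as you describe. Your $i=3$ step is also sound and in fact a little leaner than the paper's: you use only the $s_Z=\emptyset$ instance of Clus$_3$ and exhibit the contradicting d-connection by trimming a directed path, rather than arguing that parentless conditioning nodes can be dropped from a d-connection. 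For $i=4$, your semi-graphoid derivation of $(X\upmodels Y|Z)_{\cG_{\mathrm{do}(XZ)}}$ from $(X\upmodels Y|Ze_W)$ and $(X\upmodels e_W|Z)$ is correct (the latter independence does follow from $X$ and $Z$ being parentless in $\cG_{\mathrm{do}(XZ)}$ and $e_W$ not descending from $X$), and your Case A correctly disposes of the situation where some directed $X$-to-$Y$ path avoids $Z$.

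The genuine gap is the remaining subcase of $i=4$. \cref{def:post_intervention} equates intervening with conditioning only for nodes exogenous \emph{in $\cG$}; your $X$ is parentless merely in $\cG_{\mathrm{do}(XZ)}$, which is true of any intervened set and buys nothing, so the proposed route to $P_{\cG_{\mathrm{do}(XZ)}}(Y|XZW)=P_{\cG_{\mathrm{do}(Z)}}(Y|ZW)$ does not go through as stated. The correct closing move is graphical: in $\cG_{\mathrm{do}(XZ)}$ every path leaving the parentless set $X$ starts with an outgoing edge, so any non-directed path contains a collider $c$ that is a descendant of $X$; such a $c$ can be activated by the conditioning set $ZW$ only if $c$ or one of its descendants lies in $W$ (impossible, since $X$ is not a cause of $W$) or in $Z$ (impossible, since $Z$ is parentless in $\cG_{\mathrm{do}(XZ)}$). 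Hence every $ZW$-open path from $X$ to $Y$ is a directed path avoiding $Z$, so in your subcase one obtains $(X\perp^d Y|ZW)_{\cG_{\mathrm{do}(XZ)}}$, and then \cref{lemma: dsep_aff_cond} (not \cref{def:post_intervention}) yields $X\not\vDash Y|\mathrm{do}(Z),W$, contradicting the standing affects relation --- equivalently, this shows your Case B is vacuous and the case split was never needed. With that repair your proof closes; for comparison, the paper's own $i=4$ argument instead runs a per-$e_W$ case distinction on whether the corresponding d-separation holds, and implicitly relies on the same graphical fact to pass from the singleton statements $(X\perp^d Y|Ze_W)$ to the full-set d-separation required by \cref{lemma: dsep_aff_cond}.
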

\begin{proof}
    The proofs for $i\in\{1,2,3\}$ are entirely analogous as those of \cref{lemma: clus_finetune} whenever we are given that $X$ is not a cause of $W$. This is because the main step of the proof is to infer from $X\vDash Y \given \mathrm{do}(Z)$ that $(X\not\perp^d Y|Z)_{\cG_{\mathrm{do}(XZ)}}$, and the analogous inference holds for conditional affects relations when $X$ is not a cause of $W$, due to \cref{lemma: dsep_aff_cond} i.e., $X\vDash Y \given \mathrm{do}(Z),W$ implies $(X\not\perp^d Y|ZW)_{\cG_{\mathrm{do}(XZ)}}$ whenever $X$ is not a cause of $W$. We therefore do not repeat the proof.

    The proof for $i=4$ is also analogous, but we repeat it for completeness. Suppose that $X\vDash Y \given \mathrm{do}(Z),W$ holds and is a Clus$_4$ affects relation. This implies in particular that for all $e_W\in W$, $X\not\vDash Y|\mathrm{do}(Z),e_W$. Writing this out, we have
    \begin{align}
    \begin{split}
    \label{eq: clus4ftproof1}
        P_{\cG_{\mathrm{do}(XZ)}}(Y|XZW)&\neq P_{\cG_{\mathrm{do}(Z)}}(Y|ZW)\\
        P_{\cG_{\mathrm{do}(XZ)}}(Y|XZe_W)&= P_{\cG_{\mathrm{do}(Z)}}(Y|Ze_W), \qquad \forall e_W\in W.
    \end{split}
    \end{align}
    Further, using \cref{lemma: aff_corr_cond}, the non-affects relations $X\not\vDash Y|\mathrm{do}(Z),e_W$ imply the conditional independences $(X \upmodels Y|Z,e_W)_{\cG_{\mathrm{do}(XZ)}}$. Now consider the corresponding d-separation for each $e_W\in W$, we can either have $(X \perp^d Y|Z,e_W)_{\cG_{\mathrm{do}(XZ)}}$ or $(X \not\perp^d Y|Z,e_W)_{\cG_{\mathrm{do}(XZ)}}$. In the latter case we could have a d-connection with a corresponding conditional independence, which would make the model fine-tuned. So we consider the case where 
    $(X \perp^d Y|Z,e_W)_{\cG_{\mathrm{do}(XZ)}}$ holds for all $e_W\in W$ (for all other cases the above argument establishes the result about fine-tuning). However, by assumption that $X$ is not a cause of $W$, we have $X(W)=W$ and we know from \cref{lemma: dsep_aff_cond} that $(X \perp^d Y|Z,e_W)_{\cG_{\mathrm{do}(XZ)}}$ implies $X\not\vDash Y|\{\mathrm{do}(Z),W\}$, which contradicts our initial assumption and established the result.
\end{proof}

Finally, we provide an additional result relating Clus$_2$ and Clus$_4$ (clustering in observational arguments) which is analogous to \cref{thm:clusThreeOne} which links Clus$_1$ and Clus$_3$ (clustering in the interventional arguments). 
However, due to the minimum cardinality for affects relations to be Clus$_4$, it only holds in a more restricted case.

\begin{restatable}{corollary}{clusTwoFour}
    For any unconditional affects relation $X\vDash Y|\mathrm{do}(Z)$ which is Clus$_2$ and has $\abs{Y} \geq 3$,
    $X\vDash s_Y|\mathrm{do}(Z),\tilde{s}_Y$ is Clus$_4$ for each partition $s_Y \tilde{s}_Y = Y$ with $\abs{\tilde{s}_Y} \geq 2$.
\end{restatable}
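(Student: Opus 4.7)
The plan is to prove this in two stages: first establish that the conditional affects relation $X\vDash s_Y\given\doo(Z),\tilde{s}_Y$ actually holds, then argue by contradiction that it must be Clus$_4$. Both stages rely on the decondition lemma (\cref{thm:decondition}), which bridges conditional and unconditional affects relations.

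For the first stage, I would invoke part 3 of \cref{thm:decondition} applied to the sets $s_Y$ (as the second argument) and $\tilde{s}_Y$ (as the conditioning set): since $s_Y\tilde{s}_Y=Y$, the hypothesis $X\vDash Y\given\doo(Z)$ is equivalent to $X\vDash s_Y\tilde{s}_Y\given\doo(Z)$, which implies $X\vDash s_Y\given\doo(Z),\tilde{s}_Y$ or $X\vDash\tilde{s}_Y\given\doo(Z)$. The Clus$_2$ assumption on the original relation forbids the second disjunct, since $\tilde{s}_Y\subsetneq Y$ is a proper non-empty subset (partitions here have non-empty blocks, and $s_Y$ is non-empty because $|\tilde{s}_Y|\geq 2<|Y|$ but we also need $s_Y$ itself non-empty, which I will take as part of the partition convention or verify directly from $|Y|\geq 3$ together with $|\tilde{s}_Y|\geq 2$ being allowed to equal $|Y|-1$; in the case $\tilde{s}_Y=Y$ the statement is vacuous since $s_Y=\emptyset$ makes the relation undefined, so the meaningful case is $|s_Y|\geq 1$). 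Hence $X\vDash s_Y\given\doo(Z),\tilde{s}_Y$ holds.

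For the Clus$_4$ stage, suppose for contradiction that there exists a non-empty strict subset $s_W\subsetneq\tilde{s}_Y$ with $X\vDash s_Y\given\doo(Z),s_W$. By part 1 of \cref{thm:decondition}, this yields the unconditional relation $X\vDash s_Y s_W\given\doo(Z)$. Since $s_W\subsetneq\tilde{s}_Y$ we have $s_Y s_W\subsetneq s_Y\tilde{s}_Y=Y$, so $s_Y s_W$ is a non-empty proper subset of $Y$. This directly contradicts the Clus$_2$ property of the original relation $X\vDash Y\given\doo(Z)$.

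The argument requires $|W|=|\tilde{s}_Y|\geq 2$ so that Clus$_4$ is non-trivial (i.e., so that a non-empty $s_W\subsetneq\tilde{s}_Y$ can exist to be ruled out), which is precisely the cardinality assumption in the statement; the hypothesis $|Y|\geq 3$ then ensures $|s_Y|\geq 1$ so that the resulting conditional affects relation is well-defined. I expect no serious obstacle: the main care is in handling the degenerate boundary cases of the partition (ensuring $s_Y$ is non-empty and that $s_W$ is a proper non-empty subset of $\tilde{s}_Y$) and in invoking the appropriate parts of \cref{thm:decondition} cleanly. The argument mirrors \cref{thm:clusThreeOne} for Clus$_1$/Clus$_3$, but with the observational decondition identity playing the role previously played by the interventional structure.
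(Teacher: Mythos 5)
Your proposal is correct and follows essentially the same route as the paper's own proof: \cref{thm:decondition}.3 together with Clus$_2$ establishes $X\vDash s_Y\given\doo(Z),\tilde{s}_Y$, and the deconditioning of any hypothetical $X\vDash s_Y\given\doo(Z),s_W$ with $s_W\subsetneq\tilde{s}_Y$ produces a proper-subset relation $X\vDash s_Ys_W\given\doo(Z)$ contradicting Clus$_2$. Your extra care with the boundary cases of the partition (non-emptiness of $s_Y$ and of $s_W$) is a reasonable addition but does not change the argument.
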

\begin{proof}
	Any $X\vDash Y|\mathrm{do}(Z)$ which is Clus$_2$, by \cref{thm:decondition}.3 is equivalent to $X\vDash s_Y|\mathrm{do}(Z),\tilde{s}_Y$ for each partition $s_Y \tilde{s}_Y = Y$.

	If for the case of $\tilde{s}_Y \geq 2$ -- which is only possible for $Y \geq 2$ -- this affects relation were not Clus$_4$, there would exist $X \affects s_Y|\mathrm{do}(Z),\tilde{s}'_Y$ with $\tilde{s}'_Y \subsetneq \tilde{s}_Y$, implying $X \affects s_Y\tilde{s}'_Y|\mathrm{do}(Z)$ by \cref{thm:decondition}.1. This is in contradiction to the original affects relation being Clus$_2$ and proves the claim.
\end{proof}

\medskip
The above result is only established for unconditional affects relations, and the full generalization of these statements to conditional relations remains open. 

\subsection{Compatibility for conditional affects relations}

We provide the generalization of \cref{def:compat} to conditional affects relations, as given in \cite{VVC}, yet simplified with regard to the embedding.

\begin{corollary}
    \label{def:compat-cond}
    Let $\SC$ be a set of ORVs from a set of RVs $S$ and a poset $\TT$ with an embedding $\mathcal{E}$. Then a set of affects relations $\mathscr{A}$ is said to be \emph{compatible} with $\mathcal{E}$ (or satisfies \textbf{compat}) if the following condition holds:
    \begin{itemize}
        \item Let $X, Y \subset S$ be disjoint non-empty sets of RVs, $Z, W \subset S$ further disjoint sets of RVs, potentially empty.
        If ($X \affects Y \given \doo(Z), W$) $\in \mathscr{A}$ and is Irred$_1$, then
        $\Fut_s (\YY\WW\ZZ) \subseteq \Fut_s (\XX) \, .$
    \end{itemize}
\end{corollary}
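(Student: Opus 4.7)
The plan is to derive this conditional compatibility condition from the unconditional one (\cref{def:compat}) by means of the decondition lemma (\cref{thm:decondition}). Since $\mathscr{A}$ is the set of affects relations arising from the underlying causal model, it is closed under the logical implications of \cref{thm:decondition}: any affects relation implied by members of $\mathscr{A}$ also belongs to $\mathscr{A}$.

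First, I would take an arbitrary conditional affects relation $(X \affects Y \given \doo(Z), W) \in \mathscr{A}$ that is Irred$_1$. Applying part 1 of \cref{thm:decondition} yields $X \affects YW$, and applying part 2 yields that this unconditional $0^{th}$-order relation is itself Irred$_1$. Hence $(X \affects YW) \in \mathscr{A}$ with Irred$_1$.

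Next, since $\mathcal{E}$ satisfies \hyperref[def:compat]{\textbf{compat}} for the unconditional relations in $\mathscr{A}$, applying \cref{def:compat} to $X \affects YW$ (with empty third argument) yields
\begin{equation*}
    \Fut_s(\YY\WW) \subseteq \Fut_s(\XX).
\end{equation*}
Finally, unfolding $\Fut_s$ as an intersection of individual futures (\cref{def:supp-future}) gives
\begin{equation*}
    \Fut_s(\YY\WW\ZZ) = \Fut_s(\YY\WW) \cap \Fut_s(\ZZ) \subseteq \Fut_s(\YY\WW) \subseteq \Fut_s(\XX),
\end{equation*}
which is the claimed inclusion.

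There is no real obstacle here beyond bookkeeping: the only subtle point is justifying that the derived unconditional relation $X \affects YW$ is treated as part of the compatibility hypothesis, which is precisely the content of part 1 of \cref{thm:decondition} combined with the standing assumption that $\mathscr{A}$ collects all affects relations of the model. The upgrade from $\Fut_s(\YY\WW) \subseteq \Fut_s(\XX)$ to $\Fut_s(\YY\WW\ZZ) \subseteq \Fut_s(\XX)$ is a tautological consequence of intersecting with $\Fut_s(\ZZ)$, which is why the statement is phrased as a corollary rather than a freshly posited definition.
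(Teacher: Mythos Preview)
Your proposal is correct and follows essentially the same route as the paper, which simply states that the generalization follows from the unconditional \textbf{compat} (\cref{def:compat}) by applying \cref{thm:decondition}. Your write-up just makes the two ingredients explicit: parts 1 and 2 of \cref{thm:decondition} yield an Irred$_1$ unconditional relation with second argument $YW$, and then \cref{def:compat} plus the trivial inclusion $\Fut_s(\YY\WW\ZZ)\subseteq\Fut_s(\YY\WW)$ give the claim.
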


\noindent This generalization follows from the unconditional definition of compatibility by applying \cref{thm:decondition}.

\section{Relaxing compatibility to atomic affects relations}
\label{sec:no-irreducible}

As we have seen, the causal structure is a graph whose nodes are associated with individual random variables or physical systems (of some theory) while the signalling structure, which is captured by affects relations in our framework, is defined on sets of random variables. In other frameworks, which consider notions of signalling defined for quantum systems in a circuit, it has been shown that in unitary quantum circuits, signalling relations between individual in/output systems are sufficient to fully characterize all the signalling relations.\footnote{This is not the case in non-unitary circuits. There, we can have signalling between sets of in/output systems without signalling between individual elements of the set (analogous to clustered affects relations we have encountered in \cref{sec:affects-clus}).} This property of unitary circuits is referred to as atomicity \cite{Ormrod2023}. In d-separation causal models, whenever there is an inequivalence between causation and signalling, this indicates fine-tuning \cite{VVC}. The assumption of a unitary quantum circuit is closely related to the assumption of a faithful (not fine-tuned) causal model in the quantum case, as discussed in \cite{arxiv.1906.10726}.

This motivates us to consider a similar property for affects relations (which capture signalling), and call an affects relation atomic if it originates from a single RV: $X \affects Y \given \doo(Z)$ with $\abs{X} = 1$.\footnote{In contrast to the atomicity property on unitaries, this notion still allows the second argument of the affects relation to refer to sets of RVs. As this is not relevant for our main results, we leave an exploration of atomicity in different arguments to future work, focusing only on the first argument here.}
This is particularly appealing as all such affects relations are Irred$_1$ by construction.

Then we ask, when does imposing compatibility only for atomic affects relations with a given space-time imply general compatibility for all irreducible affects relations. We find that this is not true in general but it is true when we either restrict to faithful causal models or to conical space-times, revealing yet another correspondence between these distinct causality concepts. 

\begin{definition}[\textbf{compat-atomic}]
	\label{def:compat-atomic}
    Let $\SC$ be a set of ORVs from a set of RVs $S$ and a poset $\TT$ with an embedding $\mathcal{E}$. Then a set of unconditional affects relations $\mathscr{A}$ satisfies \textbf{compat-atomic} with respect to  $\mathcal{E}$, if it satisfies \textbf{compat} restricted to such affects relations whose first argument $X \subset S$ satisfies $\abs{X} = 1$.
\end{definition}

While \hyperref[def:compat]{\textbf{compat}} implies \textbf{compat-atomic}, both notions are not equivalent for general causal models and space-times as illustrated by the following example.

\begin{figure}[t]
    \centering
    \begin{tikzpicture}[dot/.style={circle,inner sep=2pt,fill,name=#1}]
        \pgfsetblendmode{multiply}
        \fill[fill=blue!20] (2.2,2.2) -- (3.5,3.5) -- (0.7,3.5) -- (2.2,2.2);
        \fill[fill=red!20] (3.7,1.8) -- (5.4,3.5) -- (2.0,3.5) -- (3.7,1.8);
        
        \node [dot=Z,label=$\ZZ$] at (1,1) {};
        \node [dot=X,label=$\XX$] at (2.2,2.2) {};
        \node [dot=Y,label=$\YY$] at (3.7,1.8) {};
        \node [dot=a,style={inner sep=1pt},label=$a$] at (2.75,2.75) {};

        \node (left) at (0.5,0.5) {};
        \node (right) at (5.0,0.5) {};
        \draw (left) -- ++(3,3);
        \draw (right) -- ++(-3,3);

        \begin{scope}[>={Stealth[black]},
                      every node/.style={fill=white,circle},
                      every edge/.style=vvarrow]
            \path [->] (X) edge (Y);
            \path [->] (Z) edge (Y);
        \end{scope}
    \end{tikzpicture}
	\caption{
        Combined representation of the causal structure of \cref{ex:atomic} and its space-time embedding into 1+1-Minkowski space-time satisfying \hyperref[def:compat-atomic]{\textbf{compat-atomic}} and accordingly, $\Fut_s(\YY\ZZ) = \Fut_s (\YY\XX)$. Even though $XZ \affects Y$, we have $\Fut(\YY) \not\subseteq \Fut_s (\XX\ZZ)$ and hence violate \hyperref[def:compat]{\textbf{compat}}.}
	\label{fig:atomic}
\end{figure}
\begin{example}{(\textbf{compat-atomic} $\not\Rightarrow$ \textbf{compat})}
    \label{ex:atomic}
	Consider a one-time pad of binary variables, where $X\longrsquigarrow Y$ and $Z\longrsquigarrow Y$, with a model satisfying $Y = X \oplus Z$. Then for any distribution on the parentless nodes $X$ and $Z$, the affects relations with a singleton in their first argument ($X\affects Y \given \doo(Z)$ and $Z\affects Y \given \doo(X)$) imply $\Fut_s(\YY\ZZ) \subseteq \Fut (\XX)$ as well as $\Fut_s(\YY\XX) \subseteq \Fut (\ZZ)$ as compatibility conditions.
	Therefore \textbf{compat-atomic} yields $\Fut_s (\YY\ZZ) = \Fut_s (\YY\XX)$.
	For 1+1-Minkowski space-time, this is satisfied in an embedding where $\ZZ \prec \XX \prec a \succ \YY$ for $a \in \TT$.
	By contrast, due to $XZ \affects Y$ being Irred$_1$, \textbf{compat} additionally demands $\Fut (\YY) \subseteq \Fut_s (\XX\ZZ)$, which is not satisfied in this embedding. Note however that the stronger condition imposed by \textbf{compat} is necessary for having no superluminal signalling: In the given embedding, an agent with access to $Y$ can learn about the parity of $X$ and $Z$ outside the future light cone of any of these variables.
\end{example}

We now show that analogous to the main result of \cref{thm:irr-compat} / \cref{thm:irr-compat-embedding}, under almost identical restrictions on either the space-time or the causal model / affects relations, the desired implication does hold, hinting at a deeper correspondence between conical space-times and causal models without clustered affects relations.
For this, recall that according to \cref{def:reducible}, an affects relation $X \affects Y \given \doo(Z)$ to be Irred$_1$ is actually just a short-hand for a whole family of affects relations to hold:
\begin{equation}
	s_X \affects Y \given \doo(Z \tilde{s}_X) \quad \forall s_X \subseteq X \, ,
\end{equation}
where $\tilde{s}_X := X \setminus s_X$.
Therefore, we have a rather analog situation to \cref{thm:compat-atomic-to-strong} concerning irreducibility in the third argument, and can follow the same idea for the proof.

\begin{lemma}
	\label{thm:compat-atomic-to-irr}
	Let $\mathscr{A}$ be a set of unconditional affects relations.
	Then for any non-degenerate conical embedding $\mathcal{E}$ (cf.\ \cref{def:conical-embedding}) into space-time $\TT$,
	\begin{equation}
		\hyperref[def:compat-atomic]{\textbf{compat-atomic}} \implies \hyperref[def:compat]{\textbf{compat}} .
	\end{equation}
\end{lemma}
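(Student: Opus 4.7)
The plan is to verify \hyperref[def:compat]{\textbf{compat}} for each Irred$_1$ affects relation $X \affects Y \given \doo(Z)$ in $\mathscr{A}$ by bootstrapping the atomic compatibility conditions through conicality together with non-degeneracy. First I would extract atomic consequences: by \cref{def:reducible}, Irred$_1$ requires that for every non-empty $s_X \subsetneq X$ the reduced relation $s_X \affects Y \given \doo(Z(X \setminus s_X))$ holds, so specialising to singletons yields atomic affects relations $e_X \affects Y \given \doo(Z(X \setminus e_X))$ for each $e_X \in X$. Applying \hyperref[def:compat-atomic]{\textbf{compat-atomic}} to each then gives, for every $e_\XX \in \XX$,
\begin{equation}
	\Fut_s(\YY\ZZ(\XX \setminus e_\XX)) \subseteq \Fut(e_\XX), \quad \text{equivalently} \quad \Fut_s(\YY\ZZ(\XX \setminus e_\XX)) = \Fut_s(\XX\YY\ZZ).
\end{equation}
The remaining task is to promote this ``remove one'' identity into the ``remove all'' statement $\Fut_s(\YY\ZZ) = \Fut_s(\XX\YY\ZZ)$, which directly implies $\Fut_s(\YY\ZZ) \subseteq \Fut_s(\XX)$.

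I would handle this promotion by induction on $|T|$, with claim $\Fut_s(\YY\ZZ(\XX \setminus T)) = \Fut_s(\XX\YY\ZZ)$ for each $T \subseteq \XX$; the base case $|T|=1$ is the display above. For the inductive step, take $|T| = k+1 \geq 2$, pick distinct elements $e, e' \in T$, and set $\mathcal{B} := \YY\ZZ(\XX \setminus T)$. Applying the hypothesis to the size-$k$ sets $T \setminus \{e\}$ and $T \setminus \{e'\}$ gives $\Fut_s(\mathcal{B} \cup \{e\}) = \Fut_s(\mathcal{B} \cup \{e'\}) = \Fut_s(\XX\YY\ZZ)$. Invoking the location-symmetry reformulation of conicality (\cref{def:conical-orv}) with overlap $\mathcal{B}$ forces either $\Fut_s(\mathcal{B}) \subseteq \Fut(e) \cap \Fut(e')$ or $O(e) = O(e')$; non-degeneracy of $\mathcal{E}$ excludes the latter since $e \neq e'$. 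Ranging over all distinct pairs in $T$ yields $\Fut_s(\mathcal{B}) \subseteq \Fut_s(T)$; combined with $\mathcal{B} \cup T = \XX\YY\ZZ$ this gives $\Fut_s(\XX\YY\ZZ) = \Fut_s(\mathcal{B}) \cap \Fut_s(T) = \Fut_s(\mathcal{B})$, completing the inductive step.

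Setting $T = \XX$ then delivers $\Fut_s(\YY\ZZ) = \Fut_s(\XX\YY\ZZ) \subseteq \Fut_s(\XX)$, which is exactly \hyperref[def:compat]{\textbf{compat}} for the chosen relation; since the argument is uniform in the choice of Irred$_1$ affects relation, the full claim follows. The principal obstacle is the inductive step: \cref{ex:atomic} demonstrates that the implication genuinely fails outside conical space-times, so the location-symmetry reformulation of conicality is indispensable, and non-degeneracy is equally essential for excluding the coincident-location branch of location symmetry that would otherwise block the induction at every level $k \geq 1$.
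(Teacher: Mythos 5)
Your proposal is correct, and its first half coincides with the paper's proof: both extract the atomic relations $e_X \affects Y \given \doo(Z (X \setminus e_X))$ from Irred$_1$, apply \hyperref[def:compat-atomic]{\textbf{compat-atomic}} to obtain $\Fut_s(\YY\ZZ(\XX\setminus e_\XX)) \subseteq \Fut(e_\XX)$ for every $e_\XX$, and then lean on conicality plus non-degeneracy to promote this to $\Fut_s(\YY\ZZ) \subseteq \Fut_s(\XX)$. Where you diverge is in how that promotion is organized. The paper first invokes \cref{thm:compat-irr-to-strong-second} (with $\AA = \YY\ZZ$, $\BB = \XX$) to produce the family of equalities $\Fut_s(\YY\ZZ\XX) = \Fut_s(\YY\ZZ\XX\setminus e_\XX)$ for all $e_\XX$, and then applies location symmetry \emph{once} to this whole family (overlap $\YY\ZZ$, varying sets $\XX\setminus e^i_\XX$), using $\bigcap_i(\XX\setminus e^i_\XX)=\emptyset$ together with \cref{eq:location-symmetry-degen} and non-degeneracy to kill the second alternative in a single stroke. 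You instead run an induction on $|T|$ and at each level invoke only the two-set instance of location symmetry on pairs of singletons $\{e\},\{e'\}$, where non-degeneracy excludes $O(e)=O(e')$ directly; this is slightly longer but self-contained (it does not route through the auxiliary lemma) and only ever needs the weakest, pairwise form of location symmetry. Both arguments rest on exactly the same two pillars and the trade-off is purely one of bookkeeping. One small citation slip: the location-symmetry reformulation you want is \cref{def:location-symmetry-orvs} together with \cref{thm:conicality-location-symmetry}, not \cref{def:conical-orv}, which is the statement about recovering the locations of spanning elements.
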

\begin{proof}
	By definition, $X \affects Y \given \doo(Z)$ being Irred$_1$ implies
	$e_X \affects Y \given \doo(Z X \setminus e_X)$.
	\hyperref[def:compat-atomic]{\textbf{compat-atomic}} then yields
	\begin{equation}
        \label{eq: atomic_compat1}
		\Fut_s (\YY\ZZ) \cap \Fut_s (\XX \setminus e_\XX) \subseteq \Fut (e_\XX) \quad \forall e_\XX \in \XX \, .
	\end{equation}
	With \cref{thm:compat-irr-to-strong-second} for $\AA = \YY\ZZ$ and $\BB = \XX$ it follows that
	\begin{equation}
		\Fut_s (\YY\ZZ\XX) = \Fut_s (\YY\ZZ\XX \setminus e_\XX) \quad \forall e_\XX \in \XX \, .
	\end{equation}
	We proceed analogous to \cref{thm:compat-atomic-to-strong}, deriving that the embedding is either degenerate or
	\begin{equation}
		\Fut_s (\ZZ\YY) \subseteq \Fut_s (\XX) \, .
	\end{equation}
	This is precisely \hyperref[def:compat]{\textbf{compat}}, concluding the proof.
\end{proof}

\begin{lemma}
	Let $\mathscr{A}$ be a set of unconditional affects relations which does not contain any affects relations that are Clus$_1$.
	Then for any non-degenerate embedding $\mathcal{E}$ of $\mathscr{A}$ into \underline{any} space-time $\TT$,
	\begin{equation}
		\hyperref[def:compat-atomic]{\textbf{compat-atomic}} \implies \hyperref[def:compat]{\textbf{compat}} .
	\end{equation}
    In particular, this holds for any faithful causal model.
\end{lemma}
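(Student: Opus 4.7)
The plan is to prove by strong induction on $n:=|X|$ that every Irred$_1$ relation $X\affects Y\given \doo(Z)\in\mathscr{A}$ satisfies the conclusion of \hyperref[def:compat]{\textbf{compat}}, namely $\Fut_s(\YY\ZZ)\subseteq \Fut_s(\XX)$. The base case $n=1$ is immediate: the relation is atomic and \hyperref[def:compat-atomic]{\textbf{compat-atomic}} gives $\Fut_s(\YY\ZZ)\subseteq \Fut(e_\XX)=\Fut_s(\XX)$.

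For the inductive step with $n\geq 2$, the first move is to exploit the absence of Clus$_1$: iterating \cref{thm:not-clus1} inside $\mathscr{A}$ produces some $e_X^{\ast}\in X$ for which $\{e_X^{\ast}\}\affects Y\given \doo(Z)\in\mathscr{A}$, and \hyperref[def:compat-atomic]{\textbf{compat-atomic}} applied to this atomic relation yields
\begin{equation*}
    \Fut_s(\YY\ZZ)\subseteq \Fut(e_\XX^{\ast}).
\end{equation*}

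The second move is to peel $e_X^{\ast}$ off the first argument using Irred$_1$: the partition into $\{e_X^{\ast}\}$ and $X\setminus\{e_X^{\ast}\}$ forces the reduced relation $(X\setminus\{e_X^{\ast}\})\affects Y\given \doo(\{e_X^{\ast}\}Z)$ to lie in $\mathscr{A}$, and a quick check (extending any partition of $X\setminus\{e_X^{\ast}\}$ to a partition of $X$ by adjoining $e_X^{\ast}$ to the complementary side) shows that it is itself Irred$_1$. Its first argument has size $n-1$, so the inductive hypothesis supplies $\Fut_s(\YY\ZZ)\cap \Fut(e_\XX^{\ast})=\Fut_s(\YY\{e_\XX^{\ast}\}\ZZ)\subseteq \Fut_s(\XX\setminus\{e_\XX^{\ast}\})$. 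Feeding in the previous display (which makes the left-hand side equal to $\Fut_s(\YY\ZZ)$) gives $\Fut_s(\YY\ZZ)\subseteq \Fut_s(\XX\setminus\{e_\XX^{\ast}\})$, and intersecting once more with $\Fut(e_\XX^{\ast})$ closes the induction: $\Fut_s(\YY\ZZ)\subseteq \Fut_s(\XX\setminus\{e_\XX^{\ast}\})\cap \Fut(e_\XX^{\ast})=\Fut_s(\XX)$.

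The hard part will be the semantic bookkeeping around $\mathscr{A}$: one must confirm that both the iterated no-Clus$_1$ reductions and the Irred$_1$-implied reductions genuinely belong to $\mathscr{A}$, so that \hyperref[def:compat-atomic]{\textbf{compat-atomic}} and the inductive hypothesis can be applied to them. Under the convention announced at the start of \cref{sec:affects-red}, reducibility and clustering are defined relative to $\mathscr{A}$, so this closure is automatic. As a secondary remark, the argument above does not appear to use conicality of $\TT$ at any step, suggesting that the conclusion in fact holds over arbitrary space-times — closely paralleling the conical-vs-no-clustering dichotomy of \cref{thm:irr-compat} and \cref{thm:irr-compat-clus} — and this should be confirmed before citing the hypothesis.
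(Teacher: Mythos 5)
Your proof is correct and follows essentially the same route as the paper's: both peel off one element of $X$ at a time by combining \cref{thm:not-clus1} (to obtain an atomic relation to which \textbf{compat-atomic} applies) with Irred$_1$ (to pass to the reduced relation with that element moved into the third argument), and both exploit the absorption identity $\Fut_s(\YY\ZZ)\cap\Fut(e_\XX)=\Fut_s(\YY\ZZ)$ — you merely package the paper's explicit recursion as an induction on $\abs{X}$. Your closing observation is also consistent with the paper: its proof likewise never invokes conicality or non-degeneracy of the embedding, so those hypotheses are indeed not used in establishing the claim.
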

\begin{proof}
    By definition, $X \affects Y \given \doo(Z)$ being Irred$_1$ implies
	$e_X \affects Y \given \doo(Z X \setminus e_X)$.
 
    By \cref{thm:not-clus1}, the absence of Clus$_1$ relations in $\mathscr{A}$ implies that there exists $e^1_X \in X$ such that $e^1_X \affects Y \given \doo(Z)$, and therefore, by \hyperref[def:compat-atomic]{\textbf{compat-atomic}}, $\Fut_s (\YY\ZZ) \subseteq \Fut (e^1_\XX)$, which again implies $\Fut_s (\YY\ZZ e^1_\XX) = \Fut_s (\YY\ZZ)$.
    However, due to Irred$_1$, we also have $X \setminus e^1_X \affects Y \given \doo(Z e^1_X)$.
    Invoking again the absence of Clus$_1$ relations and applying \cref{thm:not-clus1}, we know that there exists $e^2_X \in X$ such that $e^2_X \affects Y \given \doo(Z e^1_X)$, yielding $\Fut_s (\YY\ZZ e^1_\XX) \subseteq \Fut(e^2_\XX)$.
	Together by the earlier condition on $\Fut(e^1_\XX)$, this yields $\Fut_s (\YY\ZZ) \subseteq \Fut(e^2_\XX)$.

    As Irred$_1$ for the original affects relation implies the same for $X \setminus e^1_X \affects Y \given \doo(Z e^1_X)$, we can now repeat this procedure recursively, yielding $\Fut_s (\YY\ZZ) \subseteq \Fut(e_\XX) \ \forall e_\XX \in \XX$. Taken together, this is equivalent to $\Fut_s (\YY\ZZ) \subseteq \Fut_s (\XX)$, yielding the claim.
\end{proof}

\medskip

Using the absence of affects relations with Clus$_2$ or Clus$_3$ as a condition, similar results can be obtained for atomic compatibility in the second and third argument.

Analogous to \cref{thm:irr-compat} and \cref{thm:irr-compat-clus}, this demonstrates another useful implication which does not hold in general space-times and causal models, but does hold when either restricting to conical space-times or restricting to causal models without a type of clustering.
A difference however is that \cref{thm:irr-compat-clus} uses the (stronger) absence of Clus$_3$ while the above result uses only the absence of Clus$_1$, indeed the former was related to Irred$_3$ affects relations while the latter is about Irred$_1$ affects relations.
Moreover, while our main results imply an ability to reduce compatibility statements for higher-order affects relations to equivalent statements for 0$^\text{th}$-order relations, the results of this section enable a reduction of compatibility statements for non-atomic affects relations to equivalent statements for atomic relations. There is scope for exploring further relations and implications of these results, as well as other possible correspondences between conical space-times and causal models without clustering of a certain type. We leave this for future work. 

\begin{remark}
	Due to \cref{thm:third-embedding}, any affects relation $X \affects Y \given \doo(Z)$ irreducible in its third argument implies that for \underline{any} embedding satisfying \hyperref[def:compat-atomic]{\textbf{compat-atomic}}, we have
	$\Fut_s (\YY\XX) \cap \Fut_s (\ZZ \setminus e_\ZZ) \subseteq \Fut (e_\ZZ)$.
	Conjoining this with \cref{eq: atomic_compat1} (implied by \hyperref[def:compat-atomic]{\textbf{compat-atomic}}), we obtain
	\begin{equation}
		\Fut_s (\YY) \cap \Fut_s (\XX\ZZ \setminus e_{\XX\ZZ}) \subseteq \Fut (e_{\XX\ZZ}) \quad \forall e_{\XX\ZZ} \in \XX\ZZ \, .
	\end{equation}
    Since \hyperref[def:compat]{\textbf{compat}} implies \hyperref[def:compat-atomic]{\textbf{compat-atomic}}, the same follows from imposing \hyperref[def:compat]{\textbf{compat}} for an Irred$_3$ affects relation.
    Hence, even without imposing the restrictions of \cref{thm:irr-compat} or \cref{thm:irr-compat-clus}, namely the absence of clustering or conical space-times, we obtain this weaker type of interchangeability of $\XX$ and $\ZZ$ directly from \hyperref[def:compat-atomic]{\textbf{compat-atomic}} (or \hyperref[def:compat]{\textbf{compat}}). However, this is not strong enough to yield the correspondence of \cref{eq: correspondence_main} between causal inference statements and compatibility statements.
\end{remark}

\section{Proofs of results}
\label{sec:proofs}

\subsection{Proofs for Section 4} \label{sec:affects-proofs}

\affectsCorr*
\begin{proof}
    \label{proof:affectsCorr}
    Suppose by contradiction that $X\not\vDash Y|\mathrm{do}(Z)$ (equivalently $P_{\cG_{\mathrm{do}(XZ)}}(Y|XZ)=  P_{\cG_{\mathrm{do}(Z)}}(Y|Z)$) and $(X\not\upmodels Y|Z)_{\cG_{\mathrm{do}}(XZ)}$. The latter is equivalent to saying that there exist distinct values $z$ of $Z$ and distinct values $x$ and $x'$ of $X$ such that $P_{\cG_{\mathrm{do}(XZ)}}(Y|X=x,Z=z)\neq P_{\cG_{\mathrm{do}(XZ)}}(Y|X=x',Z=z)$. However, this implies that $P_{\cG_{\mathrm{do}(XZ)}}(Y|XZ)$ does have a non-trivial dependence on $X$ and contradicts $P_{\cG_{\mathrm{do}(XZ)}}(Y|XZ)=  P_{\cG_{\mathrm{do}(Z)}}(Y|Z)$ (which would imply that it is equal to an $X$-independent quantity). This proves the result. 
\end{proof}

\clusIrreducible*
\begin{proof}
    \label{proof:clusIrreducible}
	\textbf{Clus$_1$ $\Rightarrow$ Irred$_1$}
	Suppose that $X\vDash Y|\mathrm{do}(Z)$ holds and satisfies Clus$_1$ but not Irred$_1$ i.e., it is Red$_1$. Then by \cref{def: red1} and \cref{def: clus1} of these properties, we have the following requirements: $\tilde{s}_X\not\vDash Y|\mathrm{do}(Z)$ for all $\tilde{s}_X\subsetneq X$ and $\exists \tilde{s}_X\subsetneq X$, with $\tilde{s}_X\cup s_X=X$ such that $s_X\not\vDash Y|\mathrm{do}(\tilde{s}_XZ)$. Writing these out explicitly, we have:
	\begin{align}
		\begin{split}
	  P_{\cG_{\mathrm{do}(\tilde{s}_XZ)}}(Y|\tilde{s}_XZ)&=  P_{\cG_{\mathrm{do}(Z)}}(Y|Z), \quad \forall \tilde{s}_X\subsetneq X,\\
	  P_{\cG_{\mathrm{do}(XZ)}}(Y|XZ)&=  P_{\cG_{\mathrm{do}(\tilde{s}_XZ)}}(Y|\tilde{s}_XZ), \quad \exists \tilde{s}_X\subsetneq X,
		\end{split}
	\end{align}
	
	These imply $ P_{\cG_{\mathrm{do}(XZ)}}(Y|XZ)=P_{\cG_{\mathrm{do}(Z)}}(Y|Z)$ which is equivalent to $X\not\vDash Y|\mathrm{do}(Z)$. This contradicts our initial assumption that $X\vDash Y|\mathrm{do}(Z)$ holds and therefore proves that Clus$_1$ implies Irred$_1$ for any affects relation $X\vDash Y|\mathrm{do}(Z)$.

	\textbf{Clus$_2$ $\Rightarrow$ Irred$_2$}
	We assume that $X\vDash Y|\mathrm{do}(Z)$ holds, it is Clus$_2$ but not Irred$_2$ and derive a contradiction, which will establish that Clus$_2$ $\Rightarrow$ Irred$_2$ for any affects relation $X\vDash Y|\mathrm{do}(Z)$.
	Clus$_2$ implies the following
	\begin{equation}
	\label{eq: clus2}
		P_{\cG_{\mathrm{do}(XZ)}}(s_Y|XZ)=P_{\cG_{\mathrm{do}(Z)}}(s_Y|Z) \quad \forall s_Y\subsetneq Y.
	\end{equation}

	On the other hand, Red$_2$ implies that there exists $s_Y\subsetneq Y$ such that $X \naffects s_Y \given \doo(Z), \tilde{s}_Y$. However, by \cref{thm:decondition}.3, both of these together imply $X \affects \tilde{s}_Y \given \doo(Z)$, which contradicts our assumption that $X \affects Y \given \doo(Z)$ is Clus$_2$.
	
	  \textbf{Clus$_3$ $\Rightarrow$ Irred$_3$} 
	As before, we assume that $X\vDash Y|\mathrm{do}(Z)$ holds and satisfies Clus$_3$ and Red$_3$. From \cref{def: clus3} and \cref{def: red3} this implies the following conditions.
	\begin{align}
		\begin{split}
	  P_{\cG_{\mathrm{do}(X\tilde{s}_Z)}}(Y|X\tilde{s}_Z)&=  P_{\cG_{\mathrm{do}(\tilde{s}_Z)}}(Y|\tilde{s}_Z), \quad \forall \tilde{s}_Z\subsetneq Z,\\
	 P_{\cG_{\mathrm{do}(XZ)}}(Y|XZ)&=  P_{\cG_{\mathrm{do}(X\tilde{s}_Z)}}(Y|X\tilde{s}_Z), \quad \exists \tilde{s}_Z\subsetneq Z,\\
	 P_{\cG_{\mathrm{do}(Z)}}(Y|Z)&=  P_{\cG_{\mathrm{do}(\tilde{s}_Z)}}(Y|\tilde{s}_Z), \quad \exists \tilde{s}_Z\subsetneq Z.
		\end{split}
	\end{align}
	
	Combining these, we obtain $ P_{\cG_{\mathrm{do}(XZ)}}(Y|XZ)= P_{\cG_{\mathrm{do}(Z)}}(Y|Z)$ which is equivalent to $X\not\vDash Y|\mathrm{do}(Z)$, which contradicts our initial assumption and therefore proves the claim. 
\end{proof}

\clusFineTuning*
\begin{proof}
    \label{proof:clusFinetuning}
	{\bf Clus$_1$ $\Rightarrow$ fine-tuning} Suppose that $X\vDash Y|\mathrm{do}(Z)$ is a Clus$_1$ affects relation, which means that $|X|\geq 2$ and $s_X\not\vDash Y|\mathrm{do}(Z)$ for all $s_X\subsetneq X$. 
	From the proof of Lemma~IV.3 of \cite{VVC}, it follows that $X\vDash Y|\mathrm{do}(Z)$ implies that there exists $e_X\in X$ with a directed path from $e_X$ to $Y$ in $\cG_{\mathrm{do}}(XZ)$, which in turn implies $(e_X\not\perp^d Y|Z)_{\cG_{\mathrm{do}}(XZ)}$.
    Further, this implies the same d-connection in the graph where we intervene only on $e_X$, i.e., $(e_X\not\perp^d Y|Z)_{\cG_{\mathrm{do}}(e_XZ)}$.
	This is because the graph $\cG_{\mathrm{do}}(XZ)$ will have the same nodes and less edges than $\cG_{\mathrm{do}}(e_XZ)$ (since interventions cut off incoming edges), and d-connection cannot be lost by adding edges.
	However Clus$_1$ implies in particular that $e_X\not\vDash Y|\mathrm{do}(Z)$ .
	This in turn implies that $(e_X\upmodels Y|Z)_{\cG_{\mathrm{do}}(s_XZ)}$ (\cref{lemma: aff_corr}), which along with $(e_X\not\perp^d Y|Z)_{\cG_{\mathrm{do}}(e_XZ)}$ which we have previously established, implies that any underlying causal model giving rise to these affects relations must be fine-tuned.

 {\bf Clus$_2$ $\Rightarrow$ fine-tuning} Suppose that $X\vDash Y|\mathrm{do}(Z)$ is a Clus$_2$ affects relation, which means that $|Y|\geq 2$ and $X\not\vDash s_Y|\mathrm{do}(Z)$ for all $s_Y\subsetneq Y$. As before, $X\vDash Y|\mathrm{do}(Z)$ implies that $(X\not\perp^d Y|Z)_{\cG_{\mathrm{do}}(XZ)}$. By the definition of d-separation, this tells us that there must exist $e_Y\in Y$ such that $(X\not\perp^d e_Y|Z)_{\cG_{\mathrm{do}}(XZ)}$. However Clus$_2$ implies that for all $e_Y\in Y$, $X\not\vDash e_Y|\mathrm{do}(Z)$ which gives $(X\upmodels e_Y|Z)_{\cG_{\mathrm{do}}(XZ)}$ for all $e_Y\in Y$ (\cref{lemma: aff_corr}). Then it is clear that there is at least one d-connection $(X\not\perp^d e_Y|Z)_{\cG_{\mathrm{do}}(XZ)}$ not matched by a corresponding conditional independence, thus making the model fine-tuned.

 {\bf Clus$_3$ $\Rightarrow$ fine-tuning} Suppose that $X\vDash Y|\mathrm{do}(Z)$ is a Clus$_3$ affects relation, which means that $|Z|\geq 1$ and $X\not\vDash Y|\mathrm{do}(s_Z)$ for all $s_Z\subsetneq Z$. As in the proof for the previous case, we start with the fact that $X\vDash Y|\mathrm{do}(Z)$ implies $(X\not\perp^d Y|Z)_{\cG_{\mathrm{do}}(XZ)}$. Since all nodes of the conditioning set $Z$ are parentless in this d-connection, it follows that $(X\not\perp^d Y|s_Z)_{\cG_{\mathrm{do}}(XZ)}$ holds for subsets $s_Z\subsetneq Z$. This is because the only case where removing elements from the conditioning set could remove a d-connection is when those elements act as a collider (or descendent of a collider) on a sole unblocked path between $X$ and $Y$, which is not possible if those elements are parentless. Now, observe that the only distinction between the graphs $\cG_{\mathrm{do}}(Xs_Z)$ for $s_Z\subsetneq Z$ and $\cG_{\mathrm{do}}(XZ)$ is that the former has a strictly larger set of edges (the incoming edges to nodes in $Z\backslash s_Z$ are present in the former and not in the latter). Since d-connection cannot be lost by adding edges, $(X\not\perp^d Y|s_Z)_{\cG_{\mathrm{do}}(XZ)}$ implies $(X\not\perp^d Y|s_Z)_{\cG_{\mathrm{do}}(Xs_Z)}$. We then use the fact that Clus$_3$ implies $X\not\vDash Y|\mathrm{do}(s_Z)$ for all $s_Z\subsetneq Z$, and consequently that $(X\upmodels Y|s_Z)_{\cG_{\mathrm{do}}(XZ)}$ for all $s_Z\subsetneq Z$ (\cref{lemma: aff_corr}), which when taken together with the d-connection $(X\not\perp^d Y|s_Z)_{\cG_{\mathrm{do}}(Xs_Z)}$ indicates that the causal model must be fine-tuned.
\end{proof}

\subsection{Proofs for Section 5} \label{sec:location-symmetry}

\conicalMinkowski*
\begin{proof}
\label{proof:conicalMinkowski}
In $d$+1-dim.\ Minkowski space-time $\TT$, the boundary of the light cone of a space-time point $(\vec{x}_0,t_0)\in \TT$ is given as follows, where $\vec{x}_0=(x^1_0,x^2_0,...,x^d_0)$ are the $d$-dimensional spatial co-ordinates of the point (in some chosen co-ordinate system).
\begin{equation}
    c^2 (t-t_0)^2 = \sum_{1 < i \leq d} (x^i-x^i_{0})^2 \, .
\end{equation}

Observe that for each time slice (fixed value of $t$), the above equation corresponds to a $d$-dimensional sphere with radius $c^2 (t-t_0)^2$. Whenever $d\geq 2$, for a fixed $t$, given any finite portion of this spherical boundary, we can determine the radius of the sphere and therefore its center, which is associated with the spatial co-ordinate $\vec{x}_0$. If we are considering the future light cone, this is non-empty only for $t>t_0$. Given a non-empty time slice of the future light cone of a space-time point $(\vec{x}_0,t_0)\in \TT$, we can consider light rays emanating from each point on the boundary of the sphere and extending towards the negative $t$ direction, these light rays would necessarily intersect at a unique time $t_0$ which allows us to fully determine the original point $(\vec{x}_0,t_0)$, starting from any portion of the future light cone boundary on a given non-empty time slice. 

Now let $L$ be a finite subset of points in $d$+1-dim.\ Minkowski space-time and consider the joint future $f(L)=\bigcap_{x \in L} \bar{J}^+ (x)$ of all the points in $L$. The boundary of this joint future is compiled from pieces of the light cone boundaries of points in $L$. In particular, the boundary of $f(L)$ associated with a non-empty time slice is compiled from spherical pieces which are portions of light cone boundaries of points in $L$, associated with that time slice (for $d=2$, these are circular arcs). We can show that each $x\in \spann(L)$ contributes a unique, finite portion to the boundary of the joint future region $f(L)$ at every non-empty time slice. 
For this, recall that by definition, if $x\in \spann(L)$, then $x\in L'\subseteq L$ such that $f(L')=f(L)$ and there is no $L''\subsetneq L'$ satisfying $f(L'')=f(L)$. If we assume that $x$ doesn't contribute a finite piece to the boundary of $f(L)$, then this yields a contradiction as it would mean that for $L'':=L'\setminus x$, we have $f(L'')=f(L)$. To show uniqueness, suppose $x\in \spann(L)$ contributes a finite portion to the boundary of $f(L)$ at a time slice $t$. No other $x'\in \TT$ can contribute the same portion of the boundary of $f(L)$, due to the argument presented in the first paragraph of this proof: each such portion allows us to uniquely reconstruct the point $x$ as it is also a portion of the future light cone of $x$.

Thus, the distinct spherical portions that comprise the boundary of $f(L)$ can each be mapped back uniquely to a distinct spanning element of $L$. Since $L$ is arbitrary here, this implies that $f(L)$ is invertible on all sets $L$ with $L=\spann(L)$, which proves the conicality of Minkowski space-times with 2 or more spatial dimensions.
\end{proof}

\conicalORV*

We prove a generalization of this lemma from conical space-times to conical embeddings in arbitrary space-times, which are defined as follows:
\begin{definition}[Conical Embedding]
    \label{def:conical-embedding}
    An embedding of a set $\SC$ of ORVs in a space-time $\cT$ is called \emph{conical} if for any two finite subsets $L_i, L_j \subseteq O(\SC) = \{O(X) | X \in S \} \subseteq \TT$,
    \begin{equation}
        f(L_i) = f(L_j)
        \quad \implies \quad 
        \spann(L_i) = \spann(L_j) 
    \end{equation}
    holds.
\end{definition}

Intuitively, with this definition, we consider the condition of conicality only for the futures of all points $O(\SC)$ in the image of the embedding and the intersections of their futures.
Hence, any embedding into a conical poset $\TT$ is trivially conical. However, the conicality of $\TT$ is not necessary to obtain a conical embedding:
Actually, the vast majority of embeddings into 1+1-dim.\ Minkowski space-time is conical as well, as for instance shown in \cref{tab:summary}.\footnote{
    Specifically, such embeddings can be non-conical only if one ORV is embedded into the future light cone surface of another,
    constituting a kind of fine-tuning of the embedding.  
    For instance, consider the example shown in \cref{fig:non-degenerate}, with $\YY$ embedded on the future light cone surface of $\ZZ_1$ and $\Fut_s (\YY \ZZ_2) = \Fut_s (\ZZ_1 \ZZ_2)$.
    As a possible sufficient criterion for being located \enquote{on the light cone surface} we suggest, in purely order-theoretical terms:
    For $a, b \in \TT$, $b$ is embedded in the future light cone surface of $a$ iff $M = \bar{J}^+ (a) \cap \bar{J}^- (b) = \{ x \in \TT | a \preceq x \preceq b \}$ is totally ordered by $\prec$, i.e.\ $\not\exists \, b, c \in M : \ b \unord c$.
    Here, $M$ is a degenerate example of a causal diamond \cite{Hounnonkpe2019, Witten2020}.
    For Minkowski space-time, this criterion is also necessary, while singularities or self-intersecting light cones could lead to violations.
}

We finally state the generalization of \cref{def:conical-orv}:

\begin{lemma}
    \label{def:conical-embedding-orv}
    Let $\SC$ be a set of ORVs from a poset $\TT$ with a \emph{conical} embedding $\mathcal{E}$. 
    Then, for any $\XX \subseteq \SC$, the knowledge of $\Fut_s (\XX)$ implies the locations $O (\XX_i)$ for all its spanning elements $\XX_i \in \spann (\XX)$.
\end{lemma}
\begin{proof}
    Recalling that $O(\XX) = \{ O(\XX_i) \mid \XX_i \in \XX \} \subset O (\SC)$ and $f(L) := \bigcap_{x \in L} \bar{J}^+ (x)$, then by definition, for any $\XX \subset \SC$,
    \begin{equation}
        \Fut_s (\XX) \equiv \bigcap_{\XX_i \in \XX} \bar{J}^+ (O(\XX_i)) = f(O(\XX)) = f(O({\spann(\XX)})) \, .
    \end{equation}
    As the embedding is conical, we have that for $O(\SC) \subseteq \TT$, the conicality condition is satisfied and therefore, by \cref{def:conical}, $f(\spann(O(\XX)))$ is injective (at least when restricted to $O(\XX)$).
    Hence,
    it remains to show that $\spann(O(\XX)) \overset{!}{=} O(\spann(\XX))$.
    To this purpose, we differentiate between the case that the embedding is degenerate or not.
    If the embedding is non-degenerate, we have a bijection between $\XX$ and $L_\XX$, which immediately yields $\spann(O(\XX)) = O(\spann(\XX))$.
    Otherwise, there are multiple ORVs $u_\XX \in \XX$ sharing the same location.
    In this case, it is easy to see that either all of them or none of them belong to $\spann(\XX)$.\footnote{If at least one $u_\XX \ni e_\XX \in \spann(\XX)$, for each allowed choice of $s_\XX$ (as according to the definition), $\abs{s_\XX \cap u_\XX} = 1$. However, due to considering the union of such sets we then regain the entirety of $u_\XX \subset \bigcup s_\XX$.}
    Therefore, we can group every set of ORVs embedded degenerately at the same location, into a single ORV embedded at the same location without changing the span. Thus we have reduced a degenerate problem to a non-degenerate one such that $\spann(O(\XX)) = O(\spann(\XX))$ holds in the former if and only if it holds in the latter. Since we have established that this always holds in the non-degenerate case, this proves the claim.
\end{proof}

\medskip

As outlined in the main text, our primary results rely on showing that conicality of a poset is equivalent to another poset property called location symmetry. We proceed by motivating this property. Suppose we have a family $\{S_i\}_{i\in I}$ where each element is a set of locations $S_i \subset \TT$ all of which share the same span, let us denote this by $K$. Then we can write $S_i=K L_i$ (where $KL_i$ is short for the union of the sets), making clear that the locations in $K$ must be shared among the sets and here the different $L_i$ may also share locations. 
Plugging $S_i$ for $L_i$ and $K$ for $L_j$ into \cref{eq:conical}, which holds in conical space-times, we obtain that the the joint futures of these sets must be identical in such space-times,
\begin{equation}
   f(K) = f(S_i) \subseteq f(S_i \setminus K)=f(L_i) \, , \forall i\in I
\end{equation}
where $f(L) := \bigcap_{x \in L} \bar{J}^+ (x)$.
Here, the subset relation follows from the fact that removing sets from an intersection always yields a (potentially improper) superset of the original intersection. As this holds for all $i\in I$, we can replace $f(L_i)$ with $\bigcap_{i\in I} f(L_i) $.
This yields that in conical space-times where $\spann(KL_i)=K$ for all $i\in I$ (or equivalently $\bigcap_{i\in I}\spann(KL_i)=K$),
\begin{equation}
    f(KL_i) = f(KL_k) \quad \forall i, k \in I \quad \implies \quad
    f(K) \subseteq \bigcap_{i \in I} f(L_i) 
\end{equation}

More generally, we can consider sets $S_i=KL_i$ where $K$ is not the common span of the sets. Then $\bigcap_{i\in I}\spann(KL_i)\neq K$. This implies that there must be further shared locations in $\spann(K L_i) \setminus K$ for each $i$, that are commonly shared between all the $L_i$'s, and using conicality, it can be shown that for all $i$, $\spann(K L_i) \setminus K$ must be identical in this case (as we do in the proofs below). Based on this motivation, we have the following definition of location symmetry which captures the two alternative conditions discussed above.

\begin{definition}[Location Symmetry]
	\label{def:location-symmetry-points}
	Let $\TT$ be a poset and $I$ be an index set. Let $K, L_i \subset \TT$ discrete with $i \in I$.
	Then $\TT$ satisfies \emph{location symmetry} if
	\begin{equation}
		\begin{split}
			&
			f(K) \cap f(L_i) = f(K) \cap f(L_k) \quad \forall i, k \in I \\
			\ \implies \
			&
			f(K) \subseteq \bigcap_{i \in I} f(L_i) \quad \text{or} \quad
			\exists ! M \subset \TT : \ \spann(K L_i) \setminus K = M \neq \emptyset \quad \forall i \in I
		\end{split}
	\end{equation}
	where
	\begin{equation*}
        f(L) := \bigcap_{x \in L} \bar{J}^+ (x)
    \end{equation*}

	If $K$ or $L_i$ is empty, we understand their intersection as being equal to the full poset $\TT$.
\end{definition}

Alternatively, we can also phrase location symmetry in terms of a set of ORVs $\SC$ on this poset.

\begin{definition}[Location Symmetry for ORVs]
    \label{def:location-symmetry-orvs}
    Let $\TT$ be a poset and $I$ be an index set. 
	Let $\XX, \YY_i \subset \SC$ with $i \in I$, not necessarily disjoint or non-empty. Then $\TT$ satisfies \emph{location symmetry}
 \begin{equation}
 \label{eq: LS_ORVs}
		\begin{split}
			& \Fut_s (\XX \YY_i) = \Fut_s (\XX \YY_k) \quad \forall i, k \in I \\
			\ \implies \
			& 
            \Fut_s (\XX) \subseteq \Fut_s \left( \textstyle{\bigcup_{i \in I}} \YY_i \right) \quad \mathrm{or} \quad
			\exists ! M \subset \TT : \ 
			O (\spann(\XX \YY_k)) \setminus O (\XX) = M \neq \emptyset \quad \forall k \in I
		\end{split}
	\end{equation}
\end{definition}

Especially in the respective second alternative, these definitions take care to play well with the respective sets / their locations potentially not being disjoint, as this will be of importance for considering degenerate embeddings of ORVs.

\begin{lemma}
	\label{thm:conicality-location-symmetry}
	$\TT$ is a conical poset if and only if it satisfies location symmetry (for points in the poset).
    Further, $\mathcal{E}$ is a conical embedding if and only if it satisfies location symmetry (for all its ORVs).
\end{lemma}
\begin{proof}
	For this proof, we will refer to ORVs directly, as that is more compact and matches the notation we will use for the proofs going forward.
    Also, the proof for both statements is identical.
    All notation matches \cref{def:location-symmetry-orvs}.

	{
	\hin
Here we assume location symmetry and prove conicality. We only require the former for the special case 
where $\XX = \emptyset$ while $\YY_1$ and $\YY_2$ are any disjoint sets of ORVs with $\Fut_s (\YY_1)=\Fut_s (\YY_2)$. Then, applying \cref{def:supp-future} to empty sets of ORVs, we recover $\Fut_s (\XX) = \TT$. Moreover, the pre-condition $\Fut_s (\XX \YY_1)=\Fut_s (\XX \YY_2)$ for applying location symmetry is satisfied as it is equivalent to $\Fut_s (\YY_1)=\Fut_s (\YY_2)$. Therefore, location symmetry implies that
	\begin{equation}
		\TT = \Fut_s (\YY_1 \YY_2)
        \quad \text{or} \quad
		O (\spann(\XX \YY_1)) \setminus O (\XX) = O (\spann(\XX \YY_2)) \setminus O (\XX) \, .
	\end{equation}
Here, the second  alternative immediately reduces to $O(\spann(\YY_1)) = O(\spann(\YY_2))$, due to $\XX$ being empty.
The first is an equality as $\Fut_s (\YY_1 \YY_2)\subseteq \TT$ trivially holds. 
Moreover, notice that the first case $\TT = \Fut_s (\YY_1 \YY_2)$ is equivalent to $\TT = \Fut_s (\YY_1) = \Fut_s (\YY_2)$, and this can only be satisfied if both sets of ORVs are embedded into the minimal element of the poset (which may not exist). If the minimal element exists, we therefore have $O (\YY_1) = O (\YY_2)$, which in particular implies $O(\spann(\YY_1)) = O(\spann(\YY_2))$
In other words, for both alternatives of location symmetry we get that for any disjoint sets of ORVs $\YY_1$ and $\YY_2$, $\Fut(\YY_1) = \Fut(\YY_2) \implies O(\spann(\YY_1)) = O(\spann(\YY_2))$, reproducing the definition of conicality (\cref{eq:conical}).} 
	
	\rueck 
	Let $\XX, \YY_i \subset \SC$ with $i \in I$ such that
	\begin{equation}
 \label{eq: LS_proof1}
		\Fut_s (\XX \YY_i) = \Fut_s (\XX \YY_k) \quad \forall i, k \in I \, .
	\end{equation}
 We now show that in any conical space-time, this implies one of the two conditions in the second line of \cref{eq: LS_ORVs}, thus proving location symmetry.

Applying the property \cref{eq:conical} of conical space-times to \cref{eq: LS_proof1}, we see that for any choice of $i \in I$, we can deduce the value of $O(\AA) $ for all $\AA \in \spann(\XX \YY_i)$ for ORVs embedded in any conical poset.
	This implies for the set of respective locations to be identical.
	\begin{equation}
  \label{eq: LS_proof2}
		O (\spann(\XX \YY_i)) = O (\spann(\XX \YY_k)) \quad  \forall i, k \in I \, .
	\end{equation}

We now consider two cases and show that these yield respectively the two alternatives for location symmetry, thus concluding the proof.
\begin{enumerate}
\item {\bf Case 1 ($\exists i: O(\XX) = O(\spann(\XX \YY_i))$) :} In this case, we can deduce that the same holds for all $\YY_k$ with $k \in I$, due to \cref{eq: LS_proof1}. Hence,
	\begin{align}
		\Fut_s (\XX) &= \Fut_s (\XX \YY_i) \\
		\implies \Fut_s (\XX) &\subseteq \Fut_s (\YY_i) \quad \forall i \in I \\
		\iff \Fut_s (\XX) &\subseteq \bigcap_{i \in I} \Fut_s (\YY_i)
		= \Fut_s \left( \bigcup_{i \in I} \YY_i \right) \, ,
	\end{align}
	yielding the first alternative of \cref{eq: LS_ORVs}.
\item {\bf Case 2 ($\forall i: O(\XX) \neq O(\spann(\XX \YY_i))$) :} 
This implies that $O(\spann(\XX \YY_i)) \setminus O(\XX) \neq \emptyset$.
We have already established in  \cref{eq: LS_proof2} that in conical space-times where \cref{eq: LS_proof1} is satisfied, $O(\spann(\XX \YY_k))$ must be identical for all $i \in I$, this implies the same holds for $O(\spann(\XX \YY_i)) \setminus O(\XX)$, and yields the second alternative of \cref{eq: LS_ORVs}.

\end{enumerate}
\end{proof}

We conclude by providing a weaker, but significantly shorter version of the second alternative that will be useful for clarity in some later proofs. 

\begin{lemma}
    \label{eq:location-symmetry-degen}
    Let $M \subseteq \TT$ non-empty. Let $I, \XX, \YY_k$ as in \cref{def:location-symmetry-orvs}. Then 
    \begin{equation}
        O (\spann(\XX \YY_k)) \setminus O (\spann(\XX)) = M \quad \forall k \in I
		\quad \implies \quad
		\bigcap_k O (\YY_k) \neq \emptyset \, .
    \end{equation}
\end{lemma}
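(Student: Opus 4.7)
The plan is to actually prove the stronger statement $M \subseteq \bigcap_{k \in I} O(\YY_k)$, from which non-emptiness is immediate. Fix an arbitrary $m \in M$ and an arbitrary $k \in I$. By hypothesis $m \in O(\spann(\XX\YY_k))$, so there exists an ORV $e_k \in \spann(\XX\YY_k)$ with $O(e_k) = m$. Since every spanning element lies in $\XX \cup \YY_k$, it suffices to rule out the possibility that $e_k \in \XX$ while $m \notin O(\YY_k)$; if this is ruled out then $e_k \in \YY_k$, so $m \in O(\YY_k)$, and the containment follows by taking the intersection over $k$.

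To rule this case out, suppose for contradiction that $e_k \in \XX$. The assumption $m \notin O(\spann(\XX))$ then forces $e_k \notin \spann(\XX)$; more importantly, it forces \emph{every} ORV in $\XX$ sitting at location $m$ to lie outside $\spann(\XX)$. Picking any minimal $S \subseteq \XX$ with $\Fut_s(S) = \Fut_s(\XX)$, the inclusion $S \subseteq \spann(\XX)$ (by definition of $\spann$) means $S$ contains no ORV at location $m$, and in particular $e_k \notin S$. Hence $S \subseteq \XX \setminus \{e_k\}$, and
\begin{equation*}
\Fut_s(\XX) \subseteq \Fut_s(\XX \setminus \{e_k\}) \subseteq \Fut_s(S) = \Fut_s(\XX),
\end{equation*}
so equality holds throughout; comparing with the identity $\Fut_s(\XX) = \Fut(e_k) \cap \Fut_s(\XX \setminus \{e_k\})$ yields $\Fut_s(\XX) \subseteq \Fut(e_k)$.

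It then follows that $\Fut_s(\XX\YY_k) = \Fut_s(\XX) \cap \Fut_s(\YY_k) \subseteq \Fut(e_k)$, which gives $\Fut_s((\XX\YY_k) \setminus \{e_k\}) = \Fut_s(\XX\YY_k)$. Thus $e_k$ sits in no minimal subset of $\XX\YY_k$ achieving the joint future $\Fut_s(\XX\YY_k)$, contradicting $e_k \in \spann(\XX\YY_k)$. This finishes the contradiction and hence the proof.

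The only subtlety worth flagging is bookkeeping around degenerate embeddings: because several distinct ORVs in $\XX$ may share the location $m$, one must argue at the level of ORVs rather than locations and observe that the condition $m \notin O(\spann(\XX))$ excludes all of them simultaneously from $\spann(\XX)$. Once this is clear, the rest is the standard redundant-element manipulation for $\spann$, and neither conicality nor the location-symmetry hypothesis $\Fut_s(\XX\YY_i) = \Fut_s(\XX\YY_k)$ needs to be invoked.
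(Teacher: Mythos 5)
Your proof is correct up to and including the derivation of $\Fut_s\bigl((\XX\YY_k)\setminus\{e_k\}\bigr) = \Fut_s(\XX\YY_k)$, but the final inference --- that this forces $e_k \notin \spann(\XX\YY_k)$ --- is a genuine gap. The set $\spann$ is defined as the \emph{union} of all minimal subsets achieving the joint future, so an element can be essential within one minimal achieving subset (and hence lie in $\spann$) while being globally redundant, i.e.\ removable without changing the joint future. Concretely, in 1+1-Minkowski space-time take $\XX = \{\AA,\BB\}$ with $O(\AA)=(0,0)$, $O(\BB)=(1,1)$, so $\AA \prec \BB$, and $\YY_k = \{\CC\}$ with $O(\CC)=(3,0)$. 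Then $\spann(\XX) = \{\BB\}$, so $O(\AA)\notin O(\spann(\XX))$ and indeed $\Fut_s(\XX\setminus\{\AA\}) = \Fut_s(\XX) = J^+((1,1))$; moreover $J^+((0,0))\cap J^+((3,0)) = J^+((1,1))\cap J^+((3,0)) = J^+((1.5,1.5))$, so $\Fut_s(\XX\YY_k\setminus\{\AA\}) = \Fut_s(\XX\YY_k)$ exactly as your argument predicts --- and yet $\{\AA,\CC\}$ is a \emph{minimal} subset achieving $J^+((1.5,1.5))$, so $\AA \in \spann(\XX\YY_k)$ after all. Your ``contradiction'' is therefore not a contradiction, and the stronger containment you target actually fails here: $M = \{(0,0),(3,0)\} \not\subseteq O(\YY_k) = \{(3,0)\}$. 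What the last step really needs is that membership in $\spann$ implies essentiality ($f(L\setminus\{x\})\neq f(L)$); this holds in $d{+}1$-Minkowski for $d\geq 2$ by the boundary argument in the proof of \cref{thm:conical}, but not in an arbitrary poset, and --- as you yourself note --- your argument is meant to work without any conicality assumption.

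For comparison, the paper's proof goes the opposite, purely set-theoretic way: it uses only $\spann(\XX\YY_k)\setminus\XX \subseteq \YY_k$ and passes to locations, which cleanly yields $O(\spann(\XX\YY_k))\setminus O(\XX)\subseteq O(\YY_k)$ and hence $M\subseteq\bigcap_k O(\YY_k)$ --- but only in the form where the subtracted set is $O(\XX)$, which is how the second alternative of location symmetry is actually stated in \cref{def:location-symmetry-orvs}. The discrepancy between subtracting $O(\XX)$ and subtracting $O(\spann(\XX))$ (the latter being what the lemma literally says) is exactly the case your extra work tries to cover, and it is precisely where the argument breaks: an $\XX$-element outside $\spann(\XX)$ can re-enter $\spann(\XX\YY_k)$, as the example shows. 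The robust fix is to prove and use the statement with $O(\XX)$ in the subtraction, for which the first paragraph of your argument already suffices.
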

\begin{proof}
    First, we see that $\spann(\XX\YY_i) \setminus \XX \subseteq \YY_i$.
    As this carries over to their respective locations, we obtain
    \begin{align}
    & \emptyset \neq M = O (\spann(\XX \YY_k)) \setminus O (\spann(\XX)) \quad \forall k \in I \\
    \iff \ & \emptyset \neq M = O (\spann(\XX \YY_k)) \cap O(\YY_k) \setminus O (\spann(\XX))\quad \forall k \in I \\
    \implies \ & \emptyset \neq M \subseteq O (\spann(\XX \YY_k)) \cap O(\YY_k) \quad \forall k \in I \\
    \implies \ & \emptyset \neq M \subseteq \bigcap_k O (\spann(\XX \YY_k)) \cap O(\YY_k) \\
    \implies \ & \emptyset \neq \bigcap_k O (\YY_k) \, ,
    \end{align}
    where the last implication holds as intersecting larger sets always yields a (potentially improper) superset.
\end{proof}

\subsection{Proofs for Section 6}
\label{sec:proofs6}

To prove \cref{thm:irr-compat}, we require some auxiliary lemmas, which we first prove before stating and proving the main theorem.

\begin{lemma}
	\label{thm:compat-irr-to-strong-second}
	Let $\AA, \BB \subset \SC$. Then
	\begin{equation}
		\Fut_s (\AA) \cap \Fut_s (\BB \setminus e_\BB)\subseteq \Fut (e_\BB) \ \forall e_\BB \in \BB
		\quad \implies \quad
		\Fut_s (\AA\BB) = \Fut_s (\AA\BB \setminus e_\BB) \ \forall e_\BB \in \BB \, .
	\end{equation}
\end{lemma}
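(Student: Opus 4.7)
The plan is to prove the two inclusions separately, observing that one direction is immediate from the definition of $\Fut_s$, and that the other direction reduces essentially verbatim to the hypothesis.

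First, I would establish the trivial inclusion $\Fut_s(\AA\BB) \subseteq \Fut_s(\AA\BB \setminus e_\BB)$ for each $e_\BB \in \BB$. This follows directly from \cref{def:supp-future}: removing an ORV from the argument of $\Fut_s$ means intersecting over one fewer future, which can only enlarge (or preserve) the resulting set. So this inclusion requires no use of the hypothesis.

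For the reverse inclusion, I would unpack the definition to write
\begin{equation*}
\Fut_s(\AA\BB) = \Fut_s(\AA) \cap \Fut_s(\BB\setminus e_\BB) \cap \Fut(e_\BB) = \Fut_s(\AA\BB\setminus e_\BB) \cap \Fut(e_\BB),
\end{equation*}
so that $\Fut_s(\AA\BB \setminus e_\BB) \subseteq \Fut_s(\AA\BB)$ is equivalent to the condition
\begin{equation*}
\Fut_s(\AA) \cap \Fut_s(\BB\setminus e_\BB) \subseteq \Fut(e_\BB).
\end{equation*}
But this is exactly what the hypothesis of the lemma gives us for the chosen $e_\BB$. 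Since the argument works for every $e_\BB \in \BB$, we obtain the desired equality uniformly.

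There is no real obstacle here: the statement is essentially a rewriting lemma that converts a family of containment relations into a family of equalities, and the proof is a one-line bookkeeping check once the intersection defining $\Fut_s$ is expanded. The only thing to be careful about is that the hypothesis must be applied separately to each $e_\BB \in \BB$, rather than being used as a single joint statement.
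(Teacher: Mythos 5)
Your proof is correct. Both inclusions are handled properly: the containment $\Fut_s(\AA\BB) \subseteq \Fut_s(\AA\BB\setminus e_\BB)$ is indeed immediate from \cref{def:supp-future}, and the identity $\Fut_s(\AA\BB) = \Fut_s(\AA\BB\setminus e_\BB)\cap\Fut(e_\BB)$ correctly reduces the reverse inclusion to exactly the hypothesized containment $\Fut_s(\AA)\cap\Fut_s(\BB\setminus e_\BB)\subseteq\Fut(e_\BB)$. Your route differs slightly from the paper's: the paper first intersects the hypothesis with $\Fut_s(\AA)$ and with $\Fut_s(\BB\setminus\{e^i_\BB,e^j_\BB\})$ to establish the pairwise equalities $\Fut_s(\AA)\cap\Fut_s(\BB\setminus e^i_\BB)=\Fut_s(\AA)\cap\Fut_s(\BB\setminus e^j_\BB)$ for all $i,j$, and only then combines these (by intersecting over all $i$) to reach $\Fut_s(\AA\BB)=\Fut_s(\AA\BB\setminus e_\BB)$; that derivation invokes the hypothesis for the whole family of elements even to obtain the conclusion for a single one. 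Your argument is more local and in fact slightly stronger: the equality for a given $e_\BB$ follows from the hypothesis for that same $e_\BB$ alone, which makes the "bookkeeping" nature of the lemma more transparent. The only caveat, immaterial for how the lemma is used in \cref{thm:compat-atomic-to-strong} and \cref{thm:compat-atomic-to-irr} (where $\AA$ and $\BB$ are disjoint), is that your identification of $\Fut_s(\AA\BB\setminus e_\BB)$ with $\Fut_s(\AA)\cap\Fut_s(\BB\setminus e_\BB)$ tacitly reads $\AA\BB\setminus e_\BB$ as $\AA\cup(\BB\setminus e_\BB)$; this matches the paper's own usage.
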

\begin{proof}
	Let $e^i_\BB := e_\BB$. We can transform
	\begin{align}
		\Fut_s (\AA) \cap \Fut_s (\BB \setminus e^i_\BB) &\subseteq \Fut (e^i_\BB) \\
		\implies
		\Fut_s (\AA) \cap \Fut_s (\BB \setminus e^i_\BB) &\subseteq \Fut_s (\AA) \cap \Fut (e^i_\BB) &\forall e^i_\BB \in \BB \\
		\implies
		\Fut_s (\AA) \cap \Fut_s (\BB \setminus e_\BB^i) &\subseteq \Fut_s (\AA) \cap \Fut (e_\BB^i) \cap \Fut_s (\BB \setminus \{ e^i_\BB, e^j_\BB \}) &\forall e_\BB^i, e_\BB^j \in \BB \\
		\iff
		\Fut_s (\AA) \cap \Fut_s (\BB \setminus e_\BB^i) &\subseteq \Fut_s (\AA) \cap \Fut_s (\BB \setminus e_\BB^j) &\forall e_\BB^i, e_\BB^j \in \BB \, . \!\!\!
	\end{align}
	Combining these for all $e_\BB$, we get
	\begin{align}
		\Fut_s (\AA) \cap \Fut_s (\BB \setminus e_\BB^i) &= \Fut_s (\AA) \cap \Fut_s (\BB \setminus e_\BB^j) \quad \forall e_\BB^i, e_\BB^j \in \BB \\
		\implies \Fut_s (\AA\BB) &= \Fut_s (\AA\BB \setminus e_\BB) \quad \forall e_\BB \in \BB \, .
	\end{align}
	This is precisely the claim.
\end{proof}

\begin{lemma}
    \label{thm:third-embedding}
    Let $\mathscr{A}$ be a set of unconditional affects relations and $\mathscr{A}'\subseteq \mathscr{A}$ consist of all affects relations in $\mathscr{A}$ that are Irred$_3$.
	Then for any non-degenerate embedding $\mathcal{E}$ into an arbitrary space-time $\TT$ satisfying \hyperref[def:compat]{\textbf{compat}}\, we have
	\begin{equation}
		(X \affects Y \given \doo(Z)) \in \mathscr{A}'
		\quad \implies \quad
		\Fut_s (\YY\XX) \cap \Fut_s (\ZZ \setminus e_\ZZ) \subseteq \Fut (e_\ZZ) \ \forall e_\ZZ \in \ZZ \, .
	\end{equation}
\end{lemma}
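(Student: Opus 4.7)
The plan is to unpack what Irred$_3$ means when restricted to singleton subsets $s_Z = \{e_Z\} \subseteq Z$, and then to apply \hyperref[def:compat]{\textbf{compat}} to the two resulting candidate affects relations, each of which is automatically Irred$_1$ because its first argument is a single node. Specifically, the definition of Red$_3$ requires the existence of \emph{some} non-empty $s_Z \subseteq Z$ satisfying \emph{both} non-affects conditions; taking the contrapositive for $s_Z = \{e_Z\}$, Irred$_3$ tells us that for every $e_Z \in Z$ at least one of
\begin{equation*}
    e_Z \vDash Y \given \mathrm{do}(X(Z \setminus e_Z)) \quad \text{or} \quad e_Z \vDash Y \given \mathrm{do}(Z \setminus e_Z)
\end{equation*}
must hold. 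Both of these relations have a singleton first argument, and hence are Irred$_1$ by construction.

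Next, I would apply \hyperref[def:compat]{\textbf{compat}} (Definition~\ref{def:compat}) to whichever of the two holds. In the first case it yields directly $\Fut_s(\YY\XX(\ZZ \setminus e_\ZZ)) \subseteq \Fut(e_\ZZ)$, which by expanding the support future as an intersection is precisely $\Fut_s(\YY\XX) \cap \Fut_s(\ZZ \setminus e_\ZZ) \subseteq \Fut(e_\ZZ)$. In the second case it yields the stronger inclusion $\Fut_s(\YY) \cap \Fut_s(\ZZ \setminus e_\ZZ) \subseteq \Fut(e_\ZZ)$, from which the desired inclusion follows by intersecting both sides with $\Fut_s(\XX)$ (using that intersection is monotone and that $\Fut_s(\YY\XX) \subseteq \Fut_s(\YY)$). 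Since the argument is performed for an arbitrary $e_\ZZ \in \ZZ$, the conclusion follows.

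There is essentially no obstacle here beyond carefully tracking the alternative presented by Irred$_3$ for each $e_Z$ and noting the trivial Irred$_1$ property of singleton-first-argument relations. The only subtlety worth calling out explicitly is that the theorem does not assume conicality or any absence of clustering, so I must not invoke any location-symmetry argument — the proof relies solely on (i) the singleton-case contrapositive of Red$_3$ and (ii) a direct application of \hyperref[def:compat]{\textbf{compat}} coupled with monotonicity of support futures under intersection.
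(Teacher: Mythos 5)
Your proposal is correct and follows essentially the same route as the paper's own proof: both unpack Irred$_3$ at singleton subsets $s_Z = \{e_Z\}$ to obtain the disjunction $e_Z \affects Y \given \doo(Z\setminus e_Z)$ or $e_Z \affects Y \given \doo(X(Z\setminus e_Z))$, note that either is automatically Irred$_1$, and apply \textbf{compat} plus monotonicity of intersections to reach the common conclusion. No gaps.
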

\begin{proof}
	As $X \affects Y \given \doo(Z)$ satisfies Irred$_3$, we have for $\tilde{Z}:=Z\backslash e_Z$
	\begin{equation}
		\forall e_Z \in Z : \quad
		e_Z \affects Y \given \doo(\tilde{Z})
        \quad \text{or} \quad
		e_Z \affects Y \given \doo(\tilde{Z} X) \, ,
	\end{equation}
	either of which is Irred$_1$, if it holds.
	Therefore, by \hyperref[def:compat]{\textbf{compat}}, we get
	\begin{align}
		&
		\Fut_s (\YY) \cap \Fut_s (\tilde{\ZZ}) \subseteq \Fut (e_\ZZ) \quad \text{or} \quad
		\Fut_s (\YY) \cap \Fut_s (\XX\tilde{\ZZ}) \subseteq \Fut (e_\ZZ) \\
		\implies \ &
		\Fut_s (\YY\XX) \cap \Fut_s (\tilde{\ZZ}) \subseteq \Fut (e_\ZZ)
	\end{align}
	for any $e_Z$.
\end{proof} 

\medskip
With these three auxiliary lemmas in place, we can continue to prove the theorem.

\begin{lemma}
    \label{thm:compat-atomic-to-strong}
    Let $\mathscr{A}$ be a set of unconditional affects relations and $\mathscr{A}'\subseteq \mathscr{A}$ consist of all affects relations in $\mathscr{A}$ that are Irred$_3$.
	Then for any non-degenerate conical embedding $\mathcal{E}$ into a space-time $\TT$ satisfying \hyperref[def:compat]{\textbf{compat}}, we have
	\begin{equation}
		(X \affects Y \given \doo(Z)) \in \mathscr{A}'
		\quad \implies \quad
		\Fut_s (\YY) \cap \Fut_s (\XX) \subseteq \Fut_s (\ZZ) \, .
	\end{equation}
\end{lemma}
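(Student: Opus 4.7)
The plan is to leverage the Irred$_3$ property to obtain a family of one-sided containments involving individual elements of $\ZZ$, then invoke conicality (through its equivalent formulation as location symmetry) to collapse these into the desired joint containment, handling the degenerate alternative of location symmetry via the non-degeneracy assumption.

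Concretely, I would proceed in three main steps. First, apply \cref{thm:third-embedding} directly: the Irred$_3$ assumption on $X\affects Y\given \doo(Z)$ yields
\begin{equation*}
\Fut_s (\YY\XX) \cap \Fut_s (\ZZ \setminus e_\ZZ) \subseteq \Fut (e_\ZZ) \quad \forall e_\ZZ \in \ZZ .
\end{equation*}
Second, apply \cref{thm:compat-irr-to-strong-second} with $\AA := \YY\XX$ and $\BB := \ZZ$ to upgrade these pointwise containments to the equalities
\begin{equation*}
\Fut_s(\YY\XX\ZZ) = \Fut_s\bigl(\YY\XX(\ZZ \setminus e_\ZZ)\bigr) \quad \forall e_\ZZ \in \ZZ .
\end{equation*}

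Third, invoke location symmetry (\cref{thm:conicality-location-symmetry}), which is equivalent to conicality of $\TT$. Take the common overlap to be $\YY\XX$ and index sets $\YY_i := \ZZ \setminus e_\ZZ^i$ for each $e_\ZZ^i \in \ZZ$. The equalities from the previous step are precisely the pre-condition of \cref{def:location-symmetry-orvs}. The first alternative of location symmetry then yields
\begin{equation*}
\Fut_s(\YY\XX) \subseteq \Fut_s\Bigl(\textstyle\bigcup_i (\ZZ \setminus e_\ZZ^i)\Bigr) = \Fut_s(\ZZ),
\end{equation*}
where the last equality uses $|\ZZ|\geq 2$ so that the union recovers all of $\ZZ$. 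This is exactly the claim.

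The main obstacle will be ruling out the second alternative of location symmetry and handling edge cases. For the second alternative, \cref{eq:location-symmetry-degen} gives $\bigcap_i O(\ZZ \setminus e_\ZZ^i) \neq \emptyset$; but under a non-degenerate embedding this set equals $O(\ZZ) \setminus \bigcup_i O(e_\ZZ^i) = \emptyset$, a contradiction, so this alternative is impossible. For $|\ZZ|=0$ the claim is vacuous since $\Fut_s(\emptyset) = \TT$; for $|\ZZ|=1$, Irred$_3$ forces either $\ZZ \affects \YY$ or $\ZZ \affects \YY \given \doo(\XX)$ (both automatically Irred$_1$), and \hyperref[def:compat]{\textbf{compat}} gives $\Fut_s(\YY\XX) \subseteq \Fut(\ZZ)$ directly. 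Combining these cases concludes the proof.
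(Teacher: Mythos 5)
Your proposal is correct and follows essentially the same route as the paper's proof: \cref{thm:third-embedding}, then \cref{thm:compat-irr-to-strong-second} with $\AA=\YY\XX$, $\BB=\ZZ$, then location symmetry with $\YY_i = \ZZ\setminus e^i_\ZZ$, ruling out the second alternative via non-degeneracy. Your explicit treatment of the $|\ZZ|\leq 1$ edge cases is a welcome addition of care that the paper leaves implicit.
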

\begin{proof}
	By conjoining \cref{thm:third-embedding} with \cref{thm:compat-irr-to-strong-second} for $\AA = \YY\XX$ and $\BB = \ZZ$, for any affects relation $X \affects Y \given \doo(Z)$ irreducible in the third argument, we get
	\begin{equation}
		\Fut_s (\XX\YY\ZZ) = \Fut_s (\XX\YY\ZZ \setminus e_\ZZ) \quad \forall e_\ZZ \in \ZZ \, .
	\end{equation}
	Due to conicality of the embedding, \cref{thm:conicality-location-symmetry} implies that $\TT$ satisfies location symmetry for points in $O(\SC)$.
	By using this property on the chain of equalities (setting $\XX$ in \cref{eq: LS_ORVs} to $\XX\YY$, and $\YY_i$ to $\ZZ_i = \ZZ \setminus e_\ZZ^i$) as well as the transformation of \cref{eq:location-symmetry-degen}, we arrive at 
	\begin{equation}
		\Fut_s (\XX\YY) \subseteq \Fut_s (\ZZ)
        \quad \text{or} \quad
		\bigcap_i O (\ZZ_i) \neq \emptyset \, ,
	\end{equation}
	where $Z_i := Z \setminus e_Z^i$.
	
	For the second alternative, beware that the different $\ZZ_i$ are not disjoint.
	Nonetheless, $\bigcap_i \ZZ_i = \emptyset$.
	Therefore, multiple ORVs must share the same location.
	As this corresponds to a degenerate embedding, we arrive at the claim.
\end{proof}

\irrCompatConical*

We prove a generalization of this theorem from conical space-times to conical embeddings in arbitrary space-times (cf.\ \cref{def:conical-embedding}).

\begin{theorem}
	\label{thm:irr-compat-embedding}
    Let $\mathscr{A}$ be a set of unconditional affects relations and $\mathscr{A}'\subseteq \mathscr{A}$ consist of all affects relations in $\mathscr{A}$ that are both Irred$_1$ and Irred$_3$.
	Then for any non-degenerate conical embedding $\mathcal{E}$ (cf.\ \cref{def:conical-embedding} into a space-time $\TT$ satisfying \hyperref[def:compat]{\textbf{compat}}, we have
	\begin{equation}
		\label{eq:irr-compat-conical}
		(X \affects Y \given \doo(Z)) \in \mathscr{A}'
		\quad \implies \quad
	    \Fut_s (\YY) \subseteq \Fut_s (\XX) \cap \Fut_s (\ZZ) \, .
	\end{equation}
    In particular, this holds when the space-time itself is conical.
\end{theorem}

\begin{proof}
	Due to \cref{thm:compat-atomic-to-strong}, \hyperref[def:compat]{\textbf{compat}} implies the following for an Irred$_1$ and Irred$_3$ affects relation
	\begin{equation}
		(X \affects Y \given \doo(Z)) \in \mathscr{A}'
		\quad \implies \quad
		\Fut_s (\YY) \cap \Fut_s (\XX) \subseteq \Fut_s (\ZZ) \, .
	\end{equation}
	Conjoining this with \hyperref[def:compat]{\textbf{compat}} for this affects relation (which implies $\Fut_s (\YY) \cap \Fut_s (\ZZ) \subseteq \Fut_s (\XX)$) and using elementary set theory yields

	\begin{align}
		& 
		\Fut_s (\YY) \cap \Fut_s (\ZZ) \subseteq \Fut_s (\XX) \cap \Fut_s (\YY) \quad \text{and} \quad
		\Fut_s (\YY) \cap \Fut_s (\XX) \subseteq \Fut_s (\ZZ) \cap \Fut_s (\YY) \\
		\implies \ &
		\Fut_s (\YY) \cap \Fut_s (\XX) = \Fut_s (\YY) \cap \Fut_s (\ZZ)
	\end{align}
	Due to conicality, \cref{thm:conicality-location-symmetry} implies that $\TT$ satisfies location symmetry.
	Using location symmetry, we arrive at
	\begin{equation}
		\Fut_s (\YY) \subseteq \Fut_s (\XX\ZZ)
        \quad \text{or} \quad
        O (\XX) \cap O(\ZZ) \neq \emptyset \, .
	\end{equation}
	As the second option would imply the embedding to be degenerate, the claim, corresponding to the first option, follows.
\end{proof}

\irrCompatClusThree*
\begin{proof}
    \label{proof:compatClusThree}
    Every affects relation $X\vDash Y|\mathrm{do}(Z)$ in $\mathscr{A}'$ is Irred$_1$ and Irred$_3$.
    By \cref{thm:not-clus3}, due to the absence of affects relations with Clus$_3$, there exists $e^1_X \in X$ such that $e^1_X \affects Y$. Therefore, for any compatible embedding $\cE$, we have $\Fut_s (\YY) \subseteq \Fut (e^1_\XX)$.
    However, due to $X\vDash Y|\mathrm{do}(Z)$ being Irred$_1$, we also have $X \setminus e^1_X \affects Y \given \doo(Z e^1_X)$, which again by \cref{thm:not-clus3} implies that there is $e^2_X \in X \setminus e^1_X$ such that $e^2_X \affects Y$.
    Repeating this argument recursively, we find that any compatible embedding $\cE$ of such affects relations in a space-time will satisfy $\Fut_s (\cY) \subseteq \Fut (e_\XX) \ \forall e_\XX \in \XX$, or in short, $\Fut_s (\cY) \subseteq \Fut_s (\cX)$.
    
    Next we use the fact that $X\vDash Y|\mathrm{do}(Z)$ is Irred$_3$, which gives $s_Z \affects Y \given \doo(\tilde{s}_Z X)$ or $s_Z \affects Y \given \doo(\tilde{s}_Z)$ for all choices of $s_Z \subsetneq Z$.
    Applying \cref{thm:not-clus3}, both these alternatives yield $\exists e^1_Z \in Z$ s.t. $e^1_Z \affects Y$ and $\Fut_s (\YY) \subseteq \Fut (e^1_\ZZ)$.
    Moreover, $X\vDash Y|\mathrm{do}(Z)$ being Irred$_3$ implies in particular that $Z \setminus e^1_Z \affects Y \given \doo(e_Z^1 X)$ or $Z \setminus e^1_Z \affects Y \given \doo(e_Z^1)$, both of which imply by \cref{thm:not-clus3} that $\exists e^2_Z \in Z \setminus e^1_Z$ s.t. $e^2_Z \affects Y$.
    Repeating this procedure recursively, we obtain that for any compatible embedding $\cE$ of these affects relations, we have $\Fut_s (\YY) \subseteq \Fut (e_\ZZ) \ \forall e_\ZZ \in \ZZ$, or in short, $\Fut_s (\YY) \subseteq \Fut_s (\ZZ)$.

    Combining both results, we recover the claim.
\end{proof}

\end{document}